% Generated from ./LUA_MAIN.tex
% using write_single_tex_file("PDF_MAIN.tex")
%
% !TeX program = lualatex
\documentclass[acmsmall]{acmart}\settopmatter{}
\usepackage{chngcntr}
\usepackage{multicol}
\usepackage[shortlabels]{enumitem}
\makeatletter
\@addtoreset{equation}{section}
\makeatother
\numberwithin{equation}{section}
\counterwithin{figure}{section}

%\acmJournal{PACMPL}
%\acmVolume{1}
%\acmNumber{CONF} % CONF = POPL or ICFP or OOPSLA
%\acmArticle{1}
\acmYear{2022}
%\acmMonth{1}
\acmDOI{} % \acmDOI{10.1145/nnnnnnn.nnnnnnn}
\startPage{1}

\allowdisplaybreaks

\usepackage{xcolor}
\usepackage{amsmath,amsthm}
\usepackage{mathtools}
\usepackage{suffix}
\usepackage{amsfonts}
\usepackage{mathpartir}
\usepackage{enumitem}
\usepackage{stmaryrd}
\usepackage{twoopt}
\usepackage{amsthm}
\usepackage{okcategory}

\usepackage{tocvsec2}  % controlling table of contents

\usepackage{listings}

\lstset{
  frame=none,
  xleftmargin=2pt,
  stepnumber=1,
  numbers=left,
  numbersep=5pt,
  numberstyle=\ttfamily\tiny\color[gray]{0.3},
  belowcaptionskip=\bigskipamount,
  captionpos=b,
  escapeinside={*'}{'*},
  language=haskell,
  tabsize=2,
  emphstyle={\bf},
  commentstyle=\it,
  stringstyle=\mdseries\rmfamily,
  showspaces=false,
  keywordstyle=\bfseries\rmfamily,
  columns=flexible,
  basicstyle=\small\sffamily,
  showstringspaces=false,
  morecomment=[l]\%,
}

\AtEndPreamble{%
	\theoremstyle{acmdefinition}
    \newtheorem{remark}[theorem]{Remark}}
\AtEndPreamble{%
	\theoremstyle{acmdefinition}
	\newtheorem{assum}[theorem]{Assumption}}

\newcommand\gdefinedby{::=}

\newcommand{\morPULL}{\mathsf{proj}_{2\pitchfork\catD } }
\newcommand{\cCPULL}{\mathsf{proj}_{\catC}}

\newcommand{\KI}{\mathfrak{L} }

\newcommand{\defeqq}{\stackrel {\mathrm{def}}=}
\newcommand{\defeq}{\stackrel {\mathrm{def}}=}

\newcommand\FforREC{\mathfrak{F}_X}
\newcommand\limm[2]{ \mathsf{lim}\left( #1 , #2 \right) }
\newcommand{\initiall}{\mathsf{0}}
\newcommand{\epzeroo}{\mathfrak{O}}

\newcommand{\uniqMor}{\iota}
\newcommand{\terminall}{\mathsf{1}}
\newcommand{\strengthT}{\mathsf{s}}

\newcommand{\wcpo}{$\omega$-cpo}
\newcommand{\wcpos}{$\omega$-cpos}
\newcommand{\wcont}{$\omega$-continuous}
\newcommand{\wCpo}{\mathbf{\boldsymbol\omega Cpo}}
\renewcommand{\wCPO}{\mathbf{\boldsymbol\omega CPO}}

\newcommand{\ww}{\omega}
\newcommand\LRMONAD{\lift{\monadT}}

\newcommand\epto{\stackrel{\hookrightarrow}{\leftharpoondown}}

\newcommand\eppair[1]{\left( \emb #1 , \prj #1 \right) }
\newcommand\emb[1]{#1^e}
\newcommand\prj[1]{#1^p}

\newcommand\lift[1]{\overline{#1}}
\newcommand\unlift[1]{\underline{#1}}

\newcommand\unlifteppair[1]{\hat{{#1}}}
\newcommand\secondM[1]{#1 ^\ast }

\newcommand\colim{\mathrm{colim}}

\newcommand\semanticshandler[2]{\mathfrak{p}_{#1 \to #2} }

\newcommand\ifelse[3]{\mathbf{if}\,#1\,\mathbf{then}\,#2\,\mathbf{else}\,#3\,}
\newcommand\letin[3]{\mathbf{let}\,#1=\,#2\,\mathbf{in}\,#3}
\newcommand\letrec[4]{\mathbf{let} \,\mathbf{rec}\,#1\!\left( #2\right) = #3\,\mathbf{in}\,#4}

% interleaving
\newcommand\intle[1]{\varphi_{#1}}

% coproducts
\newcommand\coproje[1]{\iota _{#1} }

\newcommand\cpairL{[ }
\newcommand\cpairR{] }

%Semantically consistent

% Types
\newcommand\TyAlph[1]{
	\ifcase #1\or \tau\or \sigma\or \rho\else \@ctrerr \fi%
}
\newcommand\ty[1]{{\TyAlph{#1}}}

% Type schemas
\newcommand\TySchAlph[1]{
	\ifcase #1\or \phi\or \psi \or \upsilon\else \@ctrerr \fi%
}

% variables
\newcommand\var[1]{{\VarAlph{#1}}}
\newcommand\VarAlph[1]{%
	\ifcase #1\or x\or y\or z\else \@ctrerr \fi%
}

% type variables
\newcommand\tvar[1]{{\TVarAlph{#1}}}
\newcommand\TVarAlph[1]{%
	\ifcase #1\or \alpha\or \beta\or \gamma\else \@ctrerr \fi%
}

%values
\newcommand\val[1]{%
	\ifcase #1\or v\or w\or u\else \@ctrerr \fi%
}

%computations
\newcommand\trm[1]{{\TermAlph{#1}}}
\newcommand\TermAlph[1]{%
	\ifcase #1\or t\or s\or r\else \@ctrerr \fi%
}

%Logical relations 

\newcommand\monadLR[2]{\mathcal{P}_{#1}\monadwP{#2} }
\newcommand\openLift[1]{\mathsf{Diff}_{\left( #1 \right) } } 
\newcommand\topologyO[1]{\mathfrak{O}_{#1}}

\newcommand\morLRmonad[1]{\mathtt{j}_{#1}}

%monads

\newcommand\wrapSyncat[1]{\mathbf{wrap}_{#1}}
\newcommand\wrapSemcat[1]{\mathtt{wrap}_{#1}}

\newcommand\ee{\eta } %Unit of the monad
\newcommand\mm{\mathrm{m}} %multiplication of the monad
\newcommand\ID{\mathrm{id}} %identities

\newcommand\RRsyntax{\mathtt{R}}
\newcommand\monadT{\mathcal{T}} %Monads
\newcommand\monadSub{\mathfrak{T}_{sub}}

\newcommand\monadS{\mathcal{S}} %Monad Syntax
\newcommand\monadwP[1]{\left( #1\right) _\leastelement } %Monad Partiality

\newcommand\monadphi{\varphi } %Monad morphism 2-cell

% Categories
\newcommand\catV{\Cat{V}}
\newcommand\catW{\Cat{W}}
\newcommand\catZ{\Cat{Z}}
\newcommand\catB{\Cat{B}}
\newcommand\catC{\Cat{C}}
\newcommand\catD{\Cat{D}}
\newcommand\catE{\Cat{E}}
\newcommand\catCat{\mathbf{Cat}}
\newcommand\catPos{\mathbf{Pos}}

\newcommand\ecat[1]{{#1}\textrm{-}\catCat}

\newcommand\incCBVtorCBV{\mathtt{s} }
\newcommand\incTCBVtorCBV{\mathtt{s}^\mathsf{t}  }

\newcommand\RCBVcat{\mathfrak{C}_{\mathcal{RBV}} }
\newcommand\CBVcat{\mathfrak{C}_{\mathcal{BV}} }
\newcommand\CBVcatund{\mathfrak{C}_{\mathtt{p}} }

\newcommand\wCBVcat{\wCPO\textrm{-}\mathfrak{C}_{\mathcal{BV}} }

\newcommand\wrCBVcat{\wCPO\textrm{-}\mathfrak{C}_{r\mathcal{BV}} }

\newcommand\morphismcatTwo{\mathsf{2}}

\newcommand\SUBscone[2]{\mathbf{Sub}\left( #1\downarrow #2 \right)}

% Objects of categories

\newcommand\obb[1]{\mathsf{ob}\, #1}

\newcommand\canonicalbasise[2]{e^{#1}_{#2}}
%Special Functors
\newcommand\sconeFUNCTOR[1]{G_{#1}}

\newcommand\chang[1]{\mathfrak{G}_{#1}}

\newcommand\rCBVtoCBV{\mathcal{R} }

\newcommand\CBVU{\mathcal{U}_{\mathcal{BV}} }
\newcommand\rCBVU{\mathcal{U}_{r\mathcal{BV}} }
\newcommand\CBVUp{\mathcal{U}_{\mathtt{p}} }
\newcommand\CBVUrp{\mathcal{U}_{r\mathtt{p}} }

\newcommand\forgetfulS{\mathcal{L}}

\newcommand\forgetfulSub{\underline{\mathcal{L}}}

\newcommand\codomm{\mathrm{codom}}
\newcommand\domm{\mathrm{dom}}
%Internal homs
\newcommand\ihom[3]{#1\left[ #2 , #3 \right] }
\newcommand\ihomC{\ihom{\catC} }
\newcommand\ihomV{\ihom{\catV} }

%External homs
\newcommand\ehom[3]{#1\left( #2 , #3 \right) }
\newcommand\ehomC{\ehom{\catC} }
\newcommand\ehomV{\ehom{\catV} }

%Exponential
\newcommand\expo[2]{\left( #1 \Rightarrow #2 \right) }
\newcommand\expk[2]{\left( #1 \Rightarrow ^k #2 \right) }

%Exponential n

%Kleisli action
\newcommand\kleislitimes{\otimes}

\newcommand\op{\mathrm{op} }  %opposite sign

%Bicartesian closed structure
\newcommand\eva[2]{\mathsf{eval}_{ #1 , #2  } }

%Fixpoints
\newcommand\rolling{\unlift{\rollReT} }
\newcommand\wrolling{\unlift{\rollReT}_{\omega } }

\newcommand\paE[3]{{#1}^{#2}_{#3}}
\newcommand\pE[2]{{#1}_{#2}}
\newcommand\mind[1]{\treeindexing{m}{#1}}
\newcommand\sind[1]{\treeindexing{s}{#1}}

\newcommand\uF{\tilde{F}}
\newcommand\uE{\tilde{E}}

\newcommand\treeindexing[2]{\mathsf{#1}_{\left( #2\right) }}
\newcommand\TreeIndex{\mathsf{Tree}}

\newcommand\pEE[1]{#1}
\newcommand\fddt{\textit{fddt} }
\newcommand\ParamVd[1]{\mathfrak{P}^{\mathfrak{d}}\left( #1 \right)  }
\newcommand\Paramd[2]{\mathsf{Param}^{\mathfrak{d}}\left( #1 , #2\right)  }
\newcommand\ParamAll[2]{\mathsf{Param}\left( #1 , #2\right)  }
\newcommand\ParamAlll{\mathsf{Param}}
\newcommand\prjEDchain[2]{\mathcal{P}^{\pEE{#1}}_{#2} }

\newcommand\pEDchain[2]{\mathcal{E}^{\pEE{#1}}_{#2} }
\newcommand\pEDchainn[1]{\mathcal{E}^{\pEE{#1}} }

\newcommand\epchain[1]{a_{#1} }
\newcommand\mochain[1]{\emb{\epchain{#1}} }
\newcommand\comochain[1]{\prj{\epchain{#1}}}
\newcommand\objectD[2]{\mathfrak{#1}_{#2}}

\newcommand\rollReT{\mathsf{roll} } 
\newcommand\wrollReT{\omega\mathsf{roll} } 
\newcommand\pwrollReT[1]{\omega\mathsf{roll}^{#1}  } 

\newcommand\ffixpointRecT{\unlift{\fixpointRecT} } 
\newcommand\wffixpointRecT{\unlift{\fixpointRecT}_{\omega } }

\newcommand\fixpointRecT{\nu }
\newcommand\wfixpointRecT{\nu _{\omega} }

\newcommand\fixpoint{\mu }  %fixpoint operator

\newcommand\iterationn{{\mathsf{itt}}}  %iteration
\newcommand\wfixpoint{\mu _ \omega }  %fixpoint operator wcpos
\newcommand\witerate{\mathsf{it}_\omega }  %iteration wcpos

\newcommand\pwfixpoint{\overline{\mu} }  %fixpoint operator wcpos
\newcommand\pwiterate{\overline{\mathsf{it}} }  %iteration wcpos

\newcommand\leastelement{\bot}                              %least element
\newcommand\lfpoint[1]{\mathrm{lfp}\left({#1}\right)}       % least FIXED POINT
                     %second least element   

%Product types
\newcommand\diagk[1]{\mathrm{diag}_{#1} }
\newcommand\proj[1]{\pi_{#1}}            % Product projections
\newcommand\projY{\proj{Y}}
\newcommand\diagv[1]{\partial_{#1} }
\newcommand\pairL{\left( }
\newcommand\pairR{\right) }

%Special sets/spaces
\newcommand{\RR}{\mathbb{R}}
\newcommand{\NN}{\mathbb{N}}
\newcommand\NNN[1]{\mathbb{I}_{#1}}

%functions
\newcommand{\signR}{\mathsf{sign}}
\newcommand\semanc[1]{\mathsf{#1} }
\newcommand\seman[1]{ {#1}_\op }

%Syntax
\newcommand\targetreals{\reals }  
\newcommand\tangentprojection[2]{\mathfrak{h} _{#1} #2 }
\newcommand\tangentreals{\mathbf{vect} }  
\renewcommand\reals{\mathbf{real}}            % ground types in the syntax 
\newcommand\ints{\mathbf{int}}   
\newcommand\cnst[1]{\underline{#1}}           %constants in the syntax
\newcommand\cnzero{\overline{0}}
\newcommand\cncanoni[1]{\overline{e} _{#1}}

\newcommand\Op{\mathrm{Op}}                   % Ground operations/terms

\newcommand\syncat[1]{\mspace{-25mu}\synname{#1}}
\newcommand\synname[1]{\qquad\text{#1}}

\newenvironment{syntax}[1][]{%
	\(
	%  \rowcolors{100}{white}{white}%
	\begin{array}[t]{#1l@{\quad\!\!}*3{l@{}}@{\,}l}
	}{
	\end{array}
	\)%
}

\newcommand\gor{\mathrel{\lvert}}
\newcommand\vor{\mathrel{\big\lvert}}

\newcommand\Init{\mathbf{0}}
\newcommand\tReturn[1]{\mathbf{return}\,#1}
\newcommand\toin[3]{#2\,\mathbf{to}\,#1.\,#3}
\newcommand\tTuple[1]{\langle #1\rangle}
\newcommand\tUnit{\tTuple{\,}}

\newcommand\pMatch[5][\,]{\mathbf{case}\,#2\,\mathbf{of}#1\tPair{#3}{#4}\To#5}
\newcommand\tMatch[4][\,]{\mathbf{case}\,#2\,\mathbf{of}#1\tTuple{#3}\To#4}
\newcommand\vMatch[3][\,]{\mathbf{case}\,#2\,\mathbf{of}#1\{#3\}}
\newcommand\nvMatch[2][\,]{\mathbf{case}\,#2\,\mathbf{of}#1\{\,\}}
\newcommand\bvMatch[6][\,]{\mathbf{case}\,#2\,\mathbf{of}#1\{\tInl#3\To #4\, \mid \tInr#5\To #6\}}
\newcommand\rMatch[4][\,]{\mathbf{case}\,#2\,\mathbf{of}#1\mathbf{roll}\,#3\To#4}
\newcommand\tItFrom[3]{\mathbf{iterate}\,#1\,\mathbf{from}\,#2=#3}
\newcommand\To{\to}
\newcommand\tRoll{\mathbf{roll}\,}
\newcommand\tUnroll{\mathbf{unroll}\,}
\newcommand\tInl{\mathbf{inl}\,}
\newcommand\tInr{\mathbf{inr}\,}
\newcommand\tFst{\mathbf{fst}\,}
\newcommand\tSnd{\mathbf{snd}\,}
\newcommand\tPair[2]{\langle #1, #2\rangle}
\newcommand\rec[2]{\mu#1.#2}
\newcommand\trec[2]{\mathbf{\mu}#1.#2}

\WithSuffix\newcommand\t*{\,{\mathop{\times}}\,}
\WithSuffix\newcommand\t+{\,{\mathop{\sqcup}}\,}
\newcommand\tSign{\mathbf{sign}\,}
\newcommand\Unit{\mathbf{1}}
\newcommand\fun[1]{\lambda #1.}

\newcommand\zero{ { 0}^{\mathbf v}}
\newcommand\plus{{+}^\mathbf v}

\newcommand\subst[2]{#1{}[#2]}
\newcommand\sfor[2]{^{#2}\!/\!_{#1}}

\newcommand\tinf{\vdash}
\newcommand\ctx{\Gamma}
\newcommand\Ginf[3][]{\ctx #1\tinf #2 : #3}
\newcommand\Ginfv[3][]{\ctx #1\tinf^v #2 : #3}

\newcommand\Ginfc[3][]{\ctx #1\tinf^c #2 : #3}

\newcommand\DGinf[3][]{\Delta\mid\ctx #1\tinf #2 : #3}

\newcommand\freeeq[1]{\overset{\# #1}{\equiv}}
\newcommand\beeq{{\equiv}}

\newcommand\Dsynsymbol[1][]{\scalebox{0.8}{$\mathcal{D}$}_{#1}}%{{\stackrel{\to}{\mathcal{D}}}_{#1}}
\newcommand\Dsyn[2][]{\Dsynsymbol[#1](#2)}
\newcommand\Dsynrevsymbol[1][]{\scalebox{0.8}{$\overleftarrow{\mathcal{D}}$}_{#1}}%{{\stackrel{\to}{\mathcal{D}}}_{#1}}
\newcommand\Dsynrev[2][]{\Dsynrevsymbol[#1](#2)}

\newcommand\vectoraslineartransformation[1]{\tilde{#1}}

\newcommand\Dsynplainsymbol[1][]{\scalebox{0.8}{${\mathcal{D}}$}_{#1}}%{{\stackrel{\to}{\mathcal{D}}}_{#1}}
\newcommand\Dsynplain[2][]{\Dsynplainsymbol[#1](#2)}

\newcommand\DsynV[2][]{\Dsynsymbol[#1]{}_{\Cat{V}}(#2)}
\newcommand\DsynC[2][]{\Dsynsymbol[#1]{}_{\Cat{C}}(#2)}

\newcommand\DSyn{\mathbb{D}}
\newcommand\DSynrec{\mathbb{ID}}

\newcommand\dDSyn[1]{\overrightarrow{d}#1 }

\newcommand\dDSemtotaltra[2]{\mathfrak{D}^{#1}{#2} }

\newcommand\dDSemtra[2]{\mathfrak{d}^{#1}\left( #2\right)  }
\newcommand\dDSemj[2]{\mathfrak{d}_{#2}\left( #1\right)  }

\newcommand\SynVt{\Syn_V^{\target}}

\newcommand\SynTt{\Syn_\monadS^{\target}}
\newcommand\Synfixt{\Syn_\mu^{\target} }  %fixpoint operator
\newcommand\Synitt{\Syn_\mathsf{it}^{\target} }  %iteration

\newcommand\target{{\mathbf{tr}}}
\newcommand\recSyn{{\mathsf{R}}}

\newcommand\SynVr{\Syn_V^\recSyn}
\newcommand\SynCr{\Syn_C^\recSyn}
\newcommand\SynTr{\Syn_\monadS^\recSyn}

\newcommand\SynffixpointRecT{\ffixpointRecT_\Syn}
\newcommand\fddtpointRecT{\fixpointRecT_\Syn}

\newcommand\SynVrt{\Syn_V^{\recSyn\target}}

\newcommand\SynTrt{\Syn_\monadS^{\recSyn\target}}
\newcommand\SyntffixpointRecT{\ffixpointRecT^{\target}_\Syn}

\newcommand\Syn{\mathbf{Syn}}
\newcommand\SynV{\Syn_V}
\newcommand\SynC{\Syn_C}
\newcommand\SynJ{\Syn_J}
\newcommand\SynG{\Syn_G}
\newcommand\SynT{\Syn_\monadS}

\newcommand\Synfix{\Syn_\mu }  %fixpoint operator
\newcommand\Synit{\Syn_\mathsf{it} }  %iteration

\newcommand\semt[2]{[\hspace{-2.5pt}[#2]\hspace{-2.5pt}] _ {#1} }
\newcommand\sem[1]{[\hspace{-2.5pt}[#1]\hspace{-2.5pt}] }
\newcommand\semLR[2]{ \overline{[\hspace{-2.5pt}\sem{#2}\hspace{-2.5pt}]}_{#1} }
\newcommand\semLRG[2]{ \overline{[\hspace{-2.5pt}\sem{#2}\hspace{-2.5pt}]}_{#1}}

\newcommand\Lift[1]{{#1}_{\bot}}

\newcommand\SScone{\mathbf{SScone}}
\newcommand\Array[1]{#1[]}

\newcommand\Wrap[1]{\mathrm{Wrap}_{#1}}

\newcommand\lincomp[2]{#1*^{\mathbf v} #2}

\newcommand\typeproj[1]{\mathbf{p}_{#1}}
\newcommand\typezero[1]{\mathbf{z}_{#1}}

\newcommand\semroll{\mathrm{roll}}

\newcommand\extendedH[1]{\mathcal{#1}}

\makeatletter
\let\c@equation\c@figure    %Figures will have the same counter as equations.
\makeatother

\usepackage{tikz-cd}
\usepackage{ifluatex}
\input diagxy        % For 2D diagrams ("%D" lines)  
\xyoption{curve}     % For the ".curve=" feature in 2D diagrams
\xyoption{all}
\xyoption{2cell}

\begin{document}

% starting dednat-headcommands (diagrams via lualatex)
	
\ifluatex
\directlua{adddednatlualoader = function ()
     require = function (stem)
         local fname = dednat6dir..stem..".lua"
         package.loaded[stem] = package.loaded[stem] or dofile(fname) or fname
       end
   end}
\catcode`\^^J=10
\directlua{dofile "dednat6load.lua"}
\else
% \input\jobname.dnt
% Dednat6's "preamble1":
%
\def\diagxyto{\ifnextchar/{\toop}{\toop/>/}}
\def\to     {\rightarrow}
\def\defded#1#2{\expandafter\def\csname ded-#1\endcsname{#2}}
\def\ifdedundefined#1{\expandafter\ifx\csname ded-#1\endcsname\relax}
\def\ded#1{\ifdedundefined{#1}
    \errmessage{UNDEFINED DEDUCTION: #1}
  \else
    \csname ded-#1\endcsname
  \fi
}
\def\defdiag#1#2{\expandafter\def\csname diag-#1\endcsname{\bfig#2\efig}}
\def\defdiagprep#1#2#3{\expandafter\def\csname diag-#1\endcsname{{#2\bfig#3\efig}}}
\def\ifdiagundefined#1{\expandafter\ifx\csname diag-#1\endcsname\relax}
\def\diag#1{\ifdiagundefined{#1}
    \errmessage{UNDEFINED DIAGRAM: #1}
  \else
    \csname diag-#1\endcsname
  \fi
}
\newlength{\celllower}
\newlength{\lcelllower}
\def\cellfont{}
\def\lcellfont{}
\def\cell #1{\lower\celllower\hbox to 0pt{\hss\cellfont${#1}$\hss}}
\def\lcell#1{\lower\celllower\hbox to 0pt   {\lcellfont${#1}$\hss}}
\def\expr#1{\directlua{output(tostring(#1))}}
\def\eval#1{\directlua{#1}}
\def\pu{\directlua{pu()}}
%
% End of preamble1.

\defdiag{diag-new-terms-for-the-universal-property}{   % In the "%D"-block in lines 121--139
  \morphism(0,0)|a|/->/<1800,0>[{\RCBVcat\left(\left(\SynVr{,}\SynTr{,}\SynffixpointRecT\right){,}\left(\catV{,}\monadT{,}\ffixpointRecT\right)\right)}`{\CBVcat\left(\rCBVtoCBV\left(\SynVr{,}\SynTr{,}\SynffixpointRecT\right){,}\rCBVtoCBV\left(\catV{,}\monadT{,}\ffixpointRecT\right)\right)};{\rCBVtoCBV}]
  \morphism(1800,0)|r|/->/<0,-450>[{\CBVcat\left(\rCBVtoCBV\left(\SynVr{,}\SynTr{,}\SynffixpointRecT\right){,}\rCBVtoCBV\left(\catV{,}\monadT{,}\ffixpointRecT\right)\right)}`{\CBVcat\left(\left(\SynV{,}\SynT{,}\Synfix{,}\Synit\right){,}\rCBVtoCBV\left(\catV{,}\monadT{,}\ffixpointRecT\right)\right)};{\CBVcat\left(\incCBVtorCBV,\rCBVtoCBV\left(\catV{,}\monadT{,}\ffixpointRecT\right)\right)}]
}
\defdiag{unit-pullback}{   % In the "%D"-block in lines 144--168
  \morphism(0,0)/->/<600,0>[{D}`{\unlift{\LRMONAD\left(D,C,j\right)}};]
  \morphism(0,0)|b|/->/<0,-450>[{D}`{{G(C)}};{j}]
  \morphism(600,0)|r|/->/<0,-450>[{\unlift{\LRMONAD\left(D,C,j\right)}}`{{G(T'(C))}};{\unlift{\LRMONAD{j}}}]
  \morphism(0,-450)|b|/->/<600,0>[{{G(C)}}`{{G(T'(C))}};{{G(\ee_C)}}]
}
\defdiag{rolling-wCPO-pairs}{   % In the "%D"-block in lines 176--194
  \morphism(0,0)|a|/->/<1350,0>[{J\circ\paE{E}{A}{\catV}\left(\colim\left(\unlifteppair{\pEDchain{E}{A}}\right){,}\colim\left(\unlifteppair{\pEDchain{E}{A}}\right)\right)}`{\paE{E}{A}{\catC}\left(\colim\left(\pEDchain{E}{A}\right){,}\colim\left({\pEDchain{E}{A}}\right)\right)};{\cong}]
  \morphism(1350,0)|a|/->/<975,0>[{\paE{E}{A}{\catC}\left(\colim\left(\pEDchain{E}{A}\right){,}\colim\left({\pEDchain{E}{A}}\right)\right)}`{\colim\left(\pEDchain{E}{A}\right)};{\cong}]
  \morphism(2325,0)|a|/->/<675,0>[{\colim\left(\pEDchain{E}{A}\right)}`{J\colim\left(\unlifteppair{\pEDchain{E}{A}}\right)};{\cong}]
}
\defdiag{roll-eq1-diagram}{   % In the "%D"-block in lines 198--224
  \morphism(0,0)|a|/->/<1050,0>[{{\left({\catV}^\op\times\catV\right)^{n-1}}}`{{\left({\catV}^\op\times\catV\right)^n}};{{\pairL{\id},\pE{\fixpointRecT{E}}{\catV}^\op{,}\pE{\fixpointRecT{E}}{\catV}\pairR}}]
  \morphism(0,0)|b|/->/<1050,-450>[{{\left({\catV}^\op\times\catV\right)^{n-1}}}`{{\catV}};{{\pE{\fixpointRecT{E}}{\catV}}}]
  \morphism(1050,0)|r|/->/<0,-450>[{{\left({\catV}^\op\times\catV\right)^n}}`{{\catV}};{{\pE{E}{\catV}}}]
  \morphism(1050,-450)|b|/->/<-1050,0>[{{\catV}}`{{{\catV}'}};{{H}}]
  \morphism(638,-225)|a|/<=/<375,0>[{\phantom{O}}`{\phantom{O}};{{\rollReT}^{\pEE{E}}}]
}
\defdiag{diag-chain-of-ep-pairs-projections}{   % In the "%D"-block in lines 228--250
  \morphism(0,0)|a|/<-/<300,0>[{\epzeroo}`{\objectD{A}{1}};{\comochain{0}}]
  \morphism(300,0)|a|/<-/<300,0>[{\objectD{A}{1}}`{\objectD{A}{2}};{\comochain{1}}]
  \morphism(600,0)|a|/<-/<300,0>[{\objectD{A}{2}}`{\objectD{A}{3}};{\comochain{2}}]
  \morphism(900,0)|a|/<-/<270,0>[{\objectD{A}{3}}`{{\cdots}};{\comochain{3}}]
}
\defdiag{diag-chain-of-ep-pairs-for-the-F}{   % In the "%D"-block in lines 254--276
  \morphism(0,0)|a|/->/<300,0>[{\initiall}`{\FforREC\left(\initiall\right)};{\coproje{}}]
  \morphism(300,0)|a|/->/<480,0>[{\FforREC\left(\initiall\right)}`{\FforREC^2\left(\initiall\right)};{\FforREC\left(\coproje{}\right)}]
  \morphism(780,0)|a|/->/<480,0>[{\FforREC^2\left(\initiall\right)}`{\FforREC^3\left(\initiall\right)};{\FforREC^2\left(\coproje{}\right)}]
  \morphism(1260,0)/->/<300,0>[{\FforREC^3\left(\initiall\right)}`{{\cdots}};]
}
\defdiag{diag-chain-of-ep-pairs}{   % In the "%D"-block in lines 281--303
  \morphism(0,0)|a|/->/<300,0>[{\epzeroo}`{\objectD{A}{1}};{\mochain{0}}]
  \morphism(300,0)|a|/->/<300,0>[{\objectD{A}{1}}`{\objectD{A}{2}};{\mochain{1}}]
  \morphism(600,0)|a|/->/<300,0>[{\objectD{A}{2}}`{\objectD{A}{3}};{\mochain{2}}]
  \morphism(900,0)|a|/->/<270,0>[{\objectD{A}{3}}`{{\cdots}};{\mochain{3}}]
}
\defdiag{roll-eq2-diagram}{   % In the "%D"-block in lines 308--334
  \morphism(0,0)|a|/->/<1125,0>[{{\left({{\catV}'}^\op\times{\catV}'\right)^{n-1}}}`{{\left({{\catV}'}^\op\times{\catV}'\right)^n}};{{\pairL{\id},\pE{\fixpointRecT{E'}}{{\catV}'}^\op{,}\pE{\fixpointRecT{E'}}{{\catV}'}\pairR}}]
  \morphism(0,0)|b|/->/<1125,-450>[{{\left({{\catV}'}^\op\times{\catV}'\right)^{n-1}}}`{{{\catV}'}};{{\pE{\fixpointRecT{E'}}{{\catV}'}}}]
  \morphism(1125,0)|r|/->/<0,-450>[{{\left({{\catV}'}^\op\times{\catV}'\right)^n}}`{{{\catV}'}};{{\pE{E}{{\catV}'}}}]
  \morphism(0,-450)|l|/->/<0,450>[{{\left({\catV}^\op\times\catV\right)^{n-1}}}`{{\left({{\catV}'}^\op\times{\catV}'\right)^{n-1}}};{{\left({H}^\op\times{H}\right)^{n-1}}}]
  \morphism(675,-225)|a|/<=/<375,0>[{\phantom{O}}`{\phantom{O}};{{\rollReT}^{\pEE{E'}}}]
}
\defdiag{roll-basic-diagram}{   % In the "%D"-block in lines 340--366
  \morphism(0,0)|a|/->/<1125,0>[{{\left({\catV}^\op\times\catV\right)^{n}}}`{{\left({\catV}^\op\times\catV\right)^{n+1}}};{{\pairL{\id},\pE{\fixpointRecT{E}}{\catV}^\op{,}\pE{\fixpointRecT{E}}{\catV}\pairR}}]
  \morphism(0,0)|b|/->/<1125,-450>[{{\left({\catV}^\op\times\catV\right)^{n}}}`{{\catV}};{{\pE{\fixpointRecT{E}}{\catV}}}]
  \morphism(1125,0)|r|/->/<0,-450>[{{\left({\catV}^\op\times\catV\right)^{n+1}}}`{{\catV}};{{\pE{E}{\catV}}}]
  \morphism(1125,-450)|b|/->/<-1125,0>[{{\catV}}`{{\catC}};{{J}}]
  \morphism(675,-225)|a|/<=/<375,0>[{\phantom{O}}`{\phantom{O}};{{\rollReT}^{\pEE{E}}}]
}
\defdiag{diag-H-compatibility-of-paramatric-types}{   % In the "%D"-block in lines 371--395
  \morphism(0,0)|a|/->/<1125,0>[{{\left({\catV}^\op\times\catV\right)^{n}}}`{{\catV}};{{\pE{E}{{\catV}}}}]
  \morphism(0,0)|b|/->/<0,-300>[{{\left({\catV}^\op\times\catV\right)^{n}}}`{{\left({\catV'}^\op\times\catV'\right)^n}};{{\left(H^\op\times{H}\right)^n}}]
  \morphism(1125,0)|r|/->/<0,-300>[{{\catV}}`{{{\catV}'}};{{H}}]
  \morphism(0,-300)|b|/->/<1125,0>[{{\left({\catV'}^\op\times\catV'\right)^n}}`{{{\catV}'}};{{\pE{E'}{{\catV}'}}}]
}
\defdiag{basic-logicalrelations}{   % In the "%D"-block in lines 408--430
  \morphism(0,0)|a|/->/<2025,0>[{{\left(\SynV,{\SynT},{\Synfix},\Synit\right)}}`{{\left(\SynV,{\SynT},{\Synfix},\Synit\right)\times\left(\SynVt{,}\SynTt{,}\Synfixt{,}\Synitt\right)}};{{\left(\ID{,}\DSyn\right)}}]
  \morphism(2025,0)|r|/->/<0,-300>[{{\left(\SynV,{\SynT},{\Synfix},\Synit\right)\times\left(\SynVt{,}\SynTt{,}\Synfixt{,}\Synitt\right)}}`{{\CBVU\left(\wCpo\times\wCpo{,}\monadwP{-}\right)}};{{\sem{-}\times\semt{k}{-}}}]
  \morphism(0,0)|l|/->/<0,-300>[{{\left(\SynV,{\SynT},{\Synfix},\Synit\right)}}`{{\CBVU\left(\SUBscone{\wCpo}{\sconeFUNCTOR{n,k}},\monadLR{n,k}{-}\right)}};{{\semLR{n,k}{-}}}]
  \morphism(0,-300)|b|/->/<2025,0>[{{\CBVU\left(\SUBscone{\wCpo}{\sconeFUNCTOR{n,k}},\monadLR{n,k}{-}\right)}}`{{\CBVU\left(\wCpo\times\wCpo{,}\monadwP{-}\right)}};{{\CBVU\left({\forgetfulSub}_{n,k}\right)}}]
}
\defdiag{basic-logicalrelations-recursive-types}{   % In the "%D"-block in lines 437--459
  \morphism(0,0)|a|/->/<2025,0>[{{\left(\SynVr,\SynTr{,}\SynffixpointRecT\right)}}`{{{\left(\SynVr,\SynTr{,}\SynffixpointRecT\right)}\times\left(\SynVrt{,}\SynTrt{,}\SyntffixpointRecT\right)}};{{\left(\ID{,}\DSynrec\right)}}]
  \morphism(2025,0)|r|/->/<0,-300>[{{{\left(\SynVr,\SynTr{,}\SynffixpointRecT\right)}\times\left(\SynVrt{,}\SynTrt{,}\SyntffixpointRecT\right)}}`{{\rCBVU\left(\wCpo\times\wCpo{,}\monadwP{-}\right)}};{{\sem{-}\times\semt{k}{-}}}]
  \morphism(0,0)|l|/->/<0,-300>[{{\left(\SynVr,\SynTr{,}\SynffixpointRecT\right)}}`{{\rCBVU\left(\SUBscone{\wCpo}{\sconeFUNCTOR{n,k}},\monadLR{n,k}{-}\right)}};{{\semLR{n,k}{-}}}]
  \morphism(0,-300)|b|/->/<2025,0>[{{\rCBVU\left(\SUBscone{\wCpo}{\sconeFUNCTOR{n,k}},\monadLR{n,k}{-}\right)}}`{{\rCBVU\left(\wCpo\times\wCpo{,}\monadwP{-}\right)}};{{\rCBVU\left({\forgetfulSub}_{n,k}\right)}}]
}
\defdiag{forgetful-subscone}{   % In the "%D"-block in lines 472--493
  \morphism(0,0)/->/<525,0>[{{\SUBscone{\catD}{G}}}`{{\catD\downarrow{G}}};]
  \morphism(525,0)|a|/->/<450,0>[{{\catD\downarrow{G}}}`{{\catD\times{\catB}}};{{\forgetfulS}}]
  \morphism(975,0)|a|/->/<375,0>[{{\catD\times{\catB}}}`{{\catB}};{{\pi_{\catB}}}]
}
\defdiag{forgetful-indentity-on-C-monad-subscone}{   % In the "%D"-block in lines 504--528
  \morphism(150,0)|a|/->/<600,0>[{{\catD\downarrow{G}}}`{{\catD\downarrow{G}}};{\monadSub}]
  \morphism(750,0)|r|/->/<150,-300>[{{\catD\downarrow{G}}}`{{\catD\times{\catB}}};{{\forgetfulS}}]
  \morphism(150,0)|l|/->/<-150,-300>[{{\catD\downarrow{G}}}`{{\catD\times{\catB}}};{{\forgetfulS}}]
  \morphism(0,-300)|b|/->/<450,0>[{{\catD\times{\catB}}}`{{\catB}};{{\pi_{\catB}}}]
  \morphism(900,-300)|b|/->/<-450,0>[{{\catD\times{\catB}}}`{{\catB}};{{\pi_{\catB}}}]
}
\defdiag{kleisli-morphism-pair}{   % In the "%D"-block in lines 534--557
  \morphism(0,-225)|l|/->/<0,225>[{\catV}`{\catC};{J}]
  \morphism(1500,-225)|r|/->/<0,225>[{{\catV}'}`{{\catC}'};{J'}]
  \morphism(0,0)|a|/->/<1500,0>[{\catC}`{{\catC}'};{\lift{H}}]
  \morphism(0,-225)|b|/->/<1500,0>[{\catV}`{{\catV}'};{H}]
}
\defdiag{obvious-diagram-parametric-types}{   % In the "%D"-block in lines 567--592
  \morphism(0,-450)|l|/->/<0,450>[{{\left({\catV}^\op\times\catV\right)^{n}}}`{{\left({\catC}^\op\times\catC\right)^{n}}};{{\left({J^\op}\times{J}\right)^{n}}}]
  \morphism(675,-450)|a|/->/<0,450>[{{\catV}}`{{\catC}};{{J}}]
  \morphism(0,-450)|b|/->/<675,0>[{{\left({\catV}^\op\times\catV\right)^{n}}}`{{\catV}};{{\pE{E}{\catV}}}]
  \morphism(0,0)|a|/->/<675,0>[{{\left({\catC}^\op\times\catC\right)^{n}}}`{{\catC}};{{\pE{E}{\catC}}}]
}
\defdiag{obvious-diagram-parametric-types-n}{   % In the "%D"-block in lines 597--622
  \morphism(0,-450)|l|/->/<0,450>[{{\left({\catV}^\op\times\catV\right)^{n+1}}}`{{\left({\catC}^\op\times\catC\right)^{n+1}}};{{\left({J^\op}\times{J}\right)^{n+1}}}]
  \morphism(675,-450)|a|/->/<0,450>[{{\catV}}`{{\catC}};{{J}}]
  \morphism(0,-450)|b|/->/<675,0>[{{\left({\catV}^\op\times\catV\right)^{n+1}}}`{{\catV}};{{\pE{E}{\catV}}}]
  \morphism(0,0)|a|/->/<675,0>[{{\left({\catC}^\op\times\catC\right)^{n+1}}}`{{\catC}};{{\pE{E}{\catC}}}]
}
\defdiag{obvious-diagram-parametric-types-(n-1)-recursive}{   % In the "%D"-block in lines 628--653
  \morphism(0,-450)|l|/->/<0,450>[{{\left({\catV}^\op\times\catV\right)^{n}}}`{{\left({\catC}^\op\times\catC\right)^{n}}};{{\left({J^\op}\times{J}\right)^{n}}}]
  \morphism(675,-450)|a|/->/<0,450>[{{\catV}}`{{\catC}};{{J}}]
  \morphism(0,-450)|b|/->/<675,0>[{{\left({\catV}^\op\times\catV\right)^{n}}}`{{\catV}};{{\pE{\fixpointRecT{E}}{\catV}}}]
  \morphism(0,0)|a|/->/<675,0>[{{\left({\catC}^\op\times\catC\right)^{n}}}`{{\catC}};{{\pE{\fixpointRecT{{E}}}{\catC}}}]
}
\defdiag{wCpo-category-of-morphisms-pullback-comma}{   % In the "%D"-block in lines 659--682
  \morphism(0,0)|l|/->/<0,-450>[{{\catD\downarrow{G}}}`{{\catC}};{{\cCPULL}}]
  \morphism(0,0)|a|/->/<675,0>[{{\catD\downarrow{G}}}`{{\morphismcatTwo\pitchfork\catD}};{{\morPULL}}]
  \morphism(675,0)|r|/->/<0,-450>[{{\morphismcatTwo\pitchfork\catD}}`{{\catD}};{{\codomm}}]
  \morphism(0,-450)|b|/->/<675,0>[{{\catC}}`{{\catD}};{{G}}]
}
\defdiag{wCpo-category-of-morphisms}{   % In the "%D"-block in lines 691--714
  \morphism(0,0)|l|/->/<0,-225>[{D_0}`{{D_1}};{{f}}]
  \morphism(0,0)|a|/->/<1200,0>[{D_0}`{{D_0}'};{{\alpha}_0}]
  \morphism(1200,0)|r|/->/<0,-225>[{{D_0}'}`{{D_1'}};{{g}}]
  \morphism(0,-225)|b|/->/<1200,0>[{{D_1}}`{{D_1'}};{{\alpha}_1}]
}
\defdiag{morphism-comma-category}{   % In the "%D"-block in lines 724--747
  \morphism(0,0)|l|/->/<0,-225>[{D}`{{G(C)}};{{j}}]
  \morphism(0,0)|a|/->/<1200,0>[{D}`{{D}'};{{{\alpha}_0}}]
  \morphism(1200,0)|r|/->/<0,-225>[{{D}'}`{{G(C')}};{{h}}]
  \morphism(0,-225)|b|/->/<1200,0>[{{G(C)}}`{{G(C')}};{{G\left(\alpha_1\right)}}]
}
\defdiag{basic-wcpo-chain}{   % In the "%D"-block in lines 756--778
  \morphism(0,0)/->/<300,0>[{\leastelement}`{{q\left(\leastelement\right)}};]
  \morphism(300,0)/->/<300,0>[{{q\left(\leastelement\right)}}`{\cdots};]
  \morphism(600,0)/->/<300,0>[{\cdots}`{{q^n\left({\leastelement}\right)}};]
  \morphism(900,0)/->/<300,0>[{{q^n\left({\leastelement}\right)}}`{\cdots};]
}
\defdiag{strong-monad-compatibility-identity}{   % In the "%D"-block in lines 789--811
  \morphism(0,0)|l|/->/<0,-900>[{TW}`{\terminall\times{TW}};{\cong}]
  \morphism(0,0)|r|/->/<675,-900>[{TW}`{T(\terminall\times{W})};{\cong}]
  \morphism(0,-900)|b|/->/<675,0>[{\terminall\times{TW}}`{T(\terminall\times{W})};{\strengthT_{\terminall{,}W}}]
}
\defdiag{strong-monad-compatibility-associativity}{   % In the "%D"-block in lines 815--840
  \morphism(0,0)|a|/->/<1050,0>[{\left({W\times{Y}}\right)\times{TZ}}`{W\times\left({Y}\times{TZ}\right)};{\cong}]
  \morphism(1050,0)|r|/->/<0,-450>[{W\times\left({Y}\times{TZ}\right)}`{W\times{T\left({Y\times{Z}}\right)}};{W\times{\strengthT_{Y,Z}}}]
  \morphism(1050,-450)|r|/->/<0,-450>[{W\times{T\left({Y\times{Z}}\right)}}`{T\left({W}\times\left({Y\times{Z}}\right)\right)};{\strengthT_{W,Y\times{Z}}}]
  \morphism(0,0)|l|/->/<0,-900>[{\left({W\times{Y}}\right)\times{TZ}}`{T\left(\left({W\times{Y}}\right)\times{Z}\right)};{\strengthT_{W\times{Y},Z}}]
  \morphism(0,-900)|l|/->/<1050,0>[{T\left(\left({W\times{Y}}\right)\times{Z}\right)}`{T\left({W}\times\left({Y\times{Z}}\right)\right)};{\cong}]
}
\defdiag{strong-monad-compatibility-unit}{   % In the "%D"-block in lines 844--866
  \morphism(0,0)|l|/->/<0,-900>[{W\times{Y}}`{W\times{TY}};{W\times\ee_{Y}}]
  \morphism(0,0)|r|/->/<675,-900>[{W\times{Y}}`{T\left(W\times{Y}\right)};{\ee_{W\times{Y}}}]
  \morphism(0,-900)|b|/->/<675,0>[{W\times{TY}}`{T\left(W\times{Y}\right)};{\strengthT_{W{,}Y}}]
}
\defdiag{strong-monad-compatibility-multiplication}{   % In the "%D"-block in lines 870--895
  \morphism(0,0)|a|/->/<1050,0>[{W\times{TTY}}`{W\times{TY}};{W\times{\mm_Y}}]
  \morphism(1050,0)|r|/->/<0,-900>[{W\times{TY}}`{T\left({W\times{Y}}\right)};{W\times{\strengthT_{Y,Z}}}]
  \morphism(0,-900)|r|/->/<1050,0>[{TT\left({W\times{Y}}\right)}`{T\left({W\times{Y}}\right)};{\mm_{W\times{Y}}}]
  \morphism(0,0)|l|/->/<0,-450>[{W\times{TTY}}`{T\left(W\times{TY}\right)};{\strengthT_{W,TY}}]
  \morphism(0,-450)|l|/->/<0,-450>[{T\left(W\times{TY}\right)}`{TT\left({W\times{Y}}\right)};{T\left(\strengthT_{W,Y}\right)}]
}
\defdiag{twocellpushoutdefinitionrightside}{   % In the "%D"-block in lines 922--951
  \morphism(675,0)|l|/{@{->}@/_25pt/}/<0,-525>[{b_1}`{y};{h_0}]
  \morphism(675,0)|r|/{@{->}@/^25pt/}/<0,-525>[{b_1}`{y};{h_0'}]
  \morphism(0,0)|l|/->/<0,-525>[{e}`{b_0};{p_0}]
  \morphism(0,0)|a|/->/<675,0>[{e}`{b_1};{p_1}]
  \morphism(0,-525)|r|/->/<675,0>[{b_0}`{y};{h_1}]
  \morphism(98,-262)/=/<180,0>[{\phantom{O}}`{\phantom{O}};]
  \morphism(525,-262)|a|/=>/<300,0>[{\phantom{O}}`{\phantom{O}};{\xi_0}]
}
\defdiag{twocellofpushoutdefinitionleftside}{   % In the "%D"-block in lines 954--983
  \morphism(0,-525)|a|/{@{->}@/^25pt/}/<675,0>[{b_0}`{y};{h_1'}]
  \morphism(0,0)|a|/->/<675,0>[{e}`{b_1};{p_1}]
  \morphism(0,-525)|b|/{@{->}@/_25pt/}/<675,0>[{b_0}`{y};{h_1}]
  \morphism(0,-525)|l|/<-/<0,525>[{b_0}`{e};{p_0}]
  \morphism(675,-525)|r|/<-/<0,525>[{y}`{b_1};{h_0'}]
  \morphism(338,-22)/=/<0,-180>[{\phantom{O}}`{\phantom{O}};]
  \morphism(338,-375)|r|/<=/<0,-300>[{\phantom{O}}`{\phantom{O}};{\xi_1}]
}

\def\pu{}
\fi	

%L forths["|-"] = function () pusharrow("|-") end
%L forths["-|"] = function () pusharrow("-|") end
%L forths["-nul-"] = function () pusharrow("") end

%end of the dednat-headcommands (diagrams via lualatex)

%\title[AD and Logical Relations for Recursive Features]{Automatic Differentiation and Logical %Relations for Recursive Features} 

\title[AD for ML-family languages: correctness via logical relations]{Automatic Differentiation for ML-family languages: correctness via logical relations}

\author{Fernando Lucatelli Nunes}
\affiliation{
	\department{Department of Information and Computing Sciences} 
	\institution{Utrecht University}          
	\country{Netherlands}                    
}
\email{f.lucatellinunes@uu.nl}

\author{Matthijs Vákár}
\affiliation{
	\department{Department of Information and Computing Sciences} 
	\institution{Utrecht University}          
	\country{Netherlands}                    
}
\email{m.i.l.vakar@uu.nl}

\settocdepth{part}

%Please, feel free to improve the abstract. I took the point that the logical relations is 
% probably the main contribution we have here.

\begin{abstract}
We give a simple, direct and reusable logical relations technique for languages with term and type recursion and partially defined differentiable functions. We demonstrate it by working out the case of  Automatic Differentiation (AD) correctness: namely, we present a correctness proof of a dual numbers style AD code transformation for realistic functional languages in the ML-family. We also show how this code transformation provides us with  correct forward- and {reverse-mode} AD.

The starting point is to interpret a functional programming language as a suitable freely generated categorical structure. In this setting, by the universal property of the syntactic categorical structure, the dual numbers AD code transformation and the basic $\wCpo$-semantics arise as structure preserving functors. The proof follows, then, by a novel logical relations argument.

The key to much of our contribution is a powerful monadic logical relations technique for term recursion and recursive types. It provides us with a semantic correctness proof based  on a simple approach for denotational semantics, making use only of the very basic concrete model of \wcpos. 
\end{abstract}

\maketitle

% Diagrams in dednat

%
%D diagram diag-new-terms-for-the-universal-property 
%D 2Dx     100   +120        
%D 2D  100 A1    A2     
%D 2D  +30       A3    
%D
%D
%D ((
%D    ren A1 => \RCBVcat\left(\left(\SynVr{,}\SynTr{,}\SynffixpointRecT\right){,}\left(\catV{,}\monadT{,}\ffixpointRecT\right)\right) 
%D    ren A2 => \CBVcat\left(\rCBVtoCBV\left(\SynVr{,}\SynTr{,}\SynffixpointRecT\right){,}\rCBVtoCBV\left(\catV{,}\monadT{,}\ffixpointRecT\right)\right) 
%D    ren A3 =>  \CBVcat\left(\left(\SynV{,}\SynT{,}\Synfix{,}\Synit\right){,}\rCBVtoCBV\left(\catV{,}\monadT{,}\ffixpointRecT\right)\right) 
%D
%D    # 1cells
%D
%D    A1 A2 -> .plabel= a  \rCBVtoCBV
%D    A2 A3 -> .plabel= r  \CBVcat\left(\incCBVtorCBV,\rCBVtoCBV\left(\catV{,}\monadT{,}\ffixpointRecT\right)\right) 
%D
%D ))
%D enddiagram
%D
\pu

%
%D diagram unit-pullback
%D 2Dx     100     +40         
%D 2D  100 A1      A2      
%D 2D  +30 B1      B2      
%D
%D
%D ((
%D    ren B1 B2 ==> {G(C)} {G(T'(C))} 
%D    ren A1 A2 ==> D \unlift{\LRMONAD\left(D,C,j\right)}
%D
%D
%D
%D    # 1cells
%D
%D    A1 A2        ->                
%D    A1 B1        ->      .plabel= b   j
%D    A2 B2        ->      .plabel= r   \unlift{\LRMONAD{j}}
%D    B1 B2        ->      .plabel= b   {G(\ee_C)}
%D
%D    # 2cells
%D
%D
%D ))
%D enddiagram
%D
\pu

%
%D diagram rolling-wCPO-pairs 
%D 2Dx     100   +90   +65   +45  
%D 2D  100 A1    A2    A3    A4
%D
%D
%D ((
%D    ren A1 A2 A3 A4 => J\circ\paE{E}{A}{\catV}\left(\colim\left(\unlifteppair{\pEDchain{E}{A}}\right){,}\colim\left(\unlifteppair{\pEDchain{E}{A}}\right)\right)   \paE{E}{A}{\catC}\left(\colim\left(\pEDchain{E}{A}\right){,}\colim\left({\pEDchain{E}{A}}\right)\right)  \colim\left(\pEDchain{E}{A}\right) J\colim\left(\unlifteppair{\pEDchain{E}{A}}\right)
%D
%D
%D
%D    # 1cells
%D
%D    A1 A2 -> .plabel= a  \cong
%D    A2 A3 -> .plabel= a  \cong
%D    A3 A4 -> .plabel= a  \cong
%D
%D ))
%D enddiagram
%D
\pu

%
%D diagram roll-eq1-diagram 
%D 2Dx     100     +70         
%D 2D  100 A1      A2      
%D 2D  +30 B2      B1      
%D
%D
%D ((
%D    ren B1 B2 ==> {\catV} {{\catV}'} 
%D    ren A1 A2 ==> {\left({\catV}^\op\times\catV\right)^{n-1}} {\left({\catV}^\op\times\catV\right)^n}
%D
%D
%D
%D    # 1cells
%D
%D    A1 A2        ->      .plabel= a  {\pairL{\id},\pE{\fixpointRecT{E}}{\catV}^\op{,}\pE{\fixpointRecT{E}}{\catV}\pairR}             
%D    A1 B1        ->      .plabel= b   {\pE{\fixpointRecT{E}}{\catV}}
%D    A2 B1        ->      .plabel= r   {\pE{E}{\catV}}
%D    B1 B2        ->      .plabel= b   {H}
%D
%D    # 2cells
%D
%D
%D    A1 B1 harrownodes  40 25 nil  <=     .plabel= a {\rollReT}^{\pEE{E}}
%D
%D ))
%D enddiagram
%D
\pu

%
%D diagram diag-chain-of-ep-pairs-projections
%D 2Dx     100     +20      +20   +20   +18       
%D 2D  100 A1      A2       A3    A4    A5
%D
%D
%D ((
%D    ren A1 A2 A3 A4 A5 ==> \epzeroo \objectD{A}{1} \objectD{A}{2} \objectD{A}{3}  {\cdots}
%D
%D
%D
%D    # 1cells
%D
%D    A1 A2        <-      .plabel= a   \comochain{0}          
%D    A2 A3        <-      .plabel= a   \comochain{1}
%D    A3 A4        <-      .plabel= a   \comochain{2}
%D    A4 A5        <-      .plabel= a   \comochain{3}
%D
%D    # 2cells
%D
%D
%D ))
%D enddiagram
%D
\pu

%
%D diagram diag-chain-of-ep-pairs-for-the-F
%D 2Dx     100     +20      +32   +32   +20       
%D 2D  100 A1      A2       A3    A4    A5
%D
%D
%D ((
%D    ren A1 A2 A3 A4 A5 ==> \initiall \FforREC\left(\initiall\right) \FforREC^2\left(\initiall\right) \FforREC^3\left(\initiall\right)  {\cdots}
%D
%D
%D
%D    # 1cells
%D
%D    A1 A2        ->      .plabel= a   \coproje{}         
%D    A2 A3        ->      .plabel= a   \FforREC\left(\coproje{}\right)
%D    A3 A4        ->      .plabel= a   \FforREC^2\left(\coproje{}\right)
%D    A4 A5        ->        
%D
%D    # 2cells
%D
%D
%D ))
%D enddiagram
%D
\pu

%
%D diagram diag-chain-of-ep-pairs
%D 2Dx     100     +20      +20   +20   +18       
%D 2D  100 A1      A2       A3    A4    A5
%D
%D
%D ((
%D    ren A1 A2 A3 A4 A5 ==> \epzeroo \objectD{A}{1} \objectD{A}{2} \objectD{A}{3}  {\cdots}
%D
%D
%D
%D    # 1cells
%D
%D    A1 A2        ->      .plabel= a   \mochain{0}          
%D    A2 A3        ->      .plabel= a   \mochain{1}
%D    A3 A4        ->      .plabel= a   \mochain{2}
%D    A4 A5        ->      .plabel= a   \mochain{3}
%D
%D    # 2cells
%D
%D
%D ))
%D enddiagram
%D
\pu

%
%D diagram roll-eq2-diagram 
%D 2Dx     100     +75         
%D 2D  100 A1      A2      
%D 2D  +30 B2      B1      
%D
%D
%D ((
%D    ren B1 B2 ==> {{\catV}'} {\left({\catV}^\op\times\catV\right)^{n-1}}
%D    ren A1 A2 ==> {\left({{\catV}'}^\op\times{\catV}'\right)^{n-1}} {\left({{\catV}'}^\op\times{\catV}'\right)^n}
%D
%D
%D
%D    # 1cells
%D
%D    A1 A2        ->      .plabel= a  {\pairL{\id},\pE{\fixpointRecT{E'}}{{\catV}'}^\op{,}\pE{\fixpointRecT{E'}}{{\catV}'}\pairR}             
%D    A1 B1        ->      .plabel= b   {\pE{\fixpointRecT{E'}}{{\catV}'}}
%D    A2 B1        ->      .plabel= r   {\pE{E}{{\catV}'}}
%D    B2 A1        ->      .plabel= l   {\left({H}^\op\times{H}\right)^{n-1}}
%D
%D    # 2cells
%D
%D
%D    A1 B1 harrownodes  40 25 nil  <=       .plabel= a {\rollReT}^{\pEE{E'}}
%D
%D ))
%D enddiagram
%D
\pu

%
%D diagram roll-basic-diagram 
%D 2Dx     100     +75         
%D 2D  100 A1      A2      
%D 2D  +30 B2      B1      
%D
%D
%D ((
%D    ren B1 B2 ==> {\catV} {\catC} 
%D    ren A1 A2 ==> {\left({\catV}^\op\times\catV\right)^{n}} {\left({\catV}^\op\times\catV\right)^{n+1}}
%D
%D
%D
%D    # 1cells
%D
%D    A1 A2        ->      .plabel= a  {\pairL{\id},\pE{\fixpointRecT{E}}{\catV}^\op{,}\pE{\fixpointRecT{E}}{\catV}\pairR}             
%D    A1 B1        ->      .plabel= b   {\pE{\fixpointRecT{E}}{\catV}}
%D    A2 B1        ->      .plabel= r   {\pE{E}{\catV}}
%D    B1 B2        ->      .plabel= b   {J}
%D
%D    # 2cells
%D
%D
%D    A1 B1 harrownodes  40 25 nil  <=       .plabel= a {\rollReT}^{\pEE{E}}
%D
%D ))
%D enddiagram
%D
\pu

%
%D diagram diag-H-compatibility-of-paramatric-types
%D 2Dx     100     +75    
%D 2D  100 A1      A2      
%D 2D  +20 B1      B2
%D
%D
%D ((
%D    ren A2 B2 ==> {\catV} {{\catV}'} 
%D    ren A1 B1 ==> {\left({\catV}^\op\times\catV\right)^{n}} {\left({\catV'}^\op\times\catV'\right)^n}
%D
%D
%D
%D    # 1cells
%D
%D    A1 A2        ->      .plabel= a {\pE{E}{{\catV}}}            
%D    A1 B1        ->      .plabel= b   {\left(H^\op\times{H}\right)^n}
%D    A2 B2        ->      .plabel= r   {H}
%D    B1 B2        ->      .plabel= b   {\pE{E'}{{\catV}'}}
%D
%D    # 2cells
%D
%D
%D ))
%D enddiagram
%D
\pu

%
%D diagram basic-logicalrelations
%D 2Dx     100   +135      
%D 2D  100 A1    A2    
%D 2D  +20 B1    B2
%D
%D ((
%D    ren A1 A2 B1 B2 ==>  {\left(\SynV,{\SynT},{\Synfix},\Synit\right)} {\left(\SynV,{\SynT},{\Synfix},\Synit\right)\times\left(\SynVt{,}\SynTt{,}\Synfixt{,}\Synitt\right)} {\CBVU\left(\SUBscone{\wCpo}{\sconeFUNCTOR{n,k}},\monadLR{n,k}{-}\right)} {\CBVU\left(\wCpo\times\wCpo{,}\monadwP{-}\right)}
%D
%D
%D
%D    # 1cells
%D        
%D    A1 A2        ->      .plabel= a    {\left(\ID{,}\DSyn\right)}             
%D    A2 B2        ->      .plabel= r   {\sem{-}\times\semt{k}{-}}
%D    A1 B1        ->      .plabel= l   {\semLR{n,k}{-}}
%D    B1 B2        ->      .plabel= b    {\CBVU\left({\forgetfulSub}_{n,k}\right)}  
%D
%D    # 2cells
%D
%D
%D ))
%D enddiagram
%D
\pu

%
%D diagram basic-logicalrelations-recursive-types
%D 2Dx     100   +135      
%D 2D  100 A1    A2    
%D 2D  +20 B1    B2
%D
%D ((
%D    ren A1 A2 B1 B2 ==>  {\left(\SynVr,\SynTr{,}\SynffixpointRecT\right)} {{\left(\SynVr,\SynTr{,}\SynffixpointRecT\right)}\times\left(\SynVrt{,}\SynTrt{,}\SyntffixpointRecT\right)} {\rCBVU\left(\SUBscone{\wCpo}{\sconeFUNCTOR{n,k}},\monadLR{n,k}{-}\right)} {\rCBVU\left(\wCpo\times\wCpo{,}\monadwP{-}\right)}
%D
%D
%D
%D    # 1cells
%D        
%D    A1 A2        ->      .plabel= a    {\left(\ID{,}\DSynrec\right)}             
%D    A2 B2        ->      .plabel= r   {\sem{-}\times\semt{k}{-}}
%D    A1 B1        ->      .plabel= l   {\semLR{n,k}{-}}
%D    B1 B2        ->      .plabel= b    {\rCBVU\left({\forgetfulSub}_{n,k}\right)}  
%D
%D    # 2cells
%D
%D
%D ))
%D enddiagram
%D
\pu

%
%D diagram forgetful-subscone
%D 2Dx     100   +35     +30    +25   
%D 2D  100 A1    A2      A3     A4 
%D
%D
%D ((
%D    ren A1 A2 A3 A4 ==>  {\SUBscone{\catD}{G}} {\catD\downarrow{G}}  {\catD\times{\catB}} {\catB} 
%D
%D
%D
%D    # 1cells
%D        
%D    A1 A2        ->                   
%D    A2 A3        ->      .plabel= a   {\forgetfulS}
%D    A3 A4        ->      .plabel= a   {\pi_{\catB}}
%D
%D    # 2cells
%D
%D
%D ))
%D enddiagram
%D
\pu

%
%D diagram forgetful-indentity-on-C-monad-subscone
%D 2Dx     100   +10  +20   +20    +10     
%D 2D  100       A1         A2       
%D 2D  +20 B1         C1           B2
%D 
%D ((
%D    ren A1 A2 B1 B2 ==>  {\catD\downarrow{G}} {\catD\downarrow{G}}   {\catD\times{\catB}} {\catD\times{\catB}} 
%D    ren C1 ==>  {\catB} 
%D
%D
%D
%D    # 1cells
%D        
%D    A1 A2        ->      .plabel= a    \monadSub             
%D    A2 B2        ->      .plabel= r   {\forgetfulS}
%D    A1 B1        ->      .plabel= l   {\forgetfulS}
%D    B1 C1        ->      .plabel= b   {\pi_{\catB}}
%D    B2 C1        ->      .plabel= b   {\pi_{\catB}}
%D
%D    # 2cells
%D
%D
%D ))
%D enddiagram
%D
\pu

%
%D diagram kleisli-morphism-pair
%D 2Dx     100     +100     
%D 2D  100 A1      A2      
%D 2D  +15 B1      B2
%D
%D
%D ((
%D    ren A1 A2 B1 B2 ==>   \catC {\catC}' \catV {\catV}'
%D
%D
%D
%D    # 1cells
%D
%D    B1 A1        ->      .plabel= l   J     
%D    B2 A2        ->      .plabel= r   J'       
%D    A1 A2        ->      .plabel= a   \lift{H}
%D    B1 B2        ->      .plabel= b   H
%D
%D    # 2cells
%D
%D
%D ))
%D enddiagram
%D
\pu

%
%D diagram obvious-diagram-parametric-types 
%D 2Dx     100     +45     
%D 2D  100 A1      A2      
%D 2D  +30 B1      B2
%D
%D
%D ((
%D    ren A1 A2 ==> {\left({\catC}^\op\times\catC\right)^{n}} {\catC} 
%D    ren B1 B2 ==> {\left({\catV}^\op\times\catV\right)^{n}} {\catV}
%D
%D
%D
%D    # 1cells
%D
%D    B1 A1        ->      .plabel= l   {\left({J^\op}\times{J}\right)^{n}}      
%D    B2 A2        ->      .plabel= a   {J}       
%D    B1 B2        ->      .plabel= b   {\pE{E}{\catV}}
%D    A1 A2        ->      .plabel= a   {\pE{E}{\catC}}
%D
%D    # 2cells
%D
%D
%D
%D ))
%D enddiagram
%D
\pu

%
%D diagram obvious-diagram-parametric-types-n 
%D 2Dx     100     +45     
%D 2D  100 A1      A2      
%D 2D  +30 B1      B2
%D
%D
%D ((
%D    ren A1 A2 ==> {\left({\catC}^\op\times\catC\right)^{n+1}} {\catC} 
%D    ren B1 B2 ==> {\left({\catV}^\op\times\catV\right)^{n+1}} {\catV}
%D
%D
%D
%D    # 1cells
%D
%D    B1 A1        ->      .plabel= l   {\left({J^\op}\times{J}\right)^{n+1}}      
%D    B2 A2        ->      .plabel= a   {J}       
%D    B1 B2        ->      .plabel= b   {\pE{E}{\catV}}
%D    A1 A2        ->      .plabel= a   {\pE{E}{\catC}}
%D
%D    # 2cells
%D
%D
%D
%D ))
%D enddiagram
%D
\pu

%
%D diagram obvious-diagram-parametric-types-(n-1)-recursive 
%D 2Dx     100     +45     
%D 2D  100 A1      A2      
%D 2D  +30 B1      B2
%D
%D
%D ((
%D    ren A1 A2 ==> {\left({\catC}^\op\times\catC\right)^{n}} {\catC} 
%D    ren B1 B2 ==> {\left({\catV}^\op\times\catV\right)^{n}} {\catV}
%D
%D
%D
%D    # 1cells
%D
%D    B1 A1        ->      .plabel= l   {\left({J^\op}\times{J}\right)^{n}}      
%D    B2 A2        ->      .plabel= a   {J}       
%D    B1 B2        ->      .plabel= b   {\pE{\fixpointRecT{E}}{\catV}}
%D    A1 A2        ->      .plabel= a   {\pE{\fixpointRecT{{E}}}{\catC}}
%D
%D    # 2cells
%D
%D
%D
%D ))
%D enddiagram
%D
\pu

%
%D diagram wCpo-category-of-morphisms-pullback-comma
%D 2Dx     100     +45     
%D 2D  100 A1      A2      
%D 2D  +30 B1      B2
%D
%D
%D ((
%D    ren A1 A2 B1 B2 ==>  {\catD\downarrow{G}}  {\morphismcatTwo\pitchfork\catD} {\catC} {\catD}
%D
%D
%D
%D    # 1cells
%D
%D    A1 B1        ->      .plabel= l   {\cCPULL}      
%D    A1 A2        ->      .plabel= a   {\morPULL}       
%D    A2 B2        ->      .plabel= r   {\codomm}
%D    B1 B2        ->      .plabel= b   {G}
%D
%D    # 2cells
%D
%D
%D ))
%D enddiagram
%D
\pu

%
%D diagram wCpo-category-of-morphisms
%D 2Dx     100     +80     
%D 2D  100 A1      A2      
%D 2D  +15 B1      B2
%D
%D
%D ((
%D    ren A1 A2 B1 B2 ==>  D_0 {D_0}' {D_1} {D_1'}
%D
%D
%D
%D    # 1cells
%D
%D    A1 B1        ->      .plabel= l   {f}     
%D    A1 A2        ->      .plabel= a   {\alpha}_0       
%D    A2 B2        ->      .plabel= r   {g}
%D    B1 B2        ->      .plabel= b   {\alpha}_1
%D
%D    # 2cells
%D
%D
%D ))
%D enddiagram
%D
\pu

%
%D diagram morphism-comma-category
%D 2Dx     100     +80     
%D 2D  100 A1      A2      
%D 2D  +15 B1      B2
%D
%D
%D ((
%D    ren A1 A2 B1 B2 ==>   D {D}' {G(C)} {G(C')}
%D
%D
%D
%D    # 1cells
%D
%D    A1 B1        ->      .plabel= l   {j}     
%D    A1 A2        ->      .plabel= a   {{\alpha}_0}       
%D    A2 B2        ->      .plabel= r   {h}
%D    B1 B2        ->      .plabel= b   {G\left(\alpha_1\right)}
%D
%D    # 2cells
%D
%D
%D ))
%D enddiagram
%D
\pu

%
%D diagram basic-wcpo-chain
%D 2Dx     100  +20  +20  +20  +20     
%D 2D  100 A1   A2   A3   A4   A5
%D
%D
%D ((
%D    ren A1 A2 A3 A4 A5 ==>  \leastelement  {q\left(\leastelement\right)}  \cdots  {q^n\left({\leastelement}\right)}  \cdots 
%D
%D
%D
%D    # 1cells
%D
%D    A1 A2        ->          
%D    A2 A3        ->             
%D    A3 A4        ->      
%D    A4 A5        ->      
%D
%D    # 2cells
%D
%D
%D ))
%D enddiagram
%D
\pu

%
%D diagram strong-monad-compatibility-identity
%D 2Dx     100     +45     
%D 2D  100 A            
%D 2D  +60 B       C
%D
%D
%D ((
%D    ren A B C ==>   TW  \terminall\times{TW}   T(\terminall\times{W})
%D
%D
%D
%D    # 1cells
%D
%D    A B          ->      .plabel= l   \cong     
%D    A C          ->      .plabel= r   \cong        
%D    B C          ->      .plabel= b   \strengthT_{\terminall{,}W}
%D
%D    # 2cells
%D
%D
%D ))
%D enddiagram
%D
\pu

%
%D diagram strong-monad-compatibility-associativity
%D 2Dx     100     +70          
%D 2D  100 A1      A2            
%D 2D  +30         A3     
%D 2D  +30 A4      A5   
%D
%D
%D ((
%D    ren A1 A2 A3 A4 A5 ==>   \left({W\times{Y}}\right)\times{TZ}  W\times\left({Y}\times{TZ}\right) W\times{T\left({Y\times{Z}}\right)} T\left(\left({W\times{Y}}\right)\times{Z}\right) T\left({W}\times\left({Y\times{Z}}\right)\right)
%D
%D
%D
%D    # 1cells
%D
%D    A1 A2          ->      .plabel= a   \cong     
%D    A2 A3          ->      .plabel= r   W\times{\strengthT_{Y,Z}}  
%D    A3 A5          ->      .plabel= r   \strengthT_{W,Y\times{Z}}
%D    A1 A4          ->      .plabel= l   \strengthT_{W\times{Y},Z}
%D    A4 A5          ->      .plabel= l   \cong
%D
%D    # 2cells
%D
%D
%D ))
%D enddiagram
%D
\pu

%
%D diagram strong-monad-compatibility-unit
%D 2Dx     100     +45     
%D 2D  100 A            
%D 2D  +60 B       C
%D
%D
%D ((
%D    ren A B C ==>   W\times{Y}  W\times{TY}   T\left(W\times{Y}\right)
%D
%D
%D
%D    # 1cells
%D
%D    A B          ->      .plabel= l   W\times\ee_{Y}     
%D    A C          ->      .plabel= r   \ee_{W\times{Y}}        
%D    B C          ->      .plabel= b   \strengthT_{W{,}Y}
%D
%D    # 2cells
%D
%D
%D ))
%D enddiagram
%D
\pu

%
%D diagram strong-monad-compatibility-multiplication
%D 2Dx     100     +70          
%D 2D  100 A1      A2            
%D 2D  +30 A3     
%D 2D  +30 A4      A5   
%D
%D
%D ((
%D    ren A1 A2 A3 A4 A5 ==> W\times{TTY} W\times{TY} T\left(W\times{TY}\right) TT\left({W\times{Y}}\right) T\left({W\times{Y}}\right)  
%D
%D
%D
%D    # 1cells
%D
%D    A1 A2          ->      .plabel= a   W\times{\mm_Y}    
%D    A2 A5          ->      .plabel= r   W\times{\strengthT_{Y,Z}}  
%D    A4 A5          ->      .plabel= r   \mm_{W\times{Y}}
%D    A1 A3          ->      .plabel= l   \strengthT_{W,TY}
%D    A3 A4          ->      .plabel= l   T\left(\strengthT_{W,Y}\right)
%D
%D    # 2cells
%D
%D
%D ))
%D enddiagram
%D
\pu

%
%D diagram twocellpushoutdefinitionrightside
%D 2Dx     100     +45
%D 2D  100 A3      A1
%D 2D  +35 A4      A2
%D
%D ((
%D    ren A1 A2 A3 A4 ==>  b_1 y e b_0
%D
%D
%D
%D    # 1cells
%D
%D    A1 A2    ->      .plabel= l h_0   .curve= _25pt
%D    A1 A2    ->      .plabel= r h_0'   .curve= ^25pt
%D    A3 A4    ->      .plabel= l p_0
%D    A3 A1    ->      .plabel= a p_1
%D    A4 A2    ->      .plabel= r h_1
%D
%D
%D
%D
%D
%D    # 2cells
%D
%D    A1 A4 harrownodes  -20 12 nil  =
%D    A1 A2 harrownodes  0 20 nil  =>       .plabel= a \xi_0
%D
%D ))
%D enddiagram
%D
\pu
%
%D diagram twocellofpushoutdefinitionleftside
%D 2Dx     100     +45
%D 2D  100 A1      A2
%D 2D  +35 A3      A4
%D
%D ((
%D    ren A1 A2 A3 A4 ==>   e b_1 b_0 y
%D
%D
%D
%D    # 1cells
%D
%D    A3 A4    ->      .plabel= a h_1'   .curve= ^25pt
%D    A1 A2    ->      .plabel= a p_1
%D    A3 A4    ->      .plabel= b h_1   .curve= _25pt
%D    A3 A1    <-      .plabel= l p_0
%D    A4 A2    <-      .plabel= r h_0'
%D
%D
%D
%D
%D
%D    # 2cells
%D
%D    A1 A4 varrownodes  -20 12 nil  =
%D    A3 A4 varrownodes  0 20 nil  <=       .plabel= r \xi_1
%D
%D ))
%D enddiagram
%D
\pu

% End of the diagrams

\tableofcontents
\newpage

\setcounter{secnumdepth}{-1}

\section*{Introduction}
\subsubsection*{Logical relations}
Logical relations arguments (see e.g. \cite{mitchell1992notes} for a survey) are proof techniques 
that can be used to demonstrate properties of typed programming languages,
ranging from strong normalisation to canonicity and adequacy. 
The arguments are essentially type-guided forms of induction.
They seem to have been reinvented several times by different research communities, and are also known under various other names, including Tait's method of computability, reducibility candidates, Artin gluing, Sierpinski cone, (sub)sconing, and Freyd cover.

Category theory gives a useful way to organise logical relations arguments: by viewing them as ways of building a new categorical semantics of a programming language out of an existing ones.
The new semantics then equips objects with predicates of some form, and restricts the morphisms to those morphisms that respect the predicates.
By choosing the right notion of predicates, we can ensure that the existence of this new semantics gives us the property we are hoping to prove about our programming language.

In this paper we present novel logical relations methods for  languages with recursive features, together with an application of these techniques to correctness proofs for Automatic Differentiation.

\subsubsection*{AD and the PL community}
Automatic Differentiation (AD, see e.g. \cite{griewank2008evaluating} for a survey)  is a popular family of techniques for computing derivatives of 
functions implemented by a piece of code, particularly when efficiency, scaling to high dimensions, and numerical stability are important.
It has been studied in the scientific computing community for many decades and has been heavily used in machine learning for the last decade. 
In the last years, the programming languages (PL) community has turned towards studying AD from a new perspective.
Much progress has been made towards giving a formulation of (forward and) reverse mode AD that
\begin{enumerate}
\item is simple and purely functional;\label{point1-AD} 
\item scales to the expressive ML-family functional languages that are popular in practice;\label{point2-AD} 
\item admits a simple correctness proof that shows that AD computes the derivative;\label{point3-AD} 
\item provably has the correct asymptotic complexity and is performant in practice;\label{point4-AD}
\item is parallelism preserving. \label{point5-AD}
\end{enumerate}

\subsubsection*{Our contributions}
In this paper, we present a simple solution to problems \eqref{point1-AD}-\eqref{point3-AD}, \emph{our first major contribution}.

We give a proof of the correctness of the reverse and forward mode dual numbers style Automatic Differentiation (AD)  in a semantically unified way, making use only of the very simple concrete denotational model of \wcpos{}.
In doing so, we simplify existing techniques that relied on sheaf-theoretic machinery \cite{vakar2020denotational,huot2023omegapap}.

A key challenge that we tackle to achieve the correctness proofs of this paper is to have sufficiently strong categorical logical relations techniques for reasoning about partially defined differentiable functions, and term and type recursion. 
We believe that our novel methods can be simpler than existing alternatives such as \cite{pitts1996relational,DBLP:conf/esop/Ahmed06}, and they are still widely applicable, \emph{our second major contribution}.

We refer to the companion paper \cite{smeding2022} for a performant implementation of the dual numbers reverse-mode AD technique proved correct in the present paper. It shows that it efficiently differentiates most of Haskell98, contributing towards point \eqref{point4-AD}.
Parallelism preservation (point \eqref{point5-AD}) for this AD technique is discussed in \cite{smeding2024parallel}.

In our work, we ensure to keep all constructions sufficiently simple such that they can easily be generalized to more advanced AD algorithms such as CHAD \cite{vakar2021chad,DBLP:journals/toplas/VakarS22,VAKAR-LUCATELLI2021}, which is one of our key motivations for this work.

\subsubsection*{Why care and why is this difficult?}
Given the central role that AD plays in modern scientific computing and
machine learning, the ideal of differential programming has been emerging \cite{meijer2018behind,plotkin2018some}:
compilers for general purpose programming languages should provide built-in support for automatic differentiation of any programs written in the language.
Such general purpose programming languages tend to include many language features, however, which we then need to be able to differentiate.
What a correct and efficient notion of derivative is of such features might not be so straightforward as they often go beyond what is studied in traditional calculus.
In this paper we focus on the challenge posed, in particular, by partial language features: partial primitive operations, lazy conditionals on real numbers, iteration, recursion and recursive types.

Partial primitive operations are certainly key. Indeed, even the basic operations of division and logarithm are examples.
(Lazy) conditionals on real numbers are useful in practice for pasting together various existing smooth functions, a basic example being the ReLU function
$$
ReLU(x) \defeq \ifelse{x}{0}{x}=\vMatch{(\tSign\,x)}{\tInl\_\To 0\mid \tInr\_\To x},
$$
which is a key component of many neural networks.
Conditionals are also frequently used in probabilistic programming to paste together density functions of different distributions \cite{betancourt_2019}.
People have long studied the subtle issue of how one should algorithmically differentiate such functions with ``kinks'' under the name of \emph{the if-problem in automatic differentiation} \cite{beck1994if}.  
Our solution is the one also employed  by \cite{abadi-plotkin2020}: to treat the functions as semantically undefined at their kinks (at $x=0$ in the case of $ReLU(x)$).
This is justified given how coarse the semantic treatment of floating point numbers as real numbers is already.
Our semantics based on partial functions defined on real numbers is sufficient to prove many high-level correctness properties. 
However, like any semantics based on real numbers, it fails to capture many of the low-level subtleties introduced by the floating point implementation.
Our key insight that we use to prove correctness of AD of partial programs is to construct a suitable lifting of the partiality monad to a variant of \cite{hsv-fossacs2020}'s category of $\RR^k$-indexed logical relations used to relate programs to their derivatives.
This particular monad lifting for derivatives of partial functions can be seen as our solution to the if-problem in AD.
In Section \ref{sec:almost-everywhere-differentiability}, we briefly discuss how the more ambitious solution to the if-problem in the style of \cite{lee2020correctness,DBLP:journals/pacmpl/MazzaP21,huot2023omegapap} can also be achieved with our methods.
In that solution, we show that the set of non-differentiable points where AD does not compute a correct derivative is of measure zero, which we achieve by choosing a different monad-lifting.

Similarly, iteration constructs, or while-loops, are necessary for implementing iterative algorithms with dynamic stopping criteria.
Such algorithms are frequently used in programs that AD is applied to.
For example, AD is applied to iterative differential equation solvers to perform Bayesian inference in
SIR models\footnote{For this particular case of an iterative algorithm it is actually possible and better (but more laborious!) to implement a custom derivative rather than differentiating through the while-loop \cite{margossian2021efficient}.}.
This technique played a key role in modelling the Covid19-pandemic \cite{flaxman2020estimating}.
For similar reasons, AD through iterative differential equation solvers 
is important for probabilistic modelling of  pharmacokinetics \cite{tsiros2019population}.
Other common use-cases of iterative algorithms that need to be
AD'ed are eigen-decompositions and algebraic equation solvers, such as those employed in Stan \cite{carpenter2015stan}. 
Finally, iteration gives a convenient way of achieving numerically stable approximations to complex functions (such as the  Conway-Maxwell-Poisson density function \cite{goodrich_2017}).
The idea is to construct, using iteration, a Taylor approximation that terminates once the next term in the series causes floating-point underflow.
Indeed, for a function whose $i$-th terms in the Taylor expansion can be represented by a program
$$i : \ints, x : \reals \vdash
t(i, x) : \reals,$$
we would define the underflow-truncated Taylor series by
\begin{equation}\label{eqn:taylor-series-underflow-example}\tItFrom{\Big(\begin{array}{l}
\pMatch{ z}{i}{y'}{}
\letin{y}{
t(i, x)}{}\\
\vMatch{-\epsilon < y < \epsilon}{\tInl\_ \To \tInr x\mid \tInr\_\To \tInl \tPair{i + 1}{y +y'}}\end{array}\Big)}
{z}{\tPair{0}{0}},
\end{equation}
where $\epsilon$ is a cut-off for floating-point underflow.

Next, recursive neural networks \cite{tai2015improved} are often mentioned as a use case of AD applied to recursive programs.
While basic Child-Sum Tree-LSTMs can also be implemented with
primitive recursion (a fold) over an inductively defined tree (which can be defined as a recursive type), there are other related models such
as Top-Down-Tree-LSTMs that require an iterative or general recursive approach \cite{zhang2016top}.
In fact, \cite{jeong2018improving} has shown that a recursive approach is preferable as it better
exposes the available parallelism in the model.
In Appendix \ref{ap:rnn}, we show some Haskell code for the recursive neural network of \cite{socher2011parsing}, to give an idea of how iteration and recursive types (in the form of inductive types of labelled trees) naturally arise in a functional implementation of such neural net architectures.
We imagine that more applications of AD applied to recursive programs with naturally will emerge as the technique becomes available to machine learning researchers and engineers.
Finally, we speculate that coinductive types like streams of real numbers, which can be encoded
using recursive types as $\mu \alpha.\Unit \To (\reals* \alpha)$, provide a useful API for on-line machine learning applications \cite{shalev2012online}, where data is processed in real time as it becomes available.
Recursion and more notably recursive types introduce one final challenge into the correctness proof of AD of such expressive functional programs: the required logical relations arguments are notoriously technical, limiting the audience of any work using them and frustrating application to more complicated AD algorithms like CHAD.
To mend this problem, we introduce a novel, simple but powerful logical relations technique for open semantic logical relations for recursive types.

\paragraph{Prerequisites}
We assume some familiarity with
 category theory  (see, for instance, \cite{mac2013categories}): the concepts of and basic facts about
categories, functors, natural transformations, (co)limits, adjunctions, and (co) monads.
We also assume that the reader knows the most basic definitions in enriched category theory (see, for instance, \cite{kelly1982basic}): the concepts of $\catV$-categories, $\catV$-functors, and $\catV$-natural transformations.
We recall the definitions and results we need for $\catV$-monads and their Keisli  $\catV$-categories (the interested reader can find more details in \cite{MR0280560}).
Later in this paper, we will also consider $\catV$-(co)limits, $\catV$-adjunctions, and $\catV$-(co)monadicity, but only for the specific case of $\catV=\wCpo$ with its cartesian structure.
In these cases, we ensure to spell out all details to make the paper as self-contained as possible.

\paragraph{Convention}
Whenever we talk about
\textit{strict preservation of some structure} (like products, coproducts or exponentials), we are assuming that we have chosen structures (chosen products, coproducts or exponentials) and the preservation is on the nose, that is to say, the canonical comparison is the identity.

\resettocdepth

\setcounter{secnumdepth}{5}
\clearpage
\section{Key ideas}
In this paper, we consider how to perform forward and reverse mode 
dual numbers automatic differentiation on a functional language with expressive partial features,
by using a dual numbers technique.

\subsection*{Language}
We consider an idealised functional language with product types $\ty{1}\t*\ty{2}$, sum types
$\ty{1}\t+\ty{2}$, function types $\ty{1}\To\ty{2}$ generated by 
\begin{itemize}
\item a primitive \emph{type $\reals$ of real numbers} (in practice, implemented as floating point numbers);
\item \emph{constants} $\vdash \cnst{c}:\reals$ for $c\in \RR$; 
\item sets $(\Op_n)_{n\in \NN}$ of $n$-ary \emph{primitive operations} $\op$, for which we include computations\\  $\var{1}_1:\reals,\ldots, \var{1}_n:\reals\vdash \op(\var{1}_1,\ldots,\var{1}_n):\reals$;
we think of these as implementing partial functions $\RR^n\rightharpoonup \RR$ with open domain of definition, on which they are differentiable;
for example, we can include mathematical operations $\log,\exp\in\Op_1$ and $(+),(*),(/)\in\Op_2$;
\item a construct $\var{1}:\reals\vdash \tSign(\var{1}):\Unit\t+\Unit$,
where we write $\Unit$ for the empty product; $\tSign{\trm{1}}$ computes the \emph{sign of a real number} $\trm{1}$ and is undefined at $\trm{1}=\cnst{0}$; we can use it to define a lazy conditional on real numbers
$\ifelse{\trm{3}}{\trm{1}}{\trm{2}}\defeq 
\vMatch{\tSign\trm{3}}{{
	\_\To{\trm{1}}
	\vor \_\To{\trm{3}}
}}$ of the kind that is often used in AD libraries like Stan \cite{carpenter2015stan}.
\end{itemize}
Next, we include two more standard mechanisms for defining partial functions:
\begin{itemize}
\item \emph{(purely functional) iteration}: given a computation $\Gamma, \var{1} : \ty{1} \vdash \trm{1} : \ty{1}\t+\ty{2}$ to iterate and a
starting value $\Gamma\vdash \trm{2} : \ty{1}$, we have a computation $\Gamma\vdash \tItFrom{\trm{1}}{\var{1}}{\trm{2}} : \ty{2}$ which
repeatedly calls $\trm{1}$, starting from the value of $\trm{2}$ until the result lies in $\ty{2}$;
\item \emph{recursion}: given a computation $\Gamma,\var{1}:\ty{1}\To\ty{2}\vdash \trm{1}:\ty{1}\To\ty{2}$,
we have a program $\Gamma\vdash\rec{\var{1}}{\trm{1}}:\ty{1}\to\ty{2}$ that recursively computes to $\letin{\var{1}}{\rec{\var{1}}{\trm{1}}}{\trm{1}}$;
note that we can define iteration with recursion.
\end{itemize}

\subsection*{Dual numbers forward AD code transform}
Let us assume that we have programs $\partial_i\op(\var{1}_1,\ldots,\var{1}_n)$ that compute the 
$i$-th partial derivative of each $n$-ary primitive operation $\op$.
For example, we can define $\partial_1(*)(\var{1}_1,\var{1}_2)=\var{1}_2$ and $\partial_2(*)(\var{1}_1,\var{1}_2)=\var{1}_1$.
Then, we can define a very straightforward forward mode AD code transformation 
$\Dsynsymbol$ 
by replacing all primitive types $\reals$ by a pair $\Dsyn{\reals}\defeq \reals\t*\reals$ 
of reals
and by replacing all constants $\cnst{c}$,  $n$-ary primitive operations $\op$ and sign function $\tSign$ 
in the program as\footnote{Actually, while our definition for $\Dsyn{\tSign{\trm{3}}}$ given here is correct, there exist more efficient implementation techniques, as we discuss in Appendix \ref{sec:efficient-sign-derivative}.}
\[
    \begin{array}{ll}
        \Dsyn{\cnst{c}} \defeq & \tPair{\cnst{c}}{\cnst{0}}\\
        \Dsyn{\op(\trm{3}_1,\ldots,\trm{3}_n)}\defeq~
                           &\pMatch{\Dsyn{\trm{3}_1}}{\var{1}_1}{\var{1}_1'}
                           { \ldots \to\pMatch{\Dsyn{\trm{3}_n}}{\var{1}_n}{\var{1}_n'}
                           {\\
                            &{\tPair{ \op(\var{1}_1,\ldots,\var{1}_n)}{\var{1}_1' *\partial_1\op(\var{1}_1,\ldots,\var{1}_n)+\ldots
                           +\var{1}_n' *\partial_n\op(\var{1}_1,\ldots,\var{1}_n)}}}}\\
        \Dsyn{\tSign{\trm{3}}}\defeq & \tSign{(\tFst\Dsyn{\trm{3}})}.
    \end{array} 
\]
We extend $\Dsynsymbol$ to all other types and programs in the unique homomorphic (structure preserving way),
by using structural recursion.
So, for example, $\Dsyn{\ty{1}\To\ty{2}}\defeq \Dsyn{\ty{1}}\To\Dsyn{\ty{2}}$, $\Dsyn{\var{1}}\defeq \var{1}$, $\Dsyn{\letin{\var{1}}{\trm{1}}{\trm{2}}}=\letin{\var{1}}{\Dsyn{\trm{1}}}{\Dsyn{\trm{2}}}$ and $\Dsyn{\trm{1}\,\trm{2}}=\Dsyn{\trm{1}}\,\Dsyn{\trm{2}}$.
We like to think of $\Dsynsymbol$ as a structure preserving functor $\Dsynsymbol:\Syn\to\Syn$ on the syntax.

\subsection*{Semantics}
To formulate correctness of the AD transformation $\Dsynsymbol$, we need to assign a formal 
denotational semantics $\sem{-}$ to our language.
We use the standard interpretation of types $\ty{1}$ as \wcpos{} $\sem{\ty{1}}$ (partially ordered sets with suprema of countable chains) and programs $\var{1}_1:\ty{1}_1,\ldots,\var{1}_n:\ty{1}_n\vdash \trm{1}:\ty{2}$ as monotone \wcont{} partial functions $\sem{\trm{1}}:\sem{\ty{1}_1}\times\cdots\times \sem{\ty{1}_n}\rightharpoonup\sem{\ty{2}}$.
We interpret $\reals$ as the discrete \wcpo{} $\sem{\reals}\defeq \RR$ of real numbers, in which $r\leq r'$ if and only if $r=r'$.
We interpret ${\cnst{c}}$ as the constant $\sem{\cnst{c}}\defeq c\in\RR$.
We interpret $\op$ as the partial differentiable function $\sem{\op(\var{1}_1,\ldots,\var{1}_n)}:\RR^n\rightharpoonup \RR$ that it is intended to implement.
And, finally, we interpret $\tSign$ as the partial function $\sem{\tSign(\var{1})}:\RR\rightharpoonup \Unit \sqcup \Unit$ that sends $r<0$ to the left copy of $\Unit$, $r>0$ to the right copy and is undefined for $r=0$.
Having fixed these definitions, the rest of the semantics is entirely compositional and standard.
In particular, we interpret iteration and recursion using Kleene's Fixpoint Theorem.
We think of this semantics as a structure preserving functor $\sem{-}:\Syn\to\wCpo$ from the syntax 
 to the category of \wcpos{} and monotone \wcont{} functions.

\subsection*{Correctness statement}
Having defined a semantics, we can phrase what it means for $\Dsynsymbol$ to be correct.
We prove the following, showing that $\Dsyn{\trm{1}}$ implements the usual calculus derivative $D\sem{\trm{1}}$ of $\sem{\trm{1}}$.
\begin{theorem}[Forward AD Correctness, Theorem \ref{theo:main-theorem-section-proof} with $k=1$ in main text]
For any program $\var{1}:\ty{1}\vdash \trm{1}:\ty{2}$ for $\ty{1}=\reals^{m},\ty{2}=\reals^l$ (where we write $\reals^n$ for the type $\reals\t*\cdots\t*\reals$ of length $n$ tuples of reals), we have that $\sem{\trm{1}}$ is differentiable on its domain and
\begin{align*}
&\sem{\Dsyn{\trm{1}}}((x_{1},v_{1}),\ldots,(x_{m},v_{m}))=\\
&\Big(\pi_1(\sem{\trm{1}}(x_1,\ldots,x_m)), \pi_1(D\sem{\trm{1}}((x_{1},\ldots,x_m),(v_{1},\ldots,v_m))),\ldots,\\
&\;\;\pi_l(\sem{\trm{1}}(x_1,\ldots,x_m)), \pi_l(D\sem{\trm{1}}((x_{1},\ldots,x_m),(v_{1},\ldots,v_m)))
\Big)
\end{align*}
for any $(x_1,\ldots,x_m)$ in the domain of definition of $\sem{\trm{1}}$ and any tangent vector $(v_1,\ldots, v_m)$ to $\sem{\ty{1}}$ at $x$. 
\end{theorem}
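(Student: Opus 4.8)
The plan is to prove a more general, $k$-indexed statement by a single logical relations argument and then read off the theorem by specialising to $k=1$. The idea is to build a third, \emph{glued} interpretation of the syntax that pairs each program with its derivative and certifies, by construction, that the two are compatible; the universal property of $\Syn$ then forces \emph{every} program to respect this gluing, and the displayed equation is what relatedness amounts to at ground types.

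First I would fix $n$ and $k$ and form the subscone $\SUBscone{\wCpo}{\sconeFUNCTOR{n,k}}$ of admissible $\RR^k$-indexed logical relations over $\wCpo\times\wCpo$. Concretely, an object is a pair of \wcpos{} $(X_1,X_2)$ equipped, for each open $U\subseteq\RR^k$, with a set of related pairs of generalised elements $U\to X_1$ and $U\to X_2$, closed under reindexing along smooth maps and under suprema of $\omega$-chains; a morphism is a pair of \wcont{} maps preserving relatedness. The definitive choice is the relation at the ground type $\reals$: a pair $(\gamma_1,\gamma_2)$ of generalised elements of $\sem{\reals}=\RR$ and $\sem{\Dsyn{\reals}}=\RR\t*\RR$ is declared related exactly when $\gamma_1$ is differentiable and $\gamma_2=\langle\gamma_1,D\gamma_1\rangle$ records $\gamma_1$ together with its derivative, so that relatedness at $\reals$ means precisely ``the second coordinate is the derivative of the first''.

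The core technical step, and the main obstacle, is to lift the partiality monad $(-)_\leastelement$ to this subscone, obtaining the strong monad $\monadLR{n,k}{-}$, in such a way that the lifted relation stays admissible, so that the suprema of the $\omega$-chains produced by Kleene's Fixpoint Theorem remain inside the relation, and so that the lifting is compatible with the open domains of definition of partial functions. The latter is exactly our resolution of the if-problem: relatedness must be stable under functions becoming undefined, while a differentiable curve restricted to an open subdomain must stay related. Once this lifting is in place, the interpretations of iteration $\Synit$ and recursion $\Synfix$ are forced to preserve the relation, since they are computed as suprema of chains of approximants and the relation is closed under such suprema.

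With the subscone and the monad lifting established, I would check that $\semLR{n,k}{-}$ is a morphism of the relevant categorical structure: that constants, the primitive operations $\op$ and $\tSign$ respect the relation at $\reals$ (this is where the chain rule, the programs $\partial_i\op$, and openness of the domains of $\op$ enter) and that products, coproducts and exponentials are preserved. By the universal property of $\Syn$ there is then a unique such structure-preserving functor, and the logical relations diagram commutes, i.e. $\CBVU(\forgetfulSub_{n,k})\circ\semLR{n,k}{-}=(\sem{-}\times\semt{k}{-})\circ(\ID,\Dsynsymbol)$; hence for every $\var{1}:\ty{1}\vdash\trm{1}:\ty{2}$ the pair $(\sem{\trm{1}},\sem{\Dsyn{\trm{1}}})$ preserves relatedness. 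Finally, to obtain the formula I would take $k=1$, let $U$ be an open neighbourhood of $0$ in $\RR$, and feed in the canonical related pair at $\ty{1}=\reals^m$ given by the affine curve $u\mapsto(x_1+uv_1,\ldots,x_m+uv_m)$, whose derivative datum is $(v_1,\ldots,v_m)$. Relatedness of the outputs at $\ty{2}=\reals^l$ then unpacks, coordinatewise, into the interleaved value/derivative pairs displayed in the statement, while differentiability of $\sem{\trm{1}}$ on its domain falls out of the relation at $\reals$ being inhabited only by differentiable curves.
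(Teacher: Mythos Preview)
Your strategy matches the paper's: build a subscone interpretation, lift the partiality monad so that Kleene fixpoints stay in the relation, check that the primitives respect it, and invoke the universal property of $\Syn$ to force $\CBVU(\forgetfulSub_{n,k})\circ\semLR{n,k}{-}=(\sem{-}\times\semt{k}{-})\circ(\ID,\Dsynsymbol)$. You also correctly locate the monad lifting as the crux. Two execution details differ from the paper, though. First, you describe subscone objects as carrying, \emph{for each open} $U$, a set of related $U$-indexed elements closed under smooth reindexing; this is precisely the sheaf/diffeological formulation the paper deliberately drops (see the Related Work section). The paper instead fixes a single $n$ and takes an object to be merely a full sub-$\omega$-cpo of $\sconeFUNCTOR{n,k}(C,C')=\wCpo{\times}\wCpo\big((\RR^n,(\RR{\times}\RR^k)^n),(C,C')\big)$, with no reindexing axiom at all; open subsets of $\RR^n$ (not $\RR^k$ --- you have the two indices swapped) appear only inside the definition of the lifted monad $\monadLR{n,k}{-}$, not as structure on the base category. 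This is lighter: there are no sheaf or reindexing conditions to thread through products, coproducts and exponentials. Second, your extraction step takes $n=1$ and probes with the affine curve $u\mapsto x+uv$, then evaluates at $u=0$; the paper instead takes $n=m$ and feeds in the identity on $\RR^m$, which already lies in $T^m_{\reals^m}$, so the fundamental lemma directly returns $(\sem{\trm{1}},\dDSemtra{1}{\sem{\trm{1}}})$ with no separate ``evaluate at a basepoint'' step. Both routes are sound; the paper's choices just shave off structure you do not need.
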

Importantly, the program $\trm{1}$ might use higher-order functions, iteration, recursion, etc..
In fact,  we also establish the theorem above for general types $\ty{1}$ and $\ty{2}$ not containing function types, but its phrasing requires slight bookkeeping that might distract from the simplicity of the theorem.

\subsection*{A proof via logical relations}
The proof of the correctness theorem follows a logical relations argument
that we found using categorical methods, but which can be phrased entirely in elementary terms.
Let us fix some $n\in \NN$.
We define for all types $\ty{1}$ of our language, by induction, relations $T^{n}_{\ty{1}}\subseteq 
(\RR^n\To\sem{\ty{1}})\times ((\RR^n\times \RR^n)\To \sem{\Dsyn{\ty{1}}})$
and 
$P^{n}_{\ty{1}}\subseteq 
(\RR^n\rightharpoonup{\sem{\ty{1}}})\times ((\RR^n\times \RR^n)\rightharpoonup {\sem{\Dsyn{\ty{1}}}})$
that relate a (partial) $n$-curve to its derivative $n$-curve:
\begin{align*}
T^{n}_{\reals}&\defeq \set{(\gamma,\gamma')\mid \gamma\text{ is differentiable and } \gamma'=(x,v)\mapsto (\gamma(x),D\gamma(x,v))}\\
T^{n}_{\ty{1}\t*\ty{2}} & \defeq \set{(x\mapsto (\gamma_1(x),\gamma_2(x)), (x,v)\mapsto(\gamma'_1(x,v),\gamma'_2(x,v)))\mid (\gamma_1,\gamma'_1)\in T^{n}_{\ty{1}}\text{ and } (\gamma_2,\gamma'_2)\in T^{n}_{\ty{2}}}\\
T^{n}_{\ty{1}\t+\ty{2}} & \defeq \set{(\iota_1\circ \gamma_1,\iota_1\circ \gamma'_1)\mid (\gamma_1,\gamma'_1)\in T^{n}_{\ty{1}}}\cup \set{(\iota_2\circ \gamma_2,\iota_2\circ \gamma'_2)\mid (\gamma_2,\gamma'_2)\in T^{n}_{\ty{2}}}\\
T^{n}_{\ty{1}\To\ty{2}}& \defeq \set{(\gamma,\gamma')\mid \forall (\delta, \delta')\in T^{n}_{\ty{1}}. (x\mapsto \gamma(x)(\delta(x)), (x,v)\mapsto \gamma'(x,v)(\delta'(x,v)))\in P^{n}_{\ty{2}}}\\
P^{n}_{\ty{1}} & \defeq \Big\{(\gamma,\gamma')\mid \gamma^{-1}(\sem{\ty{1}})\times \RR^n=\gamma'^{-1}(\sem{\Dsyn{\ty{1}}})\text{ is open and for all differentiable}\\&\qquad\qquad\quad \delta:\RR^n\to \gamma^{-1}(\sem{\ty{1}}) \text{ we have }
 (\gamma\circ\delta,(x,v)\mapsto (\gamma(\delta(x)),\gamma'(D\delta(x,v)))) \in T^{n}_{\ty{1}}\Big\}.
\end{align*}
We then prove the following ``fundamental lemma'', using induction on the typing derivation of $\trm{1}$:
\begin{quote}
If $\var{1}_1:\ty{1}_1,\ldots,\var{1}_n:\ty{1}_n\vdash \trm{1}:\ty{2}$ and, for $1\leq i\leq n$,
$(f_i, f_i')\in T^{n}_{\ty{1}_i}$, then \\
$(x\mapsto \sem{\trm{1}}(f_1(x),\ldots, f_n(x)), (x,v)\mapsto \sem{\Dsyn{\trm{1}}}(f_1'(x,v),\ldots, f_n'(x,v)))\in P^{n}_{\ty{2}}$.
\end{quote}
For example, we use that, by assumption, $\sem{\partial_i\op(\var{1}_1,\ldots,\var{1}_n)}$ equals the $i$-th partial derivative of $\sem{\op(\var{1}_1,\ldots,\var{1}_n)}$ combined with the chain-rule, to show that primitive operations $\op$ respect the logical relations.
Crucial features to enable the inductive steps for iteration and recursion in the proof of the fundamental lemma are that $T^n_{\reals}$ and $P^n_{\ty{1}}$ are closed under suprema of countable chains and that $P^n_{\ty{1}}$ contains the least element. 

As $T^{k}_{\reals^k}$ contains, in particular, $(\id, ((x_1,\ldots,x_k), (v_1,\ldots, v_k))\mapsto ((x_1,v_1),\ldots ,(x_n,v_k)))$, our correctness theorem follows.

\subsection*{Extending to recursive types via a novel categorical logical relations technique}
Next, we extend our language with ML-style polymorphism and recursive types.
That is, we allow the formation of types $\ty{1}$ with free type variables $\tvar{1}$ and we include 
a type variable binder $\trec{\tvar{1}}{\ty{1}}$, which binds $\tvar{1}$ in $\ty{1}$ and computes a canonical fixpoint of $\tvar{1}\mapsto\ty{1}$.
We extend our AD transformation homomorphically on terms and types. For example, on types, we define
\begin{align*}
\Dsyn{\tvar{1}}\defeq\tvar{1}&& \Dsyn{\trec{\tvar{1}}{\ty{1}}}\defeq\trec{\tvar{1}}{\Dsyn{\ty{1}}}.
\end{align*}
A type $\ty{1}$ with $n$ free type variables gets interpreted in our \wcpo{-}semantics as an $n$-ary mixed-variance endofunctor $\sem{\ty{1}}$ on the 
category of \wcpos{} and partial morphisms that restricts to that of \wcpos{} and total morphisms.
Programs with types that have free variables get interpreted as (di)natural transformations.
As the
category of \wcpos{} and partial morphisms has the structure to interpret recursive types, we have a canonical \emph{minimal invariant} 
$$
\semroll:\sem{\ty{1}}(\mu\sem{\ty{1}}, \mu\sem{\ty{1}})\xto{\cong}\mu\sem{\ty{1}}
$$
for the mixed-variance endofunctors $\sem{\ty{1}}$ on $\wCpo$ that types $\ty{1}$ denote \cite{levy2012call}.
We interpret $\sem{\trec{\tvar{1}}{\ty{1}}}\defeq \mu \sem{\ty{1}}$.

To extend the correctness proof to this larger language, we would like to define the 
logical relation
$$
T^n_{\trec{\tvar{1}}{\ty{1}}}\defeq \set{(\semroll\circ\gamma,\semroll\circ\gamma')\mid (\gamma,\gamma')\in T^n_{\subst{\ty{1}}{\sfor{\tvar{1}}{\trec{\tvar{1}}{\ty{1}}}}}}.
$$
That is, we would like to be able to \emph{define relations using type recursion}.
If we can do so, then extending the proof of the fundamental lemma is straightforward.
We can then establish the correctness theorem also for $\ty{1}$ and $\ty{2}$ that involve recursive types.

The traditional method is to follow the technical recipes of \cite{pitts1996relational}.
Instead, we develop a powerful new logical relations technique for recursive types, which we believe
to be more conceptually clear and easier to use in situations like ours.
To be precise, we prove a general result saying that under mild conditions, that we can interpret recursive types in the category of logical relations over a category that models recursive types itself. 
For simplicity, we state an important special case that we need for our application here.

Given a right adjoint $\wCpo$-enriched functor $G:\wCpo^n\to \wCpo$ (such as $G(X)=\wCpo^n(Y, X)$ for some $Y\in \wCpo^n$), consider the category 
$\SScone$ of $n$-ary logical relations, which has objects $(X, T)$, where $X\in\wCpo^n$ and $T$ is a  (full) sub-$\omega$-cpo  of 
$GX$, and morphisms $(T,X)\to (T',X')$ are $\wCpo^n$-morphisms $f:X\to X'$ such that $y\in T$ implies $Gf(y)\in T'$.
\begin{theorem}[Logical relations for recursive types, special case of Theorem \ref{the:maybe-the-main-result-on-LR-recursive} in main text]
Let $L$ be a strong monad on $\SScone$ that lifts the usual partiality monad $\Lift{(-)}$ on $\wCpo^n$ 
along the projection functor $\SScone \to \wCpo^n$. We assume that $L$ takes the initial object to the terminal one, and that $G(\eta_X)^{-1}(L(T,X))=T$, where we write $\eta_X:X\to\Lift{X}$ for the unit of the partiality monad on $\wCpo^n$.
Then, the Kleisli functor $\SScone\hookrightarrow \SScone_L$ gives a model for recursive types.
\end{theorem}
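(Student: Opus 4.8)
The plan is to verify that $\SScone_L$ satisfies the standard domain-theoretic conditions for modelling recursive types, transferring everything we need from $\wCpo^n$ (which models recursive types, by the minimal-invariant construction in the category of $\omega$-cpos and partial $\wcont$ maps) along the projection functor $U\colon\SScone\to\wCpo^n$. Unwinding the definition, ``a model for recursive types'' here amounts to two things: (i) the Kleisli pair $\SScone\hookrightarrow\SScone_L$ is $\wCpo$-enriched with \emph{pointed} hom-$\omega$-cpos, so that each hom carries a least ``undefined'' morphism that is absorbing under composition; and (ii) $\SScone_L$ is parametrically algebraically compact for the locally continuous mixed-variance functors that type expressions denote, i.e. each such functor $F$ has a canonical minimal invariant supplying the isomorphism $\semroll$. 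I would treat (i) and (ii) in turn.

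For (i), the $\wCpo$-enrichment of $\SScone_L$ is inherited from that of $\SScone$ — itself a subscone over the $\wCpo$-enriched base $\wCpo^n$ along the $\wCpo$-enriched $G$ — together with the fact that the Kleisli category of a strong $\wCpo$-monad is again $\wCpo$-enriched. Pointedness is exactly where the hypothesis $L\initiall\cong\terminall$ enters: for any $(X,T),(X',T')$, the composite $(X,T)\to\terminall\cong L\initiall\xrightarrow{L\uniqMor}L(X',T')$ is a Kleisli morphism, and I would check it is the least element of $\SScone_L((X,T),(X',T'))$ and is absorbing. This follows because $U$ sends it to the everywhere-undefined partial map $X\rightharpoonup X'$, which has these properties downstairs, and $U$ is faithful and reflects the order on homs (the predicates only cut homs down), so the facts transfer upstairs.

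For (ii), the key structural input is the limit-colimit coincidence in $\wCpo$. I would first show that $\SScone$ has colimits of $\omega$-chains of embeddings that are simultaneously limits of the associated chains of projections, computed over the bilimit $X_\infty=\colim_n X_n$ downstairs: since $U$ creates the limits that exist in $\wCpo^n$ (as is standard for a subscone along the limit-preserving right adjoint $G$), the only remaining datum is the predicate $T_\infty\subseteq G(X_\infty)$, which I would define as the sub-$\omega$-cpo generated by the cocone images of the $T_n$ and verify to be full, with all cocone legs embeddings in $\SScone_L$. Then, given a type-denoted functor $F$ on $\SScone_L$ covering a functor $F_0$ on the partial-map category of $\wCpo^n$ whose minimal invariant $D$ is known, I would form the ep-chain $\initiall\to F(\initiall,\initiall)\to\cdots$, take its bilimit $(D,T_D)$, and run the usual Smyth--Plotkin/Freyd argument to obtain $\semroll\colon F((D,T_D),(D,T_D))\xrightarrow{\cong}(D,T_D)$, with $D=\fixpoint F_0$.

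The hard part, and where the second hypothesis is indispensable, will be showing that the bilimit predicate $T_D$ genuinely solves the recursive type equation \emph{in} $\SScone$ and does so minimally, rather than merely downstairs under $U$. Because the embeddings and projections of the ep-chain pass through the partiality monad $L$, one must verify that forming $F((D,T_D),(D,T_D))$ returns the \emph{same} predicate $T_D$; this is precisely controlled by the condition $G(\ee_X)^{-1}(L(T,X))=T$, which says that pulling the lifted predicate $L(T,X)$ back along the unit recovers $T$. I would use it to identify the predicate computed through one step of the Kleisli ep-chain with $T_D$, thereby establishing both the invariance (existence of $\semroll$ as a genuine $\SScone_L$-isomorphism, not just its $U$-image) and minimality. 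Verifying this ``unit-transparency'' of the monad lifting on predicates is the crux of the argument; the enrichment, functoriality and local-continuity bookkeeping around it is routine.
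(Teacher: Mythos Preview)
Your approach is broadly right, but you have mislocated where the second hypothesis acts, and the paper organises things more modularly. Rather than verifying algebraic compactness of $\SScone_L$ directly, the paper isolates an abstract notion of \emph{$rCBV$ $\wCpo$-pair} (Definition~\ref{def:concrete-rCBV-models}): a $CBV$ pair $(\catV,\monadT)$ with $\catV$ cocomplete $\wCpo$-cartesian closed, unit of $\monadT$ pointwise full, $J(\initiall)$ an ep-zero in the Kleisli category $\catC$, and---the crux---every $\catC$-embedding lifting to a $\catV$-morphism. Such pairs automatically carry a free type recursion via the usual bilimit construction (\S\ref{ssec:wcpo-enriched-cat-models-rec-types}). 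The theorem then reduces (Theorem~\ref{theo:theorem-descent-rCBV-wCpo-structure-structure}) to showing these four conditions reflect along the locally full $\forgetfulSub\colon\SScone\to\wCpo^n$; for that one needs $L(\initiall)$ terminal and $\forgetfulSub$ \emph{impurity preserving} (purity reflecting), i.e.\ whenever $\forgetfulSub(f)$ factors through the unit downstairs, $f$ already factors through the lifted unit upstairs. The pullback condition $G(\eta_X)^{-1}(L(T,X))=T$ is precisely impurity preservation.

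Thus the second hypothesis enters not at the stage you describe---checking that $T_D$ is invariant under one unfolding of $F$---but strictly earlier: it is what lets the Kleisli embeddings of the ep-chain $\initiall\to F(\initiall,\initiall)\to\cdots$ lift to honest $\SScone$-morphisms, so that the colimit can be formed in $\SScone$ (where cocompleteness is available) in the first place. Your construction of $T_\infty$ as ``the sub-$\omega$-cpo generated by the cocone images of the $T_n$'' tacitly presupposes this lift: without it the cocone legs are Kleisli maps landing in $G(\Lift{X_\infty})$, not $G(X_\infty)$. Once the lift is in hand, the roll isomorphism and minimality follow from the absoluteness of ep-bilimits, with no further appeal to the hypothesis; the invariance of the predicate is then a consequence, not where the hard work lies.
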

Spelled out in non-categorical terms, we are considering logical relations 
\[
T_{\ty{1}}\subseteq G(\sem{\ty{1}})\qquad\qquad P_{\ty{1}}=L(T_{\ty{1}},\sem{\ty{1}})\subseteq G(\Lift{\sem{\ty{1}}})
\]
and we require that the relation $P_{0}$ at the initial object (empty type) is precisely the singleton set $\{\bot\}$ (containing the least element) and $G({\sem{\ty{1}}}\subseteq \Lift{{\sem{\ty{1}}}})^{-1}(P_{\ty{1}})$ (which we think of as the total elements in $P_{\ty{1}}$) coincide with $T_{\ty{1}}$.

In particular, in our case, we work binary logical relations ($n=2$) with $$G(X,X')=\wCpo^2((\RR^n,\RR^n\times\RR^n), (X,X'))$$ and the monad lifting
\begin{align*}L(T,(X,X')) &=  \Big\{(\gamma,\gamma')\mid \gamma^{-1}(X)\times \RR^n=\gamma'^{-1}(X')\text{ is a proper open subset and for all differentiable}\\&\qquad\qquad\quad \delta:\RR^n\to \gamma^{-1}(X) \text{ we have }
 (\gamma\circ\delta,(x,v)\mapsto (\gamma(\delta(x)),\gamma'(D\delta(x,v)))) \in T\Big\}\\
 &\phantom{==}\cup T.
\end{align*}
Consequently, we can define the logical relations $T_{\trec{\tvar{1}}{\ty{1}}}$ using type recursion, as desired.

\subsection*{Dual numbers reverse AD}
Similarly to dual numbers forward AD $\Dsynsymbol$, we can define a 
reverse AD code transformation $\Dsynrevsymbol$: we define 
$
\Dsynrev{\reals}\defeq \reals\t*\tangentreals  
$
and 
\begin{align*}
\Dsynrev{\cnst{c}}&\defeq \tPair{\cnst{c}}{\zero}\\
\Dsynrev{\op(\trm{1}_1,\ldots,\trm{1}_n)}
&\defeq\pMatch{\Dsynrev{\trm{1}_1}}{\var{1}_1}{\var{1}_1'}{\ldots}
\pMatch{\Dsynrev{\trm{1}_n}}{\var{1}_n}{\var{1}_n'}{}\\
&\quad\quad\langle\op(\var{1}_1,\ldots,\var{1}_n),
\lincomp{\var{1}_1'}{\partial_1\op(\var{1}_1,\ldots,\var{1}_n)} \plus\ldots
 \plus
\lincomp{\var{1}_n'}{\partial_n\op(\var{1}_1,\ldots,\var{1}_n)} \rangle \\
\Dsynrev{\tSign{\trm{1}}} &\defeq \tSign(\tFst\Dsynrev{\trm{1}}).
\end{align*}
and extend homomorphically to all other type and term formers, as we did before.
In fact, this algorithm is exactly the same as dual numbers forward AD 
in code with the only differences being that 
\begin{enumerate}
\item the type $\reals$ of real numbers for tangents has been replaced with a new type 
$\tangentreals  $, which we think of as representing (dynamically sized) cotangent vectors to the global input of the program;
\item the zero $\cnst 0 $ and addition $(+)$ of type $\reals$ have been replaced by the zero $\zero$ and addition $(\plus)$ of cotangents of type $\tangentreals  $;
\item the multiplication $(*):\reals\t*\reals\to \reals$ has been replaced by the operation 
$(\lincomp{~}{~}):\tangentreals  \t* \reals \to\tangentreals  $: $(\lincomp{v}{r})$ is the rescaling of a cotangent $v$ by the scalar $r$.
\end{enumerate}
We write $\cncanoni{i}$ for program representing the $i$-th canonical basis vector $e_i$ of type $\tangentreals  $
and we write 
\begin{equation}\label{eq:wrap-k-x-Dk} 
\Wrap{s}(\var{1})\defeq \tMatch{\var{1}}{\var{1}_1,\ldots,\var{1}_s}{\tTuple{\tPair{\var{1}_1}{\cncanoni{1}},\ldots,\tPair{\var{1}_s}{\cncanoni{s}}}}.
\end{equation} 
We define $\sem{\tangentreals  }\defeq \RR^\infty\defeq\sum_{k=0}^\infty \RR^k$ as the infinite (vector space) coproduct of $k$-dimensional real vector spaces.
That is, we interpret $\tangentreals$ as the type of dynamically sized real vectors\footnote{Note that, in practice, \cite{smeding2022} actually implements $\tangentreals$ as a type of 
ASTs of simple expressions computing a dynamically sized vector.
This allows us to first build up the expression during execution of the program (the forward pass)
and to only evaluate this cotangent expression later (in a reverse pass) making clever use of a distributivity law of addition and multiplication (also known as the linear factoring rule in \cite{brunel2019backpropagation}) to achieve the correct computational complexity of reverse AD.
}.
We show that $\Dsynrev{\trm{1}}$ implements the transposed derivative $D\sem{\trm{1}}^t$ of $\sem{\trm{1}}$ in the following sense.
\begin{theorem}[Reverse AD Correctness, Theorem \ref{theo:main-theorem-section-proof} with $k=\infty$ in main text]
    For any program $\var{1}:\ty{1}\vdash \trm{1}:\ty{2}$ for $\ty{1}=\reals^{s},\ty{2}=\reals^l$,
    \begin{align*}    &\sem{\letin{\var{1}}{\Wrap{k}(\var{1})}{\Dsynrev{\trm{1}}}}(x_1,\ldots, x_s)=\\
   &\Big((\pi_1(\sem{\trm{1}}(x_1,\ldots,x_s)),
      D\sem{\trm{1}}^t((x_{1},\ldots,x_s),e_1)),\ldots, (\pi_l(\sem{\trm{1}}(x_1,\ldots,x_s)), D\sem{\trm{1}}^t((x_{1},\ldots,x_s),e_l))\Big)
    \end{align*}
    for any $(x_1,\ldots,x_s)$ in the domain of definition of $\sem{\trm{1}}$. 
    \end{theorem}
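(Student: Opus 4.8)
The plan is to obtain this statement as the $k=\infty$ instance of a single parameterised correctness result (Theorem~\ref{theo:main-theorem-section-proof}) that treats forward mode ($k=1$) and reverse mode ($k=\infty$) uniformly, the two differing only in the interpretation of the tangent/cotangent type and in the accompanying code transform. First I would fix, for each $k$, the target semantics $\semt{k}{-}$ of the dual-numbers transform alongside the standard semantics $\sem{-}$, realising both as structure-preserving functors out of the freely generated syntactic category $\Syn$; for $k=\infty$ the transform is $\Dsynrevsymbol$ and $\semt{\infty}{\tangentreals}=\RR^\infty$. Because $\Syn$ is initial among models of the language, both functors are canonical, so the entire content of the theorem is a comparison of the two along a single program.

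The core is a logical-relations argument. Over $\wCpo^2$ I would form the subscone $\SUBscone{\wCpo}{\sconeFUNCTOR{n,k}}$, whose objects equip a pair $(X,X')$ with a sub-$\omega$-cpo $T\subseteq G(X,X')$ for $G(X,X')=\wCpo^2\big((\RR^n,\RR^n\times\RR^n),(X,X')\big)$; such an object relates a (total) $n$-curve into $X$ to a candidate derivative $n$-curve into $X'$. At $\reals$ the relation pairs a differentiable $\gamma$ with the curve recording $\gamma$ together with its transposed, cotangent-valued derivative data tracked in $\RR^\infty$; at product, sum and function types it is given by the usual compatibility clauses; and on partial computations it is obtained from a lifting $\monadLR{n,k}{-}$ of the partiality monad $\Lift{(-)}$. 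I would then check the two hypotheses of Theorem~\ref{the:maybe-the-main-result-on-LR-recursive}: that $\monadLR{n,k}{-}$ sends the initial object to the terminal one (the relation at the empty type is $\{\bot\}$), and that $G(\eta)^{-1}(\monadLR{n,k}{(T,X)})=T$. This produces a model of recursive types inside $\SUBscone{\wCpo}{\sconeFUNCTOR{n,k}}$, hence, by the universal property of $\Syn$, a structure-preserving logical-relations functor $\semLR{n,k}{-}$ lying over $\sem{-}\times\semt{k}{-}$. The commutativity of the resulting square is precisely the fundamental lemma: every well-typed program respects the relations, including programs that internally use iteration, recursion and recursive types even though the stated interface is first-order.

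With the fundamental lemma available, the theorem becomes a specialisation. For $k=\infty$ I would take $n=s$ and feed in the identity $s$-curve on $\sem{\reals^s}=\RR^s$ seeded by $\Wrap{s}$, i.e. pairing each input coordinate $x_i$ with the basis cotangent $e_i\in\RR^\infty$ denoted by $\cncanoni{i}$; this seed lies in the relation at $\reals^s$ directly from the $\reals$-clause, since the derivative of the $i$-th coordinate function is represented by $e_i$. Respecting the relation at the output type $\reals^l$ then forces $\sem{\letin{\var{1}}{\Wrap{s}(\var{1})}{\Dsynrev{t}}}$ to record, in the second component of the $j$-th output real, exactly the gradient $\sum_i \frac{\partial (\pi_j\sem{t})}{\partial x_i}(x)\, e_i$, which is $D\sem{t}^t((x_1,\ldots,x_s),e_j)$; reading off the primal in the first component yields the displayed equation, and differentiability of $\sem{t}$ on its domain drops out of the same $\reals$-clause.

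I expect the main obstacle to be the design and verification of the monad lifting $\monadLR{n,k}{-}$ for partial differentiable functions. It must simultaneously (i) satisfy the two hypotheses of Theorem~\ref{the:maybe-the-main-result-on-LR-recursive} so that recursive types are modelled; (ii) be closed under suprema of $\omega$-chains and contain the least element, so that the interpretations of iteration and recursion via Kleene's fixpoint theorem remain inside the relation; and (iii) make the reverse-mode cotangent operations $\zero$, $(\plus)$ and $\lincomp{v}{r}$ respect the relation. This last point is where the chain rule must be checked in its transposed, cotangent-accumulating form, and it is the only place where the forward ($k=1$) and reverse ($k=\infty$) instances genuinely diverge.
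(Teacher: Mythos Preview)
Your proposal is correct and follows essentially the paper's own route: a $k$-parameterised subscone construction, a lifting of the partiality monad, the fundamental lemma via initiality of $\Syn$, and the specialisation by seeding with $\Wrap{s}$. Two small remarks: the probing object should be $\big(\RR^n,(\RR\times\RR^k)^n\big)$ rather than $\big(\RR^n,\RR^n\times\RR^n\big)$ (your later description of the $\reals$-relation already presumes this $k$-dependent form), and the paper proves Theorem~\ref{theo:main-theorem-section-proof} \emph{before} introducing recursive types, so you do not actually need Theorem~\ref{the:maybe-the-main-result-on-LR-recursive} here---Proposition~\ref{prop:the-subscone-with-the-monads-yields-a-CBV-wCPO-pair} (that $\forgetfulSub_{n,k}$ is a $CBV$ $\wCpo$-pair morphism) already suffices for iteration and term recursion via Lemma~\ref{UnderlyingCBVmodel}.
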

We prove this theorem again using a similar logical relations argument, defining $T^{n}_{\ty{1}}\subseteq 
(\RR^n\To\sem{\ty{1}})\times ((\RR^n\times {(\RR^\infty )}^n)\To \sem{\Dsynrev{\ty{1}}})$
and 
$P^{n}_{\ty{1}}\subseteq 
(\RR^n\rightharpoonup{\sem{\ty{1}}})\times (\RR^n\times  {(\RR^\infty)}^n)\rightharpoonup {\sem{\Dsynrev{\ty{1}}}})$ as before for all types $\ty{1}$ of language, setting 
\begin{align*}
T^{n}_{\reals}&\defeq \set{(\gamma, \gamma')\mid \gamma\text{ is differentiable and }\gamma'=(x, L)\mapsto (\gamma(x),  L(D\gamma^t(x, e_1)))}\\
T^{n}_{\ty{1}\t*\ty{2}} & \defeq \set{(x \mapsto (\gamma_1(x),\gamma_2(x)), (x,L)\mapsto(\gamma'_1(x,L),\gamma'_2(x,L)))\mid (\gamma_1,\gamma'_1)\in T^{n}_{\ty{1}}\text{ and } (\gamma_2,\gamma'_2)\in T^{n}_{\ty{2}}}\\
T^{n}_{\ty{1}\t+\ty{2}} & \defeq \set{(\iota_1\circ \gamma_1,\iota_1\circ \gamma'_1)\mid (\gamma_1,\gamma'_1)\in T^{n}_{\ty{1}}}\cup \set{(\iota_2\circ \gamma_2,\iota_2\circ \gamma'_2)\mid (\gamma_2,\gamma'_2)\in T^{n}_{\ty{2}}}\\
T^{n}_{\ty{1}\To\ty{2}}& \defeq \set{(\gamma,\gamma')\mid \forall (\delta, \delta')\in T^{n}_{\ty{1}}. (x\mapsto \gamma(x)(\delta(x)), (x,L)\mapsto \gamma'(x,L)(\delta'(x,L)))\in P^{n}_{\ty{2}}}\\
T^n_{\trec{\tvar{1}}{\ty{1}}}&\defeq \set{(\semroll\circ\gamma,\semroll\circ\gamma')\mid (\gamma,\gamma')\in T^n_{\subst{\ty{1}}{\sfor{\tvar{1}}{\trec{\tvar{1}}{\ty{1}}}}}}
\\
P^{n}_{\ty{1}} & \defeq \Big\{(\gamma,\gamma')\mid \gamma^{-1}(\sem{\ty{1}})\times (\RR^\infty){}^n=\gamma'^{-1}(\sem{\Dsynrev{\ty{1}}})\text{ is open and for all differentiable}\\&\qquad\quad\!\!\! \delta:\RR^n\to \gamma^{-1}(\sem{\ty{1}}) \text{ we have }
 (\gamma\circ\delta,(x,L)\mapsto
 \gamma'(\delta(x),L\circ D\delta^t(x,-))) \in T^{n}_{\ty{1}}\Big\},
\end{align*}
where we consider $(\RR^\infty){}^n$ as a type of linear transformations from $\RR^n$ to $\RR^\infty$ and we write $e_i$ for the $i$-th standard basis vector of $\RR^n$.
We then prove the following ``fundamental lemma'', using induction on the typing derivation of $\trm{1}$:
\begin{quote}
If $\var{1}_1:\ty{1}_1,\ldots,\var{1}_n:\ty{1}_n\vdash \trm{1}:\ty{2}$ and, for $1\leq i\leq n$,
$(f_i, f_i')\in T^{n}_{\ty{1}_i}$, then \\
$(x\mapsto \sem{\trm{1}}(f_1(x),\ldots, f_n(x)), (x,L)\mapsto \sem{\Dsynrev{\trm{1}}}(f_1'(x,L),\ldots, f_n'(x,L)))\in P^{n}_{\ty{2}}$.
\end{quote}

As $T^{s}_{\reals^s}$ contains, in particular, $$(\id, ((x_1,\ldots,x_s), L)\mapsto ((x_1,L e_1),\ldots ,(x_s,L e_s))),$$ our theorem follows.

\subsection*{Extending to arrays}
AD tends to be applied to programs that manipulate large arrays of reals.
Seeing that such arrays are denotationally equivalent to lists $\trec{\tvar{1}}{\Unit\t+\tvar{1}\t*\reals}$,
while only the computational complexity of operations differs, our correctness result also applies to functional languages with arrays.
We thus differentiate array types $\Array{\ty{1}}$ with elements of type $\ty{1}$ in the obvious 
structure preserving way, e.g.
\begin{align*}
\Dsyn{\Array{\ty{1}}}\defeq \Array{\Dsyn{\ty{1}}}   \qquad \Dsyn{\mathbf{generate}}\defeq \mathbf{generate}\qquad \Dsyn{\mathbf{map}}\defeq\mathbf{map}\qquad \Dsyn{\mathbf{foldr}}\defeq \mathbf{foldr}
\end{align*}
and similarly for dual numbers reverse AD.

\subsection*{Almost everywhere differentiability}
Taking inspiration from \cite{lee2020correctness,DBLP:journals/pacmpl/MazzaP21,huot2023omegapap} we can increase our ambitions and show that, given sufficiently nice primitive operations, our AD methods compute correct 
derivatives \emph{almost everywhere} for (almost everywhere) terminating programs in our language.
In fact, a minor adaptation of our methods yields these results.
Indeed, as long as we assume that all our (partial) primitive operations denote functions that denote functions that are piecewise analytic under analytic partition (PAP) and  are defined on a $c$-analytic subset (meaning: a countable union of analytic subsets) of $\RR^n$, then
we can simply redefine our logical relations 
\begin{align*}
    T^n_{\reals} &\defeq \{(\gamma,\gamma')\mid \gamma \text{ is PAP and } \gamma'=(x,v)\mapsto  (\gamma(x),\gamma''(x,v))\text{ for an intensional derivative $\gamma''$ of $\gamma$} \}\\
    &\{A_i\subseteq \RR^n\}_{i\in I}\text{ of }\gamma'^{-1}(\sem{\Dsyn{\ty{1}}})\text{ and there exist open neighbourhoods $U_i$ of $A_i$ with functions }\\
    &\gamma_i:U_i\to \sem{\ty{1}}, \gamma'_i:U_i\times \RR^n\to \sem{\Dsyn{\ty{1}}}\text{ such that }\gamma|_{A_i}=\gamma_i|_{A_i}\text{ and } \gamma'|_{A_i\times \RR^n}=\gamma'_i|_{A_i\times \RR^n} \text{ and}\\
    & \text{for all analytic }\delta:\RR^n\to U_i\text{ we have that } 
    (\gamma_i\circ \delta, (x,v)\mapsto (\gamma_i(\delta(x)), \gamma'_i(D\delta(x,v))))\in T^n_{\ty{1}}\Big\}.
    \end{align*}
to conclude that
\begin{itemize}
    \item any program $\var{1}:\ty{1}\vdash \trm{1}:\ty{2}$ for $\ty{1}=\reals^{m},\ty{2}=\reals^l$ in our language denotes a partial PAP function $\sem{\trm{1}$} defined on a $c$-analytic subset;
    \item our AD transformation $\Dsyn{\trm{1}}$ computes $\pairL \sem{\trm{1}},g\pairR$ for an intensional derivative $g$ of $\sem{\trm{1}}$, which coincides almost everywhere in the domain with the (standard) derivative  $D\sem{\trm{1}}$ of $\sem{\trm{1}}$.
\end{itemize}
Consequently, if our program terminates almost everywhere, the AD transformation computes the correct derivative almost everywhere.

\section{Overview}\label{sec:overview}
We briefly sketch the high-level plan of attack 
that we will follow in this paper.
In this work, our guiding philosophy is to consider categorical models of functional programming languages as categories with a certain kind of structure:
\begin{itemize}
\item certain chosen types $\reals$ and morphisms $\op$ for basic types of real numbers and primitive operations (such as $\cos$ and multiplication) between real numbers;
\item finite products, to represent tuples;
\item finite coproducts, to represent variants;
\item exponentials, to build types of curried and higher order functions;
\item a (partiality) monad such that the Kleisli category supports certain morphism-level fixpoint operators to represent (call-by-value) iteration (while-loops) and recursion;
\item certain object-level fixpoint operators to represent recursively defined types.
\end{itemize}
Due to its technical complexity, we isolate the discussion of the last feature (recursive types) in Section \ref{sec:recursive-types}.

A crisp formulation for the last two bullet points above is hard to find in the literature.
Therefore, we develop such a formulation in detail in Sections \ref{sec:basic-categorical-semantics}, \ref{ssec:cat-models-recursive-types}.
We further, in Sections \ref{sec:owcpos-enriched-stuff} and \ref{ssec:wcpo-enriched-cat-models-rec-types}, show that there are particularly well-behaved models of these features if we have enrichment over $\omega$-cpos ($\omega$-chain complete partial orders).
All the models we consider, except for the syntax, will fall into this well-behaved class.

We will generally identify the syntax of  a programming language, up to $\beta\eta$-equality, with the \emph{freely generated} (or initial) such category $\Syn$.
We can then understand automatic differentiation (AD) as a structure preserving functor (preserving all the structure described above) $$\Dsynsymbol:\Syn\to\Syn$$
that sends $\reals$ to a type of pairs $\reals\times\tangentreals$ (for storing both values and derivatives) and each primitive operation $\op$ to its derivative.
We discuss this in Sections \ref{sec:our-cbv-language}, \ref{ssec:rec-types-syntax}, and \ref{sub:extended-macro}.

In order to phrase the correctness of automatic differentiation,
we first need to fix the meaning of the programs in our language.
That is sometimes done using an operational semantics that describes 
how programs are evaluated in time.
Here, we work, instead, with a denotational semantics that systematically assigns
spaces (in our case, $\omega$-cpos) to types and mathematical functions (in our case, $\omega$-cocontinuous, monotone functions) to programs.
We can understand such a denotational semantics as a structure preserving functor to the category $\wCpo$ of $\omega$-cpos: 
$$
\sem{-}:\Syn\to\wCpo,
$$
which sends $\reals$ to the real numbers $\RR$ and each primitive operation $\op$ to the function $\sem{\op}$ it intends to implement.
Importantly, we are now in a position to phrase a correctness theorem for AD by relating the semantics of an AD-transformed program to the mathematical derivative of that program.
We discuss this in Sections \ref{sec:semantics-AD-transformation} and \ref{subsect:concrete-semantics-for-the-recursive-types}.

Our proof strategy for this correctness theorem is a logical relations proof, which we again phrase categorically.
Given a functor $G:\wCpo^n\to \wCpo$, we can consider the category 
$\SScone$ of $n$-ary logical relations, which has objects $(X, T)$, where $X\in\wCpo^n$ and $T$ is a (full) sub-$\omega$-cpo 
$GX$, and morphisms $(T,X)\to (T',X')$ are $\wCpo^n$-morphisms $f:X\to X'$ such that $y\in T$ implies $Gf(y)\in T'$.
Our proof proceeds by making a sensible choice of $G$ (Section \ref{ssec:basic-setting}) and giving a new categorical semantics  $\semLR{}{-}:\Syn\to \SScone$ in the category of logical relations, such that the following diagram commutes and that the commuting of this diagram immediately implies the correctness of AD (Sections \ref{ssec:ad-log-rel-data-types},
\ref{subsec:proof-basic-correctness-theorem},
\ref{sub:forward-mode-types-correctness}, and
\ref{sub:reverse-mode-types-correctness}):
% https://tikzcd.yichuanshen.de/#N4Igdg9gJgpgziAXAbVABwnAlgFyxMJZABgBpiBdUkANwEMAbAVxiRAB12BlAT0IF9S6TLnyEUZAIxVajFm068wnPAFt4iviEHDseAkUmlp1es1aIO7AO4BhDCqzq4nOxm1CQGPWMPkZZvKWilwAxgSs-DIwUADm8ESgAGYAThCqSGQgOBBIRrLmbAAUnFhQpJwAInB8NaoARhAMAJQg1Ax09TAMAAoi+uIgKVixABY4HslpGYhZOUgAzKZyFlZwMKoAMgBKwAC0-JMgqemL1POIAEzLhcHsqnQ4oymqwEkQKfE4hzrH03nnXJXG5BNYbfb8RzOTjrV4HNogDpdXr9XyWYZjCZRfhAA
\begin{figure}[!h]\begin{tikzcd}
    \Syn \arrow[d, "{(\id,\Dsynsymbol)}"'] \arrow[r, "\semLR{}{-}"] & \SScone \arrow[d, "\mathrm{forget}"] \\
    \Syn\times\Syn \arrow[r, "\sem{-}\times\sem{-}"']             & \wCpo\times\wCpo.                    
    \end{tikzcd}
\end{figure}

How do we construct such a semantics though? For that, we need to show that the category of logical relations has all the structure needed to interpret our language. That is:
\begin{itemize}
\item we show that products, coproducts, exponentials, morphism-level fixpoint operators for iteration and recursion exist in our category of logical relations (Sections \ref{sec:LogicalRelations-basic-proof} and \ref{subsect:the-definition-of-the-monad-for-the-logical-relations});
\item we show that object-level fixpoint operators for recursive types exist in our category of logical relations (Section \ref{ssec:rec-types-sscone});
\item we choose a sensible logical relation $\semLR{}{\reals}$ for $\reals$ to precisely capture correct differentiation 
and we demonstrate that each primitive operation respects the chosen logical relations (Section \ref{subsec:LR-assignment}).
\end{itemize}
The only choices that need to be made to construct this interpretation are the choice of logical relations associated with $\reals$ and with partial functions (in the form of a lifting of the partiality monad to logical relations).
All other required structure is unique thanks to a universal property.
Further, the commutativity of the diagram above follows automatically from the initiality of $\Syn$ among all categorical models.

We believe that our methods for interpreting morphism- and object-level fixpoint combinators in categories of logical relations can be a simplification compared to existing methods.
We therefore aim to present them in a reusable way.

\section{Categorical models for CBV languages: $CBV$ pairs and models}\label{sec:basic-categorical-semantics}

The aim of this section is to establish a class of categorical models for call-by-value (CBV) languages with free notions of recursion and iteration.
This material will be of importance as we will later consider particular examples of such models constructed from (1) the syntax of programming languages, (2) a concrete denotational semantics for programming languages in terms of $\omega$-cpos, and (3) that concrete semantics decorated with logical relations to enable correctness proofs of AD.

Given a cartesian closed category $\catV$, we can see it as a $\catV $-enriched category w.r.t. the cartesian structure. 
Recall that a strong \textit{monad} $\monadT $ on a cartesian closed category $\catV $ 
is the same as a $\catV$-monad on $\catV$. More precisely, it is a triple 
\begin{equation}
\monadT = \left( T : \catV \to \catV , \mm : T^2\to T  , \ee  : \ID _ {\catV} \to T \right) ,
\end{equation}
where $T $ is a $\catV $-endofunctor and $\mm, \eta  $ are $\catV$-natural transformations, 
satisfying the usual associativity and identity equations, that is to say, $\mm\cdot \left( \mm T\right) = \mm \cdot \left( T\mm \right) $ and $\mm\cdot\left( \eta T \right) = \ID _ {T}= \mm\cdot\left( T\eta \right) $.\footnote{See \cite[pag.~60]{MR0280560} for the classical enriched case. For the general case of monads in $2$-categories, see  \cite[pag.~150]{MR299653} or, for instance, \cite[Section~3]{2019arXiv190201225L}.} 

Let $\monadT = \left( T, \mm , \ee \right) $ and $\monadT ' = \left( T', \mm ', \ee ' \right) $
be monads on $\catV $ and $\catV '$ respectively. Recall that an oplax morphism (or a monad op-functor) between $\mathcal{T} $ and $\mathcal{T} '$ is a pair 
\begin{equation}\label{eq:morphism-of-monads}
	\left( H : \catV\to \catV ' , \monadphi : HT\to T'H \right) ,
\end{equation} 
where $H$ is a functor and $\monadphi $ is a natural transformation, such that 
\begin{equation}\label{eq:definition-for-the-oplax-monad-morphism}
\monadphi\cdot \left( H\ee \right)   =  \left( \ee ' H \right)  \qquad \mbox{and} \qquad 
 \left( \mm 'H\right)\cdot \left( T'\monadphi \right)\cdot \left( \monadphi T \right) = \monadphi\cdot \left( H \mm \right)    .
\end{equation}

By the universal property of Kleisli categories, denoting by $J: \catV\to \catC$ and $J: \catV ' \to \catC ' $ the universal
Kleisli functors,  the oplax morphims \eqref{eq:morphism-of-monads}  correspond bijectively with pairs of functors $\left( H : \catV\to \catV ', \lift{H} : \catC\to \catC '  \right) $ such that the diagram \eqref{eq:Kleisli-morphism} commutes.

\begin{definition}[$CBV$ pair] \label{def:CBVpair}
A   $CBV$ pair is a pair $\left(\catV , \monadT \right) $
where $\catV $ is  bicartesian closed category (i.e., a cartesian closed category with finite coproducts) and $\monadT$ is a $\catV$-monad on $\catV $. We further require that $\catV $ has chosen finite products, coproducts and exponentials.

A \textit{$CBV $ pair morphism} between the $CBV$ pairs $\left(\catV , \monadT \right) $ and $\left(\catV  ', \monadT ' \right) $ is a strictly bicartesian closed functor $H:\catV\to \catV'$ that is a \emph{strict} morphism between $\monadT$ and $\monadT'$, i.e. such that $HT = T'H$ and 
 $\left( H, \ID \right)  $ defines a monad op-functor  \eqref{eq:morphism-of-monads}. This defines a category of $CBV$ pairs and $CBV$ pair morphisms, denoted herein by $\CBVcatund$. 
\end{definition}

\begin{remark}
	If $\left( \catV , \monadT \right) $ is a $CBV$ pair, since $\monadT $ is $\catV$-enriched, we get
	a $\catV$-enriched Kleisli category $\catC $. We denote by
	\begin{equation}  
		\ihomC{-}{-} = \expk{-}{-} : \catC ^\op \times \catC \to \catV  
	\end{equation}	
	the $\catV$-enriched hom functor. It should be noted that, if we denote by $\expo{X}{Y} = \ihomV{X}{Y} $ the exponential in $\catV $, we have that $\ihomC{X}{Y} = \expk{X}{Y} =  \expo{X}{TY} $ 
	which is the so called \textit{Kleisli exponential} and corresponds to the function types for our language.
\end{remark}	

Denoting by $\catC $ and $\catC ' $ the respective Kleisli categories, each morphism $$H : \left( \catV   , \monadT \right) \to \left( \catV ', \monadT ' \right) $$ of $CBV$ pairs gives rise to a commutative square
\begin{equation}\label{eq:Kleisli-morphism}
	\diag{kleisli-morphism-pair}
\end{equation}
where $J$ and $J '$ are, respectively, the universal Kleisli functors of $\monadT $ and $\monadT '$. In this case, $\lift{H}$ strictly preserves Kleisli exponentials, finite coproducts and the action of $\catV $ on $\catC$. That is to say, $\left( H, \lift{H}\right) $ strictly preserves the distributive closed Freyd-categorical structure\footnote{Although this level of generality is not needed in our work, the interested reader can find more about Freyd-categorical structures and basic aspects of the modelling of call-by-value languages in \cite{levy2003modelling}}. 

%Indeed, call-by-value languages with sum types can be interpreted in distributive-closed %Freyd categories.\footnote{See \cite{levy2003modelling} for the basic modelling of %call-by-value languages, without sum types.} However, we do not need this level of %generality. 

\subsection{$CBV$ models: term recursion and iteration}

In order to interpret our language defined in Section \ref{sec:our-cbv-language}, we need an additional support for term recursion and iteration.
Since we do not impose further equations for the iteration or recursion constructs in our language, the following definitions establish our class of models for term recursion and iteration.
In contrast with most other references we are aware of, we give an explicit discussion of the case of iteration, even though it is definable in terms of recursion.
The reason is that there are interesting programming languages with iteration but without recursion, of which we might want to prove properties.

\begin{definition}[Free Recursion and Iteration]
Let $\left( \catV , \monadT \right) $ be a $CBV$ pair and $ \catC $ the corresponding $\catV$-enriched Kleisli category. 
\begin{itemize}
\item A \textit{free recursion} for $\left( \catV , \monadT \right) $ is a family of morphisms
\begin{equation}\label{eq:basic-recursion}
	\fixpoint = \left( \fixpoint^{W,Y} : \ihomV{\ihomC{W}{Y} }{\ihomC{W}{Y} } \longrightarrow  \ihomC{W}{Y}\right) _{(W,Y)\in \catC\times\catC}   
\end{equation} 
in $\catV $.
\item A \textit{free iteration} for $\left( \catV , \monadT \right) $ is a family of morphisms 
\begin{equation}\label{eq:basic-iteration}
	\iterationn = \left( \iterationn^{W,Y} : \ihomC{ W }{  W\sqcup Y  } \longrightarrow \ihomC{ W } {Y} \right) _{(W,Y)\in \catC\times\catC }   
\end{equation} 
in $\catV $.
\end{itemize}
\end{definition}	

\begin{definition}[$CBV$ model]\label{def:CBVmodel}
	A \textit{$CBV$ model} is a quadruple $\left( \catV , \monadT, \fixpoint , \iterationn \right) $ in which 
	$\left( \catV , \monadT\right) $ is a $CBV$ pair, $\fixpoint $ is a free recursion, and $\iterationn $ is a free iteration for $\left( \catV , \monadT\right) $.
	
	A \textit{$CBV$ model morphism between} the $CBV$ models $\left( \catV , \monadT, \fixpoint , \iterationn \right) $ and $\left( \catV ' , \monadT ', \fixpoint ' , \iterationn ' \right) $ is a morphism $H$ between the underlying $CBV$ pairs such that $ H \left( \fixpoint ^{W,Y} \right) = \fixpoint '^{HW, HY}  $ and $ H \left( \iterationn  ^{W, Y} \right) = \iterationn '^{HW, HY}   $, for any $\left( W, Y\right)\in \catV\times\catV  $. 
	This defines a category of $CBV$ models, denoted herein by $\CBVcat $.
\end{definition} 

It should be noted that $\CBVcat $ has finite products. Given two $CBV$ models $\left( \catV _0 , \monadT _ 0, \fixpoint _0 , \iterationn _0 \right) $ and\linebreak $\left( \catV _1 , \monadT _ 1, \fixpoint _1 , \iterationn _1 \right) $, the product is given by
\begin{equation}
	\left( \catV _0\times \catV _1 , \monadT _ 0\times \monadT _ 1, \left( \fixpoint _0 , \fixpoint _1\right) , \left(\iterationn _0, \iterationn _1\right)  \right) 
\end{equation} 
where $\left( \fixpoint _0 \times \fixpoint _1\right)^{\left( W, W'\right), \left( Y, Y' \right) } =  \fixpoint _0 ^{W,Y}\times \fixpoint _1 ^{W',Y'}  $ and $\left( \iterationn _0 \times \iterationn _1\right)^{\left( W, W'\right), \left( Y, Y' \right) } = \left(  \iterationn _0 ^{W,Y}\times \iterationn _1 ^{W',Y'} \right)  $.

\section{Canonical fixed points from 2-dimensional structure}\label{sec:owcpos-enriched-stuff}
The aim of this section is to specialise to a class of particularly well-behaved $CBV$ pairs and models, as they possess canonical iteration and recursion constructs that arise from a universal property of being a least fixed point.
To phrase this universal property (and thus obtain uniqueness), we need the homsets of our models to be categories (or, posets, if we do not care to distinguish different ways of comparing morphisms), leading us to consider $\catV$ that are $\catCat$- or $\catPos$-enriched. 
To obtain existence of these fixed points, it is sufficient to have colimits of countable chains, leading us to specialise to $\catV$ that are enriched over $\omega$-cocomplete categories or posets.

We denote by $ \wCpo $ the usual category of \wcpos. The objects of  $ \wCpo $ are the
partially ordered sets with colimits of $\ww$-chains, while the morphisms are functors preserving these colimits.
 An \wcpo{} is called \textit{pointed} if it has a least element, denoted herein by $\leastelement $. We say that
 $f\in\wCpo\left( W, Y\right) $ is a pointed $\wCpo$-morphism if $W $ is pointed and $f$ preserves the least element.

It is well known that $ \wCpo $ is bicartesian closed\footnote{In fact, it is locally presentable, hence complete and cocomplete, and its internal hom $X\Rightarrow Y$ is given by using the order $f\leq_{X\Rightarrow Y} g$ defined as $\forall x\in X.f(x)\leq_Y g(x)$ on the homset $\wCpo(X,Y)$.
Its products $\bigsqcap_{i\in I}X_i$ carry the lexicographic order and its coproducts $\bigsqcup_{i\in I}X_i$ have the disjoint union of the orders of all $X_i$, making elements in different components incomparable.
See, for example, \cite{vakar2019domain} for more details.}.  
We consider $ \wCpo $-enriched categories w.r.t. the cartesian structure.
Henceforth, if $\catV $ is an $\wCpo $-enriched category and $W, Y$ are objects of $\catV $, we denote by
$\catV\left( W, Y\right) $ the $ \wCpo $-enriched hom, that is to say, the \wcpo{} of morphisms between $W$ and $Y$. 

An $\wCpo$-category $\catV $ has finite $\wCpo$-products if it has a terminal object and we have a natural isomorphism of $\omega$-cpos
$$
\pairL -,-\pairR:\catV(Z,W)\times \catV(Z, Y)\cong \catV(Z, W\times Y),
$$
or, equivalently, if it has finite products and tupling is an $\wCpo$-morphism.
Dually, an $\wCpo$-category $\catV $ has finite $\wCpo$-coproducts if it has an initial object and we have a natural isomorphism of $\omega$-cpos
$$
\cpairL -,-\cpairR:\catV(W,Z)\times \catV(Y, Z)\cong \catV(W\sqcup Y, Z),
$$
or, equivalently, if it has finite coproducts and cotupling is an $\wCpo$-morphism.
We say that an $\wCpo$-functor $F:\catV\to \catV'$ has an $\wCpo$-right adjoint $U:\catV'\to \catV$ if we have a natural isomorphism of $\omega$-cpos
$$\catV'(FZ, W') \cong \catV(Z, UW'),$$
or, equivalently, if it has a right adjoint functor $U:\catV'\to \catV$ such that the homset bijection $\catV'(FZ, W') \to \catV(Z, UW')$ is an $\wCpo$-morphism. 
An $\wCpo$-category $\catV $ is \textit{$\wCpo$-cartesian closed} if $\catV $ has finite $\wCpo $-products and, moreover, for each object $Z\in \catV$, the $\wCpo$-functor $\left(Z\times -\right) : \catV\to\catV $ has a right $\wCpo$-adjoint $\ihomV{Z}{-}$, called, herein, the $\wCpo$-exponential of $Z$. An $\wCpo$-functor $H : \catV\to \catV '$ 
is \textit{strictly $\wCpo$-cartesian closed} if it strictly preserves the $\wCpo$-products and the induced comparison  $H\circ\ihomV{-}{-}\to \ihom{\catV '}{H(-)}{H(-)}$  
is the identity.

Let $\catV $ be $\wCpo$-cartesian closed. For any $Z\in \catV$, since the hom-functor
$\ehomV{Z}{-} : \catV\to\wCpo$  is cartesian, it induces the \textit{change of enriching base  $2$-functors}
\begin{equation}
\chang{\ehomV{Z}{-}}: \ecat{\catV}\to\ecat{\wCpo} 
\end{equation}
between the $2$-categories of enriched categories w.r.t. the cartesian structures.
Therefore, taking $Z = \terminall $ (the terminal object of $\catV$), we get that every $\catV $-category  ($\catV$-functor/$\catV$-monad) has a \textit{suitable underlying} $\wCPO $-category  ($\wCpo$-functor/$\wCpo$-monad), given by its image by $\chang{\wCpo} := \chang{\ehomV{\terminall}{-}}$. 

\begin{definition}[$CBV$ $\wCpo$-pair]\label{def:CBV-mode-wcpo-pair}
	A \textit{$CBV$ $\wCpo$-pair} is a $CBV$ pair $\left( \catV , \monadT \right) $ in which $\catV$ is an $ \wCpo $-bicartesian closed category, such that $\ehomV{W}{TY} $ is a pointed \wcpo{} for any $(W,Y)\in \catV\times\catV$.  

	A  \textit{$CBV$ $\wCpo$-pair morphism} between $\left(\catV , \monadT \right) $ and $\left(\catV  ', \monadT ' \right) $ is an $\wCpo$-functor 
	$H : \catV\to\catV ' $ whose underlying functor yields a morphism between the $CBV$ pairs, and such that 
	$ H : \ehomV{W}{TY}\to \ehomV {HW}{H TY}$
	is a pointed $\wCpo$-morphism for any $\left( W, Y\right)\in \obb{\catV}\times\obb{\catV} $. This defines a category of $CBV$ $\wCpo$-pairs, denoted herein by $\wCBVcat$.
\end{definition} 

There is, then, an obvious forgetful functor $\CBVUp : \wCBVcat\to \CBVcatund $.

\subsection{Fixpoints: term recursion and iteration}

 Recall that, if $A$ is a pointed $\omega$-cpo and $q : A\to A $ is an endomorphism in $\wCpo $, then $q$ has a \textit{least fixed point} given by the 
 colimit of the chain 
\begin{equation}\label{basic-least-element-chain} 
\diag{basic-wcpo-chain}
\end{equation}
by Kleene's Fixpoint Theorem.  Given such an endomorphism, we denote by $\lfpoint{q} $ its least fixed point.

Henceforth, let $\left( \catV , \monadT \right) $ be a $CBV$ $\wCpo$-pair, and $J: \catV \to \catC $ the corresponding $\catV$-enriched universal Kleisli functor. We denote by
$-\kleislitimes - : \catV\times\catC \to \catC $
the $\catV$-tensor product in $\catC$, also called the $\catV $-copower, which, in this case, correspond to the usual action of $\catV$ on $\catC $.

By hypothesis, for any $W,Y,Z\in \catV $, the \wcpo{} $\catC \left( Z\kleislitimes W,  Y \right)\cong\catV\left( Z, \catC\left[ W,Y \right] \right) $
 is pointed. Let us write $\Lambda_Z^{W,Y}$ for the isomorphism from left to right. Then, we can define
\begin{eqnarray}
\pwfixpoint^{W,Y}_Z :&\ehomV{ Z\times\ihomC{W}{Y} }{\ihomC{W}{Y}} & \to  \ehomV{ Z} { \ihomC{W}{Y}} \label{def:rec-wcpo-pair} \\
                   & f & \mapsto \lfpoint { h\mapsto  f\circ \left( Z\times h\right)\circ  \diagv{Z} }   \nonumber\\
	\pwiterate^{W,Y}_Z :&\ehomC{Z\kleislitimes W}{ W\sqcup Y } & \to \ehomC{ Z\kleislitimes W} {Y} \label{def:it-wcpo-pair}\\
	& f & \mapsto \lfpoint { 
	\begin{array}{l}
		h\mapsto \cpairL h, J\left( \projY\right)  \cpairR\circ d_{Z,W,Y} \circ\left( Z\kleislitimes f\right)\\
		\qquad \quad\circ a_{Z,Z,W} \circ \left( \diagv{Z}\kleislitimes \ID _ W \right)
	\end{array}	
	}     \nonumber
\end{eqnarray}
where $\diagv{Z} = \left(\ID_ Z , \ID _Z \right) : Z\to Z\times Z $ is the diagonal morphism, $d_{Z,W,Y}:Z\otimes (W\sqcup Y)\to (Z\otimes W)\sqcup (Z\otimes Y)$ is the distributor, and $a_{Z,W,Y}:(Z\times W)\otimes Y\to Z\otimes (W\otimes Y)$ is the associator.
% , and $\diagk{Z} = J \left(\ID_ Z , \ID _Z \right) : Z\to Z\kleislitimes Z  $.
 Since the morphisms above are  $\wCpo $-natural in $Z\in\catV $, they give rise to the families of morphisms 
\begin{eqnarray}
\wfixpoint = \left( \wfixpoint^{W,Y} \right) _{(W,Y)\in \catC\times\catC} &\defeqq  & \left(  \pwfixpoint^{W,Y} _{\ihomV{\ihomC{W}{Y} }{\ihomC{W}{Y}}}\left( \eva{\ihomC{W}{Y}}{\ihomC{W}{Y}} \right) \right) _{(W,Y)\in \catC\times\catC}\label{def:free-recursion-wCPO}\\
	\witerate = \left( \witerate^{W,Y}  \right) _{(W,Y)\in \catC\times\catC } &\defeqq & \Lambda_{\ihomV{ W }{T\left( W\sqcup Y\right) }}^{W,Y}\left(  \pwiterate^{W,Y} _{\ihomV{ W }{T\left( W\sqcup Y\right) }} \left( \eva{W}{T\left(W\sqcup Y\right)}  \right)\right) _{(W,Y)\in \catC\times\catC}\label{def:free-iteration-wCPO}
\end{eqnarray} 	
by the Yoneda Lemma, where $\eva{A}{B} :  \ihomV{A}{B} \times A\to B $ is the evaluation morphism given by the cartesian closed structure.

\begin{lemma}[Underlying $CBV$ model]\label{UnderlyingCBVmodel}
	There is a forgetful functor $\CBVU : \wCBVcat\to \CBVcat $ defined by $\CBVU\left( \catV , \monadT \right) = \left( \catV , \monadT , \wfixpoint , \witerate \right)  $, taking every morphism $H$ to its underlying morphism of $CBV$ models.
\end{lemma}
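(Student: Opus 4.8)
The plan is to verify the three obligations that make $\CBVU$ a functor: that the object assignment lands in $\CBVcat$, that the morphism assignment lands in $\CBVcat$-morphisms, and that identities and composites are respected. The first is essentially immediate, since a free recursion and a free iteration are merely families of morphisms of prescribed type subject to \emph{no} equations, and the families $\wfixpoint$, $\witerate$ are constructed precisely to have the types required by \eqref{eq:basic-recursion}, \eqref{eq:basic-iteration}; hence $\left(\catV,\monadT,\wfixpoint,\witerate\right)$ is a $CBV$ model whenever $\left(\catV,\monadT\right)$ is a $CBV$ $\wCpo$-pair. Since $\CBVU$ leaves the underlying functor of a morphism untouched, functoriality is then automatic. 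The entire content of the lemma is therefore the claim that the underlying $CBV$ pair morphism $H$ of any $CBV$ $\wCpo$-pair morphism satisfies $H\left(\wfixpoint^{W,Y}\right)={\wfixpoint'}^{HW,HY}$ and $H\left(\witerate^{W,Y}\right)={\witerate'}^{HW,HY}$; that is, that $H$ \emph{preserves the canonical fixpoint operators}.

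The key tool I would isolate is an elementary remark about Kleene's construction: if $q\colon A\to A$ and $q'\colon B\to B$ are $\wCpo$-endomorphisms of pointed $\omega$-cpos and $\psi\colon A\to B$ is a pointed $\wCpo$-morphism with $\psi\circ q=q'\circ\psi$, then $\psi$ preserves least fixed points, since
\begin{equation*}
\psi\left(\lfpoint{q}\right)=\psi\left(\colim_n q^n\left(\leastelement\right)\right)=\colim_n \psi\left(q^n\left(\leastelement\right)\right)=\colim_n {q'}^n\left(\leastelement\right)=\lfpoint{q'},
\end{equation*}
using $\omega$-continuity of $\psi$ for the second equality and the iterated intertwining $\psi\circ q^n={q'}^n\circ\psi$ together with $\psi\left(\leastelement\right)=\leastelement$ for the third. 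I would apply this with $\psi$ the action of $H$ on the relevant homset. For recursion, take $q=\left(h\mapsto f\circ\left(Z\times h\right)\circ\diagv{Z}\right)$ on $A=\ehomV{Z}{\ihomC{W}{Y}}$ as in \eqref{def:rec-wcpo-pair}, whose pointedness follows from the hypothesis that $\ehomV{-}{T(-)}$ is pointed via the natural isomorphism $\ehomV{Z}{\ihomC{W}{Y}}\cong\ehomV{Z\times W}{TY}$. Because $H$ is a strictly bicartesian closed $\wCpo$-functor that is pointed on homs of the form $\ehomV{-}{T(-)}$, the map $\psi=H$ is a pointed $\wCpo$-morphism; and because $H$ strictly preserves composition, products, and the diagonal, one computes $H\left(f\circ\left(Z\times h\right)\circ\diagv{Z}\right)=Hf\circ\left(HZ\times Hh\right)\circ\diagv{HZ}$, which is exactly the intertwining $H\circ q=q'\circ H$ with $q'$ the endomorphism associated to $Hf$. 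The remark then yields $H\left(\pwfixpoint^{W,Y}_Z(f)\right)={\pwfixpoint'}^{HW,HY}_{HZ}(Hf)$.

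To pass from $\pwfixpoint$ to $\wfixpoint$, I would instantiate at $Z=\ihomV{\ihomC{W}{Y}}{\ihomC{W}{Y}}$ and $f=\eva{\ihomC{W}{Y}}{\ihomC{W}{Y}}$ as in \eqref{def:free-recursion-wCPO}; since $H$ strictly preserves exponentials, Kleisli exponentials and evaluation morphisms, $HZ$ and $Hf$ are again the corresponding object and evaluation morphism in $\catV'$, whence $H\left(\wfixpoint^{W,Y}\right)={\wfixpoint'}^{HW,HY}$. The iteration case runs along identical lines, applying the same remark to the endomorphism defining $\pwiterate^{W,Y}_Z$ in \eqref{def:it-wcpo-pair} and then transporting through $\Lambda_Z^{W,Y}$; here I additionally use that $\left(H,\lift{H}\right)$ strictly preserves the Kleisli copower $\kleislitimes$, the distributor, the associator, cotupling $\cpairL-,-\cpairR$, the Kleisli functor $J$ with its projections, and the isomorphism $\Lambda_Z^{W,Y}$, all of which hold because $H$ is a $CBV$ pair morphism. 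I expect the main obstacle to be precisely this bookkeeping for iteration: one must confirm that \emph{every} structure map occurring in \eqref{def:it-wcpo-pair} is preserved strictly and that the relevant Kleisli homset is pointed and $\omega$-complete, so that the intertwining holds on the nose and Kleene's chain is carried across by $H$. Once that is in place, the fixpoint-preservation step is the same short computation as for recursion, and functoriality follows.
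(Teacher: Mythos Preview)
Your proposal is correct and follows essentially the same approach as the paper: the paper's proof is a two-line sketch observing that because $H$ is an $\wCpo$-functor and $H:\ehomV{W}{TY}\to\ehom{\catV'}{HW}{T'HY}$ is a pointed $\wCpo$-morphism, $H$ respects the Kleene-style least-fixed-point constructions defining $\wfixpoint$ and $\witerate$. You have simply unpacked that sketch in detail---isolating the intertwining lemma for least fixed points, verifying the intertwining via strict preservation of the bicartesian closed and monad structure, and tracking the Yoneda instantiation---which is exactly the argument the paper leaves implicit.
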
	
\begin{proof}
	Since  $H$ is a $\wCpo $-functor and, for any $\left( W, Y\right)\in \obb{\catV}\times\obb{\catV} $,
	$$ H : \ehomV{W}{TY}\to \ehom{\catV'}{HW}{T'HY} 
	$$
	is a pointed $\wCpo$-morphism,  we get that, indeed, $H$ respects the free iteration and free recursion as defined in \eqref{def:free-recursion-wCPO} and \eqref{def:free-iteration-wCPO}. 
\end{proof}

It should be noted that, given $CBV$ $\wCpo$-pairs $\left( \catV _0 , \monadT _0 \right) $ and $\left( \catV _1 , \monadT _1\right)  $, 
\begin{equation}
	\left( \catV _0 , \monadT _0\right)\times\left( \catV _1 , \monadT _1 \right) =  \left( \catV _0\times \catV _1 , \monadT _0\times \monadT _1 \right) 
\end{equation}
is the product in $\wCBVcat$. Moreover, $\CBVU $ preserves finite products.

\section{Automatic Differentiation for term recursion and iteration}\label{sec:our-cbv-language}
For our purpose, we could define our macro
in terms of total derivatives. However, we choose to present it  in terms of partial derivatives, in order to keep our treatment as close as possible to 
the starting point of the efficient implementation of the reverse mode given in \cite{smeding2022}. 

Following this choice of presentation,  it is particularly convenient to establish our
\textit{AD} macro $\Dsynsymbol $ as a \textit{program transformation} between a \textit{source language} and  a \textit{target language} (see Section \ref{sec:fwdmode-AD-transformation}). 
The source language contains the programs that we differentiate, while we use the target language to represent those derivatives.
% The main point of this distinction is to keep track of the difference between the types corresponding to manifolds (cartesian spaces) and the (co)tangents (vector spaces) in the target language (see \cite{smeding2022}). 

\subsection{Source language as a standard call-by-value language with iteration and recursion}\label{subsect:source-language}
We consider a standard (coarse-grain) call-by-value language over a ground type $\reals$, certain real constants $\cnst{c}\in\Op_0$,
certain primitive operations $\op\in\Op_n$ for each nonzero natural number $n\in\NN ^\ast $, and $\tSign$.
We denote $\displaystyle\Op := \bigcup _{n\in\NN } \Op _ n $.

As it is clear from the semantics defined in Section \ref{subsect:semantics-source-language},
$\reals $ intends to implement the real numbers. Moreover,
 for each $n\in\NN$, the operations in $\Op _n $ intend to implement partially defined functions
 $\RR ^n \rightharpoonup \RR $. Finally, $\tSign $ intends to implement the partially defined function $\RR\rightharpoonup \Unit\sqcup \Unit$ defined on $\RR ^- \cup \RR ^+ $
 which takes $\RR ^-$ to the left component and $\RR ^+ $ to the right component.

 Although it is straightforward to consider more general settings, we also add the assumption that the primitive operations implement differentiable functions (see Assumption \ref{ass:differentiable-functions}).

%We will later interpret these operations as partial functions $\RR^n\rightharpoonup \RR$ with domains
%of definition that are (topologically) open in $\RR^n$, on which they are differentiable functions.
%These operations include, for example, unary operations on reals like $\exp,\log,\sigmoid\in \Op_1$ (where we mean the mathematical sigmoid function
%$\sigmoid(x)\defeq\frac 1 {1+e^{-x}}$),
%binary operations on reals like $(+),(-),(*),(/)\in\Op_2$, and any other partially defined %differentiable function $\RR^n\rightharpoonup \RR$.

We treat this operations in a schematic way as this reflects the reality of 
practical Automatic Differentiation libraries, which are constantly being expanded with new primitive operations.

The types $\ty{1},\ty{2},\ty{3}$, values $\val{1},\val{2},\val{3}$, and computations $\trm{1},\trm{2},\trm{3}$ of our language are as follows.\\
\begin{syntax}
    \ty{1}, \ty{2}, \ty{3} & \gdefinedby & & \syncat{types}                          \\
    &\gor& \reals                      & \synname{numbers}\\
    &\gor & \Init  \gor \ty{1} \t+ \ty{2}  & \synname{sums}\\
    &&&\\
    \val{1}, \val{2}, \val{3} & \gdefinedby & & \syncat{values}                          \\
    &\gor& \var{1},\var{2},\var{3}                      & \synname{variables}\\
    &\gor& \cnst{c}                   & \synname{constants}\\
    &\gor& \tInl{\val{1}} \gor   \tInr{\val{1}} & \synname{sum inclusions}\\
    &&&\\
    \trm{1}, \trm{2}, \trm{3} & \gdefinedby & & \syncat{computations}                          \\
    &\gor& \var{1},\var{2},\var{3}                      & \synname{variables}\\
    &\gor & \letin{\trm{1}}{\var{1}}{\trm{2}} & \synname{sequencing}\\
    &\gor& \cnst{c}                   & \synname{constant}\\
    &\gor & \op(\trm{1}_1,\ldots,\trm{1}_n) & \synname{operation}\\
    &\gor & \nvMatch{\trm{1}} & \synname{sum match}\\
	&\gor& \tInl{\trm{1}} \gor   \tInr{\trm{1}} & \synname{sum inclusions}\\
    &\gor & \vMatch{\trm{3}}{\begin{array}{l}\;\;\tInl\var{1}\To\trm{1}\\
    \gor \tInr\var{2}\To \trm{2}\end{array}}\hspace{-15pt}\; & \synname{sum match}\\
  \end{syntax}%
~
 \begin{syntax}
	&\gor\quad\, & \Unit  \gor  \ty{1}_1 \t* \ty{1}_2 & \synname{products}\\
  &\gor& \ty{1} \To \ty{2}              & \synname{function}      \\
	& & &\\
	&&&\\
	&\gor\quad\, & \tUnit \ \gor  \tPair{\val{1}}{\val{2}} & \synname{tuples}\\
  &\gor& \fun{\var{1}}{\trm{1}}             & \synname{abstractions}      \\
  &\gor &\rec{\var{1}}{\trm{1}} & \synname{term recursion}\\
  &&&\\\\
	&\gor\quad\, & \tUnit \ \gor  \tPair{\trm{1}}{\trm{2}} & \synname{tuples}\\
	% &\gor\quad\, & \uMatch{\trm{2}}{\trm{1}} & \synname{product match}\\
	&\gor\quad\, & \pMatch{\trm{2}}{\var{1}}{\var{2}}{\trm{1}}\hspace{-10pt}\; & \synname{product match}\\
	&\gor& \fun{\var{1}}{\trm{1}}             & \synname{abstractions}      \\
	&\gor& \trm{1}\ \trm{2}             & \synname{function app.}      \\
	&\gor&\tItFrom{\trm{1}}{\var{1}}{\trm{2}}\hspace{-10pt}\; & \synname{iteration}\\
	& \gor & \rec{\var{1}}\trm{1} & \synname{term recursion}\\
	&\gor&\tSign\trm{1} & \synname{sign function}
\end{syntax}\\
We use sugar  
$\ifelse{\trm{3}}{\trm{1}}{\trm{2}}\defeq 
\vMatch{\tSign\trm{3}}{{
		\_\To{\trm{1}}
		\vor \_\To{\trm{3}}
}}$,
$\tFst\trm{1}\defeq \pMatch{\trm{1}}{\var{1}}{\_}{\var{1}}
$,
$\tSnd\trm{1}\defeq \pMatch{\trm{1}}{\_}{\var{1}}{\var{1}}
$ and 
$\letrec{f}{\var{1}}{\trm{1}}{\trm{2}}  
\defeq 
\letin{f}{\rec{f}{\fun{\var{1}}{\trm{1}}}}{\trm{2}}$.
In fact, we can consider iteration as syntactic sugar as well:\\
$
\tItFrom{\trm{1}}{\var{1}}{\trm{2}}\defeq(\rec{\var{3}}{\fun{\var{1}}{
		\vMatch{\trm{1}}{\tInl\var{1}'\To \var{3}\,\var{1}'\mid \tInr\var{1}''\To \var{1}''}
}})\,\trm{2}
$.

The computations are typed according to the rules of Figure~\ref{fig:types1} and Figure~\ref{fig:typestermrecursion-iteration}, \textit{where $\RRsyntax\subset\RR$ is a fixed set of real numbers containing $0$.}
For now, the reader may ignore the kinding contexts $\Delta$.
They will serve to support our treatment of ML-style polymorphism later.
\begin{figure}[!ht]
	\fbox{\parbox{0.98\linewidth}{\begin{minipage}{\linewidth}\noindent\input{TEX/type-system}\end{minipage}}}
	\caption{Typing rules for a basic source language with real conditionals, where $\RRsyntax\subset\RR$ is a fixed set of real numbers containing $0$.
		\label{fig:types1}}
\end{figure}
\begin{figure}[!ht]
	\fbox{\parbox{0.98\linewidth}{\begin{minipage}{\linewidth}\noindent\input{TEX/type-system-termrecursion-iteration}\end{minipage}}}
	\caption{Typing rules for term recursion and iteration.
		\label{fig:typestermrecursion-iteration}}
\end{figure}

We consider the standard CBV $\beta\eta$-equational theory of \cite{moggi1988computational}
for our language, which we list in Figure \ref{fig:beta-eta1}.
We could impose further equations for the iteration construct as is done in  \cite{bloom1993iteration,goncharov2015unguarded}
as well as for the basic operations\, $\op$\, and the sign function $\tSign$.
However, such equations are unnecessary for our development.
\begin{figure}[!ht]
	\fbox{\parbox{0.98\linewidth}{\scalebox{0.92}{\begin{minipage}{\linewidth}\noindent\input{TEX/beta-eta1}\end{minipage}}}
	}\caption{\label{fig:beta-eta1} Basic $\beta\eta$-equational theory for  our language.
		We write $\beta\eta$-equality as $\equiv$ to distinguish it from equality in let-bindings.
		We write $\freeeq{\var{1}_1,\ldots,\var{1}_n}$ to indicate that the variables are fresh in the left hand side.
		In the top right rule, $\var{1}$ may not be free in $\trm{3}$.
		Equations hold on pairs of computations of the same type.
	}
\end{figure}

\subsection{Target language}\label{sub:target-language-syntax}

We define our \textit{target language} by
extending the source language adding the following syntax, with the typing rules of Figure \ref{fig:types-reverse}. 
\\
\begin{syntax}
    \ty{1}, \ty{2}, \ty{3} & \gdefinedby & & \syncat{types}                          \\
    &\gor& \ldots                      & \synname{as before}\\
    \\
    \val{1},\val{2},\val{3} & \gdefinedby && \syncat{values}\\
       &\gor &\cncanoni{i}&\synname{$i$-th canonical element} \\
    &\gor&\ldots&\synname{as before}\\
    \\
    \\
    \\
    \trm{1},\trm{2},\trm{3} & \gdefinedby && \syncat{computations}\\
        &\gor&\ldots&\synname{as before}\\
     &\gor&\cncanoni{i}&\synname{canonical element}
  \end{syntax}%
~
 \begin{syntax}
  &\gor\quad\, & \tangentreals  & \synname{(co)tangent}\\
  \\
  \\
  &\gor  & \cnzero & \synname{zero}\\
    &\gor & \trm{1} + \trm{2}&\synname{addition of vectors}\\
  &\gor & \trm{1}\ast\trm{2} & \synname{scalar multiplication} \\ 
  & \gor & \tangentprojection{i}{\trm{1}} & \synname{proj. handler}
  \\
   \\
  &\gor & \cnzero & \synname{zero }\\
  &\gor & \trm{1} + \trm{2}&\synname{addition of vectors}\\
  &\gor & \trm{1}\ast\trm{2} & \synname{scalar multiplication} \\ 
    & \gor & \tangentprojection{i}{\trm{1}} & \synname{proj. handler}
\end{syntax}\\
\begin{figure}[!ht]
	\fbox{\parbox{0.98\linewidth}{\begin{minipage}{\linewidth}\noindent\input{TEX/type-system-reverse}\end{minipage}}}
	\caption{Extra typing rules for the target language with iteration and recursion, where we denote $\NN ^\ast := \NN - \left\{ 0 \right\}$, $\reals ^1 := \reals $ and $\reals ^{i+1} = \reals ^i \times \reals $.
		\label{fig:types-reverse}}
\end{figure}

The operational semantics  of the target language depends on the intended behavior for the $AD$ macro $\Dsynsymbol $ defined in Section \ref{sec:fwdmode-AD-transformation}. In our context,
we want $\tangentreals $ to implement a vector space ((co)tangent vectors), with the respective operations and the usual laws between the operations such as distributivity of the scalar multiplication over the vector addition (which is particularly useful for efficient implementations~\cite{smeding2022}).

The terms $\tangentprojection{i}{\trm{1}}$ are irrelevant for the definition and correctness of the macro $\Dsynsymbol $, but it is particularly useful to illustrate the expected types in Section \ref{sub:forward-mode-types-correctness} and \ref{sub:reverse-mode-types-correctness}. Although this perspective is unimportant for our correctness statement, the reader might want to view $\tangentreals $ as a computation type encompassing \emph{computational effects} for the vector space operations $\overline{e}_i$, $(*)$, and $(+)$ with \textit{handlers} given by the terms $\tangentprojection{i}{\trm{1}}$.

We are particularly interested in the case that $\left( \tangentreals , + , \ast , \cnzero\right) $ implements the vector space $\left( \RR ^k, + , \ast , 0\right) $, for some $k\in\NN\cup\left\{ \infty \right\}$,\footnote{$\RR ^ \infty $ is the vector space freely generated by the infinite set $\left\{ \canonicalbasise{}{i}: i\in\NN  ^\ast \right\} $. In other words, it is the infinity coproduct of $\RR ^i $ ($i\in\NN ^\ast$). In order to implement it, one can use lists/arrays and pattern matching for the vector addition.} where $\cncanoni{i}$ implements the $i$-th element  $\canonicalbasise{k}{i}\in\RR ^k $  of the canonical basis if $k=\infty $ or if $i\leq k $, and $0\in\RR^k$ otherwise. In this case, $\tangentprojection{i}{\trm{1}}$ is supposed to
implement
\begin{equation}\label{eq:respective-coprojections} 
\semanticshandler{k}{i}: \RR ^k \to \RR ^i,
\end{equation} 
which denotes the canonical projection if $i\leq k $ and the coprojection otherwise.

For short, we say that $\tangentreals $ implements the vector space $\RR ^k $ to refer to the case above. It corresponds to the $k$-semantics for the target language defined in Section \ref{subsect:k-semantics-target-language}.

%We could have defined a different target language for each $k\in \NN $. 
%We chose the presentation above to emphasize the fact that, on a syntactic level, our $AD$ macro is %the same for any $k$, but the reader can keep in mind that Section \ref{sec:fwdmode-AD-transformation} %defines a class of macros.

\subsection{The $CBV$ models $\left( \SynV, \SynT,\Synfix , \Synit   \right) $ and $\left( \SynVt , \SynTt ,\Synfixt , \Synitt   \right) $} \label{subsec:Syntax-as-a-categoricalstructure}

As discussed in Appendix \ref{appx:fine-grain-cbv}, we can translate our coarse-grain  languages to  fine-grain call-by-value languages. 
The 
fine-grain languages corresponding to the source and target languages correspond to the $CBV$ models 
\begin{equation} \label{eq:CBV-models-syntax} 
	\left( \SynV, \SynT,\Synfix , \Synit   \right)\qquad\mbox{and}\qquad 
	\left( \SynVt , \SynTt ,\Synfixt  , \Synitt   \right)
\end{equation} 
with the following universal properties.

\begin{proposition}[Universal Property of $CBV$ models \eqref{eq:CBV-models-syntax}]\label{prop:section-universal-property-syntax}
	Let $\left( \catV , \monadT, \fixpoint , \iterationn \right) $  be a $CBV$ model. Assume that Figure~\ref{fig:assignment-functor-universalproperty-syntax} and Figure~\ref{fig:assignment-functor-universalproperty-syntax-target} are given consistent assignments.
\begin{enumerate}	
		\item There is a unique $CBV$ model morphism $H: \left( \SynV, \SynT,\Synfix , \Synit   \right)\to \left( \catV , \monadT, \fixpoint , \iterationn \right) $ respecting the assignment of Figure~\ref{fig:assignment-functor-universalproperty-syntax}.	 	
		\item There is a unique $CBV$ model morphism $\extendedH{H}: \left( \SynVt , \SynTt ,\Synfixt  , \Synitt    \right)\to \left( \catV , \monadT, \fixpoint , \iterationn \right) $ that extends $H$ and respects the assignment of Figure~\ref{fig:assignment-functor-universalproperty-syntax-target}.
	\end{enumerate}
\end{proposition}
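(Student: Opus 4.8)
The plan is to treat both items as instances of a single initiality argument: the fine-grain syntax, taken modulo the $\beta\eta$-theory of Figure~\ref{fig:beta-eta1}, is the \emph{freely generated} $CBV$ model over the generating signature---the ground type $\reals$ together with the constants $\cnst{c}$, the primitive operations $\op$ and $\tSign$, and, for the target, the additional type $\tangentreals$ with its operations. I would therefore construct $H$ by recursion on types and induction on typing derivations, check that it is well defined on $\beta\eta$-equivalence classes, verify that it is a $CBV$ model morphism, and finally obtain uniqueness by observing that every clause of the construction is forced.

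First I would fix the action of $H$ on types. Because $H$ must be strictly bicartesian closed and a strict monad morphism, every clause is dictated: $H(\Unit)=\terminall$, $H(\ty{1}\t*\ty{2})=H\ty{1}\times H\ty{2}$, $H(\Init)=\initiall$, $H(\ty{1}\t+\ty{2})=H\ty{1}\sqcup H\ty{2}$, and $H(\ty{1}\To\ty{2})=\ihomC{H\ty{1}}{H\ty{2}}$, with $H(\reals)$ the object prescribed by Figure~\ref{fig:assignment-functor-universalproperty-syntax}. I would then define $H$ on values and computations by induction on their derivations, interpreting variables by projections, $\mathbf{let}$-bindings by Kleisli composition, tuples and product matches by the universal maps of the chosen products, injections and sum matches by those of the chosen coproducts, abstraction and application by the closed structure, the generators $\cnst{c},\op,\tSign$ by their assigned images, iteration $\tItFrom{\trm{1}}{\var{1}}{\trm{2}}$ by the free iteration $\iterationn$, and recursion $\rec{\var{1}}{\trm{1}}$ by the free recursion $\fixpoint$.

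Two verifications remain. For well-definedness, since $\SynV$ identifies $\beta\eta$-equal terms, I would check that $H$ respects each equation of Figure~\ref{fig:beta-eta1}; every such equation is an instance of the (co)universal property of products, coproducts or exponentials, or of the monad laws underlying Kleisli composition, so it holds in an arbitrary $CBV$ model, and $H$ descends to equivalence classes. For the morphism property, strict preservation of the bicartesian closed structure is immediate from the type clauses, $(H,\ID)$ is a strict monad morphism because $H$ carries the syntactic lifting monad's unit and multiplication to those of $\monadT$ by construction, and $H(\Synfix)=\fixpoint$, $H(\Synit)=\iterationn$ hold by the way recursion and iteration were interpreted.

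Uniqueness follows by forcing: if $H'$ is any $CBV$ model morphism respecting Figure~\ref{fig:assignment-functor-universalproperty-syntax}, then strict preservation of the bicartesian closed and Kleisli structure together with the prescribed $H'(\reals)$ forces $H'=H$ on objects by induction on types, and preservation of all term-forming operations together with the fixed images of the generators forces $H'=H$ on every value and computation by induction on derivations; this gives item~(1). For item~(2) I would repeat the construction for the target language, adjoining the clauses for $\tangentreals,\cnzero,(+),(\ast),\tangentprojection{i}{-}$ and $\cncanoni{i}$ prescribed by Figure~\ref{fig:assignment-functor-universalproperty-syntax-target}; since the clauses for the source constructs are unchanged, $\extendedH{H}$ restricts to $H$ on the source sub-model, and uniqueness follows exactly as before. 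The main obstacle is the soundness step---checking that the interpretation respects the full $\beta\eta$-theory, especially the commuting conversions relating $\mathbf{let}$ to the monadic structure and the equations for sum matches---which is routine but the only genuinely laborious part.
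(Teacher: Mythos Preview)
Your proposal is correct and is precisely the standard initiality argument one would write out. The paper itself does not give a detailed proof of this proposition: it simply refers to the fine-grain presentation in Appendix~\ref{appx:fine-grain-cbv}, where $\SynV$ and $\SynC$ are constructed explicitly as syntactic categories and the universal property is restated (Proposition~\ref{prop:universal-property-syntax}) as a direct consequence of that construction. Your sketch is exactly what such a proof consists of, and the only point worth flagging is terminological: the target-language assignment in Figure~\ref{fig:assignment-functor-universalproperty-syntax-target} places $\extendedH{H}(\cncanoni{i})$ and $\extendedH{H}(\tangentprojection{i}{})$ in $\catV$-homsets (values), not Kleisli homsets, so make sure your inductive clauses for those generators land in the value category rather than the computation category.
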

\begin{figure}[!ht]
	\fbox{\parbox{0.98\linewidth}{\begin{minipage}{\linewidth}\noindent\input{TEX/assignment-functor-universalproperty-syntax}\end{minipage}}}
	\caption{Assignment that gives the universal property of the source language.
		\label{fig:assignment-functor-universalproperty-syntax}}
\end{figure} 
\begin{figure}[!ht]
	\fbox{\parbox{0.98\linewidth}{\begin{minipage}{\linewidth}\noindent\input{TEX/assignment-universalproperty-syntax-target}\end{minipage}}}
	\caption{Assignment that gives the universal property of the target language.
		\label{fig:assignment-functor-universalproperty-syntax-target}}
\end{figure}

\subsection{Dual numbers AD transformation for term recursion and iteration}\label{sec:fwdmode-AD-transformation}
Let us fix, for all $n\in\NN$, for all $\op\in\Op_n$, for all $1\leq i \leq n$, 
computations $\var{1}_1:\reals,\ldots,\var{1}_n:\reals\vdash \partial_i\op(\var{1}_1,\ldots,\var{1}_n):\reals$,
which represent the partial derivatives of $\op$.
%For example, we can choose $\partial_1(+)(\var{1},\var{2})\defeq \partial_2(+)(\var{1},\var{2})\defeq \cnst{1}$,
%$\partial_1(*)(\var{1},\var{2})\defeq  \var{2}$
%and $\partial_2(*)(\var{1},\var{2})\defeq \var{1}$,
%$\partial_1\log(\var{1})\defeq \cnst{1}/\var{1}$ and
%$\partial_1\sigmoid(\var{1})\defeq
%\letin{\var{2}}{\sigmoid(\var{1})}{\letin{\var{3}}{\cnst{1}-\var{2}}{\var{2}*\var{3}}}$.
Using these terms for representing partial derivatives,
we define, in Figure \ref{fig:ad1}, a structure preserving macro $\Dsynsymbol$ on the types and computations of our language for performing AD.
%(Note that $\Dsynsymbol$ restricts to send values to values.)
%We observe that this induces the following AD rule for our sugar:
%$\Dsyn{\ifelse{\trm{3}}{\trm{1}}{\trm{2}}}=\pMatch{\Dsyn{\trm{3}}}{\var{1}}{\_}{\ifelse{\var{1}}{\Ds%yn{\trm{1}}}{\Dsyn{\trm{2}}}}$.
%Further, we note that our definitions are compatible with our view of iteration as syntactic sugar.

\begin{figure}[!ht]
	\fbox{\parbox{0.98\linewidth}{\begin{minipage}{\linewidth}\noindent
				\input{TEX/d-types1}
				\hrulefill
				\input{TEX/d-terms1}
	\end{minipage}}}
	\caption{AD macro  $\Dsyn{-}$ defined on types and computations.
		All newly introduced variables are chosen to be fresh.
	%	We single out the rule for constants $\cnst{c}$, primitive operations $\op$ and $\tSign$, as %   this is where the interesting work specific to differentiation happens.
		We provide a more efficient way of differentiating $\tSign$ in Appendix \ref{sec:efficient-sign-derivative}. \label{fig:ad1}}
\end{figure}

We extend $\Dsynsymbol$ to contexts: $\Dsyn{\{\var{1}_1{:}\ty{1}_1,{.}{.}{.},\var{1}_n{:}\ty{1}_n\}}\defeq
\{\var{1}_1{:}\Dsyn{\ty{1}_1},{.}{.}{.},\var{1}_n{:}\Dsyn{\ty{1}_n}\}$.
This turns $\Dsynsymbol$ into a well-typed, functorial macro in the following sense.
\begin{lemma}[Functorial macro]\label{lem:functorial-macro}
	Our macro respects typing, substitution, and $\beta\eta$-equality:
	\begin{itemize}
		\item If $\Ginf{\trm{1}}\ty{1}$, then $\Dsyn{\Gamma}\vdash\Dsyn{\trm{1}}:\Dsyn{\ty{1}}$.
		\item $\Dsyn{\letin{\var{1}}{\trm{1}}{\trm{2}}}=\letin{\var{1}}{\Dsyn{\trm{1}}}{\Dsyn{\trm{2}}}$.
		\item If $\trm{1}\beeq\trm{2}$, then $\Dsyn{\trm{1}}\beeq \Dsyn{\trm{2}}$.
	\end{itemize}
\end{lemma}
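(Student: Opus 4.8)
The plan is to prove all three statements by structural induction, exploiting that $\Dsynsymbol$ is defined homomorphically on essentially every type and term former (Figure~\ref{fig:ad1}): it acts as the identity on variables and distributes over each structural constructor. A preliminary observation that threads through the whole argument is that $\Dsynsymbol$ respects the value/computation distinction. Inspecting the value grammar together with the clauses $\Dsyn{\var{1}}=\var{1}$, $\Dsyn{\cnst{c}}=\tPair{\cnst{c}}{\cnzero}$, $\Dsyn{\tInl\val{1}}=\tInl\Dsyn{\val{1}}$, $\Dsyn{\tPair{\val{1}}{\val{2}}}=\tPair{\Dsyn{\val{1}}}{\Dsyn{\val{2}}}$ and $\Dsyn{\fun{\var{1}}{\trm{1}}}=\fun{\var{1}}{\Dsyn{\trm{1}}}$ shows that $\Dsyn{\val{1}}$ is again a value whenever $\val{1}$ is, which is exactly what makes substitution of values behave well under the macro.

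For type preservation I would induct on the derivation of $\Ginf{\trm{1}}{\ty{1}}$, checking for each rule of Figures~\ref{fig:types1}, \ref{fig:typestermrecursion-iteration} and \ref{fig:types-reverse} that applying $\Dsynsymbol$ to its premises and conclusion yields a valid instance of the same rule over the transformed context $\Dsyn{\Gamma}$ and type $\Dsyn{\ty{1}}$ in the target language. The variable, sum, product, function, let, iteration and recursion cases are immediate, since $\Dsynsymbol$ commutes with the corresponding type former (for instance $\Dsyn{\ty{1}\To\ty{2}}=\Dsyn{\ty{1}}\To\Dsyn{\ty{2}}$) and reuses the same term former. The only cases needing genuine verification are the three non-homomorphic clauses: $\Dsyn{\cnst{c}}=\tPair{\cnst{c}}{\cnzero}$ indeed has type $\reals\t*\tangentreals=\Dsyn{\reals}$; $\Dsyn{\tSign\trm{3}}=\tSign(\tFst\Dsyn{\trm{3}})$ type-checks because $\tFst\Dsyn{\trm{3}}:\reals$ by the induction hypothesis; and for $\Dsyn{\op(\trm{3}_1,\dots,\trm{3}_n)}$ one reads off, from the assumed typing $\var{1}_1:\reals,\dots,\var{1}_n:\reals\vdash\partial_i\op(\var{1}_1,\dots,\var{1}_n):\reals$ together with the typing rules for $(\ast)$ and $(+)$ on $\tangentreals$ in Figure~\ref{fig:types-reverse}, that the nested pattern matches and the returned pair have type $\reals\t*\tangentreals=\Dsyn{\reals}$.

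The second bullet is immediate from the defining clause $\Dsyn{\letin{\var{1}}{\trm{1}}{\trm{2}}}\defeq\letin{\var{1}}{\Dsyn{\trm{1}}}{\Dsyn{\trm{2}}}$. The substantive tool for the $\beta\eta$ statement is a substitution lemma, $\Dsyn{\subst{\trm{1}}{\sfor{\var{1}}{\val{1}}}}=\subst{\Dsyn{\trm{1}}}{\sfor{\var{1}}{\Dsyn{\val{1}}}}$, which I would prove by a routine structural induction on $\trm{1}$ using $\Dsyn{\var{1}}=\var{1}$ and homomorphy, and the fact just noted that $\Dsyn{\val{1}}$ is a value. With this in hand I would establish $\beta\eta$-preservation by induction on the derivation of $\trm{1}\beeq\trm{2}$. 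Since the equational theory is the congruence and equivalence closure of the axioms in Figure~\ref{fig:beta-eta1}, and every one of those axioms involves only the structural formers (let, $\tInl$ and $\tInr$ with sum match, pairing with product match, abstraction with application) together with capture-avoiding substitution, applying $\Dsynsymbol$ to any axiom instance yields, via the substitution lemma, exactly the corresponding axiom instance between the transformed terms. Closure under congruence, reflexivity, symmetry and transitivity is preserved because $\Dsynsymbol$ is a homomorphism on term formers, completing the induction.

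I expect the bulk of the argument to be bookkeeping rather than insight. The only mildly delicate points are the explicit type computation in the $\op$-clause, and ensuring in the substitution lemma that the fresh auxiliary variables introduced by $\Dsynsymbol$ (again in the $\op$-clause) do not clash with the variables being substituted; both are handled by the convention in Figure~\ref{fig:ad1} that all newly introduced variables are chosen fresh. Notably, none of the $\beta\eta$-axioms mention the non-homomorphic constructs $\cnst{c}$, $\op$ or $\tSign$, so no equational reasoning about derivatives is needed here, and these constructs enter only in the correctness statement rather than in this functoriality lemma.
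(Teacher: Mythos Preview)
Your proposal is correct and is the standard direct syntactic argument; the paper itself does not spell out a proof of this lemma, treating it as routine and instead observing immediately afterwards (Section~\ref{subsect:AD-transofmration-as-a-CBV-morphism}) that the macro is encompassed by a $CBV$ model morphism $\DSyn$ obtained from the universal property of the syntax, from which functoriality and preservation of $\beta\eta$-equality follow automatically. Your explicit induction and the paper's appeal to the universal property are two presentations of the same content.
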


Our macro $\Dsynsymbol$ can be seen as a class of macros, since it 
depends on the target language. More precisely, it depends on what $\tangentreals $ implements (see Section \ref{sub:target-language-syntax}).

As an example, for the program of Equation \ref{eqn:taylor-series-underflow-example},
$\Dsynsymbol$ computes, modulo some $\beta\eta$-equality to aid legibility, the following derivative (where we also define $\Dsyn{\mathbf{int}}\defeq\mathbf{int}$,
$\Dsyn{\trm{1}<\trm{2}<\trm{3}}\defeq \tFst(\Dsyn{\trm{1}})<\tFst(\Dsyn{\trm{2}})<\tFst(\Dsyn{\trm{3}})$, and $\partial_i(+)(x,y)\defeq 1$):
\begin{align*}
\tItFrom{\Big(\begin{array}{l}
	\pMatch{ z}{i}{\tPair{y'_1}{y'_2}}{}
	\letin{\tPair{y_1}{y_2}}{
	\Dsyn{t}(i, x)}{}\\
	\vMatch{-\epsilon < y_1 < \epsilon}{\tInl\_ \To \tInr x\mid \tInr\_\To \tInl \tPair{i + 1}{\tPair{y_1 +y'_1}{y_2+y'_2}}}\end{array}\Big)\\}
	{z}{\tPair{0}{0}}.
\end{align*}

\subsection{AD transformation as a $CBV$ model morphism}\label{subsect:AD-transofmration-as-a-CBV-morphism}
By the universal property of $\left( \SynV, \SynT,\Synfix , \Synit   \right) $ established in Proposition \ref{prop:section-universal-property-syntax}, the assignment defined in Figure~\ref{fig:assignment-AD-functor} induces a unique $CBV$ model morphism
\begin{equation}\label{eq:macro-as-a-functor}
	\DSyn : \left( \SynV, \SynT,\Synfix , \Synit   \right)\to \left( \SynVt, \SynTt,\Synfixt , \Synitt   \right).
\end{equation} 
\begin{figure}[!ht]
	\fbox{\parbox{0.98\linewidth}{\begin{minipage}{\linewidth}\noindent\input{TEX/Assignment-AD-transformation}\end{minipage}}}
	\caption{AD assignment.
		\label{fig:assignment-AD-functor}}
\end{figure} 

\textit{The macro $\Dsynsymbol $ defined in Figure \ref{fig:ad1} is encompassed by \eqref{eq:macro-as-a-functor}.
}

\section{Concrete semantics for the AD transformation}\label{sec:semantics-AD-transformation}
We give a concrete denotational semantics for our source and target languages
in terms of $\omega$-cpos.
In fact, our semantics for the target language will be parameterised by $k\in \NN\cup \{\infty\}$.
This parameter allows us to give a uniform treatment of various variants of AD.
For basic forward mode AD, we will use $k=1$.
Other $k\in \NN$ correspond to vectorised forms of forward mode AD, and $k=\infty$ is primarily of interest for dual numbers reverse AD which can be viewed as an optimised version of a vectorised forward AD with dynamically sized tangent vectors. 

We will use these semantics to phrase and prove correctness of AD in the rest of this paper.
We also recall some facts about and fix notation for derivatives, in order to phrase sufficient and necessary conditions on the semantics of primitive operations and their AD transformations.

\subsection{Basic concrete model}\label{subsec:wcPO-Basic-Model}
The most fundamental example of a $CBV$ $\wCpo$-pair is given by $\left( \wCpo , \monadwP{-}\right) $ where $\monadwP{-} $ is the 
(lax idempotent) monad that freely adds a least element $\leastelement $ to each \wcpo{}. Indeed, of course, $\ehom{\wCpo}{W}{\monadwP{Y}} $ is pointed for any pair 
$\left( W,Y\right)\in \obb{\wCpo}\times \obb{\wCpo }$.

We consider the product $\left( \wCpo , \monadwP{-}\right)\times \left( \wCpo , \monadwP{-}\right) = \left( \wCpo\times \wCpo , \monadwP{-}  \right) $, where, by abuse of language, $\monadwP{(C, C')} =\left( \monadwP{C}, \monadwP{C'} \right)  $.
By Lemma \ref{UnderlyingCBVmodel}, we obtain $CBV$ models \begin{align*}&\CBVU \left( \wCpo , \monadwP{-}\right)\qquad\text{and}\\ 
 &\CBVU \left( \wCpo\times \wCpo  , \monadwP{-}\right) = \CBVU \left( \wCpo , \monadwP{-}\right)\times \CBVU \left( \wCpo , \monadwP{-}\right). \end{align*}

 For example, the program from Equation \ref{eqn:taylor-series-underflow-example}
 is interpreted as the function 
 \begin{align}\label{eqn:example-taylor-semantics}
\RR &\to \Lift{\RR}\\
x&\mapsto \left\{
\begin{array}{ll}
	\bot & \text{if }N_{\sem{\trm{1}},x}=\infty\\
	\bot & \text{if }\sem{\trm{1}}(i,x)=\epsilon\text{ for some }i<N_{\sem{\trm{1}},x}\\
\sum_{i=0}^{N_{\sem{\trm{1}},x}-1}\sem{\trm{1}}(i,x) &
\text{otherwise}
\end{array}\right.
 \end{align}
 where
 \begin{itemize}
	\item $\lceil z\rceil^\epsilon = z $ if $|z|>\epsilon$, $\lceil z\rceil^\epsilon = 0 $ if $|z|<\epsilon$ and $\lceil z\rceil^\epsilon = \bot $ otherwise;
	\item $N_{\sem{\trm{1}},x}$ is the smallest natural number $i$ such that 
	$\lceil\sem{\trm{1}}(i,x)\rceil^\epsilon=0$.
 \end{itemize}

\subsection{Differentiable functions and interleaved derivatives}\label{subsec:differentiable-functions-semantics}
\textit{Henceforth, unless stated otherwise, the cartesian spaces $\RR ^n$ and its subspaces are endowed with the respective discrete $\wCpo $-structures, in which $r\leq r'$ if and only if $r=r'$.} 

\begin{definition}[Interleaving function]
For each $(n,k)\in\NN \times \left( \NN\cup\left\{ \infty \right\}\right)  $, denoting by $\NNN{n}$ the set $\left\{ 1,\ldots , n\right\}$, we define the isomorphism (in $\wCpo$ with the respective discrete $\wCpo$-structures)
\begin{eqnarray}
	\intle{n, k} : & \RR ^n\times \left( \RR ^k\right) ^n & \to \left( \RR \times \RR ^k \right) ^{n}\\
	&\left( (x_j)_{j\in\NNN{n}}, (y_j)_{j\in\NNN{n}} \right)  & \mapsto  \left( x_j, y_j \right) _{j\in\NNN{n}}.\nonumber 
\end{eqnarray}
For each open subset $U\subset \RR^n$, we denote by $\intle{n, k}^U : U\times \left( \RR ^k\right) ^n \to  \intle{n,k}\left( U\times\left( \RR ^k\right) ^n \right) $
the isomorphism obtained from restricting $\intle{n,k} $. 
\end{definition} 

In Definition~\ref{def:total-derivative}, Remark~\ref{eq:the-empty-nowhere-defined-case} and Lemma~\ref{lem:obvious-gluing}, let $\displaystyle g : U\to \coprod _{j\in L} V_j  $ be a map where $U$ is an open subset of $\RR ^n $, and, for each $i\in L$, $V_i $ is an open subset of $\RR ^{m_i}$. 	

\begin{definition}[Derivative]\label{def:total-derivative}
The map $g$ is \textit{differentiable} if, for any $i\in L$, $g^{-1}\left( V_i \right) =W _i $ is open in $\RR ^n$ and the restriction $g|_{W_i} : W_ i \to V_i $ is differentiable w.r.t the submanifold  structures $W_i\subset \RR ^{n}$ and $ V_i \subset \RR ^{m_i} $.		 
In this case, for each $k\in  \left( \NN\cup\left\{ \infty \right\}\right) $, we define the function: 
\begin{eqnarray}
\dDSemtotaltra{k}{g} : & \intle{n, k}\left(  U\times\left( \RR ^k\right)  ^n\right)  &\to \coprod _{j\in L}\left( \intle{{m_j}, k}\left( V_i\times\left( \RR ^k\right)  ^{m_i}\right)\right)\label{eq:definition-of-total-derivative-k-interleaved} \\
& z &\mapsto \coproje{m_j}\circ\intle{{m_j},k}^{V_j}\left( g(x), \vectoraslineartransformation{w}\cdot g'(x)^t\right) ,  \text{if } \intle{n, k} ^{-1} \left( z \right) =\left( x, w \right)\in W_i\times \left(\RR ^k\right) ^n 		\nonumber
\end{eqnarray}
	in which $\vectoraslineartransformation{w}$ is the linear transformation $ \RR ^n\to \RR ^k $ corresponding to the vector $w$, $\cdot $ is the composition of linear transformations, $\coproje{m_i}$ is the obvious $ith$-coprojection of the coproduct (in the category $\wCpo$),    and $g'(x)^t:\RR^{m_i}\to\RR^n $ is the transpose of the derivative $g'(x) :\RR ^n\to\RR ^{m_i} $ of $g|_{W_i} : W_ i \to V_i $ at $x\in U$.
\end{definition}
\begin{remark}\label{eq:the-empty-nowhere-defined-case}
	It should be noted that, in Definition~\ref{def:total-derivative}, $W_i$ might be empty for some $i\in L $. In this case, $g|_{W_i} : W_ i \to V_i $ is trivially differentiable. Analogously,	
	$U$ might be empty. In this case, the function \textit{$g$ is differentiable and $\dDSemtotaltra{k}{g}$ is the unique morphism with domain $\emptyset$ and codomain as in \eqref{eq:definition-of-total-derivative-k-interleaved}}. 
\end{remark}

% Ok! Underlying the wordy definition above, we are just saying that we care about the % image of 
% D: Fam(Cart) -> wCpo where Cart is the category of connected open 
% subsets of cartesian spaces. The functor D provides each space with the discrete 
% wCpo-structure. 
% Then we define the "fancy" derivative for each morphism in the image of D. 

% We should think about how much we want to improve this section -- since it is super
% trivial (and I think it is a bit wordy). The most important bits are:
% 1) We need to establish the definition of "differentiable for partially defined 
% functions"
% 2) We need to have our definition of interleaved derivatives (the Df)
% 3) We need Lemma \ref{lem:obvious-gluing} to be explicit!
 
\begin{lemma}\label{lem:obvious-gluing}
Let $\dot{g}$ be a function with domain as in \eqref{eq:definition-of-total-derivative-k-interleaved}. The map $g$ is differentiable and $\dot{g} =\dDSemtotaltra{k}{g}   $ if, and only if, $g\circ \alpha $ is differentiable and $\dot{g}\circ \dDSemtotaltra{k}{\alpha } = \dDSemtotaltra{k}{\left(g\circ \alpha\right)}$  for any differentiable map $\alpha : \RR ^n \to U $.
\end{lemma}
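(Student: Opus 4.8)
The plan is to prove the two implications separately: the forward direction reduces to the chain rule transported through the interleaving isomorphisms, and the backward direction relies on a supply of ``local identity'' test maps into $U$. Assume first that $g$ is differentiable and $\dot g = \dDSemtotaltra{k}{g}$, and let $\alpha:\RR^n\to U$ be differentiable. Then $g\circ\alpha$ is differentiable, since each $(g\circ\alpha)^{-1}(V_i)=\alpha^{-1}(W_i)$ is open (a preimage of the open set $W_i$ under the continuous $\alpha$) and $g|_{W_i}\circ\alpha$ is a composite of differentiable maps on that open set. The required equality $\dot g\circ \dDSemtotaltra{k}{\alpha} = \dDSemtotaltra{k}{(g\circ\alpha)}$ is then exactly the chain rule: for $z$ with $\intle{n,k}^{-1}(z)=(x,w)$ and $\alpha(x)\in W_i$, the transpose identity $(g\circ\alpha)'(x)^t=\alpha'(x)^t\cdot g'(\alpha(x))^t$ shows that feeding $\dDSemtotaltra{k}{\alpha}(z)=\intle{n,k}^{U}(\alpha(x),\vectoraslineartransformation{w}\cdot\alpha'(x)^t)$ into $\dDSemtotaltra{k}{g}$ produces $\coproje{m_i}\circ\intle{m_i,k}^{V_i}\!\left(g(\alpha(x)),\vectoraslineartransformation{w}\cdot\alpha'(x)^t\cdot g'(\alpha(x))^t\right)$, which is $\dDSemtotaltra{k}{(g\circ\alpha)}(z)$. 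In particular this establishes the purely calculus-level chain rule $\dDSemtotaltra{k}{(g\circ\alpha)}=\dDSemtotaltra{k}{g}\circ\dDSemtotaltra{k}{\alpha}$ for any composable pair of differentiable maps, which I will reuse below.

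For the backward direction, the key device is that for every $x_0\in U$ there is a differentiable $\alpha_{x_0}:\RR^n\to U$ that restricts to the identity on some open neighbourhood $N_{x_0}\subseteq U$ of $x_0$; since $U$ is open such $\alpha_{x_0}$ is obtained by smoothly retracting $\RR^n$ onto a small ball around $x_0$ contained in $U$. First I establish differentiability of $g$. For openness of $W_i=g^{-1}(V_i)$, take $x_0\in W_i$; the hypothesis gives that $(g\circ\alpha_{x_0})^{-1}(V_i)$ is open, and it coincides with $W_i$ on $N_{x_0}$, so $W_i$ contains an open neighbourhood of $x_0$. For differentiability of $g|_{W_i}$, note that on $N_{x_0}$ we have $g=g\circ\alpha_{x_0}$, which is differentiable by hypothesis; hence $g|_{W_i}$ is differentiable at each $x_0$. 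Now $\dDSemtotaltra{k}{g}$ is defined, and the chain rule from the forward direction rewrites the hypothesis as $\dot g\circ \dDSemtotaltra{k}{\alpha}=\dDSemtotaltra{k}{g}\circ\dDSemtotaltra{k}{\alpha}$ for every differentiable $\alpha$. To conclude $\dot g=\dDSemtotaltra{k}{g}$ pointwise, fix any $z$ in the common domain with $\intle{n,k}^{-1}(z)=(x,w)$ and choose $\alpha=\alpha_x$: since $\alpha_x'(x)=\ID$, we get $\dDSemtotaltra{k}{\alpha_x}(z)=z$, so evaluating the previous equality at $z$ yields $\dot g(z)=\dDSemtotaltra{k}{g}(z)$.

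The main obstacle is the construction of the local-identity test maps $\alpha_{x_0}$ together with the observation that $\dDSemtotaltra{k}{\alpha_x}$ hits an arbitrary point $z$ of the domain with derivative $\alpha_x'(x)=\ID$; this is precisely what allows a family of maps \emph{out of} $\RR^n$ to simultaneously detect openness of the $W_i$ and pin down the pointwise value of the derivative. I also treat the degenerate situations of Remark~\ref{eq:the-empty-nowhere-defined-case} first: when some $W_i$, or $U$ itself, is empty, the relevant differentiability conditions and the equality of the two empty-domain functions hold vacuously, so these cases are immediate.
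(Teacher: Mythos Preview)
The paper states this lemma without proof, treating it as a routine calculus fact; your argument supplies the details correctly. The forward direction is exactly the chain rule, and for the converse your use of locally-identity test maps $\alpha_{x_0}:\RR^n\to U$ (built from a bump function on a ball contained in the open set $U$) is the natural device: it simultaneously shows each $W_i$ is open, shows $g$ is differentiable locally, and---via $\alpha_x'(x)=\ID$, hence $\dDSemtotaltra{k}{\alpha_x}(z)=z$---pins down $\dot g$ pointwise. One small remark: to keep $\alpha_{x_0}$ differentiable at $x_0$ it is cleanest to bump against $|x-x_0|^2$ rather than $|x-x_0|$, but since your bump is constant near the origin either choice works.
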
 	
	
% I do not think we need a proof for that, but I don't mind providing a proof if we decide to do so.	

\begin{definition} [Differentiable partial maps]\label{def:partial-maps-differentiable-and-derivative}
Let $ \displaystyle h : \coprod_{r\in K } \RR ^{n_r} \to\monadwP{ \coprod_{j\in L } \RR ^{m_j}   }  $ 
be a morphism in $\wCpo $. We say that $h$ is differentiable if, for each $i\in K$, the component 
$\displaystyle h_i := h\circ \coproje{i} : \RR ^{n_i }\to \monadwP{ \coprod_{j\in L } \RR ^{m_j}   } $ satisfies the following two conditions:
\begin{itemize} 
\item $\displaystyle h_i ^{-1}\left( \coprod_{j\in L } \RR ^{m_j}  \right)  = U_i $ is open in $\RR ^{n_i }$; 
\item the corresponding total function \eqref{eq:notation-total-function-associated} is differentiable.
\end{itemize} 
\noindent\begin{minipage}{.4\linewidth}
\begin{equation}\label{eq:notation-total-function-associated} 
	\unlift{h_i} = h|_{{U_i}} : U_i\to \coprod_{j\in L } \RR ^{m_j}  
\end{equation} 	
\end{minipage}%
\begin{minipage}{.6\linewidth}
\begin{equation}\label{eq:derivative-for-partial-functions} 
	\displaystyle\dDSemtra{k}{h} :\coprod_{r\in K } \left( \RR \times  \RR ^k  \right)^{n_r} \to\monadwP{\coprod_{j\in L } \left( \RR \times  \RR ^k  \right)  ^{m_j} } 
\end{equation}  
\end{minipage}\\
In this case, for each $k\in \NN\cup\left\{\infty \right\} $, we define \eqref{eq:derivative-for-partial-functions}  
to be the morphism induced by $ \cpairL \dDSemtra{k}{h_r } \cpairR _{r\in K}$ where, for each $i\in K$, $\dDSemtra{k}{h_i }$ is defined by \eqref{eq:derivative-parially-defined-functions}, which is just the corresponding canonical extension of the map $\dDSemtotaltra{k}{h_i}$. 
\begin{eqnarray}
	\dDSemtra{k}{h_i} : & \left( \RR \times \RR  ^k \right) ^{n_i} &\to\monadwP{\coprod_{j\in L } \left( \RR \times  \RR ^k  \right) ^{m_j} } \label{eq:derivative-parially-defined-functions}\\
	&z &\mapsto 
	\begin{cases}
		\dDSemtotaltra{k}{h_i}\left( z\right) , & \text{if } z\in \intle{{n_i}, k}\left(  U_i\times\left( \RR ^k\right)  ^{n_i}\right)\subset \left( \RR \times \RR  ^k \right) ^{n_i};
		\\
		\leastelement  , & \text{otherwise.} 
	\end{cases}\nonumber
\end{eqnarray}
\end{definition}

For example, the partial function $h$ of Equation \ref{eqn:example-taylor-semantics},
has the following derivative $\dDSemtotaltra{k}(h)$:
\begin{align}
	\RR\times \RR^k &\to \Lift{(\RR\times \RR^k)}\\
	(x,v)&\mapsto \left\{
	\begin{array}{ll}
		\bot & \text{if }N_{\sem{\trm{1}},x}=\infty\\
		\bot & \text{if }\sem{\trm{1}}(i,x)=\epsilon\text{ for some }i<N_{\sem{\trm{1}},x}\\
	\sum_{i=0}^{N_{\sem{\trm{1}},x}-1}\dDSemtotaltra{k}(\sem{\trm{1}}(i,-))(x,v) &
	\text{otherwise}
	\end{array}\right.
\end{align}

 \subsection{The semantics for the source language}\label{subsect:semantics-source-language}
 We give a concrete semantics for our language, interpreting it in the $CBV$ $\wCpo$-pair $\left( \wCpo , \monadwP{-}\right) $.
 
 We denote by $\RR $ the discrete \wcpo{} of real numbers, in which $r\leq r'$ if and only if $r=r'$, and we define  
 $ \signR : \RR \to \monadwP{\terminall  \sqcup \terminall } $
 by \eqref{eq:Definition-SIGN-semantics},  where $\coproje{1}, \coproje{2} : \terminall \to \terminall\sqcup\terminall $ are the two coprojections of the coproduct.\\ 
\noindent\begin{minipage}{.5\linewidth}	
   \begin{equation}\label{eq:functor-semantics} 
	\sem{-} : \left( \SynV, \SynT,\Synfix , \Synit   \right) \to \CBVU\left( \wCpo , \monadwP{-}\right)  
\end{equation} 
	\normalsize
\end{minipage}%
\begin{minipage}{.5\linewidth}
 \begin{equation}\label{eq:Definition-SIGN-semantics}
	\signR (x) =
	\begin{cases}
		\leastelement , & \text{if } x = 0
		\\
		\coproje{1} (\ast )  , & \text{if } x<0 \\
		\coproje{2} (\ast )  , & \text{if } x>0 
	\end{cases}
\end{equation}
\end{minipage}  \\
 By the universal property of $\left( \SynV, \SynT,\Synfix , \Synit   \right) $, there is only one $CBV $ model morphism \eqref{eq:functor-semantics}  
consistent with the assignment of Figure~\ref{fig:assignment-semantics}  where $\semanc{c}$ is the constant that $\cnst c$ intends to implement, and, for each $\op\in\Op _n $,  $\seman{f}$ is the partial map that $\op$ intends to implement.
 \begin{figure}[!ht]
 	\fbox{\parbox{0.98\linewidth}{\begin{minipage}{\linewidth}\noindent\input{TEX/Assignment-semantics}\end{minipage}}}
 	\caption{Semantics' assignment for each primitive operation  $\op\in\Op_n$ ($n\in\NN$) and each constant  $c\in \RRsyntax$.
 		\label{fig:assignment-semantics}}
 \end{figure} 
 
The $CBV$ model morphism \eqref{eq:functor-semantics}  (or, more precisely, the underlying functor of the $CBV$ morphism $\sem{-}$) gives the semantics for the source language.
 Although our work 
 holds for more general contexts, we consider the following assumption over the semantics of our language.
 
 \begin{assum}\label{ass:differentiable-functions}
 	For  each $n\in\NN$ and  $\op\in\Op _n $,  $\sem{\op } = \seman{f}: \RR ^n \to \monadwP{\RR } $
 	is differentiable.
 \end{assum}

 \subsection{The $k$-semantics for the target language}\label{subsect:k-semantics-target-language}
For each $k\in\NN \cup\left\{ \infty \right\} $, we define the $k$-semantics for the target language by interpreting 
$\tangentreals$ as the vector space $\RR^k$.
Namely, we extend the semantics $\sem{-} $ of the source language into a $k$-semantics
of the target language. More precisely, by Proposition \ref{prop:section-universal-property-syntax},
there is a unique $CBV$ model morphism \eqref{eq:semantics-of-the-target-as-a-functor} that extends $\sem{-} $ and is consistent with the assignment given by the vector structure \eqref{eq:k-semantics-for-the-target-language} together with the projection (coprojection) $\semt{k}{\tangentprojection{i}{}   } : \RR ^k \to \RR ^i $ if $i\leq k $ ($i\geq k$), for each $i\in\NN ^\ast $. 
\begin{eqnarray}
	\semt{k}{-}: \left( \SynVt, \SynTt,\Synfixt , \Synitt   \right) &\to &\CBVU\left( \wCpo , \monadwP{-}\right)\label{eq:semantics-of-the-target-as-a-functor}\\ 
\left( \semt{k}{\tangentreals }, \semt{k}{ + }, \semt{k}{ \ast }, \semt{k}{\cnzero } \right) &:= & \left( \RR ^k,  + ,  \ast , 0 \right) \label{eq:k-semantics-for-the-target-language} 
\end{eqnarray} 
% The $k$-semantics for the target language, as defined above, is adequate provided that $\left( \tangentreals , + , \ast , \cnzero \right) $ implements the vector space $\left( \RR ^k, + , \ast , 0\right) $ and $\tangentprojection{i}{}$ implements $\semanticshandler{k}{i}$ which is
% the respective (co)projection \eqref{eq:respective-coprojections}.

\subsection{Soundness of $\Dsynsymbol$ for primitive operations}\label{sub:sound-for-primitives}

\begin{definition}[Sound for primitives]\label{def:sound-for-primitives}
	A macro $\Dsynsymbol $ as defined in Figure \ref{fig:ad1} and its corresponding $CBV$ model morphism $\DSyn $ as defined in \eqref{eq:macro-as-a-functor} are \textit{sound for primitives} if, for any primitive $\op\in\Op $,
	$ \semt{k}{\Dsyn{\op }} = \dDSemtra{k}{\sem{\op}}	$ for any $k$.  
\end{definition}

For each $j\in\NNN{n}$, given a differentiable function $f : \RR ^n\to \monadwP{\RR}$,
we denote by $\dDSemj{ f }{j} : \RR ^n\to \monadwP{\RR\times\RR } $ the function defined by $\dDSemj{f }{j} \left( x_1,\ldots,x_n\right) = \dDSemtra{1}{f }\circ \intle{n,1} \left( (x_1,\ldots,x_n), \canonicalbasise{n}{j}\right) $, where $\canonicalbasise{n}{j} $ the $j$-th vector of the canonical basis of $\mathbb{R}^n $.

\begin{lemma}\label{lem:sound-for-primitives}
The macro $\Dsynsymbol $ defined in Figure \ref{fig:ad1} is sound for primitives provided that
	\begin{equation}
	\sem{\tPair{\op(\var{2}_1,\ldots,\var{2}_n)}{\partial_j\op(\var{2}_1,\ldots,\var{2}_n) }  } 
	= \dDSemj{\sem{\op } }{j}  ,
\end{equation}
for any primitive operation $\op\in\Op _ n $ of the source language.
\end{lemma}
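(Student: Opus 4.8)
The plan is to compute both sides of the desired identity $\semt{k}{\Dsyn{\op}} = \dDSemtra{k}{\sem{\op}}$ explicitly as partial maps $\left( \RR\times\RR^k\right)^n\to\monadwP{\RR\times\RR^k}$ in $\wCpo$ and to check that they agree; two such partial maps coincide exactly when they are defined on the same set and agree pointwise there. The hypothesis will be used only to identify the semantics of the syntactic partial-derivative terms $\partial_j\op$ with the genuine $j$-th partial derivative of $\sem{\op}$.

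First I would unfold $\semt{k}{\Dsyn{\op}}$. Since $\semt{k}{-}$ is a $CBV$ model morphism (Proposition \ref{prop:section-universal-property-syntax}), it strictly preserves products, coproducts and the Kleisli structure, so the pattern matches $\pMatch{-}{\var{1}_i}{\var{1}_i'}{-}$ of Figure \ref{fig:ad1} become the two projections extracting a value $x_i\in\RR$ and a tangent $v_i\in\RR^k$, the let-bindings become Kleisli sequencing, and the primitives $\op$, $\partial_j\op$, $(\ast)$, $(+)$ are sent to their assigned interpretations. As $\partial_j\op$ is itself a source-language term, $\semt{k}{\partial_j\op}=\sem{\partial_j\op}$. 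Carrying out this unfolding shows that $\semt{k}{\Dsyn{\op}}$ sends $\left( (x_1,v_1),\ldots,(x_n,v_n)\right)$ to $\leastelement$ unless $\sem{\op}(x_1,\ldots,x_n)$ and every $\sem{\partial_j\op}(x_1,\ldots,x_n)$ are defined — Kleisli sequencing propagates $\leastelement$ — in which case it returns $\left( \sem{\op}(x_1,\ldots,x_n),\ \sum_{j=1}^n \sem{\partial_j\op}(x_1,\ldots,x_n)\cdot v_j\right)$, where $\cdot$ is the scalar multiplication $\RR\times\RR^k\to\RR^k$ interpreting $(\ast)$ and $\sum$ is the addition of $\RR^k$.

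Next I would unfold the right-hand side using Definition \ref{def:partial-maps-differentiable-and-derivative} and Definition \ref{def:total-derivative}. Writing $U=\sem{\op}^{-1}(\RR)$ for the open domain and $g=\unlift{\sem{\op}}:U\to\RR$, the derivative $\dDSemtra{k}{\sem{\op}}$ is $\leastelement$ off $\intle{n,k}\left( U\times(\RR^k)^n\right)$, which under the interleaving isomorphism is exactly the set of interleaved inputs whose value part lies in $U$. On that set it equals $\dDSemtotaltra{k}{\sem{\op}}$, and here I would use that each primitive has codomain $\RR$ (so $m=1$): the transpose $g'(x)^t:\RR\to\RR^n$ sends $1$ to the gradient $\left(\partial_1 g(x),\ldots,\partial_n g(x)\right)$, so that for $w=(v_1,\ldots,v_n)$ the linear map $\vectoraslineartransformation{w}\cdot g'(x)^t$ corresponds to the vector $\sum_{j=1}^n \partial_j g(x)\cdot v_j\in\RR^k$. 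Thus $\dDSemtra{k}{\sem{\op}}$ has the same shape as the formula of the previous paragraph, with $\sem{\partial_j\op}$ replaced by the genuine partial derivative $\partial_j g$.

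Finally I would invoke the hypothesis. As $\dDSemj{\sem{\op}}{j}$ sends $(x_1,\ldots,x_n)$ to $\left(\sem{\op}(\vec x),\partial_j g(\vec x)\right)$ on $U$ (the $k=1$, single-direction specialisation of the computation just described) and to $\leastelement$ off $U$, the assumed equality $\sem{\tPair{\op(\var{2}_1,\ldots,\var{2}_n)}{\partial_j\op(\var{2}_1,\ldots,\var{2}_n)}}=\dDSemj{\sem{\op}}{j}$ forces simultaneously that $\sem{\partial_j\op}$ is defined precisely on $U$ — so the two explicit partial maps share a domain — and that $\sem{\partial_j\op}(\vec x)=\partial_j g(\vec x)$ for $\vec x\in U$. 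Substituting this into the formula of the second paragraph reproduces that of the third, giving $\semt{k}{\Dsyn{\op}}=\dDSemtra{k}{\sem{\op}}$ for every $k\in\NN\cup\{\infty\}$. The main obstacle is the linear-algebra bookkeeping of the third paragraph: one must match the macro's iterated scalar-multiply-and-add against the derivative definition's $\vectoraslineartransformation{w}\cdot g'(x)^t$ by threading through the interleaving isomorphism and the vector/linear-transformation correspondence, and one must check domain agreement carefully, since it is the defined-\emph{exactly}-on-$U$ clause of the hypothesis, rather than mere pointwise agreement, that is essential here.
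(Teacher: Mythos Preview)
The paper does not supply a proof of this lemma; it is stated and then the next section begins. Your direct computation is the natural route and is correct.

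One small imprecision: the hypothesis does not force $\sem{\partial_j\op}$ to be defined \emph{precisely} on $U$, only to be defined on all of $U$ (since the equality of partial maps $\sem{\tPair{\op(\ldots)}{\partial_j\op(\ldots)}}=\dDSemj{\sem{\op}}{j}$ says the pair is defined exactly on $U$, and the first component already constrains that to $U$; the second component could in principle be defined on a larger set). This does not affect your conclusion: the domain of $\semt{k}{\Dsyn{\op}}$ is $U\cap\bigcap_j\mathrm{dom}\,\sem{\partial_j\op}=U$, which matches the domain of $\dDSemtra{k}{\sem{\op}}$.
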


\section{Enriched scone and subscone}\label{sec:LogicalRelations-basic-proof}
Here, we present general, reusable results about logical relations proofs for languages 
with recursive features.
We phrase these in terms of category theory.
Concretely, we discuss two categorical perspectives on logical relations, both of which are constructions to build a new categorical semantics out of two existing semantics $\catB$ and $\catD$.
The first perspective, called the \emph{scone}, is as simple as a plain comma category  $\catD\downarrow G$ of the identity along a suitable functor $G:\catB\to\catD$ between the two existing semantics.
It gives a proof-relevant perspective in which we may distinguish different witnesses demonstrating the truth of a predicate.
The second perspective, called the \emph{subscone}, arises as a suitable reflective subcategory of the scone.
It crucial property is that its objects are chosen such that they represent only proof-irrelevant predicates, meaning that we can think of its morphisms simply as $\catB$-morphisms that respect the predicates.

Here, we focus, in particular, on characterising when the scone and subscone are $\wCpo$-bicartesian closed categories, getting us most of the way to a \emph{CBV} $\wCpo$-pair.
We discuss the remaining ingredient of lifting the (pointed) monad to the subscone in Section \ref{subsect:the-definition-of-the-monad-for-the-logical-relations}\footnote{
\cite{goubault2002logical} develops some general methods for obtaining monad liftings to the scone and subscone.
However, there are many choices for such monad liftings and they need to be chosen depending on the specific application.
}.

\subsection{Scone: proof-relevant categorical logical relations}
Given an $\wCpo$-functor $G:\catB\to\catD$, the comma $\wCpo$-category $\catD\downarrow G$ of the identity along $G$ in 
$\ecat{\wCpo}$ is defined as follows.
\begin{itemize} 
\renewcommand\labelitemi{--}	
	\item  The objects of $\catD\downarrow G$ are triples $(D\in\catD  , C\in\catB  , j:D\to G(C) )$ in which  $j$ is a morphism of $\catD$; 
	we think of these as pairs of a $\catB$-object $C$ and a proof-relevant predicate $(D, j)$ on $G(C)$;
	\item  a morphism $(D,C, j)\to (D', C', h)$ between objects of $\catD\downarrow G$ is a pair \eqref{eq:pair-alpha-scone} making \eqref{eq:comma-morphism-scone} commutative in $\catD $;
	we think of these as $\catB$-morphisms $\alpha_1$ that respect the predicates, as evidenced by $\alpha_0$;
\end{itemize}	
\begin{minipage}{.5\linewidth}
\begin{equation}\label{eq:pair-alpha-scone}
		\alpha = \left( \alpha_0 : D\to D'  , \alpha _1 : C\to C'\right)
	\end{equation} 		
\end{minipage}%
\begin{minipage}{.5\linewidth}
\begin{equation}\label{eq:comma-morphism-scone}
	\diag{morphism-comma-category}
\end{equation}   
\end{minipage}
\begin{itemize} 
	\renewcommand\labelitemi{--}	
\item if $\alpha = \left( \alpha _0 : D\to D'  , \alpha _1 : C\to C'\right), \beta = \left( \beta _0 : D\to D'  , \beta _1 : C\to C'\right) : \left( D, C , j\right)\to \left( D', C', h\right) $,
	are two morphisms of $\catD\downarrow G$, 
	we have that $\alpha\leq \beta $ if $\alpha _0\leq  \beta _0 $ in $\catD $ and $\alpha _1\leq \beta _1$ in $\catB $.
\end{itemize}

Following the approach of  \cite[Section~9]{VAKAR-LUCATELLI2021}, we have:

\begin{theorem}[Monadic-comonadic scone]\label{theo:fundamental-scone-monadic-comonadic}
Let  $G: \catB\to\catD $ be a right $\wCpo $-adjoint functor. Assuming that $\catD $ has finite $\wCpo$-products and $\catB $ has finite $\wCpo$-coproducts, the $\wCpo $-functor
	\begin{equation}\label{eq:forgetful-scone-notation}
		\forgetfulS : \catD \downarrow G\to \catD\times\catB  ,
	\end{equation}
	defined by  $\left( D\in\catD  , C\in\catB  , j:D\to G(C) \right)\mapsto \left( D, C \right)$,
	is $\wCpo$-comonadic and $\wCpo$-monadic\footnote{That is, this $\wCpo$-functor is up to equivalence, the forgetful $\wCpo$-functor from the $\wCpo$-Eilenberg-Moore category of a (co)monad on $\catD\times\catB$.
	\cite{MR0280560} gives a good introduction to the theory of enriched (co)monads.
	}.
		This implies, in particular, that $\forgetfulS$ creates (and strictly preserves) $\wCpo$-limits and colimits\footnote{In this work,
		we are only concerned with \emph{conical} $\wCpo$-limits and colimits of functors $J:\catE\to \catC$ in the sense of $\catC$-objects such that we have natural $\wCpo$-isomorphisms
		$$
		\catC(C, \lim J)\cong \catCat(\catE,\catC)(\Delta_C, J)\qquad\text{ and }\qquad 
		\catC(\colim J, C)\cong \catCat(\catE,\catC)(J,\Delta_C),
		$$
		where we write $\Delta_C$ for the constantly $C$ functor.
		For the more general theory of (weighted) $\catV$-(co)limits, we refer the reader to \cite{kelly1982basic}.
		}. 
\end{theorem}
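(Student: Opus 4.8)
The plan is to exhibit explicit left and right $\wCpo$-adjoints to $\forgetfulS$, and then to identify $\catD\downarrow G$ on the nose with the $\wCpo$-Eilenberg--Moore category of the resulting monad, and with the category of coalgebras of the resulting comonad, on $\catD\times\catB$. Since $G$ is a right $\wCpo$-adjoint, I fix a left $\wCpo$-adjoint $F\dashv G$ with unit $\eta:\ID\to GF$, and I keep in mind that $\catD\downarrow G$ carries the componentwise order on morphisms described above.

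First I would construct the right $\wCpo$-adjoint $R:\catD\times\catB\to\catD\downarrow G$ using the finite $\wCpo$-products of $\catD$, by
\[
R(D,C):=\bigl(D\times G(C),\,C,\,\pi_{G(C)}:D\times G(C)\to G(C)\bigr),
\]
with the product projection as structure map. A direct calculation shows that a morphism $(D',C',j')\to R(D,C)$ is the same as a pair $(D'\to D,\,C'\to C)$, the component landing in $G(C)$ being forced by the commuting square of \eqref{eq:comma-morphism-scone}; hence $\forgetfulS\dashv R$, and the hom-bijection, built from the product isomorphism, is an isomorphism of $\omega$-cpos that is $\wCpo$-natural. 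Dually, using the finite $\wCpo$-coproducts of $\catB$ together with $F\dashv G$, I would construct the left $\wCpo$-adjoint
\[
L(D,C):=\bigl(D,\,C\sqcup F(D),\,G(\iota_2)\circ\eta_D:D\to G(C\sqcup F(D))\bigr),
\]
where $\iota_2:F(D)\to C\sqcup F(D)$ is the coproduct injection, and verify analogously that $L\dashv\forgetfulS$: a morphism $L(D,C)\to(D',C',j')$ reduces, via the transposition $F\dashv G$, to a pair $(D\to D',\,C\to C')$, the $F(D)$-component of the $\catB$-morphism being forced to be the transpose of $j'\circ\alpha_0$.

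Next I would read off the induced monad $\mathbb{T}:=\forgetfulS\circ L$ and comonad $\mathbb{S}:=\forgetfulS\circ R$ on $\catD\times\catB$, namely $\mathbb{T}(D,C)=(D,\,C\sqcup F(D))$ and $\mathbb{S}(D,C)=(D\times G(C),\,C)$. The heart of the argument is then to check that the two comparison $\wCpo$-functors are isomorphisms. For $\mathbb{T}$, an algebra structure on $(D,C)$ is a map $(D,\,C\sqcup F(D))\to(D,C)$ whose $\catD$-component is forced to be $\ID_D$ by the unit law, hence amounts to a single $\catB$-morphism $a:F(D)\to C$, with the associativity axiom holding automatically; transposing $a$ across $F\dashv G$ yields exactly a morphism $D\to G(C)$, i.e.\ an object of $\catD\downarrow G$. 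For $\mathbb{S}$, a coalgebra structure on $(D,C)$ is a map $(D,C)\to(D\times G(C),C)$ which, by the universal property of the product, reduces to a single map $D\to G(C)$, again an object of $\catD\downarrow G$. Both correspondences are the identity on underlying data and respect the componentwise order, so they are $\wCpo$-isomorphisms commuting with the forgetful functors; this shows that $\forgetfulS$ is simultaneously $\wCpo$-monadic and $\wCpo$-comonadic.

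Finally, the stated creation and strict preservation of $\wCpo$-limits and colimits follows formally: a $\wCpo$-monadic functor creates conical $\wCpo$-limits and a $\wCpo$-comonadic functor creates conical $\wCpo$-colimits, so $\forgetfulS$, being both, creates all conical $\wCpo$-limits and colimits that exist in $\catD\times\catB$; strictness is immediate once the (co)limits in $\catD\downarrow G$ are chosen to be the ones produced by this creation. I expect the only real difficulty to be bookkeeping the enrichment rather than anything conceptual: I must confirm that the hom-bijections defining the two adjunctions, and the two comparison functors, are genuine isomorphisms of $\omega$-cpos (monotone with monotone inverses) and $\wCpo$-natural. Since the order on $\catD\downarrow G$, the orders on $\catD\times\catB$, and the orders on (co)algebra morphisms are all componentwise, this reduces to noting that $F$, $G$, the product projections, the coproduct injections and the unit $\eta$ are $\wCpo$-morphisms, so monotonicity is preserved at every step.
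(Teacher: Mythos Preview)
Your proposal is correct and complete as a plan. The paper, however, takes a different route in Appendix~\ref{app-wCPO-enriched-scone}: it first proves a general lemma that $\forgetfulS$ reflects isomorphisms and creates all absolute (weighted) limits and colimits, then constructs only the right adjoint $R(D,C)=(D\times G(C),C,\pi_{G(C)})$ and invokes (the $\wCpo$-enriched form of) Beck's monadicity theorem to conclude comonadicity directly. Monadicity is obtained not by exhibiting your explicit left adjoint $L$, but by a duality trick: the adjunction $F\dashv G$ gives an isomorphism $\catD\downarrow G\cong F\downarrow\catB$, and the dual of the comonadicity argument applied to $F\downarrow\catB$ yields monadicity. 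Your approach is more elementary and self-contained (no Beck, no absolute-colimit lemma), and in fact makes the monad and comonad completely explicit; the paper's approach is more conceptual and reuses a single creation lemma plus duality. Both are standard; yours has the slight advantage that the explicit identification of (co)algebras with objects of the comma category is transparently an $\wCpo$-isomorphism, whereas the paper's route leaves that to the enriched Beck machinery.
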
	

By Theorem \ref{theo:fundamental-scone-monadic-comonadic} and the enriched adjoint triangle theorem\footnote{See \cite{MR0233864} for the original adjoint triangle theorem, and \cite[Section~1]{Lucatelli2016} for the enriched version.}, we have:

\begin{corollary}\label{coro:fundamental-scone-monadic-comonadic}
Let  $G: \catB\to\catD $ be a right $\wCpo $-adjoint functor between $\wCpo $-bicartesian closed categories. In this case, $\catD \downarrow G$ is an $\wCpo $-bicartesian closed category. Moreover, if $\catD\times\catB$ is $\wCpo$-cocomplete, so is  $\catD \downarrow G$.
\end{corollary}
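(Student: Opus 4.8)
The plan is to combine the monadic--comonadic structure from Theorem~\ref{theo:fundamental-scone-monadic-comonadic} with the enriched adjoint triangle theorem, treating the finite (co)products and the exponentials separately.

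First I would check the hypotheses of Theorem~\ref{theo:fundamental-scone-monadic-comonadic}. Since $\catD$ and $\catB$ are $\wCpo$-bicartesian closed, each of them has finite $\wCpo$-products and finite $\wCpo$-coproducts; in particular $\catD$ has finite $\wCpo$-products and $\catB$ has finite $\wCpo$-coproducts, so the hypotheses are met and $\forgetfulS:\catD\downarrow G\to \catD\times\catB$ is both $\wCpo$-monadic and $\wCpo$-comonadic, creating (and strictly preserving) all $\wCpo$-limits and $\wCpo$-colimits that exist in $\catD\times\catB$. Now $\catD\times\catB$ is itself $\wCpo$-bicartesian closed (a binary product of $\wCpo$-bicartesian closed categories), so it has finite $\wCpo$-products and finite $\wCpo$-coproducts; creation of these by $\forgetfulS$ immediately yields finite $\wCpo$-products and finite $\wCpo$-coproducts in $\catD\downarrow G$, strictly preserved by $\forgetfulS$. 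This settles everything except the exponentials.

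The exponentials are the main obstacle, and this is where the enriched adjoint triangle theorem enters. Fix an object $A\in\catD\downarrow G$. Because $\forgetfulS$ strictly preserves finite $\wCpo$-products, the square formed by $A\times(-):\catD\downarrow G\to\catD\downarrow G$ over $\forgetfulS(A)\times(-):\catD\times\catB\to\catD\times\catB$, with both vertical legs equal to $\forgetfulS$, commutes on the nose. The bottom functor $\forgetfulS(A)\times(-)$ has a right $\wCpo$-adjoint, namely the $\wCpo$-exponential $\ihom{\catD\times\catB}{\forgetfulS(A)}{-}$ of the $\wCpo$-cartesian closed category $\catD\times\catB$. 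Since the vertical functors are $\wCpo$-comonadic (equivalently $\wCpo$-monadic) by Theorem~\ref{theo:fundamental-scone-monadic-comonadic}, the enriched adjoint triangle theorem lets me lift this right $\wCpo$-adjoint to a right $\wCpo$-adjoint of $A\times(-)$: concretely, one corrects the transported candidate $R\circ\ihom{\catD\times\catB}{\forgetfulS(A)}{-}\circ\forgetfulS$, with $R$ the cofree right $\wCpo$-adjoint of $\forgetfulS$, by the appropriate enriched equalizer supplied by the theorem. As $A$ was arbitrary, $\catD\downarrow G$ is $\wCpo$-cartesian closed, and together with the previous paragraph it is $\wCpo$-bicartesian closed.

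For the final clause, if $\catD\times\catB$ is $\wCpo$-cocomplete then, since $\forgetfulS$ creates $\wCpo$-colimits, every $\wCpo$-colimit diagram in $\catD\downarrow G$ acquires a colimit created from $\catD\times\catB$, so $\catD\downarrow G$ is $\wCpo$-cocomplete. The delicate point throughout is the exponential step: one must verify that the hypotheses of the enriched adjoint triangle theorem are genuinely met by the comonadic $\forgetfulS$, and that the lifted hom-isomorphism is an isomorphism of $\omega$-cpos (i.e.\ holds at the enriched level) rather than merely an ordinary natural bijection. This enrichment is exactly what the enriched form of the theorem guarantees, which is why invoking it, rather than the classical adjoint triangle theorem, is essential here.
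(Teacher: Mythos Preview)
Your proposal is correct and follows essentially the same approach as the paper: the paper's proof is the single line ``By Theorem~\ref{theo:fundamental-scone-monadic-comonadic} and the enriched adjoint triangle theorem'', and you have simply unpacked what this means, using creation of (co)limits by the (co)monadic $\forgetfulS$ for the bicartesian and cocompleteness parts and the enriched adjoint triangle theorem (applied to $A\times(-)$ over $\forgetfulS(A)\times(-)$ via the comonadic $\forgetfulS$) for the exponentials.
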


Theorem \ref{theo:fundamental-scone-monadic-comonadic} and Corollary \ref{coro:fundamental-scone-monadic-comonadic} are $\wCpo $-enriched versions of the fundamental results of \cite[Section~9]{VAKAR-LUCATELLI2021}. The details and proofs are presented in Appendix~\ref{app-wCPO-enriched-scone}.

%In the setting above, the $\wCpo$-category $\catD\downarrow G$ can be seen as a way of building \textit{a suitable %categorical model from  % a previously given $\wCpo$-categorical model $\catD\times\catB $}, providing an appropriate semantics %for our problem.

\subsection{Subscone: proof-irrelevant categorical logical relations}
\textit{Henceforth, we assume that $\SUBscone{\catD}{G}$ is a full\footnote{That is, all $\SUBscone{\catD}{G}((D,C,j),(D',C',h))\to (\catD\downarrow G)((D,C,j),(D',C',h))$ are isomorphisms of $\omega$-cpos.} reflective\footnote{That is, $\SUBscone{\catD}{G}\to \catD\downarrow G$ has an $\wCpo$-left adjoint.} and replete\footnote{That is, $\SUBscone{\catD}{G}$ is closed under isomorphisms in $\catD\downarrow G$.} $\wCpo $-subcategory of $\catD\downarrow G$. We denote, herein, by $\monadSub$ the idempotent $\wCpo $-monad induced by the $\wCpo$-adjuntion.}

%\footnote{Subscones are obtained out of suitable idempotent $\wCpo$-monads on the $\wCpo$-category $\morphismcatTwo\pitchfork \catD $ of morphisms. This is explained in Appendix~\ref{app-wCPO-enriched-subscone}.} 

Recall that \textit{a morphism $q$ in $\wCpo$ is full} if its underlying functor is full. In this case, the underlying functor is also faithful and injective on objects.
Moreover, a morphism \textit{$j$ in an $\wCpo$-category $\catB$ is full} if $\ehom{\catB}{B}{j}$ is full in $\wCpo$ for any $B\in\catB$. 

Furthermore, recall that \textit{an $\wCpo$-functor $H:\catW\to\catZ $ is locally full if, for any $\left( X,W \right)\in \obb{\catW}\times\obb{\catW} $, the morphism $H : \ehom{\catW}{X}{W} \to \ehom{\catZ}{HX}{HW} $ is a full $\wCpo$-morphism.}
It should be noted that the $2$-functor underlying a locally full $\wCpo$-functor is \textit{locally fully faithful}. Moreover, since every full morphism in $\wCpo $ is injective on objects, \textit{every locally full $\wCpo$-functor is faithful (locally injective on objects). } 

\begin{assum}\label{assum:subscone-assumptions}
We require that:
\begin{enumerate}[(Sub.1)]
	\item whenever $\left( D\in\catD , C\in\catB, j\right)\in\SUBscone{\catD}{G} $, $j$ is a full morphism  in $\catB $;\label{assumption:condition:proof-irrelevant}
    \item $G: \catB\to\catD $ is a right $\wCpo$-adjoint functor between $\wCpo$-bicartesian closed categories;
	%\item whenever  $C\in\catB $, we have that $\left( \initiall , C, \coproje{\initiall } \right)$ and $\left( G(C), C, \ID _ {G(C)} \right) $ are objects of $\SUBscone{\catD}{G}  $, where $\initiall $ is the $\wCpo$-initial object;
	\item $\monadSub$ strictly preserves $\wCpo$-products;\label{assumption:cartesian-idempotent-monad}
	\item Diag.~\eqref{eq:forgetful-indentity-on-C-monad-subscone} commutes.
\end{enumerate}
\end{assum}
\begin{minipage}{.5\linewidth}
	\small	
	\begin{equation}\label{eq:definition-forgetful-subscone}
	\diag{forgetful-subscone}
\end{equation} 		
\end{minipage}%
\begin{minipage}{.4\linewidth}
	\small
\begin{equation}\label{eq:forgetful-indentity-on-C-monad-subscone}
	\diag{forgetful-indentity-on-C-monad-subscone}
\end{equation} 
\end{minipage}\\
\normalsize
We denote by $\forgetfulSub : \SUBscone{\catD}{G}\to \catB $ the  $\wCpo$-functor given by the composition \eqref{eq:definition-forgetful-subscone} where 
the unlabeled arrow is the full inclusion.

\begin{proposition}\label{prop:limits-and-colimits-subscone}
	The full inclusion 
	$\SUBscone{\catD}{G}\to \catD\downarrow G $
	creates (and strictly preserves) $\wCpo $-limits and $\wCpo$-exponentials. Moreover, if $\catD\downarrow G  $ is $\wCpo$-cocomplete, so is  $\SUBscone{\catD}{G}$.
\end{proposition}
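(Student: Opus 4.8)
The plan is to deduce every claim from one structural fact already in place: by our standing assumptions $\SUBscone{\catD}{G}$ is a \emph{full, replete, reflective} $\wCpo$-subcategory of $\catD\downarrow G$, and I write $R\dashv I$ for this $\wCpo$-reflection, with $I$ the full inclusion and $\monadSub=IR$ the induced idempotent $\wCpo$-monad (unit $\eta$). Because the adjunction is $\wCpo$-enriched, $I$ is a $\wCpo$-right adjoint, hence preserves all $\wCpo$-limits that exist, and $R$ is a $\wCpo$-left adjoint, hence preserves all $\wCpo$-colimits. I will also use that $\catD\downarrow G$ is $\wCpo$-bicartesian closed (Corollary~\ref{coro:fundamental-scone-monadic-comonadic}) and that $\monadSub$ strictly preserves finite $\wCpo$-products (Assumption~\ref{assum:subscone-assumptions}).

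For $\wCpo$-limits I would first invoke the standard closure property of full replete reflective subcategories: any $\wCpo$-limit that exists in $\catD\downarrow G$ of a diagram valued in $\SUBscone{\catD}{G}$ again lies in $\SUBscone{\catD}{G}$. The one-line reason is that if $(\lambda_e\colon P\to IF(e))_e$ is such a limit cone, each leg factors uniquely through $\eta_P\colon P\to\monadSub P$ (as every $IF(e)$ is in the subcategory); the induced map $\monadSub P\to P$ is, by naturality of $\eta$ and idempotency of $\monadSub$, a two-sided inverse to $\eta_P$, so $P\in\SUBscone{\catD}{G}$ by repleteness. Since $I$ is fully faithful the universal property is inherited unchanged, yielding creation and strict preservation; the argument is insensitive to enrichment because $\eta$ and the reflection are $\wCpo$-enriched.

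The step I expect to be the crux is closure under $\wCpo$-exponentials, since this fails for reflective subcategories in general and is exactly where product-preservation of $\monadSub$ enters. My plan is to use the enriched exponential-ideal criterion (Day's reflection theorem): I would show that for every $X\in\catD\downarrow G$ and every $Y\in\SUBscone{\catD}{G}$, the internal hom $\ihom{\left(\catD\downarrow G\right)}{X}{Y}$ is orthogonal to each unit component $\eta_W$. By the exponential adjunction this reduces to $Y$ being orthogonal to $\eta_W\times\ID_X$; and because $\monadSub$ preserves products, $\monadSub(\eta_W\times\ID_X)\cong\monadSub\eta_W\times\ID_{\monadSub X}$ is an isomorphism, so $\eta_W\times\ID_X$ is inverted by $\monadSub$, whence the subcategory object $Y$ is automatically orthogonal to it. Hence $\ihom{\left(\catD\downarrow G\right)}{X}{Y}\in\SUBscone{\catD}{G}$; specialising to $X,Y\in\SUBscone{\catD}{G}$ shows the inclusion creates and strictly preserves $\wCpo$-exponentials.

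Finally, for cocompleteness I would use the dual standard fact: a reflective subcategory of a $\wCpo$-cocomplete category is $\wCpo$-cocomplete, with the $\wCpo$-colimit of a diagram $F$ computed as $R$ applied to the $\wCpo$-colimit of $IF$ formed in $\catD\downarrow G$. This is a genuine enriched colimit since $R$, being a $\wCpo$-left adjoint, preserves $\wCpo$-colimits, and the unit and counit of $R\dashv I$ furnish the universal property. I would emphasise here that, in contrast with limits and exponentials, the inclusion does \emph{not} create colimits (they must be reflected through $R$), which is precisely why the statement asserts only existence in the cocomplete case.
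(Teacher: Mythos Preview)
Your argument is correct and covers all three claims. The underlying mathematics matches the paper's proof, but the packaging differs: the paper compresses everything into three invocations of standard enriched-monad facts, while you unfold those facts by hand.

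Concretely, for limits the paper simply says the inclusion is $\wCpo$-monadic (since $\monadSub$ is idempotent, the subcategory is the Eilenberg--Moore category), and monadic functors create limits; you instead give the direct repleteness-plus-unit argument that underlies this. For exponentials the paper observes that a product-preserving idempotent monad is commutative, and invokes the fact that the forgetful functor from algebras over a commutative monad creates exponentials; you instead run the Day reflection / exponential-ideal argument, which is precisely what that black box unpacks to in the idempotent case. For cocompleteness both proofs are the same one-liner. Your version is more self-contained and makes the role of product preservation explicit; the paper's version is terser and situates the result in the general theory of (enriched) commutative monads. Neither gains real generality over the other here.
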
 
\begin{proof}
	$\SUBscone{\catD}{G}\to \catD\downarrow G $ is $\wCpo$-monadic and, hence, it creates $\wCpo$-limits. 
	
	By Assumption \ref{assumption:cartesian-idempotent-monad} of Assumption \ref{assum:subscone-assumptions}, $\monadSub$ is commutative and, hence, $\SUBscone{\catD}{G}\to \catD\downarrow G $ creates $\wCpo$-exponentials. 
	%Since $\SUBscone{\catD}{G}$ is replete in $\catD\downarrow G $ the $\wCpo$-exponentials can be chosen so that the preservation is strict. 
	
	Since $\monadSub$ is idempotent,  $\SUBscone{\catD}{G}$ is $\wCpo$-cocomplete whenever $\catD\downarrow G $ is. 
\end{proof}

\begin{corollary}\label{coro:limits-and-colimits-subscone}
$\SUBscone{\catD}{G}$ is an $\wCpo $-bicartesian closed category. Moreover, if $\catD\times\catB$ is $\wCpo$-cocomplete, so is  $\SUBscone{\catD}{G}$.
\end{corollary}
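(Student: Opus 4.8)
The plan is to obtain the full bicartesian closed structure of $\SUBscone{\catD}{G}$ by transporting it from the ambient comma category $\catD\downarrow G$, combining Corollary~\ref{coro:fundamental-scone-monadic-comonadic} with Proposition~\ref{prop:limits-and-colimits-subscone}. First I would record that, under Assumption~\ref{assum:subscone-assumptions} (in particular, the hypothesis that $G$ is a right $\wCpo$-adjoint functor between $\wCpo$-bicartesian closed categories), Corollary~\ref{coro:fundamental-scone-monadic-comonadic} already makes $\catD\downarrow G$ itself an $\wCpo$-bicartesian closed category, and, when moreover $\catD\times\catB$ is $\wCpo$-cocomplete, that same corollary makes $\catD\downarrow G$ $\wCpo$-cocomplete.

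Next I would dispatch the finite $\wCpo$-products and the $\wCpo$-exponentials. By Proposition~\ref{prop:limits-and-colimits-subscone}, the full inclusion $\SUBscone{\catD}{G}\to\catD\downarrow G$ creates (and strictly preserves) $\wCpo$-limits and $\wCpo$-exponentials. Since finite $\wCpo$-products are a special case of $\wCpo$-limits, the terminal object and the binary products of $\SUBscone{\catD}{G}$ are computed exactly as in $\catD\downarrow G$, and the same holds for exponentials; note that this last point is where Assumption~\ref{assum:subscone-assumptions}, via the strict preservation of $\wCpo$-products by $\monadSub$, is used inside Proposition~\ref{prop:limits-and-colimits-subscone}. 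Thus $\SUBscone{\catD}{G}$ is $\wCpo$-cartesian closed, with structure strictly inherited from the ambient category.

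The one piece of structure not created by the inclusion is the finite $\wCpo$-coproducts, and this is where the reflectivity of the subscone enters. The reflector associated with the idempotent $\wCpo$-monad $\monadSub$ is an $\wCpo$-left adjoint, so it preserves $\wCpo$-colimits; because $\monadSub$ is idempotent and $\SUBscone{\catD}{G}$ is replete, the value $\monadSub(A\sqcup B)$ may be taken as an object of $\SUBscone{\catD}{G}$. For $A,B\in\SUBscone{\catD}{G}$ I would then define their coproduct as $\monadSub(A\sqcup B)$, where $\sqcup$ is the (already existing) $\wCpo$-coproduct of $\catD\downarrow G$, and the initial object as the reflection of the initial object of $\catD\downarrow G$. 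The universal property follows from the chain of $\wCpo$-natural isomorphisms
\begin{align*}
\SUBscone{\catD}{G}(\monadSub(A\sqcup B),Z) &\cong (\catD\downarrow G)(A\sqcup B,Z)\\
&\cong (\catD\downarrow G)(A,Z)\times(\catD\downarrow G)(B,Z)\\
&\cong \SUBscone{\catD}{G}(A,Z)\times\SUBscone{\catD}{G}(B,Z),
\end{align*}
valid for every $Z\in\SUBscone{\catD}{G}$: the first isomorphism is the reflection adjunction, the middle one is the coproduct universal property in $\catD\downarrow G$, and the last one is full faithfulness of the inclusion. As composites of isomorphisms of $\omega$-cpos, these witness genuine $\wCpo$-coproducts, so $\SUBscone{\catD}{G}$ has finite $\wCpo$-coproducts and is therefore $\wCpo$-bicartesian closed.

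Finally, for the cocompleteness claim I would simply chain the two cited results: if $\catD\times\catB$ is $\wCpo$-cocomplete, then $\catD\downarrow G$ is $\wCpo$-cocomplete by Corollary~\ref{coro:fundamental-scone-monadic-comonadic}, and hence $\SUBscone{\catD}{G}$ is $\wCpo$-cocomplete by the last assertion of Proposition~\ref{prop:limits-and-colimits-subscone}. The only genuine subtlety — and the main point to verify rather than assert — is the $\wCpo$-\emph{enrichment} of the reflected coproducts, as opposed to a mere bijection of underlying hom-sets; but this is precisely guaranteed by the $\wCpo$-enrichment of the reflection adjunction together with the $\wCpo$-full faithfulness of the inclusion, so nothing beyond this bookkeeping is required.
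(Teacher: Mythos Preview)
Your proposal is correct and follows essentially the same approach as the paper: combine Corollary~\ref{coro:fundamental-scone-monadic-comonadic} (giving the structure on $\catD\downarrow G$) with Proposition~\ref{prop:limits-and-colimits-subscone} (transporting it to the subscone). The paper's proof is a one-line citation of these two results; you have simply unpacked the steps, in particular making explicit the reflection argument for finite $\wCpo$-coproducts that the paper leaves implicit in the phrase ``since $\monadSub$ is idempotent'' inside the proof of Proposition~\ref{prop:limits-and-colimits-subscone}.
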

\begin{proof}
	It follows from Proposition \ref{prop:limits-and-colimits-subscone} and Corollary \ref{coro:fundamental-scone-monadic-comonadic}.
\end{proof}

% I might be wrong, but I think we do not need the adjunctions anymore. For now, I will comment it out. If, in the future, we see use for them, then we bring it back.

%\begin{theorem}\label{theo:strict-preservation-of-the-exponentials}
%	We have $\wCpo $-adjunctions $\forgetfulSubleft\dashv \forgetfulSub \dashv \forgetfulSubright $. Moreover, 	the $\wCpo $-functor $\forgetfulSub : \SUBscone{\catD}{G}\to \catB $ strictly preserves exponentials.
%\end{theorem} 	
%\begin{proof}
%	The left and right $\wCpo $-adjoints are given, respectively, by
%			$C\mapsto \left( \initiall , C, \coproje{\initiall } \right) $ and $ C\mapsto \left( G(C), C, \ID _ {G(C)} \right)$,
%where $\initiall$ is the initial object and $\coproje{\initiall} : \initiall\to C $ is the unique morphism.
%
%Moreover, for any $C\in \catB $, we have that $\alpha\times\forgetfulSubleft\left( \ID _C \right) $ is the identity as long as $ \forgetfulSub\left( \alpha \right)$	is the identity. Since $\forgetfulSub $ strictly preserves %$\wCpo $-products, we conclude that the $\wCpo $-functor 
%strictly satisfies the \textit{Frobenius reciprocity condition}. Therefore we get that  $\forgetfulSub$ strictly preserves exponentials.
%\end{proof}

\begin{theorem}\label{theo:properties-forgetful-subscone}
	The $\wCpo $-functor $\forgetfulSub : \SUBscone{\catD}{G}\to \catB $ is strictly (bi)cartesian closed and locally full (hence, faithful). Moreover, $\forgetfulSub$ strictly preserves $\wCpo$-colimits.
\end{theorem}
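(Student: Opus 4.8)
The plan is to work throughout with the factorisation $\forgetfulSub = \pi_{\catB}\circ\forgetfulS\circ\iota$, where $\iota\colon\SUBscone{\catD}{G}\hookrightarrow\catD\downarrow G$ denotes the full inclusion, $\forgetfulS$ is the $\wCpo$-functor of~\eqref{eq:forgetful-scone-notation}, and $\pi_{\catB}\colon\catD\times\catB\to\catB$ is the product projection, and then to trace each piece of structure along this composite. Strict preservation of finite $\wCpo$-products (and of the terminal object) is immediate: $\iota$ strictly preserves $\wCpo$-limits by Proposition~\ref{prop:limits-and-colimits-subscone}, $\forgetfulS$ strictly preserves $\wCpo$-limits by Theorem~\ref{theo:fundamental-scone-monadic-comonadic}, and $\pi_{\catB}$ strictly preserves products as a projection out of a product $\wCpo$-category.

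For the remaining (co)cartesian part I would prove the stronger statement that $\forgetfulSub$ strictly preserves all $\wCpo$-colimits, which subsumes finite coproducts and the initial object. Since $\SUBscone{\catD}{G}$ is reflective in $\catD\downarrow G$, its colimits are obtained by reflecting those of $\catD\downarrow G$, so that $\iota\,\colim^{\SUBscone{\catD}{G}}J = \monadSub\bigl(\colim^{\catD\downarrow G}\iota J\bigr)$. Applying $\pi_{\catB}\circ\forgetfulS$ and invoking the commutativity of Diagram~\eqref{eq:forgetful-indentity-on-C-monad-subscone}, namely $\pi_{\catB}\circ\forgetfulS\circ\monadSub = \pi_{\catB}\circ\forgetfulS$ (the reflector leaves the $\catB$-component unchanged), lets me discard $\monadSub$; as $\forgetfulS$ strictly preserves colimits (Theorem~\ref{theo:fundamental-scone-monadic-comonadic}) and $\pi_{\catB}$ preserves them strictly (colimits in $\catD\times\catB$ being componentwise), $\forgetfulSub$ carries the chosen subscone colimit on the nose to the colimit in $\catB$.

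Local fullness and faithfulness rest on the proof-irrelevance hypothesis. On hom-objects $\forgetfulSub$ acts by $(\alpha_0,\alpha_1)\mapsto\alpha_1$, so it suffices to show the map $\SUBscone{\catD}{G}\bigl((D,C,j),(D',C',j')\bigr)\to\catB(C,C')$ reflects the order. Given parallel subscone morphisms with $\alpha_1\le\beta_1$, monotonicity of $G$ and of composition together with the commuting square~\eqref{eq:comma-morphism-scone} give $j'\circ\alpha_0 = G(\alpha_1)\circ j\le G(\beta_1)\circ j = j'\circ\beta_0$; since $(D',C',j')\in\SUBscone{\catD}{G}$, Assumption~\ref{assumption:condition:proof-irrelevant} of Assumption~\ref{assum:subscone-assumptions} makes $j'$ full, so $\catD(D,j')$ reflects the order and hence $\alpha_0\le\beta_0$, i.e.\ $\alpha\le\beta$. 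Thus every hom-map is a full $\wCpo$-morphism, so $\forgetfulSub$ is locally full, and faithfulness follows because full $\wCpo$-morphisms are faithful.

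The main obstacle is strict preservation of exponentials. Because $\iota$ creates $\wCpo$-exponentials (Proposition~\ref{prop:limits-and-colimits-subscone}), the exponential in $\SUBscone{\catD}{G}$ is literally the scone exponential $(D,C,j)\Rightarrow(D',C',j')$ supplied by the enriched adjoint-triangle construction behind Corollary~\ref{coro:fundamental-scone-monadic-comonadic}, so the entire question reduces to identifying its $\catB$-component. I would read this off the universal property: a map $(E,B,k)\times(D,C,j)\to(D',C',j')$ is a pair $\gamma_0\colon E\times D\to D'$ in $\catD$ and $\gamma_1\colon B\times C\to C'$ in $\catB$ subject to $j'\circ\gamma_0 = G(\gamma_1)\circ(k\times j)$, where I use that the right adjoint $G$ preserves products. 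Currying $\gamma_1$ through the closed structure of $\catB$ exhibits the $\catB$-component of the exponential as exactly $\expo{C}{C'}$, the compatibility constraint being absorbed entirely into the $\catD$-component. The delicate point, and the one I expect to demand the most bookkeeping, is verifying that this identification holds strictly rather than merely up to isomorphism; this forces one to be careful with the chosen products and exponentials and with the comparison isomorphisms for $G$, and is precisely what is carried out in detail in Appendix~\ref{app-wCPO-enriched-scone}.
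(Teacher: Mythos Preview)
Your treatment of products, colimits, and local fullness matches the paper's proof essentially verbatim (the paper is terser, but the content is the same: factor as $\pi_{\catB}\circ\forgetfulS\circ\iota$, invoke Theorem~\ref{theo:fundamental-scone-monadic-comonadic} and Proposition~\ref{prop:limits-and-colimits-subscone}, and use diagram~\eqref{eq:forgetful-indentity-on-C-monad-subscone} to kill $\monadSub$ on the $\catB$-component). Your unpacking of local fullness via order-reflection along the full morphism $j'$ is a welcome expansion of what the paper states in one line.

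The one genuine divergence is the exponential clause. The paper does \emph{not} compute the scone exponential directly; instead it observes that $\pi_{\catB}\circ\forgetfulS$ has an $\wCpo$-left adjoint $C\mapsto(\initiall,C,\coproje{\initiall})$ whose counit is the identity, so the adjunction satisfies Frobenius reciprocity strictly and hence $\pi_{\catB}\circ\forgetfulS$ preserves exponentials strictly. Combined with the fact that $\iota$ creates exponentials (since $\monadSub$ preserves products strictly), the result follows with no bookkeeping about comparison isomorphisms. Your direct-computation route is workable in principle, but your deferral to Appendix~\ref{app-wCPO-enriched-scone} is misplaced: that appendix establishes the (co)monadicity of $\forgetfulS$ and does not spell out the exponential in $\catD\downarrow G$, so the ``delicate point'' you flag is not actually discharged there. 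The Frobenius argument sidesteps exactly this difficulty and is the cleaner way to obtain strictness.
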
 
\begin{proof} 
The $\wCpo$-functors $\forgetfulS :\catD\downarrow G\to \catD\times \catB $ and $\pi_{\catB} : \catD \times \catB\to \catB $ strictly preserve $\wCpo$-weighted limits and colimits. Since $\monadSub$ is idempotent and \eqref{eq:forgetful-indentity-on-C-monad-subscone} commutes, this implies that $\forgetfulSub $ strictly preserves $\wCpo$-limits and colimits.

The composition $\pi_{\catB} \circ \forgetfulS $ has a left $\wCpo$-adjoint given by $ C\mapsto \left( \initiall , C, \coproje{\initiall } \right) $. Since the counit of this $\wCpo $-adjunction is the identity
and $\pi_{\catB} \circ \forgetfulS $  strictly preserves $\wCpo$-products, we get that this $\wCpo$-adjunction strictly satisfies the \textit{Frobenius reciprocity condition}. This implies that $\pi_{\catB} \circ \forgetfulS $
strictly preserves $\wCpo$-exponentials. 

Since $\monadSub $ strictly preserves $\wCpo$-products, we get that $\SUBscone{\catD}{G}\to \catD\downarrow G $ strictly preserves $\wCpo $-exponentials as well. Therefore, $\forgetfulSub $ strictly preserves $\wCpo$-exponentials.

The locally fully faithfulness (and, hence, faithfulness) of $\forgetfulSub $ follows from Condition \ref{assumption:condition:proof-irrelevant} of  Assumption \ref{assum:subscone-assumptions}. 
\end{proof} 

\begin{remark}[Proof-irrelevance]\label{rem:defining-morphism-subscone} 
	Condition \ref{assumption:condition:proof-irrelevant} of Assumption \ref{assum:subscone-assumptions} ensures that our subscone indeed gives us a 
	proof-irrelevant approach to logical relations: in particular, as stressed above, it implies that $\forgetfulSub $ is faithful. Given objects $(D, C, j), (D', C', j')$ and  a morphism $f: C\to C' $  in $\catB$, if there is $\alpha : (D, C, j)\to (D', C', j')$ satisfying 
$\forgetfulSub (\alpha ) = f $, then $\alpha$ is unique with this property. In this case, \textit{we say that $f$ defines a morphism $(D, C, j)\to (D', C', j')$ in $\SUBscone{\catD}{G}$}.

	Generally, we see a trade-off between using proof-relevant logical relations proofs via an interpretation in the scone or  proof-irrelevant ones via an interpretation in the subscone.
	The scone is generally better behaved as a category, as it tends to be both monadic and comonadic by Theorem \ref{theo:fundamental-scone-monadic-comonadic}, while the subscone tends to only be monadic.
	The objects and morphisms of the subscone, however, can be simpler to work with, as we do not need to track witnesses thanks to their uniqueness.
	In the rest of this paper, we work with the (proof-irrelevant) subscone, mostly to conform to conventions in the literature. 
\end{remark}

\section{Correctness of Dual Numbers AD}\label{sec:Basic-orrectness-for-forward-AD}
In this section, we show that, as long as the macro $\Dsynsymbol $ defined in Figure \ref{fig:ad1} is sound for primitives  and $\tangentreals $ implements $\RR ^k $, 
$\Dsynsymbol $ is correct according to the $k$-specification below. More precisely, we prove that:
\begin{theorem}\label{theo:main-theorem-section-proof}
		Assume that $\tangentreals $ implements the vector space $\RR ^k$, for some $k\in\NN\cup\left\{ \infty \right\}$. For any program $\var{1}:\ty{1}\vdash \trm{1}:\ty{2}$  where	$\ty{1},\ty{2}$ are data types (i.e., types not containing function types), we have that $\sem{\trm{1}} $ is differentiable and, moreover, 
		\begin{equation}
			 \semt{k}{\Dsyn{\trm{1} }} = \dDSemtra{k}{\sem{\trm{1} }}
		\end{equation}
	provided that $\Dsynsymbol $ is sound for primitives.
	% \footnote{See Section \ref{sub:target-language-syntax} and \ref{subsect:k-semantics-target-language} for considerations about the adequacy of the $k$-semantics for the target language. See Definition~\ref{subsec:differentiable-functions-semantics} for differentiable partial maps. See Section \ref{sub:sound-for-primitives} for sound for primitives $\Dsynsymbol $. }
\end{theorem}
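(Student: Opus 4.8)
The plan is to obtain the statement as a single application of the categorical logical-relations technique of Section~\ref{sec:LogicalRelations-basic-proof}, realizing the semantics and its AD transform as the two legs of the commuting square sketched in the Overview. Fix the curve dimension $n$ and the parameter $k$, and take $\sconeFUNCTOR{n,k}\colon\wCpo\times\wCpo\to\wCpo$ to be the $\wCpo$-functor represented by $\left(\RR^n,\RR^n\times(\RR^k)^n\right)\in\wCpo\times\wCpo$, that is,
\[
\sconeFUNCTOR{n,k}(X,X')=\ehom{\wCpo}{\RR^n}{X}\times\ehom{\wCpo}{\RR^n\times(\RR^k)^n}{X'},
\]
the $\omega$-cpo of $n$-dimensional curves into $X$ paired with derivative curves into $X'$. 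Being representable, $\sconeFUNCTOR{n,k}$ is a right $\wCpo$-adjoint, and since $\wCpo$ and $\wCpo\times\wCpo$ are $\wCpo$-bicartesian closed, Corollary~\ref{coro:fundamental-scone-monadic-comonadic} makes the scone $\wCpo\downarrow\sconeFUNCTOR{n,k}$ $\wCpo$-bicartesian closed. After verifying Assumption~\ref{assum:subscone-assumptions} (fullness of the predicates and cartesianness of the reflector), Corollary~\ref{coro:limits-and-colimits-subscone} shows that $\SUBscone{\wCpo}{\sconeFUNCTOR{n,k}}$ is $\wCpo$-bicartesian closed, while Theorem~\ref{theo:properties-forgetful-subscone} shows the forgetful functor $\forgetfulSub\colon\SUBscone{\wCpo}{\sconeFUNCTOR{n,k}}\to\wCpo\times\wCpo$ is strictly bicartesian closed, faithful, and strictly colimit-preserving.

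Next I would lift the partiality monad $\monadwP{-}$ (taken componentwise on $\wCpo\times\wCpo$) along $\forgetfulSub$ to a strong monad $\monadLR{n,k}{-}$ whose value at a predicate $T\subseteq\sconeFUNCTOR{n,k}(X,X')$ is
\[
P=\bigl\{(\gamma,\gamma')\mid \gamma^{-1}(X)\times\RR^n=\gamma'^{-1}(X')\text{ is open, and }(\gamma\circ\delta,\,(x,v)\mapsto\gamma'(\delta(x),D\delta(x,v)))\in T\text{ for all differentiable }\delta\colon\RR^n\to\gamma^{-1}(X)\bigr\}\cup T,
\]
so that $\monadLR{n,k}{-}$ sends the initial object to the terminal one and satisfies $\sconeFUNCTOR{n,k}(\eta)^{-1}(P)=T$. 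Because this lifting keeps each hom-object into $\monadLR{n,k}{Y}$ pointed and $\forgetfulSub$ preserves the pointed structure, $\left(\SUBscone{\wCpo}{\sconeFUNCTOR{n,k}},\monadLR{n,k}{-}\right)$ is a $CBV$ $\wCpo$-pair and $\forgetfulSub$ is a $CBV$ $\wCpo$-pair morphism. Lemma~\ref{UnderlyingCBVmodel} then equips the subscone with its canonical least-fixpoint recursion $\wfixpoint$ and iteration $\witerate$ and makes $\CBVU(\forgetfulSub)$ a $CBV$ model morphism onto $\CBVU(\wCpo\times\wCpo,\monadwP{-})$; as the fixpoints on both sides are the Kleene least fixpoints, this morphism automatically respects recursion and iteration.

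I would then invoke the universal property of the syntax. Assigning to $\reals$ the predicate $\bigl((\RR,\RR\times\RR^k),T^{n}_{\reals}\bigr)$, where $T^{n}_{\reals}$ is the relation ``$\gamma$ differentiable and $\gamma'$ its interleaved derivative'', and assigning to each constant, each $\op$, and $\tSign$ the pair $\left(\sem{-},\semt{k}{\Dsyn{-}}\right)$, the only nontrivial check is that these pairs respect the predicates; for the operations this is exactly the identity $\semt{k}{\Dsyn{\op}}=\dDSemtra{k}{\sem{\op}}$ supplied by soundness for primitives (Lemma~\ref{lem:sound-for-primitives}). Proposition~\ref{prop:section-universal-property-syntax} then yields a unique $CBV$ model morphism $\semLR{n,k}{-}\colon\left(\SynV,\SynT,\Synfix,\Synit\right)\to\CBVU\bigl(\SUBscone{\wCpo}{\sconeFUNCTOR{n,k}},\monadLR{n,k}{-}\bigr)$. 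By initiality, $\CBVU(\forgetfulSub)\circ\semLR{n,k}{-}$ and $\left(\sem{-}\times\semt{k}{-}\right)\circ(\ID,\DSyn)$ are two $CBV$ model morphisms out of the initial model that agree on generators, hence are equal; this is precisely the commuting square of the Overview.

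Finally, commutativity says that for any $\var{1}:\ty{1}\vdash\trm{1}:\ty{2}$ the pair $\left(\sem{\trm{1}},\semt{k}{\Dsyn{\trm{1}}}\right)$ underlies a morphism of $\SUBscone{\wCpo}{\sconeFUNCTOR{n,k}}$, i.e.\ it respects the predicates. When $\ty{1},\ty{2}$ are data types, the predicate at $\ty{1}$ is shown (Section~\ref{ssec:ad-log-rel-data-types}) to contain every pair $\left(\alpha,\dDSemtra{k}{\alpha}\right)$ with $\alpha$ differentiable, so respecting the predicates gives $\left(\sem{\trm{1}}\circ\alpha,\ \semt{k}{\Dsyn{\trm{1}}}\circ\dDSemtra{k}{\alpha}\right)\in P^{n}_{\ty{2}}$ for every differentiable $\alpha\colon\RR^n\to\sem{\ty{1}}$; since for a data type $T^{n}_{\ty{2}}$ forces its first component to be differentiable with its second component the interleaved derivative, the gluing Lemma~\ref{lem:obvious-gluing} upgrades this to the assertions that $\sem{\trm{1}}$ is differentiable and $\semt{k}{\Dsyn{\trm{1}}}=\dDSemtra{k}{\sem{\trm{1}}}$. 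I expect the main obstacle to be the monad lifting of the second step: one must check that $\monadLR{n,k}{-}$ is strong, genuinely lifts $\monadwP{-}$, sends the initial object to the terminal one, and --- most delicately --- that the relations $T^{n}$ and $P^{n}$ are closed under suprema of $\omega$-chains and contain $\leastelement$, since exactly this closure forces the least-fixpoint interpretations of recursion and iteration to stay inside the logical relation; restricting the statement to data types (rather than all types) is what lets the final unwinding invoke the clean differentiability characterization of $T^{n}$.
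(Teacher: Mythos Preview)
Your proposal is correct and follows essentially the same route as the paper: the same representable $\sconeFUNCTOR{n,k}$ (up to the interleaving isomorphism $\intle{n,k}$ between $(\RR\times\RR^k)^n$ and $\RR^n\times(\RR^k)^n$), the same monad lifting $\monadLR{n,k}{-}$, the same appeal to the universal property of $(\SynV,\SynT,\Synfix,\Synit)$ to produce $\semLR{n,k}{-}$ and the commuting square, and the same unwinding at data types via the gluing Lemma~\ref{lem:obvious-gluing}. The only point to make explicit is that the argument really uses the whole family of subscones indexed by $n$: since a data type $\ty{1}$ is (up to iso) $\coprod_{r\in\KI}\reals^{s_r}$, the paper instantiates $n=s_i$ separately for each component and feeds in the identity curve, which is implicit in your final paragraph but worth stating.
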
 	 

We take the following steps to achieve this result:
\begin{itemize}
\item In Section \ref{ssec:basic-setting}, we fix a particular functor $G:\catB\to\catD$ for which to consider the scone, as well as a particular reflective subscone of the scone.
This sets the concrete stage in which our logical relations proof will take place.
\item In Section \ref{subsect:the-definition-of-the-monad-for-the-logical-relations}, we choose a particular lifting of the partiality monad to this subscone, to establish a reasoning principle for derivatives of partial functions.
\item In Section \ref{subsec:LR-assignment}, we fix a lifting of the interpretation of the primitive type $\reals$ to the subscone, establishing a reasoning principle for derivatives of real-valued functions.
We further show that, for a macro $\Dsynsymbol$ that is sound for primitives, $\sem{\Dsyn{-}}_k$ respects the logical relation, hence yields an interpretation of our full language in the subscone.
\item In Section \ref{ssec:ad-log-rel-data-types}, we show that logical relations at data types (composite types not containing function types) also capture correct differentiation.
\item In Section \ref{subsec:proof-basic-correctness-theorem}, we derive our fundamental AD correctness theorem from the interpretation of our language in the subscone and in Sections \ref{sub:forward-mode-types-correctness} and \ref{sub:reverse-mode-types-correctness} we spell out in more detail what this correctness theorem entails for the choice of semantics $\sem{\tangentreals}_k=\RR^k$.
\end{itemize}

\subsection{Fixing a particular subscone $ \SUBscone{\wCpo }{\sconeFUNCTOR{n, k} }$}\label{ssec:basic-setting}
\textit{Henceforth, we follow the notation and definitions established in Section \ref{sec:semantics-AD-transformation}.} In particular, \textit{ unless stated otherwise, the cartesian spaces $\RR ^n$ and its subspaces are endowed with the discrete $\wCpo $-structure, in which $r\leq r'$ if and only if $r=r'$.} 

For each $\left( n, k\right) \in \NN\times\left( \NN\cup\left\{ \infty \right\} \right)  $, we define the  $\wCpo $-functor \eqref{eq:Gnk-wCpo-functor}. We consider the full reflective $\wCpo$-subcategory $ \SUBscone{\wCpo }{\sconeFUNCTOR{n, k} }$ of $\wCpo\downarrow\sconeFUNCTOR{n, k} $ whose objects are triples 
\eqref{eq:triples-that-define-subscone}  such that $j$ is full (and, hence, injective on objects). 
\begin{equation} \label{eq:Gnk-wCpo-functor} 
\sconeFUNCTOR{n, k} \defeqq \ehom{\wCpo\times\wCpo}{\left( \RR^n , \left( \RR\times \RR^k\right) ^n  \right)}{  \left( - , - \right)  } : \wCpo\times \wCpo \to \wCpo 
\end{equation} 
\begin{equation}\label{eq:triples-that-define-subscone} 
	\left( D\in \wCpo ,\, \left( C, C'\right)\in \wCpo\times \wCpo,\, \left( j: D\to   \sconeFUNCTOR{n , k}\left( C, C' \right)\right)\in \wCpo \right)
\end{equation}
That is, we are considering what \cite{bcdg-open-logical-relations}  calls \emph{open logical relations} (where closed logical relations would correspond to the case of $\sconeFUNCTOR{n, k}=\ehom{\wCpo\times\wCpo}{(1,1)}{-}$).

The $\wCpo$-functor $\sconeFUNCTOR{n, k} $ together with $ \SUBscone{\wCpo }{\sconeFUNCTOR{n, k} }$ satisfies Assumption \ref{assum:subscone-assumptions}. Therefore:

\begin{proposition} \label{prop:subscone-wcpo-cartesian-Lnk}
 $\SUBscone{\wCpo }{\sconeFUNCTOR{n, k} }$ is a cocomplete $\wCpo $-cartesian closed category. Moreover, the forgetful $\wCpo$-functor
$\forgetfulSub _{n,k} : \SUBscone{\wCpo }{\sconeFUNCTOR{n, k} }\to \wCpo\times \wCpo $ 
is locally full and strictly cartesian closed. Furthermore, it strictly preserves $\wCpo$-colimits. 
\end{proposition}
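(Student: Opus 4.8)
The plan is to read this statement as the specialization of the general subscone theory of Section~\ref{sec:LogicalRelations-basic-proof} to the data $\catD=\wCpo$, $\catB=\wCpo\times\wCpo$, and $G=\sconeFUNCTOR{n,k}$. Once Assumption~\ref{assum:subscone-assumptions} is checked for this data, the entire proposition is a mechanical consequence: Corollary~\ref{coro:limits-and-colimits-subscone} yields that $\SUBscone{\wCpo}{\sconeFUNCTOR{n,k}}$ is $\wCpo$-bicartesian closed (in particular cartesian closed) and, since $\wCpo\times\wCpo$ is $\wCpo$-cocomplete, also cocomplete; and Theorem~\ref{theo:properties-forgetful-subscone} yields that $\forgetfulSub_{n,k}:\SUBscone{\wCpo}{\sconeFUNCTOR{n,k}}\to\wCpo\times\wCpo$ is locally full, strictly (bi)cartesian closed, and strictly preserves $\wCpo$-colimits. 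So the real content is the verification of Assumption~\ref{assum:subscone-assumptions}.

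First I would dispatch the routine conditions. Condition (Sub.1) holds by construction, since $\SUBscone{\wCpo}{\sconeFUNCTOR{n,k}}$ is defined to contain exactly those triples \eqref{eq:triples-that-define-subscone} with $j$ full. For (Sub.2), I would record that $\wCpo$ is $\wCpo$-bicartesian closed and locally presentable (hence complete and cocomplete), so the same holds for $\wCpo\times\wCpo$; and that $\sconeFUNCTOR{n,k}=\ehom{\wCpo\times\wCpo}{(\RR^n,(\RR\times\RR^k)^n)}{-}$ is representable, hence a right $\wCpo$-adjoint whose left adjoint is the copower $X\mapsto X\cdot(\RR^n,(\RR\times\RR^k)^n)$, which exists because $\wCpo\times\wCpo$ is $\wCpo$-tensored. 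Condition (Sub.4) follows once the reflection is seen to fix the $\wCpo\times\wCpo$-component and only adjust the predicate part.

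The crux is establishing that $\SUBscone{\wCpo}{\sconeFUNCTOR{n,k}}$ is a reflective $\wCpo$-subcategory whose idempotent monad $\monadSub$ strictly preserves $\wCpo$-products, i.e.\ condition (Sub.3). I would build the reflection by image factorization: send $(D,(C,C'),j)$ to the triple whose carrier $(C,C')$ is unchanged and whose predicate is the smallest full sub-$\omega$-cpo of $\sconeFUNCTOR{n,k}(C,C')$ through which $j$ factors (the order-theoretic image, closed under the $\omega$-suprema of $\sconeFUNCTOR{n,k}(C,C')$, equipped with the reflected order). For product preservation I would use that $\sconeFUNCTOR{n,k}$, being a right adjoint, preserves $\wCpo$-products, and that the full-image factorization in $\wCpo$ is stable under finite products, so that the predicate generated by a product of maps is the product of the generated predicates.

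The main obstacle I anticipate is precisely this reflectivity together with strict product preservation: one must check that the generated full sub-$\omega$-cpo is genuinely $\omega$-complete, that the assignment is $\wCpo$-functorial and left adjoint to the inclusion, and then that finite products of such subobjects are again of this form on the nose (strictness, not merely up to isomorphism). Everything beyond Assumption~\ref{assum:subscone-assumptions} is then a direct appeal to Corollary~\ref{coro:limits-and-colimits-subscone} and Theorem~\ref{theo:properties-forgetful-subscone}, noting in addition that cocompleteness of the base $\wCpo\times\wCpo$ transfers to the subscone through the idempotency of $\monadSub$ via Proposition~\ref{prop:limits-and-colimits-subscone}.
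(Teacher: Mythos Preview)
Your proposal is correct and follows essentially the same route as the paper: the paper simply asserts that $\sconeFUNCTOR{n,k}$ together with $\SUBscone{\wCpo}{\sconeFUNCTOR{n,k}}$ satisfies Assumption~\ref{assum:subscone-assumptions} and then invokes Corollary~\ref{coro:limits-and-colimits-subscone} and Theorem~\ref{theo:properties-forgetful-subscone}. You actually supply more detail than the paper does, sketching the verification of (Sub.1)--(Sub.4) and identifying the image-factorization reflector, whereas the paper treats this verification as evident and omits it entirely.
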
 
\begin{proof}
	It follows from Corollary \ref{coro:limits-and-colimits-subscone} and Theorem \ref{theo:properties-forgetful-subscone}. 
\end{proof}

\subsection{Lifting the partiality monad to the subscone}\label{subsect:the-definition-of-the-monad-for-the-logical-relations}
Let $(n,k)\in \NN\times \left( \NN\cup\left\{\infty \right\}\right) $. In order to get a categorical model of our language, we need to define a partiality monad for $\SUBscone{\wCpo }{\sconeFUNCTOR{n, k}}$.

We denote by  
$\topologyO{n}$ the set of proper open non-empty subsets of the cartesian space $\RR^n $. For each $U\in \topologyO{n} $,
we define
\begin{eqnarray*} 
\openLift{U, n, k} &\defeqq &\left( \left\{ \left( g: \RR ^n\to U, \dDSemtotaltra{k}{g} \right) : g \mbox{ is differentiable} \right\} , \left( U,  \intle{n,k}\left( U\times \left( \RR ^k\right)  ^n\right)  \right), \mathrm{incl.}\right) \\ 
&\in  &\SUBscone{\wCpo }{ \sconeFUNCTOR{n,k} } .
\end{eqnarray*}

We define the $\SUBscone{\wCpo }{\sconeFUNCTOR{n,k}}$-monad $\monadLR{n, k}{-}$ on $\SUBscone{\wCpo }{\sconeFUNCTOR{n,k}}$ by
\begin{equation}
	\monadLR{n, k}{D, \left( C, C'\right), j} \defeqq  \left( \unlift{\monadLR{n, k}{D, \left( C, C'\right) , j}} , \left( \monadwP{C} , \monadwP{C'}\right) , \morLRmonad{X}	
	\right) 
\end{equation}
where $\unlift{\monadLR{n, k}{D, \left( C, C'\right) , j}} $ is the union
\begin{equation} 
\left\{  \leastelement  \right\}
\sqcup 	
D
 \sqcup 
 \left( \coprod _ {U\in\topologyO{n} } 
 \ehom{\SUBscone{\wCpo }{\sconeFUNCTOR{n, k}}}{\openLift{U, n, k} }{\left( D, \left( C, C'\right) , j\right)}\right)
\end{equation} 
with the full $\wCpo$-substructure of 
$ \sconeFUNCTOR{n, k}  \left( \monadwP{C} , \monadwP{C'} \right)  $ 
induced by the inclusion $\morLRmonad{X} $ which is defined by the following components:
\begin{itemize} 
\item the inclusion $\left\{ \leastelement \right\} \to\sconeFUNCTOR{n, k}  \left( \monadwP{C} , \monadwP{C'} \right)  $ of the least morphism $\leastelement : \left( \RR ^n , \left( \RR  \times \RR ^k\right) ^n \right)\to \left( \monadwP{C}, \monadwP{C'} \right) $ in $\ehom{\wCpo\times\wCpo}{\left( \RR ^n , \left( \RR \times \RR ^k \right) ^n  \right) }{\left( \monadwP{C} , \monadwP{C'} \right) }$;
\item the inclusion of the total functions $\sconeFUNCTOR{n, k}\left( \ee _{C} , \ee _{C'} \right)\circ j : D\to \sconeFUNCTOR{n, k}  \left( C , C' \right)\to \sconeFUNCTOR{n, k}  \left( \monadwP{C} , \monadwP{C'} \right) $;
\item the injections $\displaystyle\ehom{\SUBscone{\wCpo }{\sconeFUNCTOR{n, k}}}{\openLift{U, n, k} }{\left( D, \left( C, C'\right) , j\right)}\to \sconeFUNCTOR{n, k}  \left( \monadwP{C} , \monadwP{C'} \right) $,\newline for  $U\in \topologyO{n}$, defined by
\small 
$$ \left( \alpha _0,  \alpha_1 = \left( \beta _0 : U\to C, \beta _1 : \intle{n, k}\left( U\times \left( \RR ^k \right)  ^n\right) \to C' \right)    \right)  \mapsto    \left( \lift{\beta _0} : \RR ^n\to \monadwP{C} , \lift{\beta _1} :\left( \RR \times \RR ^k\right)  ^n\to \monadwP{C'}  \right) , 
$$
\normalsize 
where 
$\lift{\beta _0} $ and  $\lift{\beta _1} $ are the respective corresponding canonical extensions.
The image of  $\morLRmonad{X} $ forms a sub-$\omega$-cpo because 
the union $\bigcup_{n\in \NN} U_n$  of open sets $U_n$ is open 
and because $D$ is an $\omega$-cpo.
\end{itemize}

For each $(C,C')\in \wCpo\times \wCpo $, the component $\left( \mm _{C} , \mm _ {C'} \right) $ and $\left( \ee _{C} , \ee _ {C'} \right) $  of the multiplication  and the unit  of the monad $\monadwP{-} $ on $\wCpo\times \wCpo $ define morphisms 
\begin{eqnarray}
\lift{\mm } _{\left( D, \left( C, C'\right), j \right) } : &\monadLR{n, k}{ \monadLR{n,k}{D, \left( C, C'\right), j}}&\to \monadLR{n, k}{D, \left( C, C'\right), j}\\
\lift{\ee }_{\left( D, \left( C, C'\right), j \right)} : &\left( D, \left( C, C'\right), j \right) &\to\monadLR{n, k}{D, \left( C, C'\right), j} . \label{eq:unit-of-the-monad-for-the-logical-relations} 
\end{eqnarray}
in $\SUBscone{\wCpo }{\sconeFUNCTOR{n, k}}$.
Therefore,  $\lift{\mm }$ and $\lift{\ee }$ define the  multiplication and the  unit for $\monadLR{n, k}{-}$, completing the definition of our monad.
Analogously, we lift, as morphisms of $\SUBscone{\wCpo}{\sconeFUNCTOR{n, k}} $, the strength of 
$\monadwP{-} $, making $\monadLR{n, k}{-}$ into a strong monad (\textit{i.e.} $\SUBscone{\wCpo}{\sconeFUNCTOR{n, k}} $-enriched monad).

In order to finish the proof that $\left( \SUBscone{\wCpo }{\sconeFUNCTOR{n, k}}, \monadLR{n, k}{-} \right) $ is a $CBV$ $\wCpo $-pair, it is enough to see that, for any pair of objects $\left( D_0, \left( C_0, C_0'\right), j_0 \right) $, $ \left( D_1, \left( C_1, C_1'\right), j_1 \right)$ of $\SUBscone{\wCpo }{\sconeFUNCTOR{n, k}}$,   the least morphism 
$\leastelement : \left( C_0 , C_0 '\right) \to \left( \monadwP{C_1}, \monadwP{C_1'} \right) ,
$
of $\ehom{\wCpo}{C_0}{\monadwP{C_1}}\times\ehom{\wCpo}{C_0'}{\monadwP{C_1'}}  $ 
defines the least morphism
$\left( D_0, \left( C_0, C_0'\right), j_0 \right)\to \monadLR{n, k}{ D_1, \left( C_1, C_1'\right), j_1 } $
in $\SUBscone{\wCpo }{\sconeFUNCTOR{n, k}}$.

Finally, since the underlying endofunctor of the monad $\monadLR{n, k}{-}$, the multiplication and the identity are clearly lifted from $\monadwP{-} $ through $\forgetfulSub _{n,k}  $ as defined above, we have:
\begin{proposition}\label{prop:the-subscone-with-the-monads-yields-a-CBV-wCPO-pair}
	For each $(n,k)\in\NN\times \left( \NN \cup \left\{ \infty \right\}\right) $, $\left(\SUBscone{\wCpo }{\sconeFUNCTOR{n, k}}, \monadLR{n, k}{-} \right) $ is a $CBV$ $\wCpo $-pair. Moreover,
	$\forgetfulSub _{n,k} : \SUBscone{\wCpo }{\sconeFUNCTOR{n,k} }\to \wCpo\times \wCpo$
	is a $CBV $ $\wCpo $-pair morphism between $\left(\SUBscone{\wCpo }{\sconeFUNCTOR{n, k}}, \monadLR{n, k}{-} \right) $ and $\left(\wCpo\times \wCpo, \monadwP{-}\right) $.
\end{proposition}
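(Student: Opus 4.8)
The plan is to verify the three defining ingredients of a $CBV$ $\wCpo$-pair in the sense of Definition~\ref{def:CBV-mode-wcpo-pair} — $\wCpo$-bicartesian closure of the base category, a strong monad structure, and pointedness of the relevant Kleisli hom-objects — and then to read off the morphism claim from the fact that every piece of structure has been lifted strictly through $\forgetfulSub _{n,k}$. Two of these ingredients are already in hand. The base $\SUBscone{\wCpo }{\sconeFUNCTOR{n, k}}$ is a cocomplete $\wCpo$-cartesian closed category by Proposition~\ref{prop:subscone-wcpo-cartesian-Lnk} (equivalently Corollary~\ref{coro:limits-and-colimits-subscone}), hence $\wCpo$-bicartesian closed; and the strong monad structure of $\monadLR{n, k}{-}$ was exhibited in the preceding discussion, where $\lift{\mm }$, $\lift{\ee }$ and the lifted strength were given as morphisms of $\SUBscone{\wCpo }{\sconeFUNCTOR{n, k}}$. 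So the only clause of Definition~\ref{def:CBV-mode-wcpo-pair} still to check is that $\ehom{\SUBscone{\wCpo }{\sconeFUNCTOR{n, k}}}{W}{\monadLR{n, k}{Y}}$ is a pointed \wcpo{} for all objects $W,Y$, and recall that $\left(\wCpo\times\wCpo,\monadwP{-}\right)$ is itself a $CBV$ $\wCpo$-pair (as the square of the basic model).

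For pointedness I would fix $W=\left(D_0,(C_0,C_0'),j_0\right)$ and $Y=\left(D_1,(C_1,C_1'),j_1\right)$ and exploit local fullness of $\forgetfulSub _{n,k}$ (Proposition~\ref{prop:subscone-wcpo-cartesian-Lnk}). Since a locally full $\wCpo$-functor acts on homs by an order-embedding, the order on $\ehom{\SUBscone{\wCpo }{\sconeFUNCTOR{n, k}}}{W}{\monadLR{n, k}{Y}}$ is precisely the one reflected from $\ehom{\wCpo\times\wCpo}{(C_0,C_0')}{\left(\monadwP{C_1},\monadwP{C_1'}\right)}$; it therefore suffices to show that the least morphism $\leastelement$ of the latter defines a morphism in the subscone, for then — being below everything downstairs and the embedding being order-reflecting — it is the least element upstairs. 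By the proof-irrelevance criterion of Remark~\ref{rem:defining-morphism-subscone}, this reduces to checking that $\sconeFUNCTOR{n, k}(\leastelement)\circ j_0$ lands inside the predicate $\unlift{\monadLR{n, k}{Y}}$. This is exactly the verification flagged just before the statement: for any $p$ in $\sconeFUNCTOR{n, k}(C_0,C_0')$ the composite $\leastelement\circ p$ is the constant-$\leastelement$ map, which is the distinguished point sitting in the first summand $\left\{\leastelement\right\}$ of $\unlift{\monadLR{n, k}{Y}}$; hence $\leastelement$ respects the relation and determines the required least morphism $W\to\monadLR{n, k}{Y}$.

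Finally, for the morphism claim I would use that the entire monad was defined as a lifting along $\forgetfulSub _{n,k}$. By Theorem~\ref{theo:properties-forgetful-subscone} together with Proposition~\ref{prop:subscone-wcpo-cartesian-Lnk}, $\forgetfulSub _{n,k}$ is strictly bicartesian closed; and since the endofunctor, unit and multiplication of $\monadLR{n, k}{-}$ were constructed as the images of those of $\monadwP{-}$, the equation $\forgetfulSub _{n,k}\circ\monadLR{n, k}{-}=\monadwP{-}\circ\forgetfulSub _{n,k}$ holds on the nose and $\left(\forgetfulSub _{n,k},\ID\right)$ is a strict monad op-functor, making $\forgetfulSub _{n,k}$ a $CBV$ pair morphism. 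Its action on homs sends the least element produced in the previous paragraph (the morphism defined by $\leastelement$) to $\leastelement$ itself, the least element of the target hom, so it is a pointed $\wCpo$-morphism on Kleisli homs for all $(W,Y)\in\obb{\SUBscone{\wCpo }{\sconeFUNCTOR{n, k}}}\times\obb{\SUBscone{\wCpo }{\sconeFUNCTOR{n, k}}}$, completing the $CBV$ $\wCpo$-pair morphism requirement. The only step carrying genuine content is the pointedness verification — that the least morphism factors through the partiality predicate $\unlift{\monadLR{n, k}{-}}$ — and this is precisely where the deliberate inclusion of the summand $\left\{\leastelement\right\}$ into the definition of the monad pays off; everything else is bookkeeping around the strict lifting through $\forgetfulSub _{n,k}$.
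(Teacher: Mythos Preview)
Your proposal is correct and follows essentially the same approach as the paper: the paper treats the proposition as a summary of the preceding discussion, flagging exactly the pointedness check (that the least morphism downstairs defines a morphism in the subscone via the $\{\leastelement\}$ summand) as the only remaining verification, and then observing that the monad data are lifted strictly through $\forgetfulSub_{n,k}$. Your write-up simply spells out in more detail why local fullness makes that least element the least element upstairs and why this gives a pointed $\wCpo$-morphism on Kleisli homs.
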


Therefore, by Lemma \ref{UnderlyingCBVmodel}, $\CBVU\left( \forgetfulSub _{n, k} \right)  $ is a $CBV $ model morphism between the underlying $CBV$ models of 
$\left(\SUBscone{\wCpo }{\sconeFUNCTOR{n, k}}, \monadLR{n, k}{-} \right) $ and $\left(\wCpo\times \wCpo, \monadwP{-}\right) $.

\subsection{Logical relations for $\reals$
and deriving a $CBV$ model morphism}\label{subsec:LR-assignment}

\textit{Henceforth, we assume that the macro $\Dsynsymbol $ is sound for primitives (see Definition~\ref{sub:sound-for-primitives}).}
We establish the $CBV$ model morphism \eqref{eq:CBV-model-morphism-that-gives-the-logical-relations}. 
We start by establishing the logical relations' assignment.

\textit{Let $(n, k)\in\NN\times\left( \NN \cup\left\{ \infty \right\}\right)  $.} We define the object \eqref{assig-object-LR} in $\SUBscone{\wCpo }{\sconeFUNCTOR{n, k }}$.
\begin{equation}\label{assig-object-LR}
	\semLR{n,k}{\reals }\defeqq   \left(\left\{ \left( f : \RR ^n\to \RR , \secondM{f} \right) : f\mbox{ is differentiable, } \secondM{f} =  \dDSemtotaltra{k}{f} \right\}, \left( \RR , \RR \times \RR ^k \right), \mathrm{incl.}   \right)  
\end{equation}
For each $m\in\NN $, $\op\in\Op _m $ and $c\in \RRsyntax $,  we define the morphisms \eqref{defeq:LR-sign}, \eqref{defeq:LR-cnst} and \eqref{defeq:LR-op} in $\wCpo\times\wCpo $, in which 
$\DSyn$, $\sem{-}$ and $\semt{k}{-} $ are the functors underlying the $CBV$ model morphisms respectively defined in \eqref{eq:macro-as-a-functor}, \eqref{eq:functor-semantics} and \eqref{eq:semantics-of-the-target-as-a-functor}. 
\small 
\begin{eqnarray}    
\semLRG{k}{\tSign }&\defeqq & \left( \signR ,   \dDSemtra{k}{\signR } \right) = \left( \signR , \semt{k}{\DSyn\left( \tSign  \right)  }\right) : \left( \RR , \RR\times\RR ^k  \right) \to \left( \monadwP{\terminall\sqcup\terminall }, \monadwP{\terminall\sqcup \terminall } \right)  \label{defeq:LR-sign} \\
\semLRG{k}{\cnst c }&\defeqq & \left( \semanc{c} , \dDSemtra{k}{\semanc{c} }  \right)  : \left( \terminall , \terminall \right) \to \left( \RR , \RR\times\RR ^k \right)  \label{defeq:LR-cnst}\\
\semLRG{k}{\op}&\defeqq & \left( \sem{\op}, \dDSemtra{k}{\sem{\op} }  \right) : \left( \RR ^m, \left(\RR\times \RR^k\right)^m \right) \to \left( \monadwP{\RR } , \monadwP{\RR\times\RR ^k } \right)\label{defeq:LR-op}
\end{eqnarray} \normalsize
By Proposition~\ref{prop:limits-and-colimits-subscone}, 
we have that the product $\semLR{n,k}{\reals  }^m$ in $\SUBscone{\wCpo }{\sconeFUNCTOR{n, k}}$ is given by \eqref{eq:product-LR}. Therefore, \textit{by the chain rule for derivatives}, we have that \eqref{defeq:LR-sign}, \eqref{defeq:LR-cnst} and \eqref{defeq:LR-op}   respectively define the morphisms \eqref{LR:assig-tSign}, \eqref{LR:assig-cnst}, and \eqref{LR:assig-op} in $\SUBscone{\wCpo }{\sconeFUNCTOR{n,k}}$, where $\lift{\terminall}\sqcup\lift{\terminall} $ denotes the coproduct of the terminal $\lift{\terminall} = \left( \terminall , \left( \terminall , \terminall  \right) , \ID \right) $ with itself.
\begin{eqnarray}
	&& \left(\left\{ \left( f_j : \RR ^n\to \RR , \secondM{f}_j\right)_{j\in \NNN{m} } : \secondM{f}_j\mbox{ is differentiable and } \secondM{f}_j =  \dDSemtotaltra{k}{f_j} , \forall j\in \NNN{m} \right\}, \left( \RR , \RR\times \RR^k \right) ^m , \mathrm{incl.}   \right) \nonumber \\
	&& \cong \left(\left\{ \left( f : \RR ^n\to \RR ^m , \secondM{f}\right) : f\mbox{ is differentiable, } \secondM{f} =  \dDSemtotaltra{k}{f} \right\}, \left( \RR ^m , \left(\RR\times \RR ^k\right) ^m \right), \mathrm{incl.}   \right) .\label{eq:product-LR}
\end{eqnarray}
 \\
\noindent\begin{minipage}{.5\linewidth}	
	\begin{equation}\label{LR:assig-tSign}
		\semLR{n, k}{\tSign } : \semLR{n, k}{\reals  } \to  \monadLR{n, k}{\lift{\terminall }\sqcup\lift{\terminall}}
	\end{equation} 
	\normalsize
\end{minipage}%
\begin{minipage}{.5\linewidth}
	\begin{equation}\label{LR:assig-cnst}
		\semLR{n,k}{\cnst c } : \lift{\terminall}  \to  \semLR{n,k}{\reals  }
	\end{equation} 
\end{minipage} 
\begin{equation}\label{LR:assig-op}
	\semLR{n,k}{\op } : \semLR{n, k}{\reals  }^m \to \monadLR{n,k}{\semLR{n,k}{\reals  }} 
\end{equation} 
By the universal property of the $CBV$ model $\left( \SynV, \SynT,\Synfix , \Synit   \right) $, we get:
\begin{proposition}\label{prop:semantics-of-the-LR}
For each $(n, k)\in\NN\times\left( \NN \cup\left\{ \infty \right\}\right)  $, there is only one $CBV$ model morphism 	
\begin{equation}\label{eq:CBV-model-morphism-that-gives-the-logical-relations}
	\semLR{n,k}{-}:\left( \SynVt, \SynTt,\Synfixt , \Synitt   \right)\to \CBVU\left( \SUBscone{\wCpo }{\sconeFUNCTOR{n, k}},  \monadLR{n, k}{-}    \right) 
\end{equation}
that is consistent with the assignment given by \eqref{assig-object-LR}, \eqref{LR:assig-tSign},
\eqref{LR:assig-op},  and \eqref{LR:assig-cnst}.
Moreover, Diag.~\eqref{eq:basic-diagram-commutativity-logical-relations} commutes. 
\begin{equation}\label{eq:basic-diagram-commutativity-logical-relations}
	\diag{basic-logicalrelations}
\end{equation} 
\end{proposition}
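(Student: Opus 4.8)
The plan is to derive both assertions from the two universal properties recorded in Proposition~\ref{prop:section-universal-property-syntax}, leveraging the fact (Proposition~\ref{prop:the-subscone-with-the-monads-yields-a-CBV-wCPO-pair}) that $\left(\SUBscone{\wCpo}{\sconeFUNCTOR{n,k}},\monadLR{n,k}{-}\right)$ is a $CBV$ $\wCpo$-pair, so that by Lemma~\ref{UnderlyingCBVmodel} its image under $\CBVU$ is a genuine $CBV$ model capable of interpreting our language. For the existence and uniqueness of \eqref{eq:CBV-model-morphism-that-gives-the-logical-relations}, I would invoke part~(2) of Proposition~\ref{prop:section-universal-property-syntax}: it suffices to exhibit a consistent assignment of the generators into $\CBVU\left(\SUBscone{\wCpo}{\sconeFUNCTOR{n,k}},\monadLR{n,k}{-}\right)$. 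The object $\semLR{n,k}{\reals}$ is supplied by \eqref{assig-object-LR} and the generator morphisms for $\tSign$, $\op$, $\cnst c$ by \eqref{LR:assig-tSign}, \eqref{LR:assig-op}, \eqref{LR:assig-cnst}; to these we must adjoin the evident assignment on the target-only generators $\tangentreals,\cnzero,\tangentprojection{i}{-},+,\ast$ (interpreting $\tangentreals$ by the relation whose underlying pair is $\left(\RR^k,\RR^k\right)$ and the operations by the corresponding vector-space morphisms), which respect the relation because these maps coincide with their own derivatives up to the linear bookkeeping of Definition~\ref{def:total-derivative}. Uniqueness is then immediate from the uniqueness clause of the universal property.

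The real content of this first half is the consistency check: that \eqref{LR:assig-tSign}, \eqref{LR:assig-op}, \eqref{LR:assig-cnst} genuinely define morphisms of $\SUBscone{\wCpo}{\sconeFUNCTOR{n,k}}$, i.e.\ that the chosen pairs of $\wCpo\times\wCpo$-morphisms respect the predicates. By \eqref{eq:product-LR} a predicate element of $\semLR{n,k}{\reals}^m$ is an $m$-tuple packaged as $\left(f,\secondM{f}\right)$ with $f:\RR^n\to\RR^m$ differentiable and $\secondM{f}=\dDSemtotaltra{k}{f}$; postcomposing with $\left(\sem{\op},\dDSemtra{k}{\sem{\op}}\right)$ and using Assumption~\ref{ass:differentiable-functions} together with the chain rule — in the packaged form of Lemma~\ref{lem:obvious-gluing} — shows that the result is a partial differentiable map paired with its $k$-interleaved derivative, whose domain is open, hence an element of the predicate underlying $\monadLR{n,k}{\semLR{n,k}{\reals}}$ as spelled out in Section~\ref{subsect:the-definition-of-the-monad-for-the-logical-relations}. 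The cases of $\tSign$ and $\cnst c$ are analogous and simpler. Throughout, soundness for primitives is what identifies the second component $\dDSemtra{k}{\sem{\op}}$ with $\semt{k}{\Dsyn{\op}}$, so that each generator assignment is literally the restriction of $\left(\sem{-},\semt{k}{-}\right)$ along $\forgetfulSub_{n,k}$.

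For the commutativity of \eqref{eq:basic-diagram-commutativity-logical-relations}, I would argue by the uniqueness clause of part~(1) of Proposition~\ref{prop:section-universal-property-syntax} applied to the source $CBV$ model $\left(\SynV,\SynT,\Synfix,\Synit\right)$. Both composites around the square restrict to $CBV$ model morphisms into $\CBVU\left(\wCpo\times\wCpo,\monadwP{-}\right)$: the lower-left leg because $\CBVU(\forgetfulSub_{n,k})$ is a $CBV$ model morphism (Proposition~\ref{prop:the-subscone-with-the-monads-yields-a-CBV-wCPO-pair} and Lemma~\ref{UnderlyingCBVmodel}), and the upper-right leg because $(\ID,\DSyn)$ and $\sem{-}\times\semt{k}{-}$ are. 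By the universal property it therefore suffices to check agreement on the four source generators. On $\reals$ both give $\left(\RR,\RR\times\RR^k\right)=\left(\sem{\reals},\semt{k}{\Dsyn{\reals}}\right)$; on $\op$ the left leg gives $\left(\sem{\op},\dDSemtra{k}{\sem{\op}}\right)$, which equals $\left(\sem{\op},\semt{k}{\Dsyn{\op}}\right)$ by soundness for primitives, matching the right leg; the cases $\tSign$ and $\cnst c$ are identical. Hence the two morphisms coincide and the square commutes.

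I expect the genuine mathematical obstacle to be concentrated entirely in the consistency check of the second paragraph: that the denotations of the primitive operations and of $\tSign$ transport the open-domain, chain-rule-closed predicates of $\semLR{n,k}{\reals}^m$ into the predicate built into the lifted partiality monad $\monadLR{n,k}{-}$, with the derivative component landing exactly on $\dDSemtra{k}{\sem{\op}}$. Everything else — existence, uniqueness, and commutativity — is a formal consequence of the universal properties and the already-established $CBV$ $\wCpo$-pair structure of the subscone, so the only place where analysis (differentiability, openness of preimages, the chain rule) enters is in verifying that \eqref{LR:assig-op} and \eqref{LR:assig-tSign} really land in the subscone.
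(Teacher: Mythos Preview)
Your proposal is correct and follows essentially the same approach as the paper: invoke the universal property of the syntactic $CBV$ model (Proposition~\ref{prop:section-universal-property-syntax}) both for existence/uniqueness of $\semLR{n,k}{-}$ and for the commutativity of \eqref{eq:basic-diagram-commutativity-logical-relations}, the latter by checking that the two composites agree on the generators $\reals,\cnst c,\op,\tSign$ via soundness for primitives. The paper's proof is terser only because the consistency check you isolate in your second paragraph---that \eqref{defeq:LR-sign}, \eqref{defeq:LR-cnst}, \eqref{defeq:LR-op} define morphisms of the subscone---is carried out in the text preceding the proposition (the sentence ``by the chain rule for derivatives, we have that \ldots\ define the morphisms \eqref{LR:assig-tSign}, \eqref{LR:assig-cnst}, and \eqref{LR:assig-op}''), so the proof proper only records the agreement on generators.
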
 	
\begin{proof}
	Both $\left(\sem{-}\times\semt{k}{-}\right)\circ\left( \ID\times\DSyn\right) $ and $\CBVU\left({\forgetfulSub}_{n,k}\right)\circ\semLR{n,k}{-} $
	yield  $CBV$ model morphisms that are consistent with the assignment given by the object $\left( \RR , \RR\times\RR ^k  \right) $ together with
	\eqref{defeq:LR-sign}, \eqref{defeq:LR-cnst} and \eqref{defeq:LR-op}. 
\end{proof}

\subsection{AD Logical Relations for Data Types}\label{ssec:ad-log-rel-data-types}
As a consequence of Proposition \ref{prop:semantics-of-the-LR}, we establish a fundamental result on the logical relations $\semLR{n,k}{-}$ for data types (i.e., types not containing function types) in our setting: namely, Proposition \ref{prop:the-LR-general-case}.
Observe that, by distributivity of products over coproducts, any such data type is isomorphic to $\bigsqcup_{j\in L}\reals^{l_j}$
for some finite set $L$ and $l_j\in\NN$.
Therefore, we start by establishing Lemma \ref{lem:the-LR-case-of-total-functions} about our  logical relations and the coproducts in $\SUBscone{\wCpo }{\sconeFUNCTOR{{n}, k}}$.

\begin{lemma}\label{lem:the-LR-case-of-total-functions}
	Let $\left( n, k\right) \in \NN \times \left( \NN \cup\left\{\infty \right\}\right)  $. If $\displaystyle \left( g, \dot{g}\right) \in\coprod _{j\in L} \semLR{n,k}{\reals  } ^{l_j} $, then $\displaystyle g : \RR ^n\to\coprod _{j\in L} \RR ^{l_j}  $ is differentiable and $\dot{g} = \dDSemtotaltra{k}{g} $.
\end{lemma}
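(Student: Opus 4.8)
The plan is to compute the coproduct $\coprod_{j\in L}\semLR{n,k}{\reals}^{l_j}$ explicitly in $\SUBscone{\wCpo}{\sconeFUNCTOR{n,k}}$ and simply read off its underlying predicate. First I would invoke Proposition~\ref{prop:subscone-wcpo-cartesian-Lnk}: since $\forgetfulSub_{n,k}$ strictly preserves $\wCpo$-colimits, the $\wCpo\times\wCpo$-component of the coproduct is $\coprod_{j\in L}\left(\RR^{l_j},(\RR\times\RR^k)^{l_j}\right)=\left(\coprod_{j\in L}\RR^{l_j},\coprod_{j\in L}(\RR\times\RR^k)^{l_j}\right)$. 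Because the subscone is reflective in the scone $\wCpo\downarrow\sconeFUNCTOR{n,k}$, and the forgetful functor out of the scone creates colimits (Theorem~\ref{theo:fundamental-scone-monadic-comonadic}), the coproduct is obtained by forming the coproduct $\left(\coprod_j D_j,\coprod_j(C_j,C_j'),\mu\right)$ in the scone — where $D_j$ is the predicate of $\semLR{n,k}{\reals}^{l_j}$ computed in \eqref{eq:product-LR} — and then reflecting along $\monadSub$. The cotupling $\mu$ restricts on the $j$-th summand to $\sconeFUNCTOR{n,k}(\coproje{j})\circ\mathrm{incl}_j$, so its image is generated by the elements $\left(\coproje{j}\circ f,\coproje{j}\circ\dDSemtotaltra{k}{f}\right)$ with $f:\RR^n\to\RR^{l_j}$ differentiable.

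Next I would observe that the reflection is vacuous here because every $\omega$-cpo in sight is discrete. Indeed $\RR$, $\RR^k$ and hence $\RR\times\RR^k$ carry the discrete order, so $\RR^n$ and $(\RR\times\RR^k)^n$ are discrete; and a coproduct in $\wCpo$ of discrete $\omega$-cpos is again discrete, since elements of distinct summands are incomparable. Consequently $\sconeFUNCTOR{n,k}\left(\coprod_j\RR^{l_j},\coprod_j(\RR\times\RR^k)^{l_j}\right)$ is discrete, each $D_j$ is discrete, and $\coprod_j D_j$ is discrete. A $\wCpo$-morphism between discrete $\omega$-cpos is full exactly when it is injective, and $\mu$ is injective: $\coproje{j}\circ f=\coproje{j'}\circ f'$ forces $j=j'$ and $f=f'$, recalling that $\dDSemtotaltra{k}{f}$ is determined by $f$. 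Hence the scone coproduct already lies in $\SUBscone{\wCpo}{\sconeFUNCTOR{n,k}}$, $\monadSub$ fixes it, and its predicate is precisely the image of $\mu$, namely $\left\{\left(\coproje{j}\circ f,\coproje{j}\circ\dDSemtotaltra{k}{f}\right):j\in L,\ f:\RR^n\to\RR^{l_j}\text{ differentiable}\right\}$.

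It then remains to verify that every such element satisfies the conclusion. Given $(g,\dot g)$ in the coproduct predicate, we have $g=\coproje{j_0}\circ f$ and $\dot g=\coproje{j_0}\circ\dDSemtotaltra{k}{f}$ for a differentiable $f:\RR^n\to\RR^{l_{j_0}}$. Differentiability of $g$ in the sense of Definition~\ref{def:total-derivative} is immediate: $g^{-1}(\RR^{l_{j_0}})=\RR^n$ is open and $g$ corestricts there to $f$, while $g^{-1}(\RR^{l_j})=\emptyset$ for $j\neq j_0$ is trivially differentiable by Remark~\ref{eq:the-empty-nowhere-defined-case}. Unwinding Definition~\ref{def:total-derivative} on the single nonempty branch $W_{j_0}=\RR^n$ then yields $\dDSemtotaltra{k}{g}=\coproje{j_0}\circ\dDSemtotaltra{k}{f}=\dot g$; alternatively this is a one-line application of Lemma~\ref{lem:obvious-gluing}. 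This finishes the argument.

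The only genuinely delicate step is pinning down the coproduct predicate (the end of the first paragraph together with the second): one must be sure that no new elements arise either from $\mu$ failing to be full or from closing under $\omega$-suprema during the reflection. The discreteness of the ambient $\omega$-cpos is exactly what trivializes both concerns, collapsing the coproduct to the naive disjoint union of its single-branch elements; everything afterwards is a direct unwinding of the definition of the derivative. It is worth noting in passing that, by connectedness of $\RR^n$, a differentiable map into $\coprod_j\RR^{l_j}$ is automatically single-branch, so the predicate in fact coincides with all differentiable maps together with their derivatives — though only the stated inclusion is needed here.
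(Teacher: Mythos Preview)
Your proof is correct and follows essentially the same route as the paper: identify that an element of the coproduct predicate must factor through a single summand, then apply Definition~\ref{def:total-derivative}. The paper's proof asserts this factorisation without justification (``we can conclude that\ldots''), whereas you supply the details---computing the scone coproduct via Theorem~\ref{theo:fundamental-scone-monadic-comonadic}, and using discreteness to show the cotupling is already full so that no reflection is needed---which is exactly what makes the assertion true.
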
 	
\begin{proof}
	By Proposition \ref{prop:subscone-wcpo-cartesian-Lnk}, $\SUBscone{\wCpo }{\sconeFUNCTOR{{n}, k}}$ has coproducts. Moreover, we can conclude that $\displaystyle \left( g, \dot{g}\right) \in\coprod _{j\in L} \semLR{n,k}{\reals  } ^{l_j} $  implies that, for some $r\in L $, we have a pair 
	\begin{equation} 
		\displaystyle\left(\unlift{g} : \RR^{n} \to\RR ^{l_r}, \dDSemtotaltra{k}{g} : \left(\RR\times \RR ^k\right)  ^{n}\to \left(\RR\times \RR ^k\right)  ^{l_r} \right) 
	\end{equation} 
	such that  $\left( g, \dot{g}\right) = \left( \coproje{\RR ^{l_r} }\circ\unlift{g}, \coproje{\left(\RR\times \RR ^k\right)  ^{l_r} }\circ\dDSemtotaltra{k}{g} \right) $.
	Following Definition~\ref{def:total-derivative}, this completes our proof.
\end{proof}
\begin{proposition}\label{prop:the-LR-general-case}
	Let $\left( n, k\right) \in \NN \times \left( \NN \cup\left\{\infty \right\}\right)  $. If $\displaystyle \left( g, \dot{g}\right) \in\unlift{\monadLR{n,k}{ \coprod _{j\in L} \semLR{n,k}{\reals  } ^{l_j} }} $, then $\displaystyle g : \RR ^n\to\monadwP{\coprod _{j\in L} \RR ^{l_j}}  $ is differentiable and $\dot{g} = \dDSemtra{k}{g} $.
\end{proposition}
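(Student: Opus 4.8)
The plan is to argue by a case analysis that exploits the explicit construction of the monad $\monadLR{n, k}{-}$. By definition, the underlying $\omega$-cpo $\unlift{\monadLR{n, k}{\coprod _{j\in L} \semLR{n,k}{\reals}^{l_j}}}$ is, via the inclusion $\morLRmonad{X}$, the image of the three families making up its defining union. Writing $(C,C')=\left(\coprod_{j\in L}\RR^{l_j},\coprod_{j\in L}(\RR\times\RR^k)^{l_j}\right)$ and $D$ for the predicate underlying $\coprod_{j\in L}\semLR{n,k}{\reals}^{l_j}$, any $(g,\dot g)$ in this $\omega$-cpo is therefore either (i) the least element, (ii) the image under $\sconeFUNCTOR{n,k}(\ee_C,\ee_{C'})$ of an element of $D$, or (iii) the image of a subscone morphism $\openLift{U, n, k}\to \coprod_{j\in L}\semLR{n,k}{\reals}^{l_j}$ for some proper nonempty open $U\in\topologyO{n}$. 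In each case I would compute $g^{-1}(C)$, check differentiability of the associated total map $\unlift{g}$, and compare $\dot g$ with the canonical extension that defines $\dDSemtra{k}{g}$.

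Cases (i) and (ii) are immediate. In case (i), $g=\leastelement$ has empty domain of definition $g^{-1}(C)=\emptyset$, which is open; the associated total map is the unique map out of $\emptyset$, trivially differentiable by Remark~\ref{eq:the-empty-nowhere-defined-case}, and $\dDSemtra{k}{g}$ is then $\leastelement$ everywhere, matching $\dot g$. In case (ii), $g=\ee_C\circ\phi$ for some element $(\phi,\dot\phi)\in D$; here $g^{-1}(C)=\RR^n$, and Lemma~\ref{lem:the-LR-case-of-total-functions} gives that $\phi$ is differentiable with $\dot\phi=\dDSemtotaltra{k}{\phi}$, so $g$ is differentiable and its canonical derivative extension $\dDSemtra{k}{g}=\ee\circ\dDSemtotaltra{k}{\phi}=\dot g$.

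The substantive case, and the main obstacle, is case (iii), where $(g,\dot g)=(\lift{\beta_0},\lift{\beta_1})$ arises from a subscone morphism with $\catB$-component $\left(\beta_0:U\to C,\ \beta_1:\intle{n, k}(U\times(\RR^k)^n)\to C'\right)$. The key observation is that being such a morphism means exactly that for every differentiable $h:\RR^n\to U$ (i.e. every point of the predicate underlying $\openLift{U, n, k}$) the pair $(\beta_0\circ h,\ \beta_1\circ\dDSemtotaltra{k}{h})$ lies in $D$; by Lemma~\ref{lem:the-LR-case-of-total-functions} this says precisely that $\beta_0\circ h$ is differentiable and $\beta_1\circ\dDSemtotaltra{k}{h}=\dDSemtotaltra{k}{(\beta_0\circ h)}$. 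Since this holds for all differentiable $h:\RR^n\to U$, the gluing Lemma~\ref{lem:obvious-gluing} applies verbatim (with its $g:=\beta_0$ and $\dot g:=\beta_1$), yielding that $\beta_0:U\to C$ is differentiable with $\beta_1=\dDSemtotaltra{k}{\beta_0}$. Finally, since $g=\lift{\beta_0}$ has open domain $g^{-1}(C)=U$ and $\unlift{g}=\beta_0$ is differentiable, $g$ is a differentiable partial map, and $\dDSemtra{k}{g}$—being by definition the extension of $\dDSemtotaltra{k}{\beta_0}=\beta_1$ by $\leastelement$—coincides with $\lift{\beta_1}=\dot g$. I expect the only delicate points to be bookkeeping: confirming that the three-way decomposition is exhaustive, and that the quantifier ``for all differentiable $h:\RR^n\to U$'' in the morphism condition matches exactly the hypothesis of Lemma~\ref{lem:obvious-gluing}.
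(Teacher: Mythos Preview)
Your proposal is correct and follows essentially the same route as the paper: the same three-way case split on the definition of $\unlift{\monadLR{n,k}{-}}$, with the trivial case handled via Remark~\ref{eq:the-empty-nowhere-defined-case}, the total case via Lemma~\ref{lem:the-LR-case-of-total-functions}, and the partial case via Lemma~\ref{lem:the-LR-case-of-total-functions} followed by Lemma~\ref{lem:obvious-gluing}. Your bookkeeping concerns are exactly the ones the paper resolves implicitly, and they go through as you anticipate.
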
 	
\begin{proof}
	Indeed, by the definition of $\unlift{\monadLR{{n},k}{ - }}$, we have one of the following situations.
	\begin{enumerate}[\textbf{s}1.]
		\item \label{proofLR:case1-nowhere-defined}	$g$ and $\dot{g}  $ are the least morphisms, that is to say, they are constantly equal to $\leastelement $;
		\item \label{proofLR:case1-totalfunctions} the pair $\left( g, \dot{g}\right) $ come from a pair of total functions  $\left( \unlift{g}, \unlift{\dot{g}} \right) \in \displaystyle\coprod _{j\in L} \semLR{n,k}{\reals  } ^{l_j}$;
		\item \label{proofLR:case1-nontotalfunctions} $\displaystyle g^{-1}\left( \coprod _{j\in L} \RR ^{l _j} \right) = W$ is open. Moreover, denoting by \eqref{eq:notation-for-the-corresponding-total-gs} 	the pair consisting of the corresponding total functions, we have that \eqref{eq:alphas-differentiable} holds for any differentiable map $ \alpha : \RR ^n \to W  $.
	\end{enumerate}	
	\noindent\begin{minipage}{.4\linewidth}
		\begin{equation} \label{eq:notation-for-the-corresponding-total-gs}
			\displaystyle\left( \unlift{g} : W\to\left( \coprod _{j\in L} \RR ^{l _j} \right)  , \, \unlift{\dot{g}}  \right)
		\end{equation}   	
	\end{minipage}
	\begin{minipage}{.6\linewidth}
		\begin{equation} \label{eq:alphas-differentiable}
			\left( \unlift{g} \circ \alpha , \,  \unlift{\dot{g} } \circ \dDSemtotaltra{k}{\alpha} \right)\in\coprod _{j\in L} \semLR{n,k}{\reals  } ^{l_j}  .  
		\end{equation} 
	\end{minipage} 	\\
	If \eqref{proofLR:case1-nowhere-defined} holds,  following Definition~\ref{def:partial-maps-differentiable-and-derivative}, we get that $g$ is differentiable and  $\dot{g} = \dDSemtra{k}{g} $ by Remark \ref{eq:the-empty-nowhere-defined-case}. 
	
	In case of \eqref{proofLR:case1-totalfunctions},
	we get $\unlift{g}$ is differentiable and  $\unlift{\dot{g}}  = \dDSemtotaltra{k}{\unlift{g}} $ by Lemma \ref{lem:the-LR-case-of-total-functions}. Hence $g$ is differentiable and $\dot{g} = \dDSemtra{k}{g} $.
	
	Finally, in case of   \eqref{proofLR:case1-nontotalfunctions}, by Lemma \ref{lem:the-LR-case-of-total-functions}, we get that, for any differentiable $ \alpha : \RR ^n \to W  $,  $\unlift{g} \circ \alpha$ is differentiable and
	$\unlift{\dot{g} } \circ \dDSemtotaltra{k}{\alpha} 	$ is well defined and equal to  $\dDSemtotaltra{k}{\left( \unlift{ {g} } \circ \alpha\right) } 	$. By Lemma \ref{lem:obvious-gluing}, this implies that $\unlift{g}$ is differentiable and $\dDSemtotaltra{k}{ \unlift{g }  } = \unlift{\dot{g} } $. Following Definition~\ref{def:partial-maps-differentiable-and-derivative}, this completes the proof that  $g$ is differentiable and  $\dot{g} = \dDSemtra{k}{g} $.
\end{proof}
\begin{corollary}\label{coro:fundamental-LR-conclusion-about-morphisms-subscone}
	Let $k\in\NN\cup\left\{\infty\right\}$. If, for each $i\in \KI$, the morphism   $	\left(g, \dot{g} \right) $ in $\wCpo\times\wCpo $ defines the morphism \eqref{eq:morphism-CBV-wCPO-pair-Subscone-defined} in $\SUBscone{\wCpo }{\sconeFUNCTOR{{s_i}, k}}$, then $\displaystyle g : \coprod _{r\in \KI} \RR ^{s_r}\to\monadwP{\coprod _{j\in L} \RR ^{l_j}}  $ is differentiable and $\dot{g} = \dDSemtra{k}{g} $.\\
\noindent\begin{minipage}{.55\linewidth}
	\begin{equation} \label{eq:morphism-CBV-wCPO-pair-Subscone-defined}
	\mathtt{g} : \coprod _{r\in \KI } \semLR{{s_i},k}{\reals } ^{s _r}  \to \monadLR{{s_i}, k}{ \coprod _{j\in L}\semLR{{s_i},k}{\reals }^{l _j} }
\end{equation} 
\end{minipage}%
\noindent\begin{minipage}{.45\linewidth}
	\begin{equation} \label{eq:LR-the-coprojection-LR}
	\coproje{i} : \semLR{{s_i},k}{\reals } ^{s _i}\to \coprod _{r\in K} \semLR{{s_i},k}{\reals } ^{s _r}
\end{equation} 
\end{minipage}	
\end{corollary}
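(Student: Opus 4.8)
The plan is to reduce the claim to Proposition~\ref{prop:the-LR-general-case} by treating one coproduct component of the domain at a time. Recall from Definition~\ref{def:partial-maps-differentiable-and-derivative} that a $\wCpo$-morphism $g:\coprod_{r\in\KI}\RR^{s_r}\to\monadwP{\coprod_{j\in L}\RR^{l_j}}$ is differentiable exactly when each component $g_i\defeq g\circ\coproje{i}:\RR^{s_i}\to\monadwP{\coprod_{j\in L}\RR^{l_j}}$ is differentiable, and in that case $\dDSemtra{k}{g}$ is the cotuple $\cpairL\dDSemtra{k}{g_i}\cpairR_{i\in\KI}$. Writing $\dot g_i\defeq\dot g\circ\coproje{i}$ for the corresponding component of $\dot g$, it therefore suffices to prove, for each fixed $i\in\KI$, that $g_i$ is differentiable with $\dot g_i=\dDSemtra{k}{g_i}$.

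First I would fix $i\in\KI$ and precompose the subscone morphism $\mathtt{g}$ of \eqref{eq:morphism-CBV-wCPO-pair-Subscone-defined} with the coprojection \eqref{eq:LR-the-coprojection-LR}. Since $\SUBscone{\wCpo}{\sconeFUNCTOR{s_i,k}}$ has coproducts by Proposition~\ref{prop:subscone-wcpo-cartesian-Lnk}, this coprojection is a bona fide subscone morphism, so the composite is a subscone morphism $\semLR{s_i,k}{\reals}^{s_i}\to\monadLR{s_i,k}{\coprod_{j\in L}\semLR{s_i,k}{\reals}^{l_j}}$; since $\forgetfulSub_{s_i,k}$ is a functor, its underlying $\wCpo\times\wCpo$-morphism is exactly $(g_i,\dot g_i)$.

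Next I would extract membership in the lifted monad by evaluating this morphism at the ``generic'' differentiable curve. By the description \eqref{eq:product-LR} of $\semLR{s_i,k}{\reals}^{s_i}$, its predicate contains the pair $(\ID_{\RR^{s_i}},\dDSemtotaltra{k}{\ID_{\RR^{s_i}}})$, because $\ID_{\RR^{s_i}}$ is differentiable; moreover $\dDSemtotaltra{k}{\ID_{\RR^{s_i}}}=\ID$ since the derivative of the identity is the identity. As a subscone morphism carries elements of its domain predicate into its codomain predicate (Remark~\ref{rem:defining-morphism-subscone}), applying $\mathtt{g}\circ\coproje{i}$ to this element returns $(g_i\circ\ID,\dot g_i\circ\ID)=(g_i,\dot g_i)$, which thus lies in $\unlift{\monadLR{s_i,k}{\coprod_{j\in L}\semLR{s_i,k}{\reals}^{l_j}}}$. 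Proposition~\ref{prop:the-LR-general-case}, applied with $n=s_i$, then yields that $g_i$ is differentiable and $\dot g_i=\dDSemtra{k}{g_i}$. Assembling over all $i\in\KI$ and applying Definition~\ref{def:partial-maps-differentiable-and-derivative} once more shows $g$ is differentiable with $\dot g=\dDSemtra{k}{g}$.

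The one genuinely delicate point I anticipate is the passage from ``$(g,\dot g)$ defines a subscone morphism'' to ``$(g_i,\dot g_i)\in\unlift{\monadLR{s_i,k}{-}}$''. Here one must remember that the base functor $\sconeFUNCTOR{s_i,k}$, and hence the ambient subscone, changes with $i$, so the argument must be run separately in $\SUBscone{\wCpo}{\sconeFUNCTOR{s_i,k}}$ for each $i$; and one must check that evaluation at the identity curve genuinely reproduces $(g_i,\dot g_i)$ rather than a reparametrised variant, which is precisely why the identity $\dDSemtotaltra{k}{\ID}=\ID$ is needed. Once this is in place, the remainder is routine bookkeeping with the universal property of coproducts.
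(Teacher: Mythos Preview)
Your proposal is correct and follows essentially the same approach as the paper: precompose with the coprojection to reduce to a single component $(g_i,\dot g_i)$, evaluate at the identity curve $(\ID_{\RR^{s_i}},\ID)$ to land in $\unlift{\monadLR{s_i,k}{-}}$, apply Proposition~\ref{prop:the-LR-general-case}, and then assemble over $i\in\KI$ via Definition~\ref{def:partial-maps-differentiable-and-derivative}. The paper's proof is organised identically, including the explicit observation that $\dDSemtotaltra{k}{\ID_{\RR^{s_i}}}$ is the identity.
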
 
\begin{proof}
From the hypothesis, for each $i\in\KI$, we conclude that the pair \eqref{eq:component-of-g-differentiable-LR} defines the morphism \eqref{eq:component-of-g-differentiable-LR-in-subscone}, since $\left( \coproje{\RR^{s_i}} , \coproje{\left( \RR\times\RR ^k\right) ^{s_i}} \right)$ defines the coprojection \eqref{eq:LR-the-coprojection-LR} in $\SUBscone{\wCpo }{\sconeFUNCTOR{{s_i}, k}}$.\\
\noindent\begin{minipage}{.45\linewidth}
\begin{equation}\label{eq:component-of-g-differentiable-LR}
	\left( g_i\defeqq g\circ \coproje{\RR^{s_i}} ,\, \dot{g}_i\defeqq \dot{g} \circ \coproje{\left( \RR\times\RR ^k\right) ^{s_i}}  \right) 
\end{equation} 
\end{minipage}%
\noindent\begin{minipage}{.55\linewidth}
\begin{equation}\label{eq:component-of-g-differentiable-LR-in-subscone}
	\mathtt{g}_i \defeqq \mathtt{g}\circ\coproje{i} : \semLR{{s_i},k}{\reals } ^{s _i}  \to \monadLR{{s_i}, k}{ \coprod _{j\in L}\semLR{{s_i},k}{\reals }^{l _j} } 	
\end{equation}
\end{minipage}	\\
Since 
$\ID _{\RR ^{s_i}} : \RR ^{s_i}\to \RR^{s_i} $ is differentiable, and $\dDSemtotaltra{k}{\left( \ID _{\RR ^{s_i}} \right) }$
is given by the identity $\left( \RR\times\RR ^k\right) ^{s_i}\to\left( \RR\times\RR ^k\right) ^{s_i} $, we conclude that
\begin{equation}\label{eq:fundamental-belonging-relation-for-the-LR-argument-coro}
 	\left(g _i,\dot{g}_i \right)\in\unlift{\monadLR{{s_i},k}{ \coprod _{j\in L} \semLR{{s_i},k}{\reals  } ^{l_j} }}.
\end{equation}
By Proposition \ref{prop:the-LR-general-case}, \eqref{eq:fundamental-belonging-relation-for-the-LR-argument-coro} proves that  $g_i$ is differentiable and $\dot{g}_i = \dDSemtra{k}{g_i}$. Since this result holds for any $i\in \KI$, we conclude that $g$ is differentiable and $\dot{g} = \dDSemtra{k}{g } $.
\end{proof}

\subsection{Fundamental AD correctness theorem}\label{subsec:proof-basic-correctness-theorem}
We prove Theorem \ref{theo:basal-version-of the-correctness-theorem}, which completes the proof of Theorem \ref{theo:main-theorem-section-proof}. 

\begin{theorem}\label{theo:basal-version-of the-correctness-theorem}
 	Let $\displaystyle t: \coprod _{r\in \KI} \reals ^{s _r} \to \SynT\left( \coprod _{j\in L} \reals ^{l _j}  \right)  $  be a morphism in $\SynV$. We have that 
 	$\displaystyle\sem{ t } : \coprod _{r\in \KI} \RR  ^{s _r} \to \monadwP{\coprod _{j\in L} \RR ^{l _j} } $ is differentiable and, for any $k\in \left( \NN\cup\left\{ \infty \right\}\right) $, 
 	$	\semt{k}{\DSyn\left( t \right) } = \dDSemtra{k}{\sem{ t }}  $.
\end{theorem}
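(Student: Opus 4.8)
The plan is to transport the morphism $t$ through the logical relations semantics and then read off the conclusion by applying the forgetful functor and invoking Corollary~\ref{coro:fundamental-LR-conclusion-about-morphisms-subscone}. Fix $k\in\NN\cup\{\infty\}$. The crucial observation is that the logical relation at $\reals$, namely \eqref{assig-object-LR}, probes differentiability along curves out of $\RR^{s_i}$, so the argument must be run \emph{simultaneously for every} $n=s_i$ with $i\in\KI$, one index per summand of the domain $\coprod_{r\in\KI}\reals^{s_r}$. Accordingly, I would work, for each $i\in\KI$, inside the subscone $\SUBscone{\wCpo}{\sconeFUNCTOR{s_i,k}}$ and use the $CBV$ model morphism $\semLR{s_i,k}{-}$ furnished by Proposition~\ref{prop:semantics-of-the-LR}.

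Next I would apply $\semLR{s_i,k}{-}$ to $t$. Since $\semLR{s_i,k}{-}$ is a $CBV$ model morphism, it strictly preserves finite products, finite coproducts, and the monad structure (sending $\SynT$ to $\monadLR{s_i,k}{-}$), and by the assignment \eqref{assig-object-LR} it sends $\reals$ to $\semLR{s_i,k}{\reals}$. Hence $\semLR{s_i,k}{\coprod_{r}\reals^{s_r}}=\coprod_{r}\semLR{s_i,k}{\reals}^{s_r}$ and $\semLR{s_i,k}{\SynT(\coprod_{j}\reals^{l_j})}=\monadLR{s_i,k}{\coprod_{j}\semLR{s_i,k}{\reals}^{l_j}}$, so that $\semLR{s_i,k}{t}$ is exactly a morphism of the form \eqref{eq:morphism-CBV-wCPO-pair-Subscone-defined}. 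By the commutativity of Diagram~\eqref{eq:basic-diagram-commutativity-logical-relations}, its image under $\CBVU(\forgetfulSub_{s_i,k})$ equals $\left(\sem{-}\times\semt{k}{-}\right)\circ(\ID,\DSyn)$ evaluated at $t$, i.e. the pair $\left(\sem{t},\semt{k}{\DSyn(t)}\right)$ in $\wCpo\times\wCpo$. Because $\forgetfulSub_{s_i,k}$ is locally full and hence faithful (Proposition~\ref{prop:subscone-wcpo-cartesian-Lnk}), this is precisely the statement, in the sense of Remark~\ref{rem:defining-morphism-subscone}, that the pair $\left(\sem{t},\semt{k}{\DSyn(t)}\right)$ \emph{defines} the subscone morphism $\semLR{s_i,k}{t}$.

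Having done this for every $i\in\KI$, I would have verified the hypothesis of Corollary~\ref{coro:fundamental-LR-conclusion-about-morphisms-subscone} with $(g,\dot{g})=\left(\sem{t},\semt{k}{\DSyn(t)}\right)$. The corollary then yields that $\sem{t}:\coprod_{r\in\KI}\RR^{s_r}\to\monadwP{\coprod_{j\in L}\RR^{l_j}}$ is differentiable and that $\semt{k}{\DSyn(t)}=\dDSemtra{k}{\sem{t}}$, which is the claim. I expect the only genuinely delicate step to be the second one: correctly tracking the strict preservation of products, coproducts and the Kleisli/monad structure to identify the domain and codomain of $\semLR{s_i,k}{t}$ with the objects appearing in \eqref{eq:morphism-CBV-wCPO-pair-Subscone-defined}, and recognising that the whole construction must be indexed over all $s_i$ rather than a single $n$ — everything else is a direct appeal to the commuting square, to faithfulness of $\forgetfulSub_{s_i,k}$, and to the already-established Corollary.
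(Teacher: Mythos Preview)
Your proposal is correct and follows essentially the same approach as the paper: for each $i\in\KI$ run the logical relations semantics $\semLR{s_i,k}{-}$, use the commuting square~\eqref{eq:basic-diagram-commutativity-logical-relations} together with faithfulness of $\forgetfulSub_{s_i,k}$ to conclude that the pair $\bigl(\sem{t},\semt{k}{\DSyn(t)}\bigr)$ defines the morphism~\eqref{eq:morphism-CBV-wCPO-pair-Subscone-defined}, and finish by Corollary~\ref{coro:fundamental-LR-conclusion-about-morphisms-subscone}. Your write-up simply spells out in more detail the strict preservation of products, coproducts and the monad that the paper leaves implicit.
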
 	 
\begin{proof} 
We assume that we have $t$ as above. For each $i\in \KI $, the pair \eqref{eq:pair-morphisms-for-LR} is in the image of $\left(\sem{-}\times\semt{k}{-}\right)\circ\left( \ID\times\DSyn\right) =\CBVU\left({\forgetfulSub}_{{s_i},k}\right)\circ\semLR{{s_i},k}{-}  $. This implies that   \eqref{eq:pair-morphisms-for-LR} defines the morphism \eqref{eq:semLR{s,k}{t}}
in $\SUBscone{\wCpo }{\sconeFUNCTOR{{s_i}, k}}$. Therefore, by Corollary \ref{coro:fundamental-LR-conclusion-about-morphisms-subscone}, we conclude that  $\sem{t} $
is differentiable and $\semt{k}{\DSyn{\left( t\right) } }  = \dDSemtra{k}{\sem{t} }$.
\\
\noindent\begin{minipage}{.3\linewidth}
\begin{equation}\label{eq:pair-morphisms-for-LR}
	\left(\sem{t} ,\semt{k}{\DSyn{\left( t\right) } }  \right) 
\end{equation} 
\end{minipage}%
\begin{minipage}{.7\linewidth}
\begin{equation} \label{eq:semLR{s,k}{t}}
	\semLR{{s_i},k}{t} : \coprod _{r\in K} \semLR{{s_i},k}{\reals } ^{s _r}  \to \monadLR{{s_i}, k}{ \coprod _{j\in L}\semLR{{s_i},k}{\reals }^{l _j} }
\end{equation} 
\end{minipage}
\quad\\
\end{proof}

\subsection{Correctness of the dual numbers forward AD}\label{sub:forward-mode-types-correctness}
% We assume that the $1$-semantics of $\tangentreals$ is the real numbers: $\semt{1}{\tangentreals}=\RR$.
% As a consequence, the $1$-semantics for the target language is adequate with respect to the obvious operational semantics.
We assume that $\tangentreals$ implements the vector space $\RR$.
It is straightforward to see that we get forward mode AD out of our macro $\Dsynsymbol $: namely, for a program $\var{1}:\ty{1} \vdash \trm{1}:\ty{2} $ (where $\ty{1} $ and $\ty{2}$ are data types) in the source language, we get a program $\var{1}:\Dsyn{\ty{1}} \vdash \Dsyn{\trm{1}}:\Dsyn{\ty{2}}  $ in the target language, which, by Theorem \ref{theo:main-theorem-section-proof}, satisfies the following properties.
\begin{itemize} 
	\item $\sem{\trm{1} } :  \coprod_{r\in K } \RR ^{n_r} \to\monadwP{ \coprod_{j\in L } \RR ^{m_j}   }   $ is differentiable as in  Definition~\ref{def:partial-maps-differentiable-and-derivative}; 
    \item if $y\in\RR ^{n_i}\cap\sem{\trm{1} }^{-1}\left( \RR ^{m_j} \right) = W_ j $ for some $i\in K $ and $j\in L $,  we have that, for any $w\in \RR ^{n_i} $, denoting  $z: =  \intle{{n_i},1}\left(y,w\right) $, 
    \begin{eqnarray}
    	\semt{1}{ \Dsyn{\trm{1}} } \left( \intle{{n_i},1}\left(y,w\right) \right) &=& \dDSemtra{1}{\sem{\trm{1}} }\left( z\right) = \dDSemtotaltra{1}{\sem{\trm{1}}|_{W_j} }\left( z\right)       	
    	 =  \intle{{m_j},1}\left( \sem{\trm{1} }\left( y\right) ,  \vectoraslineartransformation{w}\cdot\sem{\trm{1} }'(y) ^{t}  \right) \label{eq:computing-forward-AD-out-of-the-main-result}\nonumber\\
    	  &=& \intle{l,1}\left( \sem{\trm{1} }\left( y\right) ,  \sem{\trm{1} }'(y)(w) \right) ,  
    \end{eqnarray}
where $\sem{\trm{1} }'(y) : \RR ^{n_i}\to\RR ^{m_j} $ is the derivative of $\sem{\trm{1} }|_{W_j}: W_j \to \RR ^{m_j} $ at $y$.
\end{itemize} 

\subsection{Correctness of the dual numbers reverse AD}\label{sub:reverse-mode-types-correctness}
We assume that $\tangentreals $ implements the vector space $\RR ^k $, for some fixed $k\in\NN\cup\left\{\infty \right\} $. We consider the respective (co)projections 
$\semanticshandler{k}{s} $ for each $s\in\NN\cup\left\{\infty \right\} $, as defined in \eqref{eq:respective-coprojections} . 
The following shows how our macro encompasses reverse mode AD. 

For each $s\in\NN  ^\ast $ with $s \leq k $, we can define the morphism $\wrapSyncat{s}\defeqq \pairL \proj{j}, \cncanoni{j} \pairR _{j\in \NNN{s}} : \reals ^s\to \left( \reals\times \tangentreals \right) ^s $
in $\SynVt$, which corresponds to the wrapper defined in \eqref{eq:wrap-k-x-Dk} in the target language. 
We denote $\wrapSemcat{s}\defeqq\semt{k}{\wrapSyncat{s}} $. By the definition of the $k$-semantics, it is clear that $\wrapSemcat{s} \left( y \right) = \intle{s,k}\left( y, \canonicalbasise{k}{1}, \ldots , \canonicalbasise{k}{s} \right)  $.

For a program $\var{1}:\reals^{s} \vdash \trm{1}:\reals ^l$ (where $s, l\in \NN^\ast  $), we have that, for any $y\in\sem{\trm{1} }^{-1}\left( \RR ^l \right)\subset \RR ^s  $,
\begin{eqnarray*}
	\semt{k}{\Dsyn{\trm{1}}\circ \wrapSyncat{s}  } \left( y \right) & = &  \dDSemtra{k}{\sem{\trm{1}} }\circ \wrapSemcat{s}\left( y \right) = \dDSemtotaltra{k}{\sem{\trm{1}} }\circ \wrapSemcat{s}\left( y \right)\\
	 & = & 	\dDSemtotaltra{k}{\sem{\trm{1}} }\circ \intle{s,k}\left( y, \canonicalbasise{k}{1}, \ldots , \canonicalbasise{k}{s} \right) \\
	 &=& \intle{l,k}\left( \sem{\trm{1} }\left( y\right) ,  \semanticshandler{s}{k} \sem{\trm{1} }'(y) ^t  \right)
\end{eqnarray*}	 
by Theorem \ref{theo:main-theorem-section-proof}. This gives the transpose derivative $\semanticshandler{s}{k} \sem{\trm{1} }'(y) ^t$ as something of the type $\tangentreals ^l $. This should be
good enough whenever $k = s $, since, in this case,  $\semt{k}{\tangentreals ^l} = \left( \RR^s\right) ^l  $  and  $\semanticshandler{s}{k} =\semanticshandler{k}{k} = \ID $.

In case of $s < k $, if needed, the type can be fixed
by using the handler $\tangentprojection{s}{}$. More precisely, we can define the morphism $$\tangentprojection{l,s}\defeqq \pairL \ID, \tangentprojection{s} \pairR _{i\in\NNN{l}} : \left( \reals \times \tangentreals \right) ^l\to \left( \reals \times \reals ^s \right) ^l   $$ 
and, by the definition of $k$-semantics, we conclude that 
\begin{eqnarray*}
	\semt{k}{\tangentprojection{l,s}{}\circ \Dsyn{\trm{1}}\circ \wrapSyncat{s}  } \left( y \right) & = & \semt{k}{\tangentprojection{l,s}}\circ \intle{l,k}\left( \sem{\trm{1} }\left( y\right) ,  \semanticshandler{s}{k} \sem{\trm{1} }'(y) ^t  \right) \\ 
	&=& \intle{l,k}\left( \sem{\trm{1} }\left( y\right) ,  \semanticshandler{k}{s}\circ\semanticshandler{s}{k} \sem{\trm{1} }'(y) ^t  \right) \\
	& = & \intle{l,k}\left( \sem{\trm{1} }\left( y\right) ,  \sem{\trm{1} }'(y) ^t  \right) ,
\end{eqnarray*}	
since $ \semanticshandler{k}{s}\circ\semanticshandler{s}{k} = \ID$ whenever $s\leq k $.

Again, by Theorem \ref{theo:main-theorem-section-proof}, it is straightforward to generalize the correctness statements above to more general data types $\ty{2} $. Furthermore, it should be noted that, for $k=\infty $ (representing the case of a type of dynamically sized array of cotangents), the above
shows that our macro gives the reverse mode AD for any program $\var{1}:\ty{1} \vdash \trm{1}:\ty{2}$ for data types $\ty{1}$ and $\ty{2} $.
This choice of $k=\infty$ is the easiest route to take for a practical implementation of this form of dual-numbers reverse AD,
as it leads to a single type of cotangent vectors that works for any program.

\section{AD for recursive types and ML-polymorphism}\label{sec:recursive-types}
\subsection{Syntax for recursive types}\label{ssec:rec-types-syntax}
We extend both our source and target languages of Section \ref{subsect:source-language} and \ref{sub:target-language-syntax} with ML-style polymorphism and type recursion in the sense of FPC \cite{fiore1994axiomatisation}.
That is, we extend types, values and computations for each of the two languages as \\
\begin{syntax}
    \ty{1}, \ty{2}, \ty{3} & \gdefinedby & & \syncat{types}                          \\
    &\gor& \ldots                      & \synname{as before}\\
    &&&\\
    \val{1}, \val{2}, \val{3} & \gdefinedby & & \syncat{values}                          \\
    &\gor& \ldots                      & \synname{as before}\\
    &&&\\
    \trm{1}, \trm{2}, \trm{3} & \gdefinedby & & \syncat{computations}                          \\
    &\gor& \ldots                      & \synname{as before}\\
  \end{syntax}%
~\qquad\quad
 \begin{syntax}
  &\gor & \tvar{1},\tvar{2},\tvar{3}   & \synname{type variables}\\
	&\gor\quad\, & \trec{\tvar{1}}\ty{1} & \synname{recursive type}\\
	& & &\\
  &\gor& \tRoll\val{1}                      & \synname{recursive type introduction}\\
	&&&\\
  &&&\\
  &\gor& \tRoll\trm{1}                      & \synname{recursive type introduction}\\
	&\gor\quad\, & \rMatch{\trm{1}}{\var{1}}{\trm{2}} & \synname{recursive type elimination}
\end{syntax}
\\
The new values and computations according to the rules in Figure \ref{fig:types2}.
\begin{figure}[!ht]
	\fbox{\parbox{0.98\linewidth}{\begin{minipage}{\linewidth}\noindent\input{TEX/type-system2}\end{minipage}}}
	\caption{Typing rules for the recursive types extension.
		\label{fig:types2}}
\end{figure}\\
Here, kinding contexts $\Delta$ are lists of type variables $\tvar{1}_1,\ldots,\tvar{1}_n$.
We consider judgements $\DGinf{\trm{1}}{\ty{1}}$, where the types in $\Gamma$ and $\ty{1}$ may contain free type variables from $\Delta$.
They should be read as specifying that $\trm{1}$ is a program of type $\ty{1}$, with free variables typed according to $\Gamma$, that is polymorphic in the type variables of $\Delta$.

We use the $\beta\eta$-rules of Figure \ref{fig:beta-eta2}.
\begin{figure}[!ht]
	\fbox{\parbox{0.98\linewidth}{\scalebox{0.92}{\begin{minipage}{\linewidth}\noindent\input{TEX/beta-eta2}\end{minipage}}}
	}\caption{\label{fig:beta-eta2} The standard $\beta\eta$-equational theory for recursive types in CBV.
	}
\end{figure}

 Once a language has recursive types, it is already expressive enough to get term recursion and, hence, iteration. Namely, we can now consider term recursion at type $\ty{1}=\ty{2}\To\ty{3}$ as syntactic sugar. Namely, we first define $\chi\defeqq \trec{\tvar{1}}{\left( \tvar{1}\To\ty{1}\right) }$ and then:
\begin{align}
&\tUnroll\trm{1}\defeq \rMatch{\trm{1}}{\var{1}}{\var{1}}\nonumber\\
&\rec{\var{1}:\ty{1}}{\trm{1}}\defeq \letin{body:\chi\To\ty{1}}{(\fun{\var{2}:\chi}\fun{\var{3}:\ty{2}}{\letin{\var{1}:\ty{1}}{\tUnroll\var{2}\,\var{2}}{\trm{1}\,\var{3}}})}{body (\tRoll\,body)}.\label{eq:expressing-term-recursion-in-terms-of-type-recursion}
\end{align} 

The semantics of the language is, of course, expected to be consistent -- meaning that the interpretations of term recursion and recursive types should be compatible according to the definition above.
Alternatively, we can consider that the source language is given by the basic language with the typing rules given by Figure~\ref{fig:types1} with the corresponding grammar plus the  recursive types established above, while the target language is the source language plus the extension given by the grammar and typing rules defined in Section \ref{sub:target-language-syntax}.

\subsection{Categorical models for recursive types: $rCBV$ models}\label{ssec:cat-models-recursive-types}

Here, we establish the basic categorical model for the syntax of  call-by-value languages with recursive types. \textit{Let $\left(\catV , \monadT\right) $ be a $CBV$ pair and $J:\catV\to\catC $ the corresponding universal Kleisli functor}. Moreover, let  $\ehom{\catCat}{\mathsf{2}}{\ecat{\catV}}$ be the category of morphisms of $\ecat{\catV}$.
 
For each $n\in\NN  $, an \textit{$n$-variable  $\left(\catV , \monadT \right) $-parametric type} (or a $\left(\catV , \monadT \right) $-parametric type of degree $n$) is a
morphism $\pEE{E}  : \left( J^\op \times J\right) ^n \rightarrow  J  $ in $\ehom{\catCat}{\mathsf{2}}{\ecat{\catV}}$. In other words, it
consists of a pair $\pEE{E} = \left( \pE{E}{\catV}, \pE{E}{\catC} \right) $ of $\catV$-enriched functors such that \eqref{eq:parametric-types-in-terms-of-functors} commutes. \textit{A $\left(\catV , \monadT \right) $-parametric type of degree $0$ \eqref{eq:parametric-type-of-degree-0} can be identified with the corresponding object $\catV $.}
\begin{equation}\label{eq:parametric-types-in-terms-of-functors} 
	\diag{obvious-diagram-parametric-types} 
\end{equation}

We denote by $\ParamAll{\catV}{\monadT}$ the collection of all  $\left(\catV , \monadT \right) $-parametric types $\pEE{E} = \left( \pE{E}{\catV}, \pE{E}{\catC}\right) $ of any degree $n\in\mathbb{N} $. As the terminology indicates, the objects of $\ParamAll{\catV}{\monadT}$  play the role of the semantics of parametric types in our language. However, the parametric types in the actual language could be a bit more restrictive. They usually are those constructed out of the primitive type formers. Namely, in our case, tupling (finite products), cotupling (finite coproducts), exponentiation (Kleisli exponential) and type recursion.

%The expressiveness of our language depends on what are the expressible %parametric types. Fortunately, we can deal with the most expressible %language -- accepting every morphism as a parametric type.   
 
%An $rCBV$ model is a $CBV$ pair $\left( \catV , \monadT\right) $ with a %fixpoint operator for suitable pairs of $\catV$-enriched functors. %Namely:

\begin{definition}[Free type recursion]\label{def:free-type-recursion}
A \textit{free decreasing degree type operator} (\fddt operator) for $\left(\catV , \monadT\right) $  is a function \eqref{eq:fddt-operator} identity on parametric types of degree $0$ which takes each  $(n+1)$-variable $\left(\catV , \monadT \right) $-parametric type $\pEE{E} = \left( \pE{E}{\catV}, \pE{E}{\catC}\right) $ to a  $\left(\catV , \monadT \right) $-parametric type  $\pEE{\fixpointRecT E} = \left( \pE{\fixpointRecT E}{\catV}, \pE{\fixpointRecT E}{\catC} \right) $ of degree $n $, provided that $n\in\NN$.
\begin{eqnarray} 
\fixpointRecT :   \ParamAll{\catV}{\monadT} & \to &  \ParamAll{\catV}{\monadT}\label{eq:fddt-operator} \\
\diag{obvious-diagram-parametric-types-n} & \mapsto &  \diag{obvious-diagram-parametric-types-(n-1)-recursive} \nonumber
\end{eqnarray} 
A \textit{rolling} for \eqref{eq:fddt-operator} is a collection \eqref{eq:rolling} of natural transformations such that \eqref{eq:diag-roll} is invertible for any $\pEE{E} = \left( \pE{E}{\catV},  \pE{E}{\catC} \right) $, that is to say,  $J\left(\rollReT ^{\pEE{E}} \right) $ is a natural isomorphism.\\
\noindent\begin{minipage}{.5\linewidth}
\begin{equation}\label{eq:parametric-type-of-degree-0} 
	\left( \left( \catV ^\op \times \catV\right) ^0 \to \catV, \left( \catC ^\op \times \catC\right) ^0 \to \catC\right) 
\end{equation}		
	\begin{equation}\label{eq:rolling}
		\rolling = \left( \rollReT ^{ \pEE{E} } \right) _ {\pEE{E}= \left( \pE{E}{\catV}, \pE{E}{\catC} \right)\in \ParamAll{\catV}{\monadT}  }  
	\end{equation} 	
\end{minipage}%
\noindent\begin{minipage}{.5\linewidth}
	\begin{equation}\label{eq:diag-roll}
		\diag{roll-basic-diagram}
	\end{equation} 
\end{minipage}\\
A \textit{free type recursion} for $\left(\catV , \monadT\right) $ is a pair 
$\ffixpointRecT = \left( \fixpointRecT , \rolling \right)  $
where $\fixpointRecT$ is an \fddt operator and $\rolling $ is a rolling for $\fixpointRecT$.

\end{definition} 
\begin{definition}[$H$-compatible]
	Let $H$ be a $CBV$ pair morphism between $CBV$ pairs $\left( \catV , \monadT \right) $ and  $\left( \catV ' , \monadT ' \right) $. A pair 
	$\left( \pEE{E}, \pEE{E'}\right) \in \ParamAll{\catV}{\monadT}\times \ParamAll{\catV '}{\monadT '} $ of parametric types
	is \textit{$H$-compatible} if they have the same degree $n$ and the diagram \eqref{eq:H-compatibility-of-paramatric-types} commutes.
	In particular, if $n = 0$, the pair $\left( \pEE{E}, \pEE{E'}\right) $ is $H$-compatible if $H\left( \pE{E}{\catV}\right) = \pE{E'}{\catV} $.
\begin{equation}\label{eq:H-compatibility-of-paramatric-types}
	\diag{diag-H-compatibility-of-paramatric-types}
\end{equation}		
\end{definition} 

\begin{definition}[$rCBV$ models]\label{def:rCBV-models}
An \textit{$rCBV$ model} is a triple $\left(\catV , \monadT, \ffixpointRecT\right) $
where $\left(\catV , \monadT\right) $ is a $CBV$ pair and $\ffixpointRecT$ is a free type recursion for $\left(\catV , \monadT\right) $. 

An \textit{$rCBV$ model morphism} between the $rCBV$ models $\left(\catV , \monadT, \ffixpointRecT \right) $ and $\left(\catV  ', \monadT ', \ffixpointRecT '\right) $ consists of a $CBV$ pair morphism between $\left(\catV  , \monadT \right) $ and 
$\left(\catV  ', \monadT '\right) $ such that, for every $H$-compatible pair
$\left( \pEE{E}, \pEE{E'}\right) \in \ParamAll{\catV}{\monadT}\times \ParamAll{\catV '}{\monadT '} $ of $n$-variable parametric types, $\left( \pEE{\fixpointRecT E},  \pEE{\fixpointRecT E'}\right) $ is  $H$-compatible and, if $n>0$, \eqref{eq:roll-eq-diagram} holds, that is to say, $H\left(\rollReT ^{\pEE{E}} \right) =\rollReT ^{\pEE{E}} _{\left( H^\op\times H \right) ^{n-1} } $. The $rCBV$ models and $rCBV$ model morphisms define a category, \textit{denoted herein by $\RCBVcat$}.
\begin{equation}\label{eq:roll-eq-diagram}
	\diag{roll-eq1-diagram} = \diag{roll-eq2-diagram}
\end{equation}

\textit{There is, then, an obvious forgetful functor $\CBVUrp : \RCBVcat\to \CBVcatund $.}
\end{definition}

%Certainly, we are going to exclude the remark below for the short version.
\begin{remark} \label{rem:rCBVtoCBV}
We do not use this fact in our work, but every $rCBV$ model has an underlying 
$CBV$ model. More precisely, free term iteration can be defined out of the free term recursion, while the latter can be defined out of the free type recursion (see \eqref{eq:expressing-term-recursion-in-terms-of-type-recursion}). This defines a forgetful functor
\begin{equation}
	\rCBVtoCBV : \RCBVcat\to \CBVcat .
\end{equation}
\end{remark}

\subsection{The $rCBV$ models $\left( \SynVr, \SynTr, \SynffixpointRecT   \right) $ and $\left( \SynVrt , \SynTrt ,\SyntffixpointRecT  \right) $} \label{subsec:Syntax-as-a-categoricalstructure-recursive-types}

%We could have considered models where the free type recursion are defined only for 
%parametric types coming out of coproducts (cotupling), tupling (products), exponentiation (Kleisli exponential) and type recursion. However, we avoided this inductive definition by considering 

We consider the  $rCBV$ model generated by each syntax, that is to say, the free $rCBV$ models coming from the fine-grain CBV translations of the source and target languages. This provides us with the $rCBV$ models 
\begin{equation}  \label{eq:rCBV-models-syntax}
	\left( \SynVr, \SynTr, \SynffixpointRecT   \right) \qquad\mbox{and}\qquad 
\left( \SynVrt , \SynTrt ,\SyntffixpointRecT  \right)
\end{equation} 
with the universal property described in Proposition \ref{prop:universal-property-type-recursive-language}.

\begin{proposition}[Universal Property of the $rCBV$ models \eqref{eq:rCBV-models-syntax}]\label{prop:universal-property-type-recursive-language}
	Let $\left( \catV , \monadT, \ffixpointRecT \right) $  be an $rCBV$ model. Assume that Figure~\ref{fig:assignment-functor-universalproperty-syntax} and Figure~\ref{fig:assignment-functor-universalproperty-syntax-target} are given consistent assignments.
	\begin{enumerate}	
		\item There is a unique $rCBV$ model morphism $H: \left( \SynVr, \SynTr, \SynffixpointRecT   \right) \to \left( \catV , \monadT, \ffixpointRecT \right) $ respecting the assignment of Figure~\ref{fig:assignment-functor-universalproperty-syntax}.	 	
		\item There is a unique $rCBV$ model morphism $\extendedH{H}: \left( \SynVrt , \SynTrt ,\SyntffixpointRecT  \right) \to \left( \catV , \monadT, \ffixpointRecT \right) $ that extends $H$ and respects the assignment of Figure~\ref{fig:assignment-functor-universalproperty-syntax-target}.
	\end{enumerate}	
\end{proposition}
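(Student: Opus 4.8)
The proof is a standard initiality/freeness argument, executed in two stages that mirror the two clauses of the statement. The overarching strategy is that $\left( \SynVr, \SynTr, \SynffixpointRecT \right)$ and $\left( \SynVrt , \SynTrt ,\SyntffixpointRecT \right)$ are, by construction, the \emph{free} $rCBV$ models generated by the signatures of the source and target languages (the fine-grain CBV translations), so the universal property is essentially a restatement of freeness. The work is in checking that a ``consistent assignment'' of the data in Figure~\ref{fig:assignment-functor-universalproperty-syntax} (resp. Figure~\ref{fig:assignment-functor-universalproperty-syntax-target}) uniquely determines a structure-preserving morphism.

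\emph{First, I would prove clause (1).} Given the assignment of Figure~\ref{fig:assignment-functor-universalproperty-syntax}, I would define $H$ on objects and morphisms of $\SynVr$ by structural recursion on the derivation of types and terms, precisely as is forced by the requirement that $H$ be an $rCBV$ model morphism: $H(\reals)$ is given, products/coproducts/exponentials go to the chosen ones in $\left(\catV,\monadT\right)$ (strict preservation), the monad and Kleisli structure are preserved strictly, and $H(\op)$, $H(\cnst c)$, $H(\tSign)$ are the assigned morphisms. The new ingredient relative to Proposition \ref{prop:section-universal-property-syntax} is the recursive-type structure: I must send each $\SynVr$-parametric type $\pEE{E}$ to the $\left(\catV,\monadT\right)$-parametric type obtained by applying the primitive type formers, and send $\trec{\tvar{1}}{\ty{1}}$ to $\fixpointRecT$ applied to the denotation of $\ty{1}$, while sending $\tRoll$ and $\rMatch{-}{-}{-}$ to the components determined by the rolling isomorphism $\rollReT$. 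One checks by induction that this assignment respects the typing rules of Figure~\ref{fig:types2}, the substitution lemma for types, and the $\beta\eta$-equations of Figure~\ref{fig:beta-eta2}; the latter hold because $\rollReT$ is (by Definition~\ref{def:free-type-recursion}) a natural isomorphism, so $\tUnroll$ is genuinely inverse to $\tRoll$ up to the required equalities. Well-definedness on $\beta\eta$-equivalence classes together with the fact that every clause of the recursion is forced gives both existence and uniqueness of $H$.

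\emph{Second, clause (2) is a relative version of the same argument.} Here $H$ on the source fragment is already fixed, and I extend it to the target-language syntax $\left( \SynVrt , \SynTrt ,\SyntffixpointRecT \right)$ using the additional assignment of Figure~\ref{fig:assignment-functor-universalproperty-syntax-target}: $\extendedH{H}(\tangentreals)$, $\extendedH{H}(\cnzero)$, $\extendedH{H}(+)$, $\extendedH{H}(\ast)$, and the handlers $\extendedH{H}(\tangentprojection{i}{})$ are sent to the assigned data, and everything else is determined by structure preservation exactly as before. Since the target language is the source language augmented \emph{only} by these generators and their typing rules (Figure~\ref{fig:types-reverse}), and since no new equations are imposed between old and new syntax beyond what structure preservation forces, the extension exists and is unique: $\extendedH{H}$ agrees with $H$ on the source fragment by construction, and its values on all new type/term formers are each forced by a single clause of the freeness recursion.

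\emph{The main obstacle} is the recursive-type layer of clause (1): verifying that the assignment of $\fixpointRecT$ and $\rollReT$ in the target respects the coherence conditions built into Definition~\ref{def:rCBV-models}, namely $H$-compatibility of $\left( \pEE{\fixpointRecT E}, \pEE{\fixpointRecT E'}\right)$ and the rolling-compatibility square \eqref{eq:roll-eq-diagram}. Concretely, one must confirm that applying a structure-preserving $H$ commutes with forming the parametric-type fixpoint and with the rolling isomorphism, which is where the interplay between the mixed-variance functorial action on $\left( J^\op\times J\right)^n$ and the naturality of $\rollReT$ must be tracked carefully. Once this commutation is established (it follows formally from the defining axioms of a free type recursion together with strict preservation of the primitive type formers), the remainder is the routine bookkeeping of an initiality proof, and I would relegate those calculations to the appendix rather than spell them out here.
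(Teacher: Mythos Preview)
Your approach is correct and is essentially what the paper relies on. In fact, the paper does not give a proof of this proposition at all: it simply introduces $\left(\SynVr,\SynTr,\SynffixpointRecT\right)$ and $\left(\SynVrt,\SynTrt,\SyntffixpointRecT\right)$ as ``the free $rCBV$ models coming from the fine-grain CBV translations of the source and target languages'' and then states the universal property as the content of that freeness. Your sketch of the structural-recursion construction of $H$ and $\extendedH{H}$, together with the verification that the recursive-type clauses are forced by Definition~\ref{def:rCBV-models}, is exactly the standard initiality argument that underlies this claim, and is more detailed than anything the paper provides.
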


\begin{remark}\label{rem:extending-functors} 
By Proposition \ref{prop:section-universal-property-syntax}, we have (unique) $CBV$ model morphisms $$\incCBVtorCBV :  \left( \SynV, \SynT,\Synfix , \Synit   \right)\to 	\rCBVtoCBV  \left( \SynVr, \SynTr, \SynffixpointRecT   \right) $$
and $$\incTCBVtorCBV : \left( \SynVt , \SynTt ,\Synfixt  , \Synitt    \right)\to 	\rCBVtoCBV \left( \SynVrt , \SynTrt ,\SyntffixpointRecT  \right) $$ that are identity on the primitive operations and types. 

Proposition \ref{prop:universal-property-type-recursive-language} states that $H\mapsto \rCBVtoCBV\left( H \right)\circ \incCBVtorCBV  $ and $\extendedH{H}\mapsto \rCBVtoCBV\left( \extendedH{H} \right)\circ \incTCBVtorCBV  $
give the bijections \eqref{eq:universal-property-bijections} and \eqref{eq:universal-property-bijections2}, respectively, showing that our syntax extension for recursive types give a free \emph{rCBV} model on the syntax without recursive types. 
\footnotesize
\begin{eqnarray}
	\RCBVcat\left(\left(\SynVr{,}\SynTr{,}\SynffixpointRecT\right){,}\left(\catV{,}\monadT{,}\ffixpointRecT\right)\right) &\cong & \CBVcat\left(\left(\SynV{,}\SynT{,}\Synfix{,}\Synit\right){,}\rCBVtoCBV\left(\catV{,}\monadT{,}\ffixpointRecT\right)\right) \label{eq:universal-property-bijections}\\
	  \RCBVcat\left(\left( \SynVrt , \SynTrt ,\SyntffixpointRecT  \right) {,}\left(\catV{,}\monadT{,}\ffixpointRecT\right)\right) & \cong  & \CBVcat\left(\left( \SynVt , \SynTt ,\Synfixt  , \Synitt    \right){,}\rCBVtoCBV\left(\catV{,}\monadT{,}\ffixpointRecT\right)\right)\label{eq:universal-property-bijections2}
\end{eqnarray}
\normalsize

%Proposition \ref{prop:universal-property-type-recursive-language} states that the composition \eqref{eq:diag-new-terms-for-the-universal-property} is a bijection.
%\begin{equation}\label{eq:diag-new-terms-for-the-universal-property}
%	\diag{diag-new-terms-for-the-universal-property}
%\end{equation}
\end{remark} 	

\subsection{Automatic differentiation for languages with recursive types}\label{sub:extended-macro}
We extend our definition of AD to recursive types in Figure \ref{fig:ad2}.
We note that our extension is compatible with our previous definitions if we view term recursion (and iteration) as syntactic sugar.
\begin{lemma}[Type preservation]
If $\DGinf{\trm{1}}{\ty{1}}$, then $\Delta\mid \Dsynplain{\Gamma}\vdash \Dsynplain{\trm{1}}:\Dsynplain{\ty{1}}$.
\end{lemma}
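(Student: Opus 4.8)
The plan is to prove this by induction on the typing derivation of $\DGinf{\trm{1}}{\ty{1}}$, in exactly the same manner as Lemma \ref{lem:functorial-macro}, but with two additional cases coming from the typing rules of Figure \ref{fig:types2}. Since $\Dsynsymbol$ is defined homomorphically on every type and term former (Figure \ref{fig:ad1} together with its extension in Figure \ref{fig:ad2}) and acts as the identity on the kinding context $\Delta$ and on type variables, all of the cases already treated in Lemma \ref{lem:functorial-macro} carry over verbatim. Thus I only need to supply one auxiliary lemma and the two recursive-type cases.

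The auxiliary lemma is the \emph{compatibility of $\Dsynsymbol$ with type substitution}:
\[
\Dsyn{\subst{\ty{2}}{\sfor{\tvar{1}}{\ty{3}}}}=\subst{\Dsyn{\ty{2}}}{\sfor{\tvar{1}}{\Dsyn{\ty{3}}}}.
\]
I would prove this by structural induction on $\ty{2}$. The base cases $\ty{2}=\reals$, $\ty{2}=\Init$, and $\ty{2}=\tvar{2}$ a type variable hold because $\Dsynsymbol$ fixes $\Init$, sends $\reals$ to a fixed pair type, and is the identity on type variables: if $\tvar{2}\neq\tvar{1}$ the substitution is vacuous on both sides, while for $\tvar{2}=\tvar{1}$ both sides reduce to $\Dsyn{\ty{3}}$. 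The inductive cases for $\t+$, $\t*$, $\To$ and $\trec{\tvar{2}}{-}$ follow immediately from the homomorphic clauses such as $\Dsyn{\trec{\tvar{2}}{\ty{2}}}=\trec{\tvar{2}}{\Dsyn{\ty{2}}}$, together with the usual capture-avoidance convention on the bound variable $\tvar{2}$.

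With this in hand the two new cases are routine. For recursive-type introduction, the premise $\Delta\mid\Ginf{\trm{1}}{\subst{\ty{2}}{\sfor{\tvar{1}}{\trec{\tvar{1}}{\ty{2}}}}}$ gives, by the induction hypothesis, $\Delta\mid\Dsyn{\Gamma}\vdash \Dsyn{\trm{1}}:\Dsyn{\subst{\ty{2}}{\sfor{\tvar{1}}{\trec{\tvar{1}}{\ty{2}}}}}$; the substitution lemma (instantiated at $\ty{3}=\trec{\tvar{1}}{\ty{2}}$, so that $\Dsyn{\trec{\tvar{1}}{\ty{2}}}=\trec{\tvar{1}}{\Dsyn{\ty{2}}}$) rewrites this type as $\subst{\Dsyn{\ty{2}}}{\sfor{\tvar{1}}{\trec{\tvar{1}}{\Dsyn{\ty{2}}}}}$. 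Applying the roll-introduction rule of Figure \ref{fig:types2} to the type $\Dsyn{\ty{2}}$ then yields $\Delta\mid\Dsyn{\Gamma}\vdash\tRoll{\Dsyn{\trm{1}}}:\trec{\tvar{1}}{\Dsyn{\ty{2}}}$, which is precisely $\Delta\mid\Dsyn{\Gamma}\vdash\Dsyn{\tRoll{\trm{1}}}:\Dsyn{\trec{\tvar{1}}{\ty{2}}}$ by homomorphy. The elimination case is symmetric: the induction hypotheses on the scrutinee $\trm{1}$ and on the branch $\trm{2}$ (whose context is extended by $\var{1}:\subst{\ty{2}}{\sfor{\tvar{1}}{\trec{\tvar{1}}{\ty{2}}}}$, again rewritten via the substitution lemma) combine under the roll-elimination rule to give the type of $\Dsyn{\rMatch{\trm{1}}{\var{1}}{\trm{2}}}$.

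The only step requiring genuine care — rather than difficulty — is the substitution lemma: one must align the capture-avoidance conventions of type substitution with the fact that $\Dsynsymbol$ leaves type variables and the binder $\trec{\tvar{1}}{-}$ untouched, so that substitution and the transformation commute with no spurious interaction on bound variables. Everything else is a direct transcription of the corresponding clause of Lemma \ref{lem:functorial-macro}.
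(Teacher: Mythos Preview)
Your proof is correct and is precisely the expected argument; the paper itself does not spell out a proof for this lemma, treating it as routine once the macro is defined homomorphically on all type and term formers (Figures~\ref{fig:ad1} and~\ref{fig:ad2}). The substitution lemma $\Dsyn{\subst{\ty{2}}{\sfor{\tvar{1}}{\ty{3}}}}=\subst{\Dsyn{\ty{2}}}{\sfor{\tvar{1}}{\Dsyn{\ty{3}}}}$ is indeed the one nontrivial ingredient, and your identification of it and its proof by induction on $\ty{2}$ is exactly right.
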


\begin{figure}[!ht]
	\fbox{\parbox{0.98\linewidth}{\begin{minipage}{\linewidth}\noindent
				\input{TEX/d-types2}
				\hrulefill
				\input{TEX/d-terms2}
	\end{minipage}}}
	\caption{The definitions of AD on recursive types. \label{fig:ad2}}
\end{figure}
\subsection{AD transformation as an $rCBV$ model morphism}
By Proposition \ref{prop:universal-property-type-recursive-language}, the assignment defined in Figure~\ref{fig:assignment-AD-functor} induces a unique $rCBV$ model morphism \eqref{eq:macro-as-a-functor-recursive-types}, which \textit{encompasses  the macro $\Dsynsymbol $ defined  by Figure \ref{fig:ad1} and extended in Figure~\ref{fig:ad2}.}
\begin{equation}\label{eq:macro-as-a-functor-recursive-types}
	\DSynrec : \left( \SynVr, \SynTr, \SynffixpointRecT   \right) \to \left(\SynVrt{,}\SynTrt{,}\SyntffixpointRecT\right).
\end{equation}

\subsection{$\wCpo$-enriched categorical models for recursive types: $rCBV$ $\wCpo$-pairs}
\label{ssec:wcpo-enriched-cat-models-rec-types}
Although the setting of \textit{bilimit compact expansions}
 is the usual reasonable basic framework for solving recursive domain equations, we do not need this level of generality. Instead, we consider a subclass of $\wCpo$-enriched models, the  $rCBV$ $\wCpo$-pairs established in  Definition~\ref{def:concrete-rCBV-models}.\footnote{See \cite[4.2.2]{levy2012call} or \cite[Sect.~8]{vakar2020denotational} for the general setting of bilimit compact expansions.}

 We are back again to the setting of $\wCpo$-enriched categories. 
 Recall that
an \textit{embedding-projection-pair (ep-pair)}  $u : A\epto B$
in an $\wCpo$-category $\catC$ 
 is a pair $u = \eppair u$ consisting of a $\catC $-morphism
 $\emb u:A\to B$, the \emph{embedding}, and a
$\catC$-morphism $\prj u:B\to A$, the \emph{projection}, such that
$\emb u \circ \prj u \leq \id$ and $\prj u\circ \emb u= \id$.

It should be noted that, when considering the underlying $2$-category of the $\wCpo $-category,  an ep-pair consists of an adjunction\footnote{See, for instance, \cite[Sect.~2]{MR0357542} or \cite[3.10]{2019arXiv190201225L} for adjunctions in $2$-categories.} whose unit is the identity. In this context, it is also called a  lari adjunction (\textit{left adjoint right-inverse}), see \cite[Sect.~1]{2020arXiv200203132C}. In particular, as in the case of any adjunction, an embedding $\emb u : A \to B$ uniquely determines the associated projection $\prj u: B\to A$ and vice-versa.

A zero object\footnote{Recall that a \emph{zero object} is an object that is both initial and
terminal.} $\epzeroo$ in an $\wCpo$-category $\catC $ is an \emph{ep-zero object}  if, for any object $A$, the pair $\uniqMor _A = \left( \emb \uniqMor : \epzeroo\to A , \prj \uniqMor : A\to\epzeroo \right)  $ consisting of the unique morphisms is an ep-pair.

\begin{definition}[$rCBV$ $\wCpo$-pair]\label{def:concrete-rCBV-models}
An 	$rCBV$ $\wCpo$-pair is a $CBV$ pair $\left( \catV , \monadT \right) $ such that, denoting by $J : \catV\to\catC  $ the corresponding universal Kleisli $\catV$-functor,
\begin{enumerate}[r$\omega$.1]
	\item $\catV$ is a cocomplete $ \wCpo $-cartesian closed category\footnote{Because $\catV$ is cartesian closed, any colimit in $\catV$ is a conical $\catV$-colimit \cite{kelly1982basic}.
	Because $\catV$ is $\wCpo$-cartesian closed, any conical $\catV$-colimit in $\catV$ is, in particular, a conical $\wCpo$-colimit.} 
	;\label{condition-for-rCBVwCpo-cocomplete}
	\item the unit of $\monadT$ is pointwise a full morphism (hence, $J$ is a locally full $\wCpo$-functor); \label{condition-for-rCBVwCpo-unit}
	\item  $\catC $ has an ep-zero object $\epzeroo = J\left( \initiall\right) $, where 
	$\initiall$ is initial in $\catV$;	\label{condition-for-rCBVwCpo-ep-zero}
	\item whenever $u : J(A)\epto J(B)$ is an ep-pair in $\catC $, there is one morphism 
	$\unlifteppair{u} : A\to B $ in $\catV$ such that $J\left( \unlifteppair{u}\right) =\emb u  $. \label{eq:pulling-embeddings-values}
\end{enumerate}

An  \textit{$rCBV$ $\wCpo$-pair morphism} from $\left(\catV , \monadT \right) $ into $\left(\catV  ', \monadT ' \right) $ is an $\wCpo$-functor 
$H : \catV\to\catV ' $ that strictly preserves $\wCpo$-colimits, and whose underlying functor is a morphism between the $CBV$ pairs.
This defines a category of $rCBV$ $\wCpo$-pairs, denoted herein by $\wrCBVcat$.
\end{definition}

Every $rCBV$ $\wCpo$-pair $\left( \catV , \monadT \right) $ has an underlying $\wCpo $-pair,
and this extends to a forgetful functor $\wrCBVcat\to\wCBVcat $. More importantly to our work, we have the following.

\subsubsection{$rCBV$ $\wCpo$-pairs are $rCBV$ models} 
Let $\left( \catV , \monadT \right) $ be an $rCBV$ $\wCpo$-pair. 
It is clear that we have an underlying $CBV$ pair which, by abuse of language, we denote by $\left( \catV , \monadT \right) $  as well. Hence, 
we can consider $\left(\catV , \monadT \right) $-parametric types.

Let $n\in\NN^\ast $ and \eqref{eq:parametric-types-in-terms-of-functors} be an $n$-variable  $\left(\catV , \monadT \right) $-parametric type. For each $A\in \left( \catV ^\op \times \catV\right) ^{n-1} $, we get an $1$-variable $\left(\catV , \monadT \right) $-parametric type $\pEE{E}^A = \left( \paE{E}{A}{\catV}, \paE{E}{A}{\catC} \right) $ where $\paE{E}{A}{\catV}\left( W,Y \right) \defeqq \pE{E}{\catV}\left( A, W,Y \right) $ and $\paE{E}{A}{\catC}\left( W',Y' \right) \defeqq \pE{E}{\catC}\left( J(A), W',Y' \right) $. Let $\pEDchain{E}{A} $ be the diagram  \eqref{eq:diagram-for-the-colimit-that-defines-the-fddt} in $\catC$ given by the chain of morphisms $\left( \mochain{n} : \objectD{A}{n}\to\objectD{A}{n+1}  \right) _{n\in\NN} $,   where $\left( \epchain{n}\right) _{n\in\NN} $ is the chain of ep-pairs inductively defined by \eqref{eq:inductive-definition-of-ep-pairs}. \\
\noindent\begin{minipage}{.5\linewidth}
\small	
	\begin{eqnarray}
	\epchain{0} &\defeqq & \left( \emb \uniqMor : \epzeroo\to  \paE{E}{A}{\catC}\left( \epzeroo, \epzeroo \right)  , \prj \uniqMor : \paE{E}{A}{\catC}\left( \epzeroo, \epzeroo \right) \to\epzeroo \right)  \nonumber\\
	\epchain{n+1} & \defeqq & \left( \paE{E}{A}{\catC}\left( \comochain{n} , \mochain{n} \right), \paE{E}{A}{\catC}\left( \mochain{n} , \comochain{n} \right)   \right)  \label{eq:inductive-definition-of-ep-pairs}
\end{eqnarray} 
\normalsize
\end{minipage}\noindent
\begin{minipage}{.5\linewidth}
	\small
		\begin{eqnarray}
		\diag{diag-chain-of-ep-pairs} \label{eq:diagram-for-the-colimit-that-defines-the-fddt}\\
		\diag{diag-chain-of-ep-pairs-projections}\label{eq:projections-diagram-for-the-colimit-that-defines-the-fddt}
	\end{eqnarray} 	
\normalsize
\end{minipage}\\
\normalsize
There is a unique diagram \textit{ $\unlifteppair{\pEDchain{E}{A}}$ such that $J\circ\unlifteppair{\pEDchain{E}{A}} = \pEDchain{E}{A}$} by \eqref{eq:pulling-embeddings-values} of Definition~\ref{def:concrete-rCBV-models}. Since $\catV $ has $\wCpo $-colimits, we conclude that the conical $\wCpo$-colimit of $\unlifteppair{\pEDchain{E}{A}}$ exists and is preserved by $J$ (being an $\wCpo$-left adjoint) -- hence, $\pEDchain{E}{A}$ has a conical $\wCpo $-colimit in $\catC $ as well. 

We recall the following variation on \cite{smyth-plotkin:rde}'s celebrated \emph{limit-colimit	coincidence} result.
\begin{lemma}[Limit-colimit coincidence, \`a la \cite{smyth-plotkin:rde}]
For any $\omega$-chain $(a^e_n\dashv a^p_n)_{n\in \NN}$ of ep-pairs  in an $\wCpo$-category $\catC$, any $\wCpo$-colimiting cocone on $(a^e_n)_{n\in \NN}$
consists of embeddings and the corresponding projections form an 
$\wCpo$-limiting cone on $(a^p_n)_{n\in \NN}$.
\end{lemma}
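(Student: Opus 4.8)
The plan is to follow the classical argument of \cite{smyth-plotkin:rde}, transported to the $\wCpo$-enriched setting, where every increasing $\omega$-chain in a homset $\catC(X,Y)$ has a supremum and composition is $\wCpo$-continuous (monotone and preserving such suprema in each variable). Write the chain as $\emb{a_n}:D_n\to D_{n+1}$, $\prj{a_n}:D_{n+1}\to D_n$, and put $\emb{a_{m,n}}\defeq \emb{a_{n-1}}\circ\cdots\circ\emb{a_m}$ and $\prj{a_{m,n}}\defeq \prj{a_m}\circ\cdots\circ\prj{a_{n-1}}$ for the composite embeddings and projections, which are again ep-pairs. Fix an $\wCpo$-colimiting cocone $\left(\mu_n:D_n\to D\right)_{n\in\NN}$ on $\left(\emb{a_n}\right)_{n\in\NN}$.

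First I would construct the candidate projections. For each $n$, the family $\lambda^{(n)}_m:D_m\to D_n$ defined by $\prj{a_{n,m}}$ when $m\geq n$ and $\emb{a_{m,n}}$ when $m\leq n$ is a cocone on $\left(\emb{a_m}\right)_m$, as one checks using $\prj{a_m}\circ\emb{a_m}=\id$ in the descending range; it therefore induces a unique $\prj{\mu_n}:D\to D_n$ with $\prj{\mu_n}\circ\mu_m=\lambda^{(n)}_m$. Taking $m=n$ gives $\prj{\mu_n}\circ\mu_n=\id$, so each $\mu_n$ is a split mono, and comparing composites with every $\mu_j$ yields the compatibility $\prj{\mu_n}=\prj{a_n}\circ\prj{\mu_{n+1}}$, i.e. $\left(\prj{\mu_n}\right)_n$ is a cone on $\left(\prj{a_n}\right)_n$.

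The heart of the argument is to show $\sup_n \mu_n\circ\prj{\mu_n}=\id_D$. Since $\emb{a_n}\circ\prj{a_n}\leq\id$ and composition is monotone, the endomorphisms $e_n\defeq \mu_n\circ\prj{\mu_n}$ form an increasing $\omega$-chain in $\catC(D,D)$ (using $\mu_{n+1}\circ\emb{a_n}=\mu_n$ and $\prj{a_n}\circ\prj{\mu_{n+1}}=\prj{\mu_n}$), so their supremum $g$ exists. Composing with $\mu_m$, using $\prj{\mu_n}\circ\mu_m=\emb{a_{m,n}}$ for $n\geq m$ together with $\wCpo$-continuity of composition, gives $g\circ\mu_m=\mu_m$ for every $m$; by the uniqueness clause of the conical colimit universal property applied to the cocone $\left(\mu_m\right)_m$, we conclude $g=\id_D$. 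In particular $\mu_n\circ\prj{\mu_n}\leq\id_D$, which together with $\prj{\mu_n}\circ\mu_n=\id$ shows that $\left(\mu_n\dashv\prj{\mu_n}\right)$ is an ep-pair, proving the first claim.

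Finally I would verify that $\left(\prj{\mu_n}\right)_n$ is an $\wCpo$-limiting cone. Given any cone $\left(f_n:X\to D_n\right)_n$ over $\left(\prj{a_n}\right)_n$, the morphisms $\mu_n\circ f_n$ form an increasing chain by the same estimate, and I would set $f\defeq\sup_n\mu_n\circ f_n$; continuity of composition gives $\prj{\mu_m}\circ f=f_m$, while $f=g\circ f=\sup_n\mu_n\circ\prj{\mu_n}\circ f$ forces uniqueness. The assignments $h\mapsto\left(\prj{\mu_n}\circ h\right)_n$ and $\left(f_n\right)_n\mapsto\sup_n\mu_n\circ f_n$ are mutually inverse, monotone, and $\omega$-continuous, exhibiting $D$ with $\left(\prj{\mu_n}\right)_n$ as the $\wCpo$-limit of $\left(\prj{a_n}\right)_n$. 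The main obstacle is precisely the identity $\sup_n\mu_n\circ\prj{\mu_n}=\id_D$: it is here that the two-sided structure of $\wCpo$-enrichment must be combined, with the order-theoretic part supplying the supremum and $\wCpo$-continuity, and the one-dimensional universal property of the colimit supplying uniqueness of the mediating morphism.
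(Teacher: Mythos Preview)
Your proof is correct and is precisely the classical Smyth--Plotkin argument, adapted to the $\wCpo$-enriched setting: construct the projections $\prj{\mu_n}$ by the universal property of the colimit, verify $\sup_n\mu_n\circ\prj{\mu_n}=\id_D$ by testing against the cocone legs, and then exhibit the explicit inverse $\left(f_n\right)_n\mapsto\sup_n\mu_n\circ f_n$ to the comparison map into the limit. All uses of $\wCpo$-enrichment (existence of suprema in the homs, continuity of composition in each variable) are correctly isolated.

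The paper itself gives no proof of this lemma: it is stated as a recalled ``variation on [Smyth--Plotkin]'s celebrated limit--colimit coincidence result'' and used as a black box. So there is nothing to compare against beyond noting that you have supplied exactly the standard proof the citation points to.
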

Since \eqref{eq:diagram-for-the-colimit-that-defines-the-fddt} is the chain of embeddings  of a chain of ep-pairs, the $\wCpo$-colimit of these embeddings coincides with the $\wCpo$-limit of the associated chain $\left( \comochain{n} \right) _{n\in\NN } $ of projections \eqref{eq:projections-diagram-for-the-colimit-that-defines-the-fddt}, denoted herein by $\prjEDchain{E}{A}$. Such a \textit{bilimit} of ep-pairs is absolute in the sense that any $\wCpo$-functor $H : \catC \to \catC '$ preserves the conical $\wCpo$-colimit (and $\wCpo$-limit) of $\pEDchain{E}{A} $ (respectively, $\prjEDchain{E}{A}$).

Since the conical $\wCpo$-colimit of $\pEDchain{E}{A} $ is absolute, the diagram \eqref{eq:parametric-types-in-terms-of-functors} commutes, and $J$ strictly preserves $\wCpo$-colimits, we have the invertible morphism \eqref{eq:isomorphism-giving-the-roll-for-rCBV-wCPO-pairs} given by the composition of the respective canonical comparison morphisms. 

\footnotesize
\begin{equation}\label{eq:isomorphism-giving-the-roll-for-rCBV-wCPO-pairs}
\diag{rolling-wCPO-pairs} 
\end{equation} 
\normalsize 
 
It should be noted that, for each $f: \left( J^\op\times J\right) ^{n-1} (A)\to \left( J^\op\times J\right)^{n-1} (B) $ in $\left( \catC ^\op\times \catC\right) ^{n-1}  $, we have an induced $\catV $-natural transformation $\pEDchain{E}{f} : \pEDchain{E}{A}\to \pEDchain{E}{B} $. This association extends to a \textit{$\catV$-functor  $\pEDchainn{E}$ from $\left( \catC ^\op\times \catC\right) ^{n-1}   $ into the $\catV$-category of chains in $\catC $}. The association $A\mapsto\unlifteppair{\pEDchain{E}{A}}  $ also extends to a \textit{$\catV $-functor $\unlifteppair{\pEDchainn{E}}$  from $\left( \catV ^\op\times \catV\right) ^{n-1}  $ into the $\catV $-category of chains} by the $\catV$-faithfulness of $J$, .

\textit{We define the $\fddt $ operator $\wfixpointRecT $ as follows}. For each $n\in\NN ^\ast $, given a $\left(\catV , \monadT \right) $-parametric type $\pEE{E} = \left( \pE{E}{\catV},  \pE{E}{\catC}\right) $, we define:
\begin{equation}
	\wfixpointRecT \pEE{E} = \left(  \wfixpointRecT \pE{E}{\catV} , \wfixpointRecT \pE{E}{\catC} \right) \defeqq  \left(  \colim\circ\unlifteppair{\pEDchainn{E}} , \colim\circ\pEDchainn{E} \right)
\end{equation}
where, by abuse of language, $\colim $ is the $\catV $-functor from the $\catV$-category of chains in $\catV$ (respectively, in $\catC $) into the $\catV$-category $\catV$ (respectively, $\catC$).

Since every isomorphism is an embedding, there is only one \textit{$\pwrollReT{E}_A $ in $\catV $ such that $J\left( \pwrollReT{E}_A\right) $ is equal to \eqref{eq:isomorphism-giving-the-roll-for-rCBV-wCPO-pairs}}.
The morphisms $\wrollReT ^E = \left( \pwrollReT{E} _A\right) _{A\in \left( \catV ^\op \times\catV\right) ^{n-1} } $ gives a $\catV$-natural transformation $\pE{E}{\catV}\left( \id, \wfixpointRecT\pE{E}{\catV} ^\op , \wfixpointRecT\pE{E}{\catV}  \right)\rightarrow \wfixpointRecT\pE{E}{\catV}  $  such that $J\left(\wrollReT ^E  \right) $ is invertible. Therefore $\wrolling \defeqq \left( \wrollReT ^E  \right) _{E\in\ParamAll{\catV}{\monadT} } $ is a rolling for $\wfixpointRecT $ and we can define the (free) type recursion $\wffixpointRecT\defeqq \left( \wfixpointRecT , \wrolling\right) $.

\begin{lemma}[Underlying $rCBV$ model]\label{lem:UnderlyingrCBVmodel}
	There is a forgetful functor $\rCBVU : \wrCBVcat\to \RCBVcat $ defined by $\rCBVU\left( \catV , \monadT \right) = \left( \catV , \monadT , \wffixpointRecT\right)  $, that takes every morphism $H$ to its underlying morphism of $CBV$ models.
\end{lemma}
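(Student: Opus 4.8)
The object part of $\rCBVU$ is essentially fixed by the construction preceding the statement: for every $rCBV$ $\wCpo$-pair $\left(\catV,\monadT\right)$ we have already exhibited a free type recursion $\wffixpointRecT = \left(\wfixpointRecT,\wrolling\right)$, so the plan is to set $\rCBVU\left(\catV,\monadT\right)=\left(\catV,\monadT,\wffixpointRecT\right)$ and to check the two remaining requirements: that the underlying $CBV$ pair morphism of an $rCBV$ $\wCpo$-pair morphism $H:\left(\catV,\monadT\right)\to\left(\catV',\monadT'\right)$ is an $rCBV$ model morphism in the sense of Definition~\ref{def:rCBV-models}, and that $\rCBVU$ preserves identities and composites. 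The latter is immediate, since $\rCBVU$ acts as the identity on underlying functors and both composition and identities of $CBV$ pair morphisms are computed on underlying functors. So the real content is to verify the two $H$-compatibility conditions of Definition~\ref{def:rCBV-models} for $\wfixpointRecT$ and for $\wrolling$.

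First I would record that $H$ transports the raw ingredients of the colimit construction. Being a $CBV$ pair morphism, $H$ strictly preserves finite coproducts and hence the initial object, so $\lift{H}$ sends the ep-zero object $\epzeroo = J\left(\initiall\right)$ to $\epzeroo' = J'\left(\initiall'\right)$; and because $\lift{H}$ is an $\wCpo$-functor it preserves the defining relations $\emb u \circ \prj u \leq \ID$ and $\prj u\circ \emb u = \ID$, hence carries ep-pairs to ep-pairs, embeddings to embeddings, and projections to projections. Fixing an $H$-compatible pair $\left(E,E'\right)$ of $n$-variable parametric types with $n\geq 1$, the $H$-compatibility square \eqref{eq:H-compatibility-of-paramatric-types}, together with the $J$-square \eqref{eq:parametric-types-in-terms-of-functors}, translates into the statement that $\lift{H}$ intertwines the sliced Kleisli-level functors $\paE{E}{A}{\catC}$ and $\paE{E'}{H(A)}{\catC'}$. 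A routine induction along \eqref{eq:inductive-definition-of-ep-pairs}, with base case supplied by the ep-zero computation above, then shows that $\lift{H}$ maps the whole chain $\pEDchain{E}{A}$ to $\pEDchain{E'}{H(A)}$.

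Next I would pass to the colimits. Since $\pEDchain{E}{A}$ is the chain of embeddings of a chain of ep-pairs, its colimit is an absolute bilimit, so $\lift{H}$ preserves it; this is in any case forced by the requirement in Definition~\ref{def:concrete-rCBV-models} that $H$ strictly preserve $\wCpo$-colimits. Combined with the previous step this yields $\lift{H}\left(\colim\pEDchain{E}{A}\right)=\colim\pEDchain{E'}{H(A)}$, and via $J'\circ H = \lift{H}\circ J$, the faithfulness of $J'$, and condition \eqref{eq:pulling-embeddings-values}, the corresponding statement in $\catV$ for $\unlifteppair{\pEDchain{E}{A}}$. Transporting this from objects to the functorial action on morphisms by the same naturality, I conclude that $\left(\wfixpointRecT E,\wfixpointRecT E'\right)$ is $H$-compatible, which is the first condition of Definition~\ref{def:rCBV-models}.

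Finally, for the rolling equation \eqref{eq:roll-eq-diagram}, I would use that $\pwrollReT{E}_A$ is characterised by $J\left(\pwrollReT{E}_A\right)$ being the composite of canonical comparison isomorphisms \eqref{eq:isomorphism-giving-the-roll-for-rCBV-wCPO-pairs}. Each of its three constituents — the comparison coming from the parametric-type square, the colimit-shift comparison, and the one expressing $J\left(\colim\right)\cong\colim J$ — is preserved by $\lift{H}$, precisely because $\lift{H}$ strictly preserves $\wCpo$-colimits, intertwines the parametric-type functors by $H$-compatibility, and commutes with the Kleisli functors. Hence $\lift{H}\left(J\left(\pwrollReT{E}_A\right)\right)=J'\left(\pwrollReT{E'}_{\left(H^\op\times H\right)^{n-1}(A)}\right)$, and faithfulness of $J'$ yields the rolling equation. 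I expect this last step to be the main obstacle: the delicate point is matching the three canonical comparisons through $H$ so that the composite isomorphisms agree on the nose, and it is exactly here that the strictness hypotheses on $H$ — strict preservation of $\wCpo$-colimits and of the $CBV$-pair structure — are indispensable.
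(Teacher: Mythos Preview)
Your proposal is correct and follows essentially the same approach as the paper: the paper's proof is a one-sentence appeal to the colimit-based definition of $\wffixpointRecT$ together with the hypothesis that $H$ strictly preserves $\wCpo$-colimits, and your argument is a careful expansion of exactly this reasoning, spelling out how $\lift{H}$ transports ep-pairs and the defining chains, and why the canonical comparison isomorphisms constituting the rolling are preserved. Nothing substantively different is happening; you have simply filled in the details the paper leaves implicit.
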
	
\begin{proof}
	 From the definition of $\wffixpointRecT$ and the fact that $H$ strictly preserves $\catV$-colimits, we conclude that, indeed, $H$ respects the conditions of a $rCBV$ model morphism described in Definition~\ref{def:rCBV-models}.
\end{proof}

\begin{remark}\label{rem:products-of-rCBV-wCPO-pairs} 
The product of $rCBV$ $\wCpo$-pairs is computed as expected: $\left( \catV _0 , \monadT _0 \right) \times \left( \catV _1 , \monadT _1 \right) \cong \linebreak\left( \catV _0 \times\catV _1 , \monadT _0\times \monadT _1 \right) $. Moreover, it is clear that $\rCBVU$ preserves finite products.
\end{remark}

\subsection{Concrete semantics}\label{subsect:concrete-semantics-for-the-recursive-types}
The $CBV$ pair $\left( \wCpo , \monadwP{-}\right) $ as in Section \ref{subsec:wcPO-Basic-Model} clearly satisfies the conditions of Definition~\ref{def:concrete-rCBV-models} and, hence, it is also an $rCBV$ $\wCpo $-pair. By Proposition \ref{prop:universal-property-type-recursive-language}, for each $k\in\NN \cup\left\{ \infty \right\} $, we have unique $rCBV$ model morphisms \eqref{eq:semantics-for-the-source-recursive-types} and 
\eqref{eq:semantics-for-the-target-recursive-types} respecting the assignments of Figure~\ref{fig:assignment-semantics} and \eqref{eq:k-semantics-for-the-target-language}. In other words, following Remark \ref{rem:extending-functors}, we have only one extension of the semantics \eqref{eq:functor-semantics}  and \eqref{eq:semantics-of-the-target-as-a-functor} to the respective languages with recursive types.
\\ 
\footnotesize
\noindent\begin{minipage}{.5\linewidth}	
	\begin{equation}\label{eq:semantics-for-the-source-recursive-types}
	\sem{-} : \left( \SynVr, \SynTr, \SynffixpointRecT   \right)  \to \rCBVU\left( \wCpo , \monadwP{-}\right)  
	\end{equation} 
	\normalsize
\end{minipage}%
\begin{minipage}{.5\linewidth}
	\begin{equation}\label{eq:semantics-for-the-target-recursive-types}
	\semt{k}{-}: \left( \SynVrt , \SynTrt ,\SyntffixpointRecT  \right)  \to \rCBVU\left( \wCpo , \monadwP{-}\right) .
	\end{equation} 	
\end{minipage}\\
\normalsize

Moreover, by Remark \ref{rem:products-of-rCBV-wCPO-pairs}, we have that the product
$ \left( \wCpo\times \wCpo , \monadwP{-}  \right) $ as in Section \ref{subsec:wcPO-Basic-Model} is an $rCBV$ $\wCpo$-pair.

\subsection{Subscone for $rCBV$ $\wCpo$-pairs}\label{ssec:rec-types-sscone}
The first step for our logical relations proof is to verify that, for each $(n,k)\in\NN\times \left( \NN \cup \left\{ \infty \right\}\right) $, the $CBV$  $\wCpo$-pair $\left(\SUBscone{\wCpo }{\sconeFUNCTOR{n, k}}, \monadLR{n, k}{-} \right) $ as in Proposition \ref{prop:the-subscone-with-the-monads-yields-a-CBV-wCPO-pair} yields an $rCBV$ $\wCpo$-pair. In order to do that, we rely on Theorem \ref{theo:theorem-descent-rCBV-wCpo-structure-structure} about lifting the $rCBV$ $\wCpo$-pair structure.
\begin{definition}[Impurity preserving/purity reflecting]
	Let $\left( \catV , \monadT \right) $ and $\left( \catV ', \monadT ' \right) $ be $CBV$ pairs. A $CBV$ pair morphism $H: \catV \to \catV '$ is \textit{impurity preserving} (or, \emph{purity reflecting}) if, whenever $H(f) = \ee ' _Y \circ g $, there is $\unlifteppair{f}$ in $\catV$ such that $\ee  _Y \circ \unlifteppair{f}  = f$. 
\end{definition} 	

\begin{theorem}\label{theo:theorem-descent-rCBV-wCpo-structure-structure}
	Let $\left( \catV ', \monadT ' \right) $ be an $rCBV$ $\wCpo$-pair, and	
	 $\left( \catV , \monadT \right) $ a $CBV$ pair such that $\catV $ is a cocomplete $\wCpo$-cartesian closed category and $T(\initiall) $ is terminal.
	
If $H : \catV \to \catV '$ is a locally full $\wCpo$-functor that yields an impurity preserving $CBV$ pair morphism $\left( \catV , \monadT \right)\to \CBVUrp\left( \catV ', \monadT ' \right)$,
then $\left( \catV , \monadT  \right) $ is an $rCBV$ $\wCpo$-pair. If, furthermore, 
$H$ strictly preserves $\wCpo$-colimits, then $H$ yields an $rCBV$ $\wCpo$-pair morphism. 
\end{theorem}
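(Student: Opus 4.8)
The plan is to check the four defining conditions (r$\omega$.1)--(r$\omega$.4) of Definition~\ref{def:concrete-rCBV-models} for $\left(\catV,\monadT\right)$, transporting each corresponding fact about $\left(\catV ',\monadT '\right)$ across $H$ and exploiting two reflection principles: local fullness of $H$ reflects order relations and equalities, while the impurity-preserving hypothesis reflects purity. Condition (r$\omega$.1) is literally the hypothesis that $\catV$ is a cocomplete $\wCpo$-cartesian closed category, so nothing is needed there. Throughout, I will use that, since $H$ is a strict monad morphism ($HT=T'H$) with $\left(H,\ID\right)$ a monad op-functor, we have $H\left(\ee_X\right)=\ee '_{HX}$; that the Kleisli square \eqref{eq:Kleisli-morphism} produces a functor $\lift H:\catC\to\catC '$ with $\lift H\circ J=J'\circ H$ acting as $\lift H(f)=H(f)$ on hom-objects; and that $\lift H$ is consequently a locally full $\wCpo$-functor whenever $H$ is.

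For (r$\omega$.2), I must show that each unit component $\ee_X:X\to TX$ is full, i.e.\ that $\ehom{\catV}{B}{\ee_X}$ is order-reflecting for every $B$. Applying $H$ yields a commuting square relating post-composition with $\ee_X$ to post-composition with $\ee '_{HX}=H(\ee_X)$; the latter is full because $\left(\catV ',\monadT '\right)$ satisfies (r$\omega$.2), and the two vertical legs (given by $H$ on hom-objects) are order-reflecting because $H$ is locally full. Chasing an inequality $\ee_X\circ f\le \ee_X\circ g$ through this square, using monotonicity of $H$, then reflects it back to $f\le g$, so $\ee_X$ is full.

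For (r$\omega$.3), I first note that $\epzeroo=J\left(\initiall\right)$ is a zero object of $\catC$: it is initial since $\initiall$ is initial in $\catV$ and $\catC\left(\initiall,Y\right)=\catV\left(\initiall,TY\right)$ is a singleton, and terminal since $T\left(\initiall\right)$ is terminal and $\catC\left(X,\initiall\right)=\catV\left(X,T\initiall\right)$ is a singleton. For each $A$, the composite $\prj{\uniqMor}\circ\emb{\uniqMor}$ is the unique endomorphism of a zero object, hence $\ID_{\epzeroo}$; the remaining inequality $\emb{\uniqMor}\circ\prj{\uniqMor}\le \ID_A$ is obtained via $\lift H$. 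Since $H$ strictly preserves the initial object, $\lift H\left(\epzeroo\right)=J'\left(\initiall\right)=\epzeroo'$ in $\catC '$, so $\lift H$ sends the unique morphisms $\emb{\uniqMor},\prj{\uniqMor}$ to the embedding and projection of the ep-zero structure of $\catC '$ at $HA$; these satisfy $\lift H\left(\emb{\uniqMor}\circ\prj{\uniqMor}\right)\le \ID_{HA}=\lift H\left(\ID_A\right)$ because $\left(\catV ',\monadT '\right)$ satisfies (r$\omega$.3), and local fullness of $\lift H$ reflects this to $\emb{\uniqMor}\circ\prj{\uniqMor}\le \ID_A$.

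For (r$\omega$.4), given an ep-pair $u:J(A)\epto J(B)$ in $\catC$, I push it through $\lift H$: as a $\wCpo$-functor $\lift H$ preserves the defining (in)equalities $\emb u\circ\prj u\le\ID$ and $\prj u\circ\emb u=\ID$, so $\lift H(u):J'(HA)\epto J'(HB)$ is an ep-pair in $\catC '$. Condition (r$\omega$.4) for $\left(\catV ',\monadT '\right)$ then yields a morphism $g$ in $\catV '$ with $J'(g)=\lift H\left(\emb u\right)=H\left(\emb u\right)$, that is, $H\left(\emb u\right)=\ee '_{HB}\circ g$; thus $H\left(\emb u\right)$ is pure in $\catV '$. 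The impurity-preserving hypothesis on $H$ (applied to $f=\emb u$) now supplies a value $\unlifteppair u:A\to B$ in $\catV$ with $\ee_B\circ\unlifteppair u=\emb u$, i.e.\ $J\left(\unlifteppair u\right)=\emb u$, as required. This last step --- chaining ep-pair preservation forward, reflection of existence via (r$\omega$.4) in $\catV '$, and reflection of purity via the impurity-preserving property --- is where the hypotheses interlock most delicately, and is the main obstacle; the other conditions are essentially bookkeeping once the reflection principles are in place. Finally, for the morphism claim: having established that $\left(\catV,\monadT\right)$ is an $rCBV$ $\wCpo$-pair, an $H$ that additionally strictly preserves $\wCpo$-colimits satisfies verbatim the definition of an $rCBV$ $\wCpo$-pair morphism (an $\wCpo$-colimit-preserving $\wCpo$-functor whose underlying functor is a $CBV$ pair morphism), so no further argument is needed.
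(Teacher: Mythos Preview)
Your proposal is correct and follows essentially the same approach as the paper: verify conditions (r$\omega$.1)--(r$\omega$.4) by reflecting each across $H$, using local fullness to reflect fullness of the unit and the ep-pair inequalities, and using impurity preservation to reflect purity of the embedding. Your write-up is in fact slightly more detailed than the paper's (e.g., spelling out why $J(\initiall)$ is terminal and explicitly handling the final morphism claim), but the argument is the same.
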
 	
\begin{proof}
	We prove that $\left( \catV , \monadT \right) $ yields an
	$rCBV$ $\wCpo$-pair. By hypothesis, $\left( \catV , \monadT \right) $  satisfies \eqref{condition-for-rCBVwCpo-cocomplete}. We prove the remaining conditions of Definition~\ref{def:concrete-rCBV-models} below.
\begin{enumerate}
	\item[\eqref{condition-for-rCBVwCpo-unit}] Let $\ee $ and $\ee '$ be respectively the unit of $\monadT $ and $\monadT '$. 
	 Since $H$ is locally full, it reflects full morphisms. This implies that, for any $C\in\catV$,  $\ee _ C $ is full since $\ee '_{H(C)}= H\left( \ee _ C \right) $ is full.  
	 \item[\eqref{condition-for-rCBVwCpo-ep-zero}] Since $T(\initiall)$ is terminal, $J\left(\initiall \right) $ is a zero object. Thus, for each $A\in\catC $,
	 we have the pair  \eqref{eq:unique-morphisms-computations}  of unique morphisms in $\catC$.\\
	 Since $\lift{H}$ preserves initial objects and $\left( \catV ', \monadT ' \right)$ is an $rCBV$ $\wCpo$-pair, we have that \eqref{eq:unique-morphisms-computations-image-by-H} is the ep-pair of the unique morphisms. Finally, since $\lift{H}$ is a locally full $\wCpo$-functor, it reflects ep-pairs and, hence, \eqref{eq:unique-morphisms-computations}  is an ep-pair.
\\ 
\noindent\begin{minipage}{.5\linewidth}	
	\begin{equation}\label{eq:unique-morphisms-computations} 
		\left( \uniqMor _A : J\left(\initiall \right) \to  A  ,  {\uniqMor ^A}  :  A \to J\left(\initiall \right)  \right) 
	\end{equation} 
	\normalsize
\end{minipage}%
\begin{minipage}{.5\linewidth}
	\begin{equation}\label{eq:unique-morphisms-computations-image-by-H} 
		\left( \lift{H}\left( \uniqMor _A\right)   ,  \lift{H}\left({\uniqMor ^A}\right)   :  \lift{H}\left( A\right) \to \epzeroo \right) 
	\end{equation} 
\end{minipage} 
 \item[\eqref{eq:pulling-embeddings-values}]  
  Given an ep-pair $u : J(A)\epto J(B)$ in $\catC $, the image $H(u) :\lift{H}J(A)\epto \lift{H}J(B) $  by $H$ is an ep-pair. Since $\left( \catV ', \monadT '\right) $ is an $rCBV$ $\wCpo$-pair, there is one morphism 
 $\unlifteppair{\lift{H}\left( u\right)} : H(A)\to H(B)  $ in $\catV '$ such that $J'\left( \unlifteppair{\lift{H}\left( u\right)}  \right) = \lift{H}\left( \emb u \right)   $.
 Since the $CBV$ pair morphism $H: \left( \catV , \monadT \right)\to \CBVUrp\left( \catV ', \monadT ' \right)$ is impurity preserving, we conclude that there is $\unlifteppair{ u } : A\to B$ such that $J\left( \unlifteppair{ u } \right) = \emb u $.  
\end{enumerate}
\end{proof}

As a consequence, in the setting of subscones satisfying Assumption~\ref{assum:subscone-assumptions}, we get:

\begin{theorem}\label{the:maybe-the-main-result-on-LR-recursive}
	Let $\left( \catV , \monadT \right) $ be an $rCBV$ $\wCpo$-pair, and  \eqref{eq:forgetful-subscone-rCBV} the forgetful $\wCpo$-functor coming 
	from a pair $\left( G: \catV  \to\catD , \monadSub\right) $ satisfying Assumption~\ref{assum:subscone-assumptions}. 
	
	 If $\catD$ is cocomplete and $\LRMONAD = \left( \lift{T}, \lift{\mm}, \lift{\ee} \right)   $ is a strong monad that is a lifting of the monad $\monadT $ along \eqref{eq:forgetful-subscone-rCBV}  such that \eqref{eq:trivial-condition-monad} and \eqref{eq:impurity-preserving-in-terms-of-the-pullback-unit} hold, then $\left( \SUBscone{\catD}{G}, \LRMONAD  \right) $ is an $rCBV$ $\wCpo$-pair and $\forgetfulSub$ yields an $rCBV$ $\wCpo$-pair morphism \eqref{eq:the-CBV-pair-morphicm-sscone}.
\begin{enumerate}[$\mathfrak{c}$.1] 
	 \item  $\lift{T}$ takes the initial to the terminal object;\label{eq:trivial-condition-monad}
	 \item  for any $\left( D, C, j\right)\in \SUBscone{\catD}{G} $, 
	 denoting  $\lift{T}\left( D, C, j\right) = \left( \unlift{\LRMONAD\left(D,C,j\right)}, T(D), \unlift{\LRMONAD{j}}\right) $, Diag~\eqref{eq:diagram-induced-unit} induced by the unit $\lift{\ee}$ is a pullback in $\catD $.\label{eq:impurity-preserving-in-terms-of-the-pullback-unit}
\end{enumerate}	 
\end{theorem}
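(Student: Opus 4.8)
The plan is to recognise this statement as an immediate instance of the descent result Theorem~\ref{theo:theorem-descent-rCBV-wCpo-structure-structure}, applied to the forgetful $\wCpo$-functor $H = \forgetfulSub : \SUBscone{\catD}{G}\to \catV$, taking $\left(\SUBscone{\catD}{G},\LRMONAD\right)$ as the $CBV$ pair to be equipped with $rCBV$ $\wCpo$-structure and the given $\left(\catV,\monadT\right)$ as the ambient $rCBV$ $\wCpo$-pair. Concretely, I would verify the four hypotheses of that theorem: (i) $\SUBscone{\catD}{G}$ is a cocomplete $\wCpo$-cartesian closed category; (ii) $\lift{T}(\initiall)$ is terminal; (iii) $\forgetfulSub$ is a locally full $\wCpo$-functor that strictly preserves $\wCpo$-colimits; and (iv) $\forgetfulSub$ yields an \emph{impurity preserving} $CBV$ pair morphism $\left(\SUBscone{\catD}{G},\LRMONAD\right)\to \CBVUrp\left(\catV,\monadT\right)$. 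Once these are in place, Theorem~\ref{theo:theorem-descent-rCBV-wCpo-structure-structure} delivers both conclusions at once.

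Most of these inputs are already available from the earlier development and require no new work. For (i), since $\catD$ is cocomplete by assumption and $\catV$ is cocomplete by condition~\ref{condition-for-rCBVwCpo-cocomplete} of Definition~\ref{def:concrete-rCBV-models}, the product $\catD\times\catV$ is cocomplete, so Corollary~\ref{coro:limits-and-colimits-subscone} gives that $\SUBscone{\catD}{G}$ is a cocomplete $\wCpo$-bicartesian closed category. Condition (ii) is precisely hypothesis \eqref{eq:trivial-condition-monad} ($\mathfrak{c}$.1). For (iii), Theorem~\ref{theo:properties-forgetful-subscone} shows $\forgetfulSub$ is strictly (bi)cartesian closed, locally full, and strictly $\wCpo$-cocontinuous. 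Finally, that $\forgetfulSub$ is a $CBV$ pair morphism follows by combining its strict preservation of the bicartesian closed structure with the hypothesis that $\LRMONAD$ \emph{lifts} $\monadT$ along $\forgetfulSub$: the latter says exactly that $\forgetfulSub\circ\lift{T} = T\circ\forgetfulSub$ and that $\forgetfulSub$ carries $\lift{\ee},\lift{\mm}$ to $\ee,\mm$, so $\left(\forgetfulSub,\ID\right)$ is a strict monad op-functor.

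The only substantive step, and the expected main obstacle, is establishing impurity preservation (iv), which is where hypothesis \eqref{eq:impurity-preserving-in-terms-of-the-pullback-unit} ($\mathfrak{c}$.2) does its work. Here I would take a morphism $f:\left(D_0,C_0,j_0\right)\to \lift{T}\left(D_1,C_1,j_1\right)$ of $\SUBscone{\catD}{G}$ whose $\forgetfulSub$-image factors through the unit, say $\forgetfulSub(f)=\ee_{C_1}\circ g$ for some $g:C_0\to C_1$ in $\catV$, and show that $g$ itself defines a morphism $\left(D_0,C_0,j_0\right)\to\left(D_1,C_1,j_1\right)$ in the subscone. Spelling out that $f$ is a subscone morphism provides a $\catD$-map $f_0:D_0\to \unlift{\LRMONAD\left(D_1,C_1,j_1\right)}$ with $\unlift{\LRMONAD{j_1}}\circ f_0 = G(\forgetfulSub(f))\circ j_0 = G(\ee_{C_1})\circ G(g)\circ j_0$. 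Thus $\left(f_0,\;G(g)\circ j_0\right)$ is a cone over the square \eqref{eq:diagram-induced-unit} associated with $\left(D_1,C_1,j_1\right)$, which by ($\mathfrak{c}$.2) is a pullback in $\catD$; its universal property yields a unique $h:D_0\to D_1$ with $j_1\circ h = G(g)\circ j_0$. This equation is exactly the condition that $g$ respects the predicates, so it defines a morphism $\unlifteppair{f}$ in $\SUBscone{\catD}{G}$ with $\forgetfulSub(\unlifteppair{f})=g$. Since $\forgetfulSub$ is faithful (being locally full) and $\forgetfulSub\left(\lift{\ee}_{\left(D_1,C_1,j_1\right)}\circ\unlifteppair{f}\right)=\ee_{C_1}\circ g=\forgetfulSub(f)$, I conclude $\lift{\ee}_{\left(D_1,C_1,j_1\right)}\circ\unlifteppair{f}=f$, as required for impurity preservation. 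The delicate point throughout is translating the proof-irrelevant, comma-category bookkeeping of subscone morphisms into the single pullback diagram, and checking that the pullback hypothesis is exactly strong enough to reflect purity; with that in hand, Theorem~\ref{theo:theorem-descent-rCBV-wCpo-structure-structure} finishes the proof, the strict $\wCpo$-cocontinuity of $\forgetfulSub$ giving the morphism clause \eqref{eq:the-CBV-pair-morphicm-sscone}.
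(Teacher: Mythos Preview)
Your proposal is correct and follows essentially the same approach as the paper: both reduce to Theorem~\ref{theo:theorem-descent-rCBV-wCpo-structure-structure} by verifying cocompleteness via Corollary~\ref{coro:limits-and-colimits-subscone}, the properties of $\forgetfulSub$ via Theorem~\ref{theo:properties-forgetful-subscone}, and then invoking conditions ($\mathfrak{c}$.1) and ($\mathfrak{c}$.2). Your treatment is in fact more detailed than the paper's, which simply asserts that ($\mathfrak{c}$.2) implies purity reflection; you spell out the pullback argument that makes this work.
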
 
\noindent\begin{minipage}{.5\linewidth}	
\begin{equation}\label{eq:forgetful-subscone-rCBV}
	\forgetfulSub : \SUBscone{\catD}{G}\to \catV 
\end{equation}	
\begin{equation}\label{eq:the-CBV-pair-morphicm-sscone}
	\left( \SUBscone{\catD}{G}, \LRMONAD  \right)\to \left( \catV , \monadT \right) 
\end{equation}
	\normalsize
\end{minipage}%
\begin{minipage}{.5\linewidth}
\begin{equation}\label{eq:diagram-induced-unit}
	\diag{unit-pullback}
\end{equation}
\end{minipage}

\begin{proof}
	By Corollary \ref{coro:limits-and-colimits-subscone}, $\SUBscone{\catD}{G}$ is cocomplete $\wCpo$-cartesian closed. Moreover,  $\forgetfulSub$ is locally full, strict $\wCpo$-cartesian closed, and $\wCpo$-colimit preserving by Theorem \ref{theo:properties-forgetful-subscone}.
	Therefore, the fact that $\LRMONAD$ is a lifting of $\monadT $ through $\forgetfulSub$ implies
	that it yields a $CBV$ pair morphism \eqref{eq:the-CBV-pair-morphicm-sscone}.
	
\eqref{eq:impurity-preserving-in-terms-of-the-pullback-unit} implies that the $CBV$ pair morphism \eqref{eq:the-CBV-pair-morphicm-sscone} is purity reflecting. Assuming  \eqref{eq:trivial-condition-monad}, this implies that $\left( \SUBscone{\catD}{G}, \LRMONAD  \right)$ is indeed an $rCBV$ $\wCpo$-pair morphism and $\forgetfulSub$ yields an \eqref{eq:the-CBV-pair-morphicm-sscone} is an $rCBV$ $\wCpo$-pair morphism by Theorem \ref{theo:theorem-descent-rCBV-wCpo-structure-structure}.
\end{proof}

In the particular case of interest, we conclude:

\begin{proposition}\label{prop:wCpo-pair-of-Artin-Gluing}
	For each $(n,k)\in\NN\times \left( \NN \cup \left\{ \infty \right\}\right) $, $\left(\SUBscone{\wCpo }{\sconeFUNCTOR{n, k}}, \monadLR{n, k}{-} \right) $ is an $rCBV$ $\wCpo $-pair. Moreover,
	$\forgetfulSub _{n,k} : \SUBscone{\wCpo }{\sconeFUNCTOR{n,k} }\to \wCpo\times \wCpo$
	yields an $rCBV $ $\wCpo $-pair morphism 
	\begin{equation} \label{eq:rCBV-pair-morphism-Lnk}
	 \left(\SUBscone{\wCpo }{\sconeFUNCTOR{n, k}}, \monadLR{n, k}{-} \right) \to \left(\wCpo\times \wCpo, \monadwP{-}\right) .
	 \end{equation}  
\end{proposition}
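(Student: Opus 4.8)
The plan is to deduce the statement as the instance of Theorem \ref{the:maybe-the-main-result-on-LR-recursive} obtained by taking $\left(\catV, \monadT\right) = \left(\wCpo\times\wCpo, \monadwP{-}\right)$, the functor $G = \sconeFUNCTOR{n,k}$, the ambient category $\catD = \wCpo$, and the lifted monad $\LRMONAD = \monadLR{n,k}{-}$. Most of its hypotheses are already in hand: $\left(\wCpo\times\wCpo, \monadwP{-}\right)$ is an $rCBV$ $\wCpo$-pair by Remark \ref{rem:products-of-rCBV-wCPO-pairs} applied to the basic pair of Section \ref{subsect:concrete-semantics-for-the-recursive-types}; the pair $\left(\sconeFUNCTOR{n,k}, \monadSub\right)$ satisfies Assumption \ref{assum:subscone-assumptions}, as recorded just before Proposition \ref{prop:subscone-wcpo-cartesian-Lnk}; $\catD = \wCpo$ is cocomplete; and $\monadLR{n,k}{-}$ is a strong monad lifting $\monadwP{-}$ along $\forgetfulSub_{n,k}$ by the construction of Section \ref{subsect:the-definition-of-the-monad-for-the-logical-relations} together with Proposition \ref{prop:the-subscone-with-the-monads-yields-a-CBV-wCPO-pair}. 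It therefore remains only to verify the two extra conditions \eqref{eq:trivial-condition-monad} and \eqref{eq:impurity-preserving-in-terms-of-the-pullback-unit}; these are where the actual content lies, and I would carry them out directly from the explicit description of $\monadLR{n,k}{-}$.

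For condition \eqref{eq:trivial-condition-monad}, I would first identify the initial object of $\SUBscone{\wCpo}{\sconeFUNCTOR{n,k}}$. Since $\forgetfulSub_{n,k}$ strictly preserves $\wCpo$-colimits and is faithful, the initial object lies over the initial object $\left(\emptyset, \emptyset\right)$ of $\wCpo\times\wCpo$, and a direct computation gives $\sconeFUNCTOR{n,k}\left(\emptyset, \emptyset\right) = \emptyset$, because $\RR^n$ is nonempty so there are no $\wCpo$-morphisms $\RR^n\to\emptyset$; hence the initial object is $\left(\emptyset, \left(\emptyset, \emptyset\right), \ID\right)$. Evaluating the defining formula for $\unlift{\monadLR{n,k}{\emptyset, \left(\emptyset,\emptyset\right), \ID}}$, the middle summand is empty and each summand $\ehom{\SUBscone{\wCpo}{\sconeFUNCTOR{n,k}}}{\openLift{U,n,k}}{\left(\emptyset, \left(\emptyset,\emptyset\right), \ID\right)}$ with $U\in\topologyO{n}$ is empty, since $U$ is a nonempty open set and there is no map $U\to\emptyset$. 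Thus only the singleton $\left\{\leastelement\right\}$ survives, so $\monadLR{n,k}{-}$ sends the initial object to an object lying over $\left(\monadwP{\emptyset}, \monadwP{\emptyset}\right) = \left(\terminall, \terminall\right)$ with the full (singleton) predicate, i.e.\ to the terminal object. This establishes \eqref{eq:trivial-condition-monad}.

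The main obstacle is condition \eqref{eq:impurity-preserving-in-terms-of-the-pullback-unit}, which asks that Diagram \eqref{eq:diagram-induced-unit} be a pullback in $\wCpo$ for every $\left(D, \left(C, C'\right), j\right)$. Here the right-hand vertical map $\unlift{\LRMONAD{j}} = \morLRmonad{X}$ is, by construction, the inclusion of the full sub-$\omega$-cpo $\unlift{\monadLR{n,k}{D, \left(C,C'\right), j}}$ into $\sconeFUNCTOR{n,k}\left(\monadwP{C}, \monadwP{C'}\right)$, so the pullback along $\sconeFUNCTOR{n,k}\left(\ee_C, \ee_{C'}\right)$ is simply the preimage $\sconeFUNCTOR{n,k}\left(\ee_C, \ee_{C'}\right)^{-1}\left(\unlift{\monadLR{n,k}{D,\left(C,C'\right),j}}\right)$. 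I would then argue that this preimage is exactly the image of $j$: an element of $\sconeFUNCTOR{n,k}\left(C, C'\right)$ is a pair of total morphisms $\left(\RR^n, \left(\RR\times\RR^k\right)^n\right)\to\left(C, C'\right)$, and postcomposing with the (full, hence injective) unit $\left(\ee_C, \ee_{C'}\right)$ lands it among the total elements of $\unlift{\monadLR{n,k}{D,\left(C,C'\right),j}}$; inspecting the three summands in the definition of the monad, the $\left\{\leastelement\right\}$ summand and the $\coprod_{U\in\topologyO{n}}$ summand consist of morphisms undefined somewhere, so the only total elements are those coming through the middle summand $\sconeFUNCTOR{n,k}\left(\ee_C, \ee_{C'}\right)\circ j$. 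Since $\left(\ee_C, \ee_{C'}\right)$ is injective, these correspond bijectively, and as $\omega$-cpos, to $D$ via $j$, yielding the required pullback. Finally, since $\forgetfulSub_{n,k}$ strictly preserves $\wCpo$-colimits, Theorem \ref{the:maybe-the-main-result-on-LR-recursive} delivers both that $\left(\SUBscone{\wCpo}{\sconeFUNCTOR{n,k}}, \monadLR{n,k}{-}\right)$ is an $rCBV$ $\wCpo$-pair and that \eqref{eq:rCBV-pair-morphism-Lnk} is an $rCBV$ $\wCpo$-pair morphism, completing the proof.
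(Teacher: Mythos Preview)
Your proposal is correct and follows exactly the same route as the paper: invoke Theorem~\ref{the:maybe-the-main-result-on-LR-recursive} with $\left(\catV,\monadT\right)=\left(\wCpo\times\wCpo,\monadwP{-}\right)$, $G=\sconeFUNCTOR{n,k}$, $\catD=\wCpo$ and $\LRMONAD=\monadLR{n,k}{-}$. The paper's proof simply asserts that the hypotheses of that theorem hold, while you spell out the verification of conditions \eqref{eq:trivial-condition-monad} and \eqref{eq:impurity-preserving-in-terms-of-the-pullback-unit} in detail; your arguments for both (that the third summand is empty over the initial object because each $U\in\topologyO{n}$ is nonempty, and that only the $D$-summand contributes total elements because the $\left\{\leastelement\right\}$- and $\coprod_{U}$-summands are undefined somewhere since $U\subsetneq\RR^n$) are accurate.
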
 
\begin{proof}
In fact, we already know that $\forgetfulSub _{n,k} $ comes from a pair that satisfies Assumption~\ref{assum:subscone-assumptions}. Moreover, $\left(\wCpo\times \wCpo, \monadwP{-}\right)$ is an $rCBV$ $\wCpo$-pair and $\monadLR{n, k}{-}$ is a lifting of $\monadwP{-}$ along $\forgetfulSub _{n,k} $ satisfying the conditions of Theorem \ref{the:maybe-the-main-result-on-LR-recursive}.

\end{proof}

By Proposition \ref{prop:wCpo-pair-of-Artin-Gluing} and Lemma \ref{lem:UnderlyingrCBVmodel}, we get:
\begin{corollary}
	$\forgetfulSub _{n,k}$ yields an $rCBV$ model morphism  $$\rCBVU \left(\SUBscone{\wCpo }{\sconeFUNCTOR{n, k}}, \monadLR{n, k}{-} \right)\to\rCBVU\left(\wCpo\times \wCpo, \monadwP{-}\right) . $$
\end{corollary}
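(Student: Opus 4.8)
The plan is to observe that this corollary is a purely formal consequence of the two results immediately preceding it, requiring no genuinely new argument. First I would record that Proposition~\ref{prop:wCpo-pair-of-Artin-Gluing} already delivers the substantive content: it asserts that both $\left(\SUBscone{\wCpo }{\sconeFUNCTOR{n, k}}, \monadLR{n, k}{-} \right)$ and $\left(\wCpo\times \wCpo, \monadwP{-}\right)$ are $rCBV$ $\wCpo$-pairs and that $\forgetfulSub _{n,k}$ yields an $rCBV$ $\wCpo$-pair morphism~\eqref{eq:rCBV-pair-morphism-Lnk} between them. Phrased differently, $\forgetfulSub_{n,k}$ is a morphism in the category $\wrCBVcat$.

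Next I would invoke Lemma~\ref{lem:UnderlyingrCBVmodel}, which supplies the forgetful functor $\rCBVU : \wrCBVcat \to \RCBVcat$ sending each $rCBV$ $\wCpo$-pair to its underlying $rCBV$ model and each $rCBV$ $\wCpo$-pair morphism to its underlying $rCBV$ model morphism. Applying this functor to the morphism $\forgetfulSub_{n,k}$ obtained above yields precisely an $rCBV$ model morphism
$$
\rCBVU \left(\SUBscone{\wCpo }{\sconeFUNCTOR{n, k}}, \monadLR{n, k}{-} \right)\to \rCBVU\left(\wCpo\times \wCpo, \monadwP{-}\right),
$$
whose source and target are, by the very definition of $\rCBVU$, exactly the two $rCBV$ models appearing in the statement. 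This completes the argument.

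I do not expect any real obstacle here. The two nontrivial tasks—equipping the subscone with the $rCBV$ $\wCpo$-pair structure and verifying that $\forgetfulSub_{n,k}$ respects it—have already been discharged in Proposition~\ref{prop:wCpo-pair-of-Artin-Gluing}, which in turn rests on Theorems~\ref{theo:theorem-descent-rCBV-wCpo-structure-structure} and~\ref{the:maybe-the-main-result-on-LR-recursive}. The only thing this corollary adds is the passage from the enriched $\wCpo$-pair level down to the bare $rCBV$ model level, and that passage is carried out entirely by the functoriality of $\rCBVU$.
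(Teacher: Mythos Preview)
Your proposal is correct and matches the paper's own approach exactly: the paper simply states ``By Proposition~\ref{prop:wCpo-pair-of-Artin-Gluing} and Lemma~\ref{lem:UnderlyingrCBVmodel}, we get:'' and then records the corollary without further argument. You have spelled out precisely the two-step reasoning the paper leaves implicit.
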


\subsection{Logical relations as an $rCBV$ model morphism}
Let $(n, k)\in\NN\times\left( \NN \cup\left\{ \infty \right\}\right)  $,
and let's assume that $\Dsynsymbol $ is sound for primitives (see Definition \ref{def:sound-for-primitives}). By the universal property of the $rCBV$ model $\left( \SynVr, \SynTr, \SynffixpointRecT   \right) $ and the chain rule for derivatives, 
 there is only one $rCBV$ model morphism 	
\begin{equation}\label{eq:extending-logical-relations}
	\semLR{n,k}{-}:\left( \SynVr, \SynTr, \SynffixpointRecT   \right) \to \rCBVU\left( \SUBscone{\wCpo }{\sconeFUNCTOR{n, k}},  \monadLR{n, k}{-}    \right) 
\end{equation}
that is consistent with the assignment given by \eqref{assig-object-LR}, \eqref{LR:assig-tSign},
\eqref{LR:assig-op},  and \eqref{LR:assig-cnst}.

\begin{lemma}\label{lem:fundamental-commutativity-diagram-for-LR-argument-recursive-types}
	For any $(n, k)\in\NN\times \left( \NN \cup\left\{ \infty\right\}\right)$, Diag.~\eqref{eq:basic-commutative-diagram-for-the-recursive-types-LR-argument} commutes. 
\begin{equation}\label{eq:basic-commutative-diagram-for-the-recursive-types-LR-argument}
	\diag{basic-logicalrelations-recursive-types}
\end{equation} 
\end{lemma}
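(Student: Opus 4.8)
The plan is to prove the commutativity of \eqref{eq:basic-commutative-diagram-for-the-recursive-types-LR-argument} in exactly the same spirit as Proposition \ref{prop:semantics-of-the-LR}, but invoking the universal property of the \emph{free $rCBV$ model} (Proposition \ref{prop:universal-property-type-recursive-language}) in place of that of the free $CBV$ model. The crucial observation is that both directed paths around the square are $rCBV$ model morphisms out of $\left( \SynVr, \SynTr, \SynffixpointRecT \right)$ into the single $rCBV$ model $\rCBVU\left( \wCpo\times\wCpo, \monadwP{-}\right)$. Hence, by the uniqueness clause of Proposition \ref{prop:universal-property-type-recursive-language}, it suffices to verify that the two composites are consistent with one and the same assignment of Figure~\ref{fig:assignment-functor-universalproperty-syntax}; no induction over the type- or term-formers (in particular, none over $\trec{\tvar{1}}{\ty{1}}$) is needed.

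First I would check that the left-then-bottom composite $\rCBVU\left( \forgetfulSub_{n,k}\right)\circ\semLR{n,k}{-}$ is an $rCBV$ model morphism: the factor $\semLR{n,k}{-}$ is one by \eqref{eq:extending-logical-relations}, and $\rCBVU\left( \forgetfulSub_{n,k}\right)$ is one by Proposition \ref{prop:wCpo-pair-of-Artin-Gluing} together with Lemma \ref{lem:UnderlyingrCBVmodel}. Dually, for the top-then-right composite $\left( \sem{-}\times\semt{k}{-}\right)\circ\left(\ID,\DSynrec\right)$, the transformation $\DSynrec$ is an $rCBV$ model morphism by \eqref{eq:macro-as-a-functor-recursive-types}, so $\left(\ID,\DSynrec\right)$ is a morphism into the product $rCBV$ model (which exists and is preserved by $\rCBVU$ by Remark \ref{rem:products-of-rCBV-wCPO-pairs}), while $\sem{-}$ and $\semt{k}{-}$ are $rCBV$ model morphisms by \eqref{eq:semantics-for-the-source-recursive-types} and \eqref{eq:semantics-for-the-target-recursive-types}; their product composed with $\left(\ID,\DSynrec\right)$ is therefore again an $rCBV$ model morphism.

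Next I would compare the two underlying assignments on the generating data. On the ground type, applying $\forgetfulSub_{n,k}$ to the object $\semLR{n,k}{\reals}$ of \eqref{assig-object-LR} returns $\left( \RR, \RR\times\RR^k\right)$, whereas the top-right path sends $\reals$ first to $\left( \reals, \reals\t*\tangentreals\right)$ under $\left(\ID,\DSynrec\right)$ and then to $\left( \sem{\reals}, \semt{k}{\reals\t*\tangentreals}\right) = \left( \RR, \RR\times\RR^k\right)$. Likewise, on $\cnst c$, $\op$ and $\tSign$ both composites yield the morphisms \eqref{defeq:LR-cnst}, \eqref{defeq:LR-op} and \eqref{defeq:LR-sign}, respectively. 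This is precisely the agreement of primitive assignments already recorded in the proof of Proposition \ref{prop:semantics-of-the-LR} for the non-recursive fragment, and the recursive morphisms restrict to their non-recursive counterparts on this data.

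Since both composites are $rCBV$ model morphisms out of the free $rCBV$ model $\left( \SynVr, \SynTr, \SynffixpointRecT \right)$ agreeing with the same assignment of Figure~\ref{fig:assignment-functor-universalproperty-syntax}, Proposition \ref{prop:universal-property-type-recursive-language} forces them to coincide, establishing the commutativity. I do not anticipate a genuine mathematical obstacle: the entire benefit of the free-$rCBV$-model formulation is that the free-type-recursion data $\SynffixpointRecT$ and its rolling are matched automatically by any structure-preserving morphism, so the recursive case reduces to the non-recursive check. The only points demanding care are bookkeeping ones — confirming that $\left(\ID,\DSynrec\right)$ is legitimately a morphism into the product $rCBV$ model with $\rCBVU$ preserving that product, and that the primitive assignments coincide verbatim with those of Proposition \ref{prop:semantics-of-the-LR}.
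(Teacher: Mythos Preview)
Your proposal is correct and follows essentially the same approach as the paper: both composites are $rCBV$ model morphisms out of the free $rCBV$ model $\left(\SynVr,\SynTr,\SynffixpointRecT\right)$ that agree on the primitive assignment $\left(\RR,\RR\times\RR^k\right)$, \eqref{defeq:LR-sign}, \eqref{defeq:LR-cnst}, \eqref{defeq:LR-op}, so the uniqueness clause of Proposition~\ref{prop:universal-property-type-recursive-language} forces them to coincide. Your write-up is simply more explicit than the paper's about why each factor is an $rCBV$ model morphism.
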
 
\begin{proof}
	Both $\left(\sem{-}\times\semt{k}{-}\right)\circ\left( \ID\times\DSynrec\right) $ and $\rCBVU\left({\forgetfulSub}_{n,k}\right)\circ\semLR{n,k}{-} $
	yield  $rCBV$ model morphisms that are consistent with the assignment given by the object $\left( \RR , \RR\times\RR ^k  \right) $ and the morphisms \eqref{defeq:LR-sign}, \eqref{defeq:LR-cnst} and \eqref{defeq:LR-op}. Therefore, by the universal property of $\left(\SynVr,\SynTr{,}\SynffixpointRecT\right) $, we conclude that  Diag.~\eqref{eq:basic-commutative-diagram-for-the-recursive-types-LR-argument} indeed
	commutes.
\end{proof}

%In other words, following Remark \ref{rem:extending-functors}, there is only one extension %\eqref{eq:extending-logical-relations} of \eqref{eq:CBV-model-morphism-that-gives-the-logical-relations} 
%to $\left( \SynVr, \SynTr, \SynffixpointRecT   \right) $.

\subsection{AD correctness theorem for non-recursive data types}
The correctness theorem for non-recursive data types (i.e., types formed from $\reals$, products, and coproducts) follows from Lemma \ref{lem:fundamental-commutativity-diagram-for-LR-argument-recursive-types} and Corollary \ref{coro:fundamental-LR-conclusion-about-morphisms-subscone}. That is to say, we have:

\begin{theorem}\label{theo:basal-version-of the-correctness-theorem-recursive-types}
	 	Let $\displaystyle t: \coprod _{r\in \KI} \reals ^{s _r} \to \SynTr\left( \coprod _{j\in L} \reals ^{l _j}  \right)  $  be a morphism in $\SynVr$. We have that 
	 	$\displaystyle\sem{ t } : \coprod _{r\in \KI} \RR  ^{s _r} \to \monadwP{\coprod _{j\in L} \RR ^{l _j} } $ is differentiable and, for any $k\in \left( \NN\cup\left\{ \infty \right\}\right) $, $	\semt{k}{\DSynrec\left( t \right) } = \dDSemtra{k}{\sem{ t }}  $.
\end{theorem}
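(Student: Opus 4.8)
The plan is to follow the same strategy as in the proof of Theorem \ref{theo:basal-version-of the-correctness-theorem}, now reading everything off the $rCBV$ model morphism $\semLR{n,k}{-}$ of \eqref{eq:extending-logical-relations} together with the commuting square supplied by Lemma \ref{lem:fundamental-commutativity-diagram-for-LR-argument-recursive-types}. First I would fix $t$ as in the statement and observe that, although $t$ lives in $\SynVr$ and may be built using recursive types internally, its domain $\coprod_{r\in\KI}\reals^{s_r}$ and codomain $\coprod_{j\in L}\reals^{l_j}$ are non-recursive data types. Consequently, the $rCBV$ model morphism $\semLR{s_i,k}{-}$ sends these types to exactly the same objects of $\SUBscone{\wCpo}{\sconeFUNCTOR{s_i,k}}$ as the coproducts of products of $\semLR{s_i,k}{\reals}$ appearing in Corollary \ref{coro:fundamental-LR-conclusion-about-morphisms-subscone}, namely $\coprod_{r\in\KI}\semLR{s_i,k}{\reals}^{s_r}$ and $\monadLR{s_i,k}{\coprod_{j\in L}\semLR{s_i,k}{\reals}^{l_j}}$, since $\semLR{n,k}{-}$ strictly preserves products and coproducts and $\rCBVU$ is the identity on the underlying $CBV$-model data.

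Next, for each $i\in\KI$ I would invoke Lemma \ref{lem:fundamental-commutativity-diagram-for-LR-argument-recursive-types}: commutativity of \eqref{eq:basic-commutative-diagram-for-the-recursive-types-LR-argument}, evaluated at $t$, gives
\[
\left(\sem{t},\semt{k}{\DSynrec(t)}\right)=\rCBVU\left(\forgetfulSub_{s_i,k}\right)\left(\semLR{s_i,k}{t}\right),
\]
so the pair $(\sem{t},\semt{k}{\DSynrec(t)})$ in $\wCpo\times\wCpo$ is precisely the image, under the locally full (hence faithful) forgetful functor, of the morphism $\semLR{s_i,k}{t}$. By Remark \ref{rem:defining-morphism-subscone} this means that $(\sem{t},\semt{k}{\DSynrec(t)})$ \emph{defines} the morphism $\semLR{s_i,k}{t}:\coprod_{r\in\KI}\semLR{s_i,k}{\reals}^{s_r}\to\monadLR{s_i,k}{\coprod_{j\in L}\semLR{s_i,k}{\reals}^{l_j}}$ in $\SUBscone{\wCpo}{\sconeFUNCTOR{s_i,k}}$. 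Corollary \ref{coro:fundamental-LR-conclusion-about-morphisms-subscone} then applies verbatim, yielding that $\sem{t}$ is differentiable and that $\semt{k}{\DSynrec(t)}=\dDSemtra{k}{\sem{t}}$, as required.

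The genuine difficulty is not in this final argument, which is a near-mechanical transcription of the non-recursive case; the real content has been pushed into the two facts I rely on. The heaviest of these is Lemma \ref{lem:fundamental-commutativity-diagram-for-LR-argument-recursive-types} itself, whose validity rests on the subscone $\left(\SUBscone{\wCpo}{\sconeFUNCTOR{n,k}},\monadLR{n,k}{-}\right)$ being an $rCBV$ $\wCpo$-pair (Proposition \ref{prop:wCpo-pair-of-Artin-Gluing}), so that $\semLR{n,k}{-}$ exists as an $rCBV$ model morphism and is uniquely determined on primitives, together with the fact that $\rCBVU(\forgetfulSub_{n,k})\circ\semLR{n,k}{-}$ and $(\sem{-}\times\semt{k}{-})\circ(\ID,\DSynrec)$ agree by the universal property of $\left(\SynVr,\SynTr,\SynffixpointRecT\right)$. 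The second is that Corollary \ref{coro:fundamental-LR-conclusion-about-morphisms-subscone} is phrased purely in terms of the logical relations at $\reals$ and their finite (co)products, so it transfers unchanged once the interpretations of the data types in question are seen to coincide. The only point demanding care is therefore this coincidence of interpretations for non-recursive data types, and I expect it to follow immediately from the structure preservation of $\semLR{n,k}{-}$ noted in the first paragraph rather than to require any new computation.
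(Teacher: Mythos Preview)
Your proposal is correct and follows essentially the same approach as the paper: the paper's justification for this theorem is the one-line remark that it ``follows from Lemma \ref{lem:fundamental-commutativity-diagram-for-LR-argument-recursive-types} and Corollary \ref{coro:fundamental-LR-conclusion-about-morphisms-subscone}'', and your argument spells out exactly this deduction in detail, mirroring the proof of Theorem \ref{theo:basal-version-of the-correctness-theorem}. Your observation that the domain and codomain, being non-recursive data types, are sent by $\semLR{s_i,k}{-}$ to the expected coproducts of powers of $\semLR{s_i,k}{\reals}$ is precisely the reason this case requires no additional work beyond what was already established.
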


\subsection{AD on recursive data types}
The logical relations argument we presented provides us with an easy way to compute the logical relations of general recursive types: namely, since $\left(\SUBscone{\wCpo }{\sconeFUNCTOR{n, k}}, \monadLR{n, k}{-} \right)$ is an $rCBV$ $\wCpo $-pair,
the recursive types will be computed out of suitable colimits. This gives us useful information about the semantics of $\Dsyn{\trm{1} }$  for a program $\var{1}:\ty{1}\vdash \trm{1}:\ty{2}$  where $\ty{1}$ and $\ty{2}$ are recursive types. In particular, we can extend the correctness result of Theorem \ref{theo:basal-version-of the-correctness-theorem-recursive-types} to any 
\emph{recursive data type}. 
By that, we mean any type $\ty{1}$ built from the grammar $$\ty{1},\ty{2}::= \tvar{1}\mid \reals\mid \Init \mid \Unit\mid \ty{1}\times \ty{2}\mid \ty{1}\sqcup \ty{2}\mid \mu\tvar{1}.\ty{1},$$ i.e., any type not involving function types.

We can define these (recursive) data type more formally as follows.
We denote by $\SynCr$ the Kleisli $\SynVr$-category associated with 
$\left( \SynVr, \SynTr  \right)$. Moreover, we respectively denote by \eqref{eq:basic-coproduct-functor-for-the-definition-of-data-type} and \eqref{eq:step-diagonal-inductive} the coproduct, product and $n$-diagonal functors.\\
\noindent
\begin{minipage}{.4\linewidth}	
	\begin{equation}
	\sqcup, \times : \SynVr\times  \SynVr \to\SynVr\label{eq:basic-coproduct-functor-for-the-definition-of-data-type}
\end{equation} 
\end{minipage}\noindent
\noindent
\begin{minipage}{.6\linewidth}	
\begin{equation}\label{eq:step-diagonal-inductive}
	\diagk{n} :  \left(\SynVr\right) ^\op\times \SynVr  \to \left(\left(\SynVr\right) ^\op\times \SynVr\right) ^n
\end{equation} 
\end{minipage}\normalsize
\begin{definition} 
Let $\pEE{R}, \pEE{I}, \pEE{O} : \left(\SynVr\right) ^\op\times \SynVr \to\SynVr $ be the constant functors which are, respectively, equal to $\reals $, $\terminall$ and $\initiall$.
\textit{We define the set $\ParamVd{\SynVr , \SynTr, \ffixpointRecT_\Syn}$ inductively by \ref{data-type-first}, \ref{data-type-second} and \ref{data-type-third}. }
\begin{enumerate}[(D1)]
	\item \label{data-type-first} The functors $\pEE{R}, \pEE{I}, \pEE{O}$ are in $\ParamVd{\SynVr , \SynTr, \ffixpointRecT_\Syn}$. Moreover, the projection $\proj{2}:\left(\SynVr\right) ^\op\times \SynVr\to \SynVr  $ belongs to $\ParamVd{\SynVr , \SynTr, \ffixpointRecT_\Syn}$.
	\item \label{data-type-second} For each $n\in \NN ^\ast $, if the functors \eqref{eq:basic-functor-for-assumption-inductive-definition-data-types} belong to $\ParamVd{\SynVr , \SynTr, \ffixpointRecT_\Syn}$, then the functors \eqref{eq:composition-of-Ev-diagonal} and \eqref{eq:step-doproduct-product-inductive}  are in $\ParamVd{\SynVr , \SynTr, \ffixpointRecT_\Syn}$.
	\item \label{data-type-third} If $\pEE{E} = \left(\pE{E}{\SynVr}, \pE{E}{\SynCr} \right)\in\ParamAll{\SynVr}{\SynTr}$ is such that $\pE{E}{\SynVr}\in\ParamVd{\SynVr , \SynTr, \ffixpointRecT_\Syn} $, then $ \left( \pE{\fddtpointRecT{E}}{\SynVr } \right)$ is in
	$\ParamVd{\SynVr , \SynTr, \ffixpointRecT_\Syn}$. 
\end{enumerate}	
\textit{	We define the set  $\Paramd{\SynVr}{\SynTr ,\ffixpointRecT_\Syn }$ of parametric data types by \eqref{eq:definition-of-parametric-data-types}.}
\end{definition} 
\noindent
\begin{minipage}{.5\linewidth}	
	\begin{equation}
	\pEE{G}, \pEE{G'} : \left(\left(\SynVr\right) ^\op\times \SynVr\right) ^n  \to\SynVr\label{eq:basic-functor-for-assumption-inductive-definition-data-types}
\end{equation} 
\end{minipage}\noindent
\begin{minipage}{.5\linewidth}	
	\begin{equation}\label{eq:composition-of-Ev-diagonal}
	\pEE{G}\circ \diagk{n} : \left(\SynVr\right) ^\op\times  \SynVr \to\SynVr
\end{equation} 	
\end{minipage}\\
\normalsize
\begin{equation}\label{eq:step-doproduct-product-inductive}
	\times \circ \left( G\times G'\right),  \sqcup \circ \left( G\times G'\right) : \left(\left(\SynVr\right) ^\op\times \SynVr\right) ^{2n}  \to\SynVr
\end{equation}
\begin{equation}\label{eq:definition-of-parametric-data-types}
\Paramd{\SynVr}{\SynTr ,\ffixpointRecT_\Syn }\coloneq \left\{ \pEE{E}  \in \ParamAll{\SynVr}{\SynTr} : \pE{E}{\SynVr}\in\ParamVd{\SynVr , \SynTr, \ffixpointRecT_\Syn}  \right\}
\end{equation}

All such (recursive) data types are, up to isomorphism, of a particularly simple form: a sum of products.
\begin{proposition}\label{prop:image-of-recursive-data-types}
	Let $E$ be an $n$-variable $\left( \SynVr, \SynTr, \SynffixpointRecT   \right)$-parametric data type, where $n\in\NN ^\ast $. There is a countable family of natural numbers $\displaystyle\left( \mind{j, \mathtt{T}} \right) _{\left( j, \mathtt{T}\right)\in \left( \NNN{n}\cup \left\{0\right\} \right)\times {\TreeIndex}} $ such that, for any $rCBV$ model morphism $H : \left( \SynVr, \SynTr, \SynffixpointRecT   \right) \to \rCBVU\left( \catV , \monadT\right)   $  and any 
	$H$-compatible pair $\left( E,F\right)$, we have that  \eqref{eq:standard-format-parametric-type} holds, where the isomorphism $\cong$ is induced by coprojections and projections\footnote{That is to say, it is just a reorganization of the involved coproducts and products.}. 
\noindent
\begin{minipage}{.35\linewidth}	
	\begin{equation}\label{eq:image-of-R-by-an-arbitrary-rCBV-morphism}
		H\left(\uptau\right)  =  \coprod _{j\in L}  H\left( \reals\right)  ^{l_j}
	\end{equation} 
	\normalsize
\end{minipage}
\noindent\begin{minipage}{.65\linewidth}
	\begin{equation}\label{eq:standard-format-parametric-type} 
		\pE{F}{\catV }\left( W_j, Y_j\right) _{j\in\NNN{n}} \cong\coprod _{\mathtt{T}\in {\TreeIndex} }\left(  {H\left( \reals\right) }^{\mind{0,\mathtt{T}}}\times \prod _{j=1}^{n} Y_j ^{\mind{j,\mathtt{T}} }\right) 
	\end{equation} 
\end{minipage} 
 As a consequence, if $\uptau\in\SynVr $ corresponds to a data type $\ty{1}$, then there is a countable family $\left( l_j \right) _{j\in L}\in \NN ^L $ of natural numbers  such that
		 \eqref{eq:image-of-R-by-an-arbitrary-rCBV-morphism} holds for any $rCBV$ model morphism $H : \left( \SynVr, \SynTr, \SynffixpointRecT   \right) \to \rCBVU\left( \catV , \monadT\right)   $.
\end{proposition}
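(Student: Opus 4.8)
The plan is to argue by structural induction on the inductive definition of $\ParamVd{\SynVr , \SynTr, \ffixpointRecT_\Syn}$, maintaining throughout the stronger statement that for every $H$-compatible pair $(E,F)$ with $\pE{E}{\SynVr}\in\ParamVd{\SynVr , \SynTr, \ffixpointRecT_\Syn}$ the value functor $\pE{F}{\catV}$ admits a normal form \eqref{eq:standard-format-parametric-type} whose indexing set $\TreeIndex$ and exponents $\mind{j,\mathtt{T}}$ depend only on the syntactic construction of $E$, and not on $H$ or $\catV$. This independence is what makes the statement meaningful, and it holds because every $rCBV$ model morphism $H$ into $\rCBVU(\catV,\monadT)$ is in particular a $CBV$ pair morphism, hence strictly preserves finite products and coproducts, while the target's free type recursion is the one produced by Lemma \ref{lem:UnderlyingrCBVmodel}, namely the conical $\wCpo$-colimit of the associated ep-chain.

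First I would dispatch the base clause \ref{data-type-first}. The constant functors $\pEE{R},\pEE{I},\pEE{O}$ and the projection $\proj{2}$ each give a trivial normal form: for $\pEE{R}$ a single tree with $\mind{0,\mathtt{T}}=1$ and all other exponents $0$; for $\pEE{I}$ a single tree with all exponents $0$ (the empty product $\terminall$); for $\pEE{O}$ the empty coproduct; and for $\proj{2}$ a single tree selecting the relevant variable $Y$ with exponent $1$. The closure clause \ref{data-type-second} is routine bookkeeping: coproducts merge the two tree sets by disjoint union, and products are handled by distributivity of $\times$ over $\sqcup$ in the bicartesian closed category $\catV$, so that $(\coprod_{\mathtt{S}}A_{\mathtt{S}})\times(\coprod_{\mathtt{T}}B_{\mathtt{T}})\cong\coprod_{(\mathtt{S},\mathtt{T})}A_{\mathtt{S}}\times B_{\mathtt{T}}$, each summand again being a product of powers of $H(\reals)$ and of the $Y_j$; precomposition with $\diagk{n}$ merely identifies variables and re-indexes. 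In every case the witnessing isomorphism is assembled from distributivity isomorphisms, themselves built from coprojections and projections, so the coherence requirement (that $\cong$ be induced by coprojections and projections) is preserved.

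The main obstacle is the type-recursion clause \ref{data-type-third}. Here $E$ is an $(n+1)$-variable parametric data type already in normal form, and I must show $\fddtpointRecT{E}$ stays in normal form. By $H$-compatibility of $(\fixpointRecT E,\fixpointRecT F)$ and Lemma \ref{lem:UnderlyingrCBVmodel}, the value functor $\pE{\fixpointRecT F}{\catV}$ is the conical $\wCpo$-colimit of the lifted chain $\unlifteppair{\pEDchainn{E}}$, whose $m$-th stage is the $m$-fold substitution of the initial object $\initiall$ into the last $(n{+}1)$-st argument of $\pE{F}{\catV}$. The crucial structural input is that each $(-\times X):\catV\to\catV$ is a left $\wCpo$-adjoint, by $\wCpo$-cartesian closedness, hence preserves these colimits; combined with the fact that coproducts are colimits and therefore commute with the chain colimit, this lets me distribute the colimit through the sum-of-products shape of $\pE{F}{\catV}$ at every stage. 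Carrying this out identifies the colimit with a coproduct indexed by the set of finite well-founded trees whose nodes are labelled by the summands $\mathtt{T}'$ of $E$ and in which a node labelled $\mathtt{T}'$ has exactly $\mind{n+1,\mathtt{T}'}$ children; this is the required $\TreeIndex$, manifestly countable (finitely branching trees over a countable label set), and each tree contributes the product of the powers of $H(\reals)$ and of $Y_1,\dots,Y_n$ obtained by summing the corresponding exponents over the nodes. Because $H$ strictly preserves $\wCpo$-colimits, products and coproducts, this description coincides across all models, yielding the claimed independence of $\TreeIndex$ and the $\mind{j,\mathtt{T}}$ from $H$.

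Finally, the stated consequence is the degree-$0$ specialisation. When $\uptau\in\SynVr$ corresponds to a closed data type $\ty{1}$, it is a parametric data type of degree $0$ with no free variables $Y_j$, so the normal form \eqref{eq:standard-format-parametric-type} has no $Y_j$-factors and collapses to $H(\uptau)\cong\coprod_{\mathtt{T}\in\TreeIndex}H(\reals)^{\mind{0,\mathtt{T}}}$, which is exactly \eqref{eq:image-of-R-by-an-arbitrary-rCBV-morphism} upon setting $L\defeq\TreeIndex$ and $l_{\mathtt{T}}\defeq\mind{0,\mathtt{T}}$.
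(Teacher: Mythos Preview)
Your proposal is correct and follows essentially the same structural induction as the paper: the paper likewise treats clause \ref{data-type-third} as the only substantive step and identifies $\pE{\wfixpointRecT\uF}{\catV}$ with a tree-indexed coproduct by computing the colimit of the chain $\initiall\to\FforREC(\initiall)\to\FforREC^2(\initiall)\to\cdots$, with the exponents $\mind{j,(\mathtt{T},r)}=\sind{j,r}+\sum_i\mind{j,\mathtt{T}_i}$ exactly matching your ``sum over the nodes'' description. The only cosmetic discrepancy is your final paragraph: since the proposition's hypothesis is $n\in\NN^\ast$, the paper derives the consequence by regarding a closed data type as a \emph{constant degree-$1$} parametric type rather than a degree-$0$ one, but your induction already covers $n=0$, so this is immaterial.
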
 	

\begin{proof}

The result follows from induction.
The non-trivial part is a consequence of the following.

Let 
$\left( \pEE{\uE}, \pEE{\uF}\right) \in \Paramd{\SynVr}{\SynTr}\times \ParamAlll{ \left(\rCBVU\left(  \catV , \monadT \right) \right) }$
be an $H$-compatible pair 
of $\left( n+1\right)$-variable parametric types where $\pE{\uF}{ \catV }   $ is given by \eqref{eq:standard-format-parametric-type-proof} for some countable family $\displaystyle\left( \sind{i,r} \right) _{{\left(i,r\right)\in \left( \NNN{n+1}\cup \left\{0\right\} \right)\times {\KI}} }$ of natural numbers. We prove below that $\left( \pEE{\fddtpointRecT{\uE}},\pEE{F}\right)  $ is $H$-compatible for some $\pEE{F}$ such that $\pE{F}{\catV}$ satisfies Eq.~\eqref{eq:standard-format-parametric-type}. 
By the definition $rCBV$ model morphism, we have that  $\left( \pEE{\fddtpointRecT{\uE} }, \pEE{\wfixpointRecT{\uF} }\right) $ is $H$-compatible. Hence, we only need to prove that $\pE{\wfixpointRecT{\uF} }{\catV}$ is given by
 \eqref{eq:standard-format-parametric-type}.  
\begin{enumerate}[(I)] 
\item We inductively define the set $\TreeIndex $ by the following. Let $r\in \KI$: (a)  if $\sind{n+1,r} = 0$, then $r\in\TreeIndex $; (b) if $\sind{n+1,r} \neq 0$, then, for any $\mathtt{T}\in \TreeIndex ^{\sind{n+1,r}} $, the pair $\left( \mathtt{T}, r\right) $ is in $\TreeIndex $.
\item We inductively define the family  $\left( \treeindexing{m}{ j, \mathtt{T} }\right) _{\left( j, \mathtt{T}\right) \in\left( \NNN{n}\cup \left\{ 0\right\}\right) \times\TreeIndex   } $ of indices by the following. Let $r\in \KI $:  (a) if $\sind{n+1,r} = 0 $,  we define $\treeindexing{m}{ j, r } \coloneq \sind{j, r}$ for each $j$; (b) if $\sind{n+1,r} \neq 0 $,  given $\mathtt{T} = \left( \mathtt{T}_i\right) _{i\in\NNN{\sind{n+1,r}} }\in \TreeIndex ^{\sind{n+1,r}} $, we define $\treeindexing{m}{ j, \left( \mathtt{T} , r  \right)  } $ by \eqref{eq:definition-of-K-overline-Tr} 
for each $j$.
\end{enumerate}
\small\noindent\begin{minipage}{.6\linewidth}	
	\begin{equation}\label{eq:standard-format-parametric-type-proof} 
		\pE{\uF}{\catV }\left( W_i, Y_i\right) _{i\in\NNN{n+1}} =\coprod _{r\in {\KI } }\left(  {H\left( \reals\right) }^{\sind{0,r}}\times \prod _{i=1}^{n+1} Y_i ^{\sind{i,r} }\right) 
	\end{equation} 
	\normalsize
\end{minipage}
\noindent\begin{minipage}{.4\linewidth}
	\begin{equation}\label{eq:definition-of-K-overline-Tr} 
\treeindexing{m}{ j, \left( \mathtt{T} , r  \right)  } = \sind{j,r} + \sum _{i=1}^{\sind{n+1,r}}\treeindexing{m}{ j, \mathtt{T}_i  }
	\end{equation} 	
\end{minipage} \\
\normalsize Let $X=\left( W_i, Y_i\right) _{i\in\NNN{n}}\in \left(\catV ^\op \times \catV  \right) ^n $,  $\FforREC \coloneq \paE{\uF}{X}{\catV}\left( \initiall , - \right) $ and $\coproje{}$ the obvious unique morphism. The colimit of \eqref{eq:the-colimit-giving-the-nuF} is isomorphic to \eqref{eq:the-coproduct-of-the-tree}. Hence, 
by the definition of the \fddt operator $\wfixpointRecT$ of $ \rCBVU\left( \catV , \monadT\right) = \left( \catV , \monadT , \wffixpointRecT\right)$,
$\pE{\wfixpointRecT\uF}{\catV} $ is given by the formula given in \eqref{eq:standard-format-parametric-type}. This completes the proof.\\
\noindent\begin{minipage}{.6\linewidth}
\footnotesize\begin{equation}\label{eq:the-colimit-giving-the-nuF}
	\diag{diag-chain-of-ep-pairs-for-the-F}
\end{equation} 
\end{minipage} 
\noindent
\begin{minipage}{.4\linewidth}	
\small	\begin{equation}
		\label{eq:the-coproduct-of-the-tree}
\coprod _{\mathtt{T}\in {\TreeIndex} }\left(  {H\left( \reals\right) }^{\mind{0,\mathtt{T}}}\times \prod _{j=1}^{n} Y_j ^{\mind{j,\mathtt{T}} }\right) 
	\end{equation} 
\end{minipage}

Finally, if $\uptau \in\SynVr $ corresponds to a data type $\ty{1}$, then the constant parametric type $ \pEE{\unlift{\uptau }}  $  equal to $\uptau $ is an $\left( \SynVr, \SynTr\right)$-parametric data type of degree $1$. Hence, denoting by $\pEE{\unlift{H\uptau }}$ the constant parametric type equal to $H\left(\uptau \right) $, since $\left(\pEE{\unlift{\uptau} }  ,  \pEE{\unlift{H\uptau} } \right) $ is $H$-compatible, 
we conclude that  \eqref{eq:standard-format-parametric-type}  holds for some $\left( l_j\right) _{j\in L}$ where $L$ is countable.
\end{proof}

In particular, for any non-parametric (meaning: $0$-variable-parametric)
recursive data type $R$, we have the following: \\
\begin{minipage}{.5\linewidth}
	\begin{equation}\label{eq:logical-relations-for-data-recursive-types} 
		\semLR{n,k}{R} = \coprod _{j\in L}\semLR{n,k}{\reals}^{l_j}
	\end{equation} 
\end{minipage} 
\begin{minipage}{.5\linewidth}	
	\begin{equation}\label{eq:semantics-for-data-types-involving-recursion} 
		\sem{R} = \coprod _{j\in L} \RR ^{l_j}.
	\end{equation} 
	\normalsize
\end{minipage}\\

This lets us strengthen our correctness theorem to apply also to programs between recursive data types:
\begin{proposition}\label{prop:basal-version-of the-correctness-theorem-recursive-data-types}
	Let $\displaystyle t: \uptau\to\upsigma  $  be a morphism in $\SynVr$. If $\uptau $ and $\upsigma $ correspond to data types,  
	$\displaystyle\sem{ t } : \coprod _{r\in \KI} \RR  ^{s _r} \to \monadwP{\coprod _{j\in L} \RR ^{l _j} } $ is differentiable and, for any $k\in \left( \NN\cup\left\{ \infty \right\}\right) $, $	\semt{k}{\DSynrec\left( t \right) } = \dDSemtra{k}{\sem{ t }}  $.
\end{proposition}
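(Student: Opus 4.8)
The plan is to reduce the statement to the sum-of-products situation already handled in Theorem \ref{theo:basal-version-of the-correctness-theorem-recursive-types} by feeding in the structural decomposition of recursive data types supplied by Proposition \ref{prop:image-of-recursive-data-types}. Since $\uptau$ and $\upsigma$ correspond to data types (and $t$ is read as a computation, so that $\sem{t}$ acquires the partiality monad in its codomain via the Kleisli interpretation), that proposition yields countable families $(s_r)_{r\in\KI}$ and $(l_j)_{j\in L}$ together with canonical isomorphisms — induced purely by coprojections and projections, hence respected by \emph{every} $rCBV$ model morphism — exhibiting $\sem{\uptau}\cong\coprod_{r\in\KI}\RR^{s_r}$ and $\sem{\upsigma}\cong\coprod_{j\in L}\RR^{l_j}$, and, by \eqref{eq:logical-relations-for-data-recursive-types}, the matching decompositions $\semLR{n,k}{\uptau}\cong\coprod_{r\in\KI}\semLR{n,k}{\reals}^{s_r}$ and $\semLR{n,k}{\upsigma}\cong\coprod_{j\in L}\semLR{n,k}{\reals}^{l_j}$. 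The crucial point is that one and the same coprojection/projection isomorphism works simultaneously for $\sem{-}$, $\semt{k}{-}$ and $\semLR{n,k}{-}$, precisely because all three are $rCBV$ model morphisms and Proposition \ref{prop:image-of-recursive-data-types} is phrased uniformly for an arbitrary $H$. This lets me transport $t$ into exactly the shape demanded by Corollary \ref{coro:fundamental-LR-conclusion-about-morphisms-subscone}.

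Next, for each $i\in\KI$ I would invoke Lemma \ref{lem:fundamental-commutativity-diagram-for-LR-argument-recursive-types} with parameter $(s_i,k)$: commutativity of Diagram \eqref{eq:basic-commutative-diagram-for-the-recursive-types-LR-argument} gives
\[
\rCBVU(\forgetfulSub_{s_i,k})\bigl(\semLR{s_i,k}{t}\bigr)=\bigl(\sem{-}\times\semt{k}{-}\bigr)\circ(\ID\times\DSynrec)(t)=\bigl(\sem{t},\,\semt{k}{\DSynrec(t)}\bigr).
\]
Thus the single pair $(\sem{t},\semt{k}{\DSynrec(t)})$ in $\wCpo\times\wCpo$ is the image under $\rCBVU(\forgetfulSub_{s_i,k})$ of the subscone morphism $\semLR{s_i,k}{t}$; equivalently, after rewriting source and codomain through the decompositions of the first paragraph and the lifting $\monadLR{s_i,k}{-}$ of the partiality monad, it defines a morphism
\[
\coprod_{r\in\KI}\semLR{s_i,k}{\reals}^{s_r}\to\monadLR{s_i,k}{\coprod_{j\in L}\semLR{s_i,k}{\reals}^{l_j}}
\]
in $\SUBscone{\wCpo}{\sconeFUNCTOR{s_i,k}}$. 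Since this holds for \emph{every} $i\in\KI$, Corollary \ref{coro:fundamental-LR-conclusion-about-morphisms-subscone} applies verbatim and delivers at once that $\sem{t}\colon\coprod_{r\in\KI}\RR^{s_r}\to\monadwP{\coprod_{j\in L}\RR^{l_j}}$ is differentiable and that $\semt{k}{\DSynrec(t)}=\dDSemtra{k}{\sem{t}}$.

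The only genuinely new feature compared with Theorem \ref{theo:basal-version-of the-correctness-theorem} is that recursion forces the index sets $\KI$ and $L$ to be merely countable rather than finite, so the decomposition into a sum of products is a \emph{countable} one. I expect this to be the sole obstacle, and it is already fully absorbed by the earlier machinery: the monad lifting $\monadLR{n,k}{-}$ was constructed so that its underlying $\omega$-cpo is closed under the countable unions of open domains that arise here (a countable union $\bigcup_{n}U_n$ of open sets is open and $D$ is an $\omega$-cpo), while Proposition \ref{prop:the-LR-general-case} and Corollary \ref{coro:fundamental-LR-conclusion-about-morphisms-subscone} are both stated for an arbitrary index set $L$. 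Consequently no additional analytic estimate is needed; the argument is essentially a transport of Theorem \ref{theo:basal-version-of the-correctness-theorem-recursive-types} along the decomposition of recursive data types provided by Proposition \ref{prop:image-of-recursive-data-types}.
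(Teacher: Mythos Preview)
Your proposal is correct and follows essentially the same route as the paper: invoke Proposition~\ref{prop:image-of-recursive-data-types} to obtain the countable sum-of-products decompositions (uniformly across all the $rCBV$ model morphisms involved), use the commutativity of Diagram~\eqref{eq:basic-commutative-diagram-for-the-recursive-types-LR-argument} (Lemma~\ref{lem:fundamental-commutativity-diagram-for-LR-argument-recursive-types}) to see that $(\sem{t},\semt{k}{\DSynrec(t)})$ defines the required subscone morphism for each $i\in\KI$, and conclude via Corollary~\ref{coro:fundamental-LR-conclusion-about-morphisms-subscone}. Your additional remarks about countability are accurate but not needed for the argument, since Corollary~\ref{coro:fundamental-LR-conclusion-about-morphisms-subscone} and Proposition~\ref{prop:the-LR-general-case} were already stated for arbitrary index sets.
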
	
\begin{proof}
	First of all, indeed, by Proposition \ref{prop:image-of-recursive-data-types}, we have that there are countable families $\left( s _r\right) _{r\in\KI}$ and  $\left( l_j\right) _{j\in L}$ such that
	\begin{equation}\label{eq:fundamental-morphism-LR-argument-for-recursive-types-correctness}
		\semLR{{s_i},k}{t} : \coprod _{r\in \KI} \semLR{{s_i},k}{\reals } ^{s _r}  \to \monadLR{{s_i}, k}{ \coprod _{j\in L}\semLR{{s_i},k}{\reals }^{l _j} }
	\end{equation}
is a morphism in $\SUBscone{\wCpo }{\sconeFUNCTOR{{s_i}, k}}$, for each $i\in\KI$ and any $k\in\NN\cup\left\{ \infty\right\}$.
	
By the commutativity of \eqref{eq:basic-commutative-diagram-for-the-recursive-types-LR-argument} for any $(s_i, k)\in\NN\times \left( \NN \cup\left\{ \infty\right\}\right)$, we get that
the pair $\left( \sem{t},  \semt{k}{\DSynrec{\left( t\right) } } \right) $
defines the morphism \eqref{eq:fundamental-morphism-LR-argument-for-recursive-types-correctness} for each $i\in\KI$.  By Corollary \ref{coro:fundamental-LR-conclusion-about-morphisms-subscone}, this implies that $\sem{t}$ is differentiable and $	\semt{k}{\DSynrec\left( t \right) } = \dDSemtra{k}{\sem{ t }}  $.
\end{proof}

Finally, as a consequence, we get:
\begin{theorem}\label{theo:main-theorem-section-proof-recursive}
	Assume that $\tangentreals $ implements the vector space $\RR ^k$, for some $k\in\NN\cup\left\{ \infty \right\}$. For any program $\var{1}:\ty{1}\vdash \trm{1}:\ty{2}$  where	$\ty{1},\ty{2}$ are data types (including recursive data types), we have that $\sem{\trm{1}} $ is differentiable and, moreover, 
	\begin{equation}
		\semt{k}{\Dsyn{\trm{1} }} = \dDSemtra{k}{\sem{\trm{1} }}
	\end{equation}
	provided that $\Dsynsymbol $ is sound for primitives.
\end{theorem}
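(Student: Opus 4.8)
The plan is to deduce Theorem~\ref{theo:main-theorem-section-proof-recursive} directly from Proposition~\ref{prop:basal-version-of the-correctness-theorem-recursive-data-types} by reading the program $\trm{1}$ as a morphism in the syntactic $rCBV$ model. First I would pass from the coarse-grain presentation of the program to its fine-grain CBV translation, so that, by the universal property of $\left( \SynVr, \SynTr, \SynffixpointRecT \right)$ (Proposition~\ref{prop:universal-property-type-recursive-language}), the typing judgement $\var{1}:\ty{1}\vdash\trm{1}:\ty{2}$ determines a Kleisli morphism, i.e.\ a morphism $t:\uptau\to\SynTr(\upsigma)$ in $\SynVr$, where $\uptau$ and $\upsigma$ are the objects of $\SynVr$ corresponding to the data types $\ty{1}$ and $\ty{2}$. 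Under this identification we have $\sem{t}=\sem{\trm{1}}$, and, since the $rCBV$ model morphism $\DSynrec$ of \eqref{eq:macro-as-a-functor-recursive-types} encompasses the syntactic macro $\Dsyn{-}$, also $\semt{k}{\DSynrec(t)}=\semt{k}{\Dsyn{\trm{1}}}$.

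Next I would invoke Proposition~\ref{prop:image-of-recursive-data-types}: because $\ty{1}$ and $\ty{2}$ are (possibly recursive) data types, there are countable families $(s_r)_{r\in\KI}$ and $(l_j)_{j\in L}$ of natural numbers with $\sem{\uptau}\cong\coprod_{r\in\KI}\RR^{s_r}$ and $\sem{\upsigma}\cong\coprod_{j\in L}\RR^{l_j}$. This is exactly the hypothesis that lets me regard $t$ as a morphism of the shape required by Proposition~\ref{prop:basal-version-of the-correctness-theorem-recursive-data-types}. Applying that proposition yields that $\sem{t}:\coprod_{r\in\KI}\RR^{s_r}\to\monadwP{\coprod_{j\in L}\RR^{l_j}}$ is differentiable and that $\semt{k}{\DSynrec(t)}=\dDSemtra{k}{\sem{t}}$ for every $k\in\NN\cup\{\infty\}$.

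Finally I would translate back along the identifications of the first paragraph: $\sem{\trm{1}}=\sem{t}$ is differentiable and $\semt{k}{\Dsyn{\trm{1}}}=\semt{k}{\DSynrec(t)}=\dDSemtra{k}{\sem{t}}=\dDSemtra{k}{\sem{\trm{1}}}$, which is the claimed equation. Essentially all the mathematical content has already been discharged upstream: Proposition~\ref{prop:image-of-recursive-data-types} does the real work of computing recursive data types as sums of products via the absolute $\wCpo$-colimits of chains of ep-pairs, and Proposition~\ref{prop:basal-version-of the-correctness-theorem-recursive-data-types} supplies the logical-relations argument through the commuting square \eqref{eq:basic-commutative-diagram-for-the-recursive-types-LR-argument} and Corollary~\ref{coro:fundamental-LR-conclusion-about-morphisms-subscone}. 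The one point requiring care, and the only place I expect friction, is the bookkeeping of the first paragraph: making the correspondence between the coarse-grain program-level macro and semantics ($\Dsyn{-}$, $\sem{-}$, $\semt{k}{-}$) and their fine-grain categorical counterparts ($\DSynrec$ and the functors $\sem{-}$, $\semt{k}{-}$) watertight, so that the equality of morphisms produced by Proposition~\ref{prop:basal-version-of the-correctness-theorem-recursive-data-types} really is the equality of program interpretations asserted by the theorem.
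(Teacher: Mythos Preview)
Your proposal is correct and follows the same route as the paper: the paper presents Theorem~\ref{theo:main-theorem-section-proof-recursive} simply ``as a consequence'' of Proposition~\ref{prop:basal-version-of the-correctness-theorem-recursive-data-types}, and you have spelled out precisely the bookkeeping that makes this consequence go through. One small redundancy: you do not need to invoke Proposition~\ref{prop:image-of-recursive-data-types} separately, since the hypothesis of Proposition~\ref{prop:basal-version-of the-correctness-theorem-recursive-data-types} is already ``$\uptau$ and $\upsigma$ correspond to data types'' and the coproduct decomposition is handled inside its proof; but this is harmless.
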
 	 

Following the considerations of Section \ref{sub:forward-mode-types-correctness} and \ref{sub:reverse-mode-types-correctness}, it follows from Theorem \ref{theo:basal-version-of the-correctness-theorem-recursive-types} that $\Dsynsymbol $ as defined in Section \ref{sub:extended-macro} \textit{correctly} provides us with forward and reverse AD transformations for data types.

\subsection{AD on arrays}
Arrays are semantically the same as lists: in our language, if $\ty{1}$ is a data type, an array of $\ty{1}$ is given by $\trec{\tvar{1}}{\Unit\t+ \ty{1}\t*\tvar{1}}$. 
It should be noted that, if $\var{1}:\trec{\tvar{1}}{\Unit\t+ \ty{1}\t*\tvar{1}}\vdash \trm{1}:\trec{\tvar{1}}{\Unit\t+ \ty{1}\t*\tvar{2}}$, we have that $$\sem{\trm{1}} : \coprod _{i=1}^{\infty } \sem{\ty{1}}\to \monadwP{ \coprod _{i=1}^{\infty } \sem{\ty{2}}}  .$$
By Theorem \ref{theo:main-theorem-section-proof-recursive}, if $\ty{1}$ and $\ty{2}$ are data types, we get that $\dDSemtra{k}{\sem{\trm{1} }}$ (as defined in \eqref{eq:derivative-parially-defined-functions}) is equal to $\semt{k}{\Dsyn{\trm{1} }}$.
Therefore, Theorem \ref{theo:main-theorem-section-proof-recursive} already encompasses the correctness for arrays (of data types).

% I am not sure if we need to care about arrays of data types other than reals -- I know that we only care about arrays that can be handled efficiently, so they might
% be restricted to arrays of primitive types

\section{Almost everywhere correct AD}
\label{sec:almost-everywhere-differentiability}
Here, we show how some of the arguments of \cite{huot2023omegapap}
about almost everywhere differentiability
can be  accommodated in our framework, by making use of a minor variation 
of our chosen logical relations over $\omega$-cpos.
The resulting arguments use plain logical relations over $\omega$-cpos and do not rely on sheaf-structure.
They are also a bit more general, as they apply to languages with coproduct and recursive types.

The central notion is \cite{lee2020correctness}'s concept of functions that are
 piecewise analytic under analytic partition (PAP).
We recall some of the required notions to talk about PAP functions first.
 \begin{definition}[Analytic function]
A function $f:U\to V$, for  $U\subseteq \RR^n$ and $V\subseteq \RR^m$, is \emph{analytic} if, for all $x\in U$, its Taylor series converges pointwise to $f$ on an open neighbourhood of $x$.
 \end{definition} 

 \begin{definition}[($c$)-Analytic set]
A subset $A\subseteq \RR^n$ is called \emph{analytic} if there exist analytic functions
$g_1,\ldots, g_m:U\to \RR$ defined on an open neighbourhood $U$ of $A$, such that 
$$
A=\{x\in U\mid g_i(x)\geq 0\text{ for $1\leq i\leq m$}\}.
$$
A subset $A\subseteq \RR^n$ is called \emph{$c$-analytic} if it is the countable union of analytic subsets.
 \end{definition} 
 As noted by \cite{huot2023omegapap}, we can equivalently define a $c$-analytic set as a countable \emph{disjoint} union of analytic subsets.

\begin{definition}[PAP function]
    A function $f:U\to V$, for  $U\subseteq \RR^n$ and $V\subseteq \RR^m$, 
    is called \emph{piecewise analytic under analytic partition
    (PAP)} if it has a \emph{PAP representation} in the sense of a countable family $\{(A_i,
    U_i, f_i)\}_{i\in I}$ such that:
    \begin{itemize}
    \item the sets $A_i$ are analytic and form a partition of $U$;
    \item each $f_i
    : U_i \to V$
     is an analytic function defined on
    an open neighbourhood $U_i$ of  $A_i$;
    \item $f_i|_{A_i}=f|_{A_i}$ in the sense that $f_i(x) = f(x)$ for all $x \in A_i$.
    \end{itemize}
\end{definition}
A crucial observation by \cite{lee2020correctness} is that PAP functions are closed under composition.
As noted by \cite{huot2023omegapap}, a subset $A\subseteq \RR^n$ is $c$-analytic if and only if the inclusion $A\hookrightarrow \RR^n$ is a PAP function.

We consider the following notion of partial PAP function.
\begin{definition}[Partial PAP function]
    We call a partial function $f:U\rightharpoonup V$
    a \emph{partial PAP function} if its domain of definition is $c$-analytic
    and it restricts to a (total) PAP function on its domain.
\end{definition}
As noted by \cite{huot2023omegapap}, such partial PAP functions are closed under composition.

\begin{definition}[Intensional derivative]Each particular PAP representation $\{(A_i, U_i
, f_i)\}_{i\in I}$ of a PAP function $f$ gives rise to a unique \emph{intensional derivative} 
$\{(A_i, U_i
, Df_i)\}_{i\in I}$, where we write $D f_i$ for the (standard) derivative of $f_i$,
such that $Df_i=Df$ on $A_i$.
\end{definition}
A given PAP function may therefore have several distinct intensional derivatives, arising from the different PAP representations.
However, \cite{lee2020correctness} show that such PAP functions $f$ are differentiable almost everywhere and that each intensional derivative corresponds almost everywhere with the (standard) derivative of $f$.

Next, we redefine our logical relations for $\reals$ and monadic types from Sections \ref{subsec:LR-assignment} and \ref{subsect:the-definition-of-the-monad-for-the-logical-relations}.
First, we redefine
$$
\semLR{n,k}{\reals }\defeqq   \left(\left\{ \left( f : \RR ^n\to \RR , \secondM{f} \right) : f\mbox{ is analytic, } \secondM{f} =  \dDSemtotaltra{k}{f} \right\}, \left( \RR , \RR \times \RR ^k \right), \mathrm{incl.}   \right). 
$$
Second, we denote by  
$\topologyO{n}$
the set of countable families $\{(A_i, U_i)\}_{i\in I}$ of pairs of analytic subsets $A_i\subseteq \RR^n$ and open neighbourhoods $U_i$ of $A_i$ in, such that all $A_i$ are pair-wise disjoint and $\bigsqcup_{i\in I}A_i\neq \emptyset,\RR^n$.
Then, for each $\{(A_i, U_i)\}_{i\in I}\in \topologyO{n} $,
we redefine
\begin{eqnarray*} 
\openLift{\{(A_i, U_i)\}_{i\in I}, n, k} &\defeqq &
\bigsqcup_{i\in I}
\left( \left\{ \left( g: \RR ^n\to  U_i, \dDSemtotaltra{k}{g} \right) : g \mbox{ is analytic} \right\} , \left(U_i,  \intle{n,k}\left(   U_i\times \left( \RR ^k\right)  ^n\right)  \right), \mathrm{incl.}\right) \\ 
&\in  &\SUBscone{\wCpo }{ \sconeFUNCTOR{n,k} } .
\end{eqnarray*} 
We redefine the $\SUBscone{\wCpo }{\sconeFUNCTOR{n,k}}$-monad $\monadLR{n, k}{-}$ on $\SUBscone{\wCpo }{\sconeFUNCTOR{n,k}}$ by
\begin{equation*}
	\monadLR{n, k}{D, \left( C, C'\right), j} \defeqq  \left( \unlift{\monadLR{n, k}{D, \left( C, C'\right) , j}} , \left( \monadwP{C} , \monadwP{C'}\right) , \morLRmonad{X}	
	\right) 
\end{equation*}
where $\unlift{\monadLR{n, k}{D, \left( C, C'\right) , j}}\subseteq \sconeFUNCTOR{n,k}(\Lift{C},\Lift{C'}) $ is the union
\begin{equation*} 
\left\{  \leastelement  \right\}
\sqcup 	
 D
 \sqcup 
 \left( \coprod _ {\{(A_i, U_i)\}_{i\in I}\in\topologyO{n} } 
 \ehom{\SUBscone{\wCpo }{\sconeFUNCTOR{n, k}}}{\openLift{\{(A_i, U_i)\}_{i\in I}, n, k} }{\left( D, \left( C, C'\right) , j\right)}\right)/\sim,
\end{equation*} 
where we identify $[(\gamma_{i}, \gamma'_{i})\mid i\in I]\sim [(\overline\gamma_{j}, \overline\gamma'_{j})\mid j\in J]$ if their domains of definition coincide ($\bigsqcup_{i\in I}A_i=\bigsqcup_{j\in J}\overline A_j$) and they define the same function on this domain.
To be more formal, we define the identification $[(\gamma_{i}, \gamma'_{i})\mid i\in I]\sim [(\overline\gamma_{j}, \overline\gamma'_{j})\mid j\in J]$ if
$\bigsqcup_{i\in I}A_i=\bigsqcup_{j\in J}\overline A_j$
and  $[\gamma_{i} \circ \iota_i\mid i\in I]=[\overline \gamma_{j} \circ \overline\iota_j\mid j\in J]$ and 
$[\gamma'_{i} \circ\intle{n,k}\circ (\iota_i\times \id[(\RR^k)^n])\circ \intle{n,k}^{-1}\mid i\in I]=[\overline \gamma'_{j} \circ \intle{n,k}\circ (\overline\iota_j\times \id[(\RR^k)^n])\circ \intle{n,k}^{-1}\mid j\in J]$, where we use the
inclusions  $\iota_i:A_i\hookrightarrow U_i$ and $\overline\iota_j:\overline A_j\hookrightarrow \overline U_j$.
The structure of the monad is defined entirely analogously to that in Section \ref{subsect:the-definition-of-the-monad-for-the-logical-relations}.
Closure under suprema of $\omega$-chains follows from \cite[Corollary B.9]{huot2023omegapap}.
It is easy to see that that the conditions of Theorem \ref{the:maybe-the-main-result-on-LR-recursive} are satisfied as before.

The rest of the development remains essentially unchanged, except for the minor modification that we work with (1) PAP functions rather than differentiable functions and (2) countable families of analytic subsets with open neighbourhoods rather than open subsets.

If we spell out the resulting definitions for the logical relations (focusing on the $k$-semantics for $k=1$), the result is as follows:
\begin{align*}
    T^n_{\reals} &\defeq \{(\gamma,\gamma')\mid \gamma \text{ is PAP and } \gamma'=(x,v)\mapsto  (\gamma(x),\gamma''(x,v))\text{ for an intensional derivative $\gamma''$ of $\gamma$} \}\\
    P^{n}_{\ty{1}} & \defeq \Big\{(\gamma,\gamma')\mid \gamma^{-1}(\sem{\ty{1}})\times \RR^n=\gamma'^{-1}(\sem{\Dsyn{\ty{1}}})\text{ and there exists a countable analytic partition }\\
    &\{A_i\subseteq \RR^n\}_{i\in I}\text{ of }\gamma'^{-1}(\sem{\Dsyn{\ty{1}}})\text{ and there exist open neighbourhoods $U_i$ of $A_i$ with functions }\\
    &\gamma_i:U_i\to \sem{\ty{1}}, \gamma'_i:U_i\times \RR^n\to \sem{\Dsyn{\ty{1}}}\text{ such that }\gamma|_{A_i}=\gamma_i|_{A_i}\text{ and } \gamma'|_{A_i\times \RR^n}=\gamma'_i|_{A_i\times \RR^n} \text{ and}\\
    & \text{for all analytic }\delta:\RR^n\to U_i\text{ we have that } 
    (\gamma_i\circ \delta, (x,v)\mapsto (\gamma_i(\delta(x)), \gamma'_i(D\delta(x,v))))\in T^n_{\ty{1}}\Big\}.
    \end{align*}
We see that $P^{n}_{\reals^m}$ precisely captures \cite{huot2023omegapap}'s notion of partial PAP functions and their intensional derivatives, if we note that we can use (analytic) $\delta$ to define for any point $y\in \RR^n$ an arbitrary small neighbourhood: $x\mapsto \frac{x*\epsilon}{\sqrt{1+|\!|x|\!|^2}}+y$ is an analytic isomorphism between $\RR^n$ and an $\epsilon$-ball centred at $y$.
We can show (by induction) that $P^n_{\ty{1}}$ is closed under suprema of $\omega$-chains using \cite[Corollary B.9]{huot2023omegapap}.

With these new definitions, our entire development goes through again.
As long as we ensure that all our primitive operations denote partial PAP functions, we obtain versions of Theorem III.2 and Corollary III.3. of \cite{huot2023omegapap} for a language that additionally includes recursive types, by using a plain logical relations argument over $\omega$-cpos:
% \begin{theorem}[Almost everywhere differentiability]
% For any program $\var{1}:\ty{1}\vdash \trm{1}:\ty{2}$ for $\ty{1}=\reals^{m},\ty{2}=\reals^l$,
% we have that $\sem{\trm{1}}$ is differentiable almost everywhere on its domain and
% \begin{align*}
% &\sem{\Dsyn{\trm{1}}}((x_{1},v_{1}),\ldots,(x_{m},v_{m}))=\\
% &\Big(\pi_1(\sem{\trm{1}}(x_1,\ldots,x_m)), \pi_1(D\sem{\trm{1}}((x_{1},\ldots,x_m),(v_{1},\ldots,v_m))),\ldots,\\
% &\;\;\pi_l(\sem{\trm{1}}(x_1,\ldots,x_m)), \pi_l(D\sem{\trm{1}}((x_{1},\ldots,x_m),(v_{1},\ldots,v_m)))
% \Big)
% \end{align*}
% for almost all $x=(x_1,\ldots,x_m)$ in the domain of definition of $\sem{\trm{1}}$ and any tangent vector $(v_1,\ldots, v_m)$ to $\sem{\ty{1}}$ at $x$. 
% \end{theorem}
\begin{theorem}[Almost everywhere differentiability] \label{theo:almost-everywhere-differentiable}
	Assume that $\tangentreals $ implements the vector space $\RR ^k$, for some $k\in\NN\cup\left\{ \infty \right\}$. For any program $\var{1}:\ty{1}\vdash \trm{1}:\ty{2}$  where	$\ty{1},\ty{2}$ are data types (including recursive data types), we have that $\sem{\trm{1}} $ is differentiable almost everywhere on its domain and, moreover, 
	\begin{equation}
		\semt{k}{\Dsyn{\trm{1} }} = \dDSemtra{k}{\sem{\trm{1} }}
	\end{equation}
	almost everywhere, provided that $\Dsynsymbol $ is sound for primitives.
\end{theorem}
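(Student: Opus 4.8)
The plan is to mirror the proof architecture established for the
differentiable case (Theorem~\ref{theo:main-theorem-section-proof-recursive}),
but with the logical-relations machinery re-instantiated for PAP functions as
flagged in the preceding discussion. The whole argument factors through the same
categorical backbone: I would verify that the redefined pair
$\left(\SUBscone{\wCpo}{\sconeFUNCTOR{n,k}}, \monadLR{n,k}{-}\right)$ is again an
$rCBV$ $\wCpo$-pair, lift the semantics to an $rCBV$ model morphism
$\semLR{n,k}{-}$, and then read off the almost-everywhere statement from the
logical relation at data types. Concretely, the final theorem follows from a
PAP-analogue of Proposition~\ref{prop:basal-version-of the-correctness-theorem-recursive-data-types}
together with the observation of \cite{lee2020correctness} that intensional
derivatives agree with the standard derivative off a measure-zero set.

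\textbf{First,} I would check that the three hypotheses of
Theorem~\ref{the:maybe-the-main-result-on-LR-recursive} still hold for the
redefined monad $\monadLR{n,k}{-}$: that $\lift{T}$ sends the initial object to
the terminal one (immediate, since the underlying $\monadwP{-}$ does and the
monad is a lifting), and that the pullback condition
\eqref{eq:impurity-preserving-in-terms-of-the-pullback-unit} holds, i.e.\ that
$G(\ee_X)^{-1}(L(T,X))=T$. The latter is the statement that the \emph{total}
elements of the lifted monad are exactly the original relation $T$, and it is
verified exactly as before because the inclusion of total (here: globally
analytic) functions into the monad is a distinguished summand. Closure under
suprema of $\omega$-chains, which is what makes the subscone an $\wCpo$-category
and is needed for the fixpoints and recursive types, is the one genuinely new
ingredient, and this is supplied by \cite[Corollary B.9]{huot2023omegapap}
(the fact that a countable chain of $c$-analytic domains has $c$-analytic union
with a compatible PAP limit). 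Granting this, Theorem~\ref{the:maybe-the-main-result-on-LR-recursive}
yields that the redefined pair is an $rCBV$ $\wCpo$-pair, and the universal
property of $\left(\SynVr,\SynTr,\SynffixpointRecT\right)$ produces the morphism
$\semLR{n,k}{-}$ making \eqref{eq:basic-commutative-diagram-for-the-recursive-types-LR-argument}
commute, just as in Lemma~\ref{lem:fundamental-commutativity-diagram-for-LR-argument-recursive-types}.

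\textbf{Next,} I would re-run the data-types analysis. The structural result
Proposition~\ref{prop:image-of-recursive-data-types} is purely about the
categorical shape of recursive data types (a sum of products of copies of
$\sem{\reals}$) and is insensitive to the choice of logical relation, so it
applies verbatim. The PAP-analogues of Lemma~\ref{lem:the-LR-case-of-total-functions},
Proposition~\ref{prop:the-LR-general-case} and
Corollary~\ref{coro:fundamental-LR-conclusion-about-morphisms-subscone} then go
through with ``differentiable'' replaced by ``PAP'' and
``open subset'' replaced by ``countable family of analytic subsets with open
neighbourhoods'', using closure of (partial) PAP functions under composition and
the gluing principle of Lemma~\ref{lem:obvious-gluing} reformulated for analytic
test curves $\delta$. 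This gives, for any $t:\uptau\to\upsigma$ between data
types, that $\sem{t}$ is a partial PAP function and that
$\semt{k}{\Dsyn{t}}$ computes an intensional derivative $\dDSemtra{k}{\sem{t}}$
of it. Applying this to a program $\var{1}:\ty{1}\vdash\trm{1}:\ty{2}$ and then
invoking the \cite{lee2020correctness} theorem that every intensional derivative
coincides with the standard derivative almost everywhere on the domain
upgrades the exact equality to the almost-everywhere statement, and
differentiability almost everywhere of $\sem{\trm{1}}$ follows from the same
source.

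\textbf{The main obstacle} I expect is not the categorical bookkeeping, which is
essentially forced by the earlier machinery, but verifying that the redefined
monad $\monadLR{n,k}{-}$ is genuinely well-defined and closed under the
$\omega$-suprema needed to interpret $\wfixpoint$, $\witerate$ and recursive
types. The subtlety is the quotient by the equivalence $\sim$ that identifies
families of analytic pieces defining the same partial function on the same
$c$-analytic domain: one must check that this quotient is compatible with the
monad structure (unit, multiplication, strength), that the order it inherits as
a sub-$\omega$-cpo of $\sconeFUNCTOR{n,k}(\Lift{C},\Lift{C'})$ is well-behaved,
and that a chain of such equivalence classes indeed has a supremum that is again
a partial PAP function with $c$-analytic domain. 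This is precisely where
\cite[Corollary B.9]{huot2023omegapap} does the heavy lifting, and confirming
that its hypotheses match our setting (countable analytic partitions refining
along the chain) is the delicate point; everything downstream is a faithful
re-run of the differentiable-case proof.
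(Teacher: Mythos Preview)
Your proposal is correct and follows essentially the same approach as the paper: redefine the logical relation at $\reals$ and the monad lifting using analytic/PAP data, invoke \cite[Corollary B.9]{huot2023omegapap} for $\omega$-closure, check the hypotheses of Theorem~\ref{the:maybe-the-main-result-on-LR-recursive}, re-run the data-types argument via Proposition~\ref{prop:image-of-recursive-data-types}, and finish with the \cite{lee2020correctness} almost-everywhere coincidence of intensional and standard derivatives. The paper in fact gives less detail than you do, stating only that ``our entire development goes through again'' with the new definitions; your identification of the quotient-compatibility and $\omega$-suprema closure as the delicate points is exactly right and matches where the paper defers to \cite{huot2023omegapap}.
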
 	 
Consequently, we obtain the correct derivative almost everywhere for any program $\trm{1}$ that terminates almost everywhere.
Importantly, this result remains true if we change the semantics of $\tSign\trm{1}$ to be defined even for $\trm{1}=0$, as is done in \cite{huot2023omegapap,DBLP:journals/pacmpl/MazzaP21}:
$$\sem{\tSign\trm{1}}\defeq \left\{\begin{array}{ll}\iota_1()&\text{ if }\sem{\trm{1}}\leq 0\\
\iota_2()&\text{ otherwise}\end{array}\right. .$$
Indeed, this semantics is still logical relation respecting, thanks to our choice of lifting of the partiality monad to logical relations.

\section{Related Work}\label{sec:related-work}
This is an improved version of the unpublished preprint \cite{vakar2020denotational}.
In particular, we have simplified the correctness argument to no longer depend on diffeological or sheaf-structure and to have it apply to arbitrary differentiable (rather than merely smooth) operations.
We have further simplified the subsconing technique for recursive types.

There has recently been a flurry of work studying AD from a programming language point of view, a lot of it focussing on functional formulations of AD and their correctness. 
Examples of such papers are 
\cite{pearlmutter2008reverse,elliott2018simple,shaikhha2019efficient,brunel2019backpropagation,abadi-plotkin2020,bcdg-open-logical-relations,lee2020correctness,hsv-fossacs2020,DBLP:journals/pacmpl/MazzaP21,vakar2021chad,VAKAR-LUCATELLI2021,DBLP:journals/corr/abs-2101-06757,DBLP:journals/toplas/VakarS22,DBLP:journals/pacmpl/KrawiecJKEEF22,smeding2022,huot2023omegapap}.
Of these papers, \cite{pearlmutter2008reverse,abadi-plotkin2020,lee2020correctness,DBLP:journals/pacmpl/MazzaP21,smeding2022,huot2023omegapap} are particularly relevant as they also consider automatic differentiation of languages with partial features.

Here, \cite{pearlmutter2008reverse} considers an implementation that differentiates recursive programs and the implementation of \cite{smeding2022} even differentiates code that uses recursive types. 
They do not give correctness proofs, however.

\cite{abadi-plotkin2020} pioneers a notion of correctness that we use for most of this paper, where points of non-differentiability are essentially ignored by making a function undefined at such points.
They use it to give a denotational correctness proof of AD on a first-order functional language with (first-order) recursion.
The first-orderness of the language allows the proof to proceed by plain induction rather than needing a logical technique.

\cite{lee2020correctness} introduces a more ambitious notion of correctness in the sense of almost everywhere correct AD.
\cite{DBLP:journals/pacmpl/MazzaP21} proves the correctness of basically the same AD algorithms that we consider in this paper when restricted to PCF with a base type of real numbers and a real conditional.
Importantly, they also take care to prove almost-everywhere correct differentiation for a language that supports conditionals on real numbers and primitives that can have points of non-differentiability. 
Their proof relies on operational semantic techniques.
\cite{huot2023omegapap} combines the ideas of \cite{lee2020correctness} with those of \cite{vakar2020denotational} to give a denotational proof of almost everywhere correct AD for PCF, by using sheaves of logical relations.
Section \ref{sec:almost-everywhere-differentiability} of the present paper shows how their arguments can be reproduced without any sheaf-theoretic machinery, essentially by choosing a different lifting of the partiality monad to logical relations.

\cite{bcdg-open-logical-relations} have previously used (open) logical relations over the syntax, rather than semantics, to prove correctness of AD on total languages.
It would be interesting to see whether and how their techniques could be adapted to languages with partial features. 
We suspect that the choice between logical relations over the syntax or semantics is mostly a matter of taste, but that the extra (co)completeness properties that the semantics has can help, particularly when proving things about recursion and recursive types.

There is an independent line of inquiry into differential $\lambda$-calculus \cite{ehrhard2003differential} and differential categories \cite{DBLP:journals/acs/BluteCLS20,DBLP:conf/csl/CockettCGLMPP20}.
A conceptual distinction with the work on automatic differentiation is that differentiation tends to be a first-class construct (part of the language) in differential $\lambda$-calculus, rather than a code transformation in a meta-language.
Further, there is a stronger emphasis on the axioms that derivatives need to satisfy and less of a focus on recipes for computing derivatives.
In this setting, differential restriction categories \cite{cockett2012differential} gives a more abstract semantic study of the interaction between (forward) differentiation and partiality. 
We found that for our purposes, a concrete semantics in terms of \wcpos{} sufficed, however.

Our contribution is to give an alternative denotational argument, which we believe is simple and systematic, and to extend it to apply to languages which, additionally, have the complex features of recursively defined data structures that we find in realistic ML-family languages.

Such AD for languages with expressive features such as recursion and user-defined data types has been called for by the machine learning community \cite{jeong2018improving, van2018automatic}.
Previously, the subtlety of the interaction of automatic differentiation and real conditionals had first been observed by \cite{beck1994if}.

Our work gives a relatively simple denotational semantics for recursive types,
which can be considered as an important special case of bilimit compact categories \cite{levy2012call}.
Bilimit compact categories are themselves, again, an important special case of the very general semantics of recursive types in terms of 
algebraically compact categories \cite{freyd1991algebraically}.
We believe that working with this special case of the semantics significantly simplifies our presentation.

In particular, this simplified semantics of recursive types allows us to give a very simple but powerful (open, semantic) logical technique for recursive types.
It is an alternative to the two existing techniques for logical relations for recursive types: relational properties of domains \cite{pitts1996relational}, which is quite general but very technical to use, in our experience, and step-indexed logical relations \cite{DBLP:conf/esop/Ahmed06}, which are restricted to logical relations arguments about syntax, hence not applicable to our situation.

Finally, we hope that our work adds to the existing body of programming languages literature on automatic differentiation and recursion (and recursive types).
In particular, we believe that it provides a simple, principled denotational explanation of how AD and expressive partial language features should interact.
We plan to use it to generalise and prove correct the more advanced AD technique CHAD \cite{elliott2018simple, vakar2021chad,DBLP:journals/toplas/VakarS22,VAKAR-LUCATELLI2021,ad-2022-kerjean:hal-03123968} when applied to languages with partial features.

\begin{acks}                           
	This project has received funding via NWO Veni grant number VI.Veni.202.124
	as well as the European Union’s Horizon 2020 research and innovation
	programme under the Marie Skłodowska-Curie grant agreement No. 895827. 
	
	This research was supported through the programme ``Oberwolfach Leibniz Fellows'' by the Mathematisches Forschungsinstitut Oberwolfach in 2022. It was also partially supported  by the CMUC, Centre for Mathematics of the University of Coimbra - UIDB/00324/2020, funded by the Portuguese Government through FCT/MCTES. 

  We are grateful to the anonymous reviewers for their helpful comments on this manuscript.
\end{acks}

\bibliographystyle{plain-abb}
\bibliography{TEX/bibliography}

\begin{thebibliography}{10}

\bibitem{abadi-plotkin2020}
M.~Abadi and G.~Plotkin.
\newblock A simple differentiable programming language.
\newblock In {\em Proc.~POPL 2020}. ACM, 2020.

\bibitem{DBLP:conf/esop/Ahmed06}
A.~J. Ahmed.
\newblock Step-indexed syntactic logical relations for recursive and quantified
  types.
\newblock In P.~Sestoft, editor, {\em Programming Languages and Systems, 15th
  European Symposium on Programming, {ESOP} 2006, Held as Part of the Joint
  European Conferences on Theory and Practice of Software, {ETAPS} 2006,
  Vienna, Austria, March 27-28, 2006, Proceedings}, volume 3924 of {\em Lecture
  Notes in Computer Science}, pages 69--83. Springer, 2006.

\bibitem{bcdg-open-logical-relations}
G.~Barthe, R.~Crubill\'e, U.~D. Lago, and F.~Gavazzo.
\newblock On the versatility of open logical relations: Continuity, automatic
  differentiation, and a containment theorem.
\newblock In {\em Proc.~ESOP 2020}. Springer, 2020.
\newblock To appear.

\bibitem{beck1994if}
T.~Beck and H.~Fischer.
\newblock The if-problem in automatic differentiation.
\newblock {\em Journal of Computational and Applied Mathematics},
  50(1-3):119--131, 1994.

\bibitem{betancourt_2019}
M.~Betancourt.
\newblock Double-pareto lognormal distribution in stan, Aug 2019.

\bibitem{bloom1993iteration}
S.~Bloom and Z.~{\'{E}}sik.
\newblock {\em Iteration Theories - The Equational Logic of Iterative
  Processes}.
\newblock {EATCS} Monographs on Theoretical Computer Science. Springer, 1993.

\bibitem{DBLP:journals/acs/BluteCLS20}
R.~Blute, J.~R.~B. Cockett, J.~P. Lemay, and R.~A.~G. Seely.
\newblock Differential categories revisited.
\newblock {\em Appl. Categorical Struct.}, 28(2):171--235, 2020.

\bibitem{brunel2019backpropagation}
A.~Brunel, D.~Mazza, and M.~Pagani.
\newblock Backpropagation in the simply typed lambda-calculus with linear
  negation.
\newblock In {\em Proc.~POPL 2020}, 2020.

\bibitem{carpenter2015stan}
B.~Carpenter, M.~Hoffman, M.~Brubaker, D.~Lee, P.~Li, and M.~Betancourt.
\newblock The {S}tan math library: Reverse-mode automatic differentiation in
  {C}++.
\newblock {\em arXiv preprint arXiv:1509.07164}, 2015.

\bibitem{2020arXiv200203132C}
M.~{Clementino} and F.~{Lucatelli Nunes}.
\newblock {Lax comma $2$-categories and admissible $2$-functors}.
\newblock {\em arXiv e-prints}, page arXiv:2002.03132, February 2020.

\bibitem{DBLP:conf/csl/CockettCGLMPP20}
J.~R.~B. Cockett, G.~S.~H. Cruttwell, J.~Gallagher, J.~P. Lemay, B.~MacAdam,
  G.~D. Plotkin, and D.~Pronk.
\newblock Reverse derivative categories.
\newblock In M.~Fern{\'{a}}ndez and A.~Muscholl, editors, {\em 28th {EACSL}
  Annual Conference on Computer Science Logic, {CSL} 2020, January 13-16, 2020,
  Barcelona, Spain}, volume 152 of {\em LIPIcs}, pages 18:1--18:16. Schloss
  Dagstuhl - Leibniz-Zentrum f{\"{u}}r Informatik, 2020.

\bibitem{cockett2012differential}
J.~R.~B. Cockett, G.~S. Cruttwell, and J.~D. Gallagher.
\newblock Differential restriction categories.
\newblock {\em arXiv preprint arXiv:1208.4068}, 2012.

\bibitem{MR0233864}
E.~Dubuc.
\newblock Adjoint triangles.
\newblock In {\em Reports of the {M}idwest {C}ategory {S}eminar, {II}}, pages
  69--91. Springer, Berlin, 1968.

\bibitem{MR0280560}
E.~Dubuc.
\newblock {\em Kan extensions in enriched category theory}.
\newblock Lecture Notes in Mathematics, Vol. 145. Springer-Verlag, Berlin-New
  York, 1970.

\bibitem{ehrhard2003differential}
T.~Ehrhard and L.~Regnier.
\newblock The differential lambda-calculus.
\newblock {\em Theoretical Computer Science}, 309(1-3):1--41, 2003.

\bibitem{elliott2018simple}
C.~Elliott.
\newblock The simple essence of automatic differentiation.
\newblock {\em Proceedings of the ACM on Programming Languages}, 2(ICFP):70,
  2018.

\bibitem{fiore1994axiomatisation}
M.~Fiore and G.~Plotkin.
\newblock An axiomatisation of computationally adequate domain theoretic models
  of fpc.
\newblock In {\em Proceedings Ninth Annual IEEE Symposium on Logic in Computer
  Science}, pages 92--102. IEEE, 1994.

\bibitem{flaxman2020estimating}
S.~Flaxman, S.~Mishra, A.~Gandy, H.~J.~T. Unwin, T.~A. Mellan, H.~Coupland,
  C.~Whittaker, H.~Zhu, T.~Berah, J.~W. Eaton, et~al.
\newblock Estimating the effects of non-pharmaceutical interventions on
  covid-19 in europe.
\newblock {\em Nature}, 584(7820):257--261, 2020.

\bibitem{freyd1991algebraically}
P.~Freyd.
\newblock Algebraically complete categories.
\newblock In {\em Category Theory}, pages 95--104. Springer, 1991.

\bibitem{goncharov2015unguarded}
S.~Goncharov, C.~Rauch, and L.~Schr{\"o}der.
\newblock Unguarded recursion on coinductive resumptions.
\newblock {\em Electronic Notes in Theoretical Computer Science}, 319:183--198,
  2015.

\bibitem{goodrich_2017}
B.~Goodrich.
\newblock Conway-maxwell-poisson distribution, Oct 2017.

\bibitem{goubault2002logical}
J.~Goubault-Larrecq, S.~Lasota, and D.~Nowak.
\newblock Logical relations for monadic types.
\newblock In {\em International Workshop on Computer Science Logic}, pages
  553--568. Springer, 2002.

\bibitem{griewank2008evaluating}
A.~Griewank and A.~Walther.
\newblock {\em Evaluating derivatives: principles and techniques of algorithmic
  differentiation}, volume 105.
\newblock Siam, 2008.

\bibitem{huot2023omegapap}
M.~Huot, A.~K. Lew, V.~K. Mansinghka, and S.~Staton.
\newblock $\omega$pap spaces: Reasoning denotationally about higher-order,
  recursive probabilistic and differentiable programs.
\newblock In {\em 2023 38th Annual ACM/IEEE Symposium on Logic in Computer
  Science (LICS)}, pages 1--14. IEEE, 2023.

\bibitem{DBLP:journals/corr/abs-2101-06757}
M.~Huot, S.~Staton, and M.~V{\'{a}}k{\'{a}}r.
\newblock Higher order automatic differentiation of higher order functions.
\newblock {\em CoRR}, abs/2101.06757, 2021.

\bibitem{hsv-fossacs2020}
M.~Huot, S.~Staton, and M.~Vákár.
\newblock Correctness of automatic differentiation via diffeologies and
  categorical gluing.
\newblock Full version, 2020.
\newblock arxiv:2001.02209.

\bibitem{jeong2018improving}
E.~Jeong, J.~S. Jeong, S.~Kim, G.-I. Yu, and B.-G. Chun.
\newblock Improving the expressiveness of deep learning frameworks with
  recursion.
\newblock In {\em Proceedings of the Thirteenth EuroSys Conference}, pages
  1--13, 2018.

\bibitem{MR0357542}
G.~M. Kelly and R.~Street.
\newblock Review of the elements of {$2$}-categories.
\newblock In {\em Category {S}eminar ({P}roc. {S}em., {S}ydney, 1972/1973)},
  Lecture Notes in Math., Vol. 420, pages 75--103. Springer, Berlin, 1974.

\bibitem{kelly1982basic}
G.~M. Kelly.
\newblock {\em Basic concepts of enriched category theory}, volume~64.
\newblock CUP Archive, 1982.

\bibitem{ad-2022-kerjean:hal-03123968}
M.~Kerjean and P.-M. P{\'e}drot.
\newblock {$\partial$ is for Dialectica}.
\newblock working paper or preprint, January 2022.

\bibitem{DBLP:journals/pacmpl/KrawiecJKEEF22}
F.~Krawiec, S.~P. Jones, N.~Krishnaswami, T.~Ellis, R.~A. Eisenberg, and A.~W.
  Fitzgibbon.
\newblock Provably correct, asymptotically efficient, higher-order reverse-mode
  automatic differentiation.
\newblock {\em Proc. {ACM} Program. Lang.}, 6({POPL}):1--30, 2022.

\bibitem{lee2020correctness}
W.~Lee, H.~Yu, X.~Rival, and H.~Yang.
\newblock On correctness of automatic differentiation for non-differentiable
  functions.
\newblock {\em Advances in Neural Information Processing Systems},
  33:6719--6730, 2020.

\bibitem{levy2012call}
P.~Levy.
\newblock {\em Call-by-push-value: A Functional/imperative Synthesis},
  volume~2.
\newblock Springer Science \& Business Media, 2012.

\bibitem{levy2003modelling}
P.~Levy, J.~Power, and H.~Thielecke.
\newblock Modelling environments in call-by-value programming languages.
\newblock {\em Information and computation}, 185(2):182--210, 2003.

\bibitem{Lucatelli2016}
F.~{Lucatelli Nunes}.
\newblock On biadjoint triangles.
\newblock {\em Theory Appl. Categ.}, 31:Paper No. 9, 217--256, 2016.

\bibitem{2019arXiv190201225L}
F.~{Lucatelli Nunes}.
\newblock {Semantic Factorization and Descent}.
\newblock {\em arXiv e-prints}, page arXiv:1902.01225, February 2019.

\bibitem{mac2013categories}
S.~Mac~Lane.
\newblock {\em Categories for the working mathematician}, volume~5.
\newblock Springer Science \& Business Media, 2013.

\bibitem{margossian2021efficient}
C.~C. Margossian and M.~Betancourt.
\newblock Efficient automatic differentiation of implicit functions.
\newblock {\em arXiv preprint arXiv:2112.14217}, 2021.

\bibitem{DBLP:journals/pacmpl/MazzaP21}
D.~Mazza and M.~Pagani.
\newblock Automatic differentiation in {PCF}.
\newblock {\em Proc. {ACM} Program. Lang.}, 5({POPL}):1--27, 2021.

\bibitem{meijer2018behind}
E.~Meijer.
\newblock Behind every great deep learning framework is an even greater
  programming languages concept (keynote).
\newblock In {\em Proceedings of the 2018 26th ACM Joint Meeting on European
  Software Engineering Conference and Symposium on the Foundations of Software
  Engineering}, pages 1--1, 2018.

\bibitem{mitchell1992notes}
J.~C. Mitchell and A.~Scedrov.
\newblock Notes on sconing and relators.
\newblock In {\em International Workshop on Computer Science Logic}, pages
  352--378. Springer, 1992.

\bibitem{moggi1988computational}
E.~Moggi.
\newblock Computational lambda-calculus and monads.
\newblock In {\em Proceedings of the Fourth Annual Symposium on Logic in
  Computer Science {(LICS} '89), Pacific Grove, California, USA, June 5-8,
  1989}, pages 14--23. {IEEE} Computer Society, 1989.

\bibitem{moggi1991notions}
E.~Moggi.
\newblock Notions of computation and monads.
\newblock {\em Information and computation}, 93(1):55--92, 1991.

\bibitem{VAKAR-LUCATELLI2021}
F.~L. Nunes and M.~V{\'a}k{\'a}r.
\newblock Chad for expressive total languages.
\newblock {\em Mathematical Structures in Computer Science}, 33(4-5):311--426,
  2023.

\bibitem{pearlmutter2008reverse}
B.~Pearlmutter and J.~Siskind.
\newblock Reverse-mode {AD} in a functional framework: Lambda the ultimate
  backpropagator.
\newblock {\em ACM Transactions on Programming Languages and Systems (TOPLAS)},
  30(2):7, 2008.

\bibitem{pitts1996relational}
A.~Pitts.
\newblock Relational properties of domains.
\newblock {\em Inform.~Comput.}, 127(2):66--90, 1996.

\bibitem{plotkin2018some}
G.~Plotkin.
\newblock Some principles of differential programming languages.
\newblock {\em Invited talk, POPL}, 2018, 2018.

\bibitem{shaikhha2019efficient}
A.~Shaikhha, A.~Fitzgibbon, D.~Vytiniotis, and S.~Peyton~Jones.
\newblock Efficient differentiable programming in a functional array-processing
  language.
\newblock {\em Proceedings of the ACM on Programming Languages}, 3(ICFP):97,
  2019.

\bibitem{shalev2012online}
S.~Shalev-Shwartz et~al.
\newblock Online learning and online convex optimization.
\newblock {\em Foundations and Trends{\textregistered} in Machine Learning},
  4(2):107--194, 2012.

\bibitem{smeding2022}
T.~Smeding and M.~V{\'{a}}k{\'{a}}r.
\newblock Efficient dual-numbers reverse {AD} via well-known program
  transformations.
\newblock {\em Proc. {ACM} Program. Lang.}, 7({POPL}):1573--1600, 2023.

\bibitem{smeding2024parallel}
T.~J. Smeding and M.~I. V{\'a}k{\'a}r.
\newblock Parallel dual-numbers reverse ad.
\newblock {\em arXiv preprint arXiv:2207.03418v3}, 2024.

\bibitem{smyth-plotkin:rde}
M.~Smyth and G.~Plotkin.
\newblock The category-theoretic solution of recursive domain equations.
\newblock {\em SIAM Journal on Computing}, 11(4):761--783, 1982.

\bibitem{socher2011parsing}
R.~Socher, C.~C. Lin, C.~Manning, and A.~Y. Ng.
\newblock Parsing natural scenes and natural language with recursive neural
  networks.
\newblock In {\em Proceedings of the 28th international conference on machine
  learning (ICML-11)}, pages 129--136, 2011.

\bibitem{MR299653}
R.~Street.
\newblock The formal theory of monads.
\newblock {\em J. Pure Appl. Algebra}, 2(2):149--168, 1972.

\bibitem{tai2015improved}
K.~S. Tai, R.~Socher, and C.~D. Manning.
\newblock Improved semantic representations from tree-structured long
  short-term memory networks.
\newblock {\em arXiv preprint arXiv:1503.00075}, 2015.

\bibitem{tsiros2019population}
P.~Tsiros, F.~Y. Bois, A.~Dokoumetzidis, G.~Tsiliki, and H.~Sarimveis.
\newblock Population pharmacokinetic reanalysis of a diazepam pbpk model: a
  comparison of stan and gnu mcsim.
\newblock {\em Journal of Pharmacokinetics and Pharmacodynamics},
  46(2):173--192, 2019.

\bibitem{vakar2020denotational}
M.~V{\'a}k{\'a}r.
\newblock Denotational correctness of forward-mode automatic differentiation
  for iteration and recursion.
\newblock {\em arXiv preprint arXiv:2007.05282}, 2020.

\bibitem{vakar2021chad}
M.~V{\'a}k{\'a}r.
\newblock Reverse {AD} at higher types: pure, principled and denotationally
  correct.
\newblock In {\em Programming languages and systems. 30th European symposium on
  programming, ESOP 2021, held as part of the European joint conferences on
  theory and practice of software, ETAPS 2021, Luxembourg City, Luxembourg,
  March 27 -- April 1, 2021. Proceedings}, pages 607--634. Cham: Springer,
  2021.

\bibitem{vakar2019domain}
M.~V{\'{a}}k{\'{a}}r, O.~Kammar, and S.~Staton.
\newblock A domain theory for statistical probabilistic programming.
\newblock {\em Proc. {ACM} Program. Lang.}, 3({POPL}):36:1--36:29, 2019.

\bibitem{DBLP:journals/toplas/VakarS22}
M.~V{\'{a}}k{\'{a}}r and T.~Smeding.
\newblock {CHAD:} combinatory homomorphic automatic differentiation.
\newblock {\em {ACM} Trans. Program. Lang. Syst.}, 44(3):20:1--20:49, 2022.

\bibitem{van2018automatic}
B.~van Merrienboer, O.~Breuleux, A.~Bergeron, and P.~Lamblin.
\newblock Automatic differentiation in ml: Where we are and where we should be
  going.
\newblock In {\em Advances in neural information processing systems}, pages
  8757--8767, 2018.

\bibitem{zhang2016top}
X.~Zhang, L.~Lu, and M.~Lapata.
\newblock Top-down tree long short-term memory networks.
\newblock In {\em Proceedings of NAACL-HLT}, pages 310--320, 2016.

\end{thebibliography}

\clearpage
\appendix 

\section{Fine grain call-by-value and AD}\label{appx:fine-grain-cbv}
In \S\ref{sec:our-cbv-language}, we have discussed a standard coarse-grain CBV language, also known as the $\lambda_C$-calculus,
computational $\lambda$-calculus \cite{moggi1988computational},
or, plainly, CBV.
In this appendix, we discuss an alternative presentation in terms of 
 fine-grain CBV\footnote{This is a type theory for the Freyd category given by the Kleisli functor of the partiality monad.
In the presence of the connectives we consider (in particular, function types), it is equivalent to Moggi's monadic metalanguage \cite{moggi1991notions}.} \cite{levy2003modelling,levy2012call}.
While it is slightly more verbose, this presentation clarifies the precise universal property that 
is satisfied by the syntax of our language.

\subsection{Fine grain call-by-value}\label{subsec:finegrain-CBV-language}
We consider a standard fine-grain call-by-value language (with complex values) over a ground type $\reals$ of 
real numbers, real constants $\cnst{c}\in\Op_0$ for $c\in\RR$,
and certain basic operations $\op\in\Op_n$ for each natural number $n\in\NN$.

The types $\ty{1},\ty{2},\ty{3}$, (complex) values $\val{1},\val{2},\val{3}$, and computations $\trm{1},\trm{2},\trm{3}$ of our language are as follows.\\
\begin{syntax}
	\ty{1}, \ty{2}, \ty{3} & \gdefinedby & & \syncat{types}                          \\
	&\gor& \reals                      & \synname{numbers}\\
	&\gor & \Init  \gor \ty{1} + \ty{2}  & \synname{sums}\\
	&&&\\
	\val{1}, \var{2}, \val{3} & \gdefinedby & & \syncat{values}                          \\
	&\gor& \var{1},\var{2},\var{3}                      & \synname{variables}\\
	&\gor& \cnst{c}                   & \synname{constant}\\
	&\gor & \nvMatch{\val{1}} & \synname{sum match}\\
	&\gor& \tInl{\val{1}} \gor   \tInr{\val{1}} & \synname{inclusions}\\
	&&&\\
	\trm{1}, \trm{2}, \trm{3} & \gdefinedby & & \syncat{computations}                          \\
	&\gor & \toin{\var{1}}{\trm{1}}{\trm{2}} & \synname{sequencing}\\
	&\gor & \tReturn \val{1} & \synname{pure comp.}\\
	&\gor & \op(\val{1}_1,\ldots,\val{1}_n) & \synname{operation}\\
	&\gor & \nvMatch{\val{1}} & \synname{sum match}\\
\end{syntax}%
~
\begin{syntax}
	&\gor\quad\, & \Unit  \gor  \ty{1}_1 \t* \ty{1}_2 & \synname{products}\\
	&\gor& \ty{1} \To \ty{2}              & \synname{function}      \\
	& & &\\
	&\gor & \vMatch{\val{1}}{\begin{array}{l}\;\;\tInl\var{1}\To\val{2}\\
			\gor \tInr\var{2}\To \val{3}\end{array}}\hspace{-15pt}\; & \synname{sum match}\\	
	&\gor\quad\, & \tUnit \ \gor  \tPair{\val{1}}{\val{2}} & \synname{tuples}\\
	% &\gor\quad\, & \uMatch{\val{1}}{\val{2}} & \synname{product match}\\
	&\gor\quad\, & \pMatch{\val{1}}{\var{1}}{\var{2}}{\val{2}} & \synname{product match}\\
	&\gor& \fun{\var{1}}{\trm{1}}             & \synname{abstractions}      \\
	& \gor & \rec{\var{1}}\val{1} & \synname{term recursion}\\
	&&&\\
	% &\gor\quad\, & \uMatch{\val{1}}{\trm{1}} & \synname{product match}\\
	&\gor & \vMatch{\val{1}}{\begin{array}{l}\;\;\tInl\var{1}\To\trm{1}\\
			\gor \tInr\var{2}\To \trm{2}\end{array}}\hspace{-15pt}\; & \synname{sum match}\\
	&\gor\quad\, & \pMatch{\val{1}}{\var{1}}{\var{2}}{\trm{1}} & \synname{product match}\\
	&\gor& \val{1}\ \val{2}             & \synname{function app.}      \\
	&\gor&\tItFrom{\trm{1}}{\var{1}}{\val{1}} & \synname{iteration}\\
	&\gor&\tSign\val{1} & \synname{sign function}\\
\end{syntax}\\
We will use sugar
\begin{align*}&\ifelse{\val{1}}{\trm{1}}{\trm{2}}\defeq \toin{\var{1}}{\tSign(\val{1})}
{\vMatch{\var{1}}{{
   \_\To{\trm{2}}
\vor \_\To{\trm{3}}
}}}\\
&\tFst\val{1}\defeq \pMatch{\val{1}}{\var{1}}{\_}{\var{1}}\\
&\tSnd\val{1}\defeq \pMatch{\val{1}}{\_}{\var{1}}{\var{1}}\\
&
\letrec{f}{\var{1}}{\trm{1}}{\trm{2}}\defeq \toin{f}{(\rec{f}{\tReturn(\fun{\var{1}}\trm{1})})}{\trm{2}}.
\end{align*}
We could also define iteration as syntactic sugar:\\
$\tItFrom{\trm{1}}{\var{1}}{\val{1}}\defeq \left(\rec{\var{3}}{\fun{\var{1}}{\toin{\var{2}}{\trm{1}}{\bvMatch{\var{2}}{\var{1}'}{ \var{3}\, \var{1}'}{\var{1}''}{\tReturn \var{1}''}}}}\right)\, \val{1}$.

The typing rules are in Figure~\ref{fig:fine-type-system}.

\begin{figure}[!ht]
  \fbox{\parbox{0.98\linewidth}{\begin{minipage}{\linewidth}\noindent\input{TEX/fine-type-system}\end{minipage}}}
  \caption{Typing rules for the our fine-grain CBV language with iteration and real conditionals.
  We use a typing judgement $\vdash^v$ for values and $\vdash^c$ for computations.
  \label{fig:fine-type-system}}
  \end{figure}

%   \MV{Maybe add explicit let-bindings for sharing.
%   Also, note how iteration can be defined in terms of recursion.
%   Derived typing rule for letrec?
%   Letrec in coarse grain CBV?
%   }

\subsection{Equational theory}
We consider our language up to the usual $\beta\eta$-equational theory for fine-grain CBV,
which is displayed in Figure \ref{fig:fine-beta-eta1}.

\begin{figure}[!ht]
   \fbox{\parbox{0.98\linewidth}{\begin{minipage}{\linewidth}\noindent
\input{TEX/fine-beta-eta1}
\end{minipage}}}
\caption{Standard $\beta\eta$-laws for fine-grain CBV.
We write $\freeeq{\var{1}_1,\ldots,\var{1}_n}$ to indicate that the variables are fresh in the left hand side.
In the top right rule, $\var{1}$ may not be free in $\trm{3}$.
Equations hold on pairs of terms of the same type.\label{fig:fine-beta-eta1}}
\end{figure}
Under the translation of coarse-grain CBV into fine-grain CBV, this equational theory induces precisely that of Section \ref{sec:our-cbv-language}.

\subsection{The $CBV$ model $\left( \SynV, \SynT,\Synfix , \Synit   \right) $}\label{subsec:the category syntax}
Our fine grain call-by-value language corresponds with a $CBV$ model (see
Definition~\ref{def:CBVmodel}).

\textit{We define the category $\SynV$ of values}, which has types as objects. 
$\SynV(\ty{1},\ty{2})$ consists of $(\alpha)\beta\eta$-equivalence classes of 
values $\var{1}:\ty{1}\vdash^v \val{1}:\ty{2}$, where identities are $\var{1}:\ty{1}\vdash^v \var{1}:\ty{2}$ 
and composition of $\var{1}:\ty{1}\vdash^v \val{1}:\ty{2}$ and $\var{2}:\ty{2}\vdash^v \val{2}:\ty{3}$
is given by $\var{1}:\ty{1}\vdash^v \subst{\val{2}}{\sfor{\var{2}}{\val{1}}}:\ty{3}$. 
\begin{lemma} 
$\SynV $ is bicartesian closed.
\end{lemma}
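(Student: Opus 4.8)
The plan is to exhibit the bicartesian closed structure on $\SynV$ directly from the type and term formers of the fine-grain language, showing that the syntactic constructs satisfy the required universal properties up to $\beta\eta$-equality. Concretely, I would verify that the chosen structure---terminal object $\Unit$, binary products $\ty{1}\t*\ty{2}$, initial object $\Init$, binary coproducts $\ty{1}\t+\ty{2}$, and exponentials $\ty{1}\To\ty{2}$---is genuine. Since objects of $\SynV$ are types and morphisms are $\beta\eta$-equivalence classes of values $\var{1}:\ty{1}\vdash^v\val{1}:\ty{2}$, each piece of structure is witnessed by specific values, and each universal property reduces to checking existence and uniqueness of mediating values modulo the equational theory of Figure~\ref{fig:fine-beta-eta1}.

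First I would treat the finite products. The terminal object is $\Unit$, with the unique morphism into it given by $\tUnit$; uniqueness follows from the $\eta$-law for $\Unit$ (a value of type $\Unit$ is $\beta\eta$-equal to $\tUnit$). For binary products I would take pairing $\tPair{\val{1}}{\val{2}}$ together with the projections $\tFst$ and $\tSnd$ defined as sugar via product match, and check the $\beta$-law $\pMatch{\tPair{\val{1}}{\val{2}}}{\var{1}}{\var{2}}{\val{3}}\beeq\subst{\val{3}}{\sfor{\var{1}}{\val{1}},\sfor{\var{2}}{\val{2}}}$ gives the product equations, while the corresponding $\eta$-law (the commuting conversion for product match with fresh variables) delivers uniqueness of tupling. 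Dually, the initial object is $\Init$, with the unique outgoing morphism given by the empty sum match $\nvMatch{\val{1}}$ and uniqueness forced by the $\eta$-rule for the empty match; binary coproducts use the inclusions $\tInl$, $\tInr$ and the case construct $\bvMatch{\val{1}}{\var{1}}{\val{2}}{\var{2}}{\val{3}}$, with the two $\beta$-laws for the injections and the $\eta$-law for sum match supplying the coproduct universal property.

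Next I would establish cartesian closure. The exponential of $\ty{2}$ by $\ty{1}$ is the function type $\ty{2}\To\ty{1}$, and the natural bijection $\SynV(\ty{3}\t*\ty{2},\ty{1})\cong\SynV(\ty{3},\ty{2}\To\ty{1})$ is witnessed by currying via $\fun{\var{1}}{\trm{1}}$ in one direction and by application $\val{1}\,\val{2}$ in the other. The subtlety here is that application is a \emph{computation} former, not a value former, so to stay within $\SynV$ one must apply to a variable inside a returned abstraction; concretely the inverse sends $\val{1}$ to (the value class of) $\fun{\var{2}}{\val{1}\,\var{2}}$ after suitable sequencing, and one verifies the two round-trips using the function $\beta$-law $(\fun{\var{1}}{\trm{1}})\,\val{1}\beeq\subst{\trm{1}}{\sfor{\var{1}}{\val{1}}}$ and the function $\eta$-law $\val{1}\freeeq{\var{1}}\fun{\var{1}}{\val{1}\,\var{1}}$. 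I would also confirm naturality of this bijection in $\ty{3}$, which is routine from substitution. The main obstacle I anticipate is precisely this value/computation bookkeeping in the exponential adjunction: because function application produces a computation while hom-sets of $\SynV$ are built from values, one must carefully mediate through $\tReturn$ and sequencing so that the purported exponential transpose actually lands among values and the unit/counit of the adjunction are expressible as value morphisms. Everything else is a mechanical but lengthy unwinding of the $\beta\eta$-equations, so I would state the constructions explicitly and check the universal properties one connective at a time, flagging the exponential case as the one requiring genuine care.
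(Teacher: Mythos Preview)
Your treatment of products, coproducts, terminal and initial objects is correct. The gap is in exponentials: the function type $\ty{2}\To\ty{1}$ is \emph{not} the exponential in $\SynV$; it is the Kleisli exponential. Since the $\eta$-law identifies every value of type $\ty{2}\To\ty{1}$ with an abstraction $\fun{\var{2}}{\trm{1}}$ where $\trm{1}$ is a \emph{computation}, the bijection the $\beta\eta$-rules actually deliver is $\SynV(\ty{3},\ty{2}\To\ty{1})\cong\SynC(\ty{3}\t*\ty{2},\ty{1})$, not $\SynV(\ty{3}\t*\ty{2},\ty{1})$. This is precisely the content of the paper's remark after the definition of $CBV$ pair: function types realise $\expk{\ty{2}}{\ty{1}}=\expo{\ty{2}}{\SynT\ty{1}}$, not $\expo{\ty{2}}{\ty{1}}$. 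The ``subtlety'' you flag---that application is a computation former---is not bookkeeping to be finessed through $\tReturn$ and sequencing; it is the whole obstacle. There is no evaluation \emph{value} of type $(\ty{2}\To\ty{1})\t*\ty{2}\to\ty{1}$ in $\SynV$, because application yields a computation and the equational theory provides no rule coercing an arbitrary computation back to a value.

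Concretely, if your bijection held for all types then composing it with the one above would force $\SynJ:\SynV\to\SynC$ to be an equivalence and $\SynT$ to be (isomorphic to) the identity monad; but $\fun{\tUnit}{\op(\cnst{1})}$ is a closed value of type $\Unit\To\reals$ not $\beta\eta$-equal to any $\fun{\tUnit}{\tReturn\cnst{c}}$, since no equations are imposed on primitive operations. The paper states the lemma without proof; what the listed type formers and rules establish directly is that $\SynV$ is bicartesian and that $\SynT$ has Kleisli exponentials, which is what the rest of the development uses. Obtaining genuine cartesian closure of $\SynV$ from this syntax would require either a separate value-level function type with value-level application, or an argument that every type is (up to iso) of the form $\SynT\ty{1}'$---neither of which is available here.
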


Similarly, \textit{we define the category $\SynC$ of computations}, which also has types as objects.
$\SynC(\ty{1},\ty{2})$ consists of $(\alpha)\beta\eta$-equivalence classes of 
computations $\var{1}:\ty{1}\vdash^c \trm{1}:\ty{2}$, where identities are $\var{1}:\ty{1}\vdash^c \tReturn\var{1}:\ty{2}$ 
and composition of $\var{1}:\ty{1}\vdash^c \trm{1}:\ty{2}$ and $\var{2}:\ty{2}\vdash^c \trm{2}:\ty{3}$
is given by $\var{1}:\ty{1}\vdash^c \toin{\var{2}}{\trm{1}}{\trm{2}}:\ty{3}$.
\begin{lemma} 
	$\SynC $ is a $\SynV$-category.
\end{lemma}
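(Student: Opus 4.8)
The plan is to realise $\SynC$ as the Kleisli $\SynV$-category of a strong monad on $\SynV$, so that its $\SynV$-enrichment becomes an instance of the general construction recalled in the Remark following Definition~\ref{def:CBVpair}: once $(\SynV,\SynT)$ is exhibited as a $CBV$ pair, the Kleisli category is automatically a $\SynV$-category with enriched hom $\ihomC{\ty{1}}{\ty{2}}=\expo{\ty{1}}{\SynT\ty{2}}$. Since the preceding lemma already gives that $\SynV$ is bicartesian closed, the only work left is to produce the monad $\SynT$ and to check the monad and strength axioms modulo the $\beta\eta$-theory of Figure~\ref{fig:fine-beta-eta1}. First I would define $\SynT$ on objects by $\SynT\ty{1}\defeq\Unit\To\ty{1}$, the type of thunked computations; because $\Unit$ is terminal and by the $\beta\eta$-laws for $\To$, a Kleisli morphism $\ty{1}\to\SynT\ty{2}$ in $\SynV$ corresponds exactly to a computation $\var{1}:\ty{1}\vdash^{c}\trm{1}:\ty{2}$, so that the Kleisli category is $\SynC$ on the nose. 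The unit $\ee_{\ty{1}}:\ty{1}\to\SynT\ty{1}$ is $\fun{\_}{\tReturn\var{1}}$, and the multiplication $\mm_{\ty{1}}:\SynT\SynT\ty{1}\to\SynT\ty{1}$ forces the outer thunk and sequences into the inner one, $\fun{\_}{\toin{\var{2}}{(\var{1}\,\tUnit)}{(\var{2}\,\tUnit)}}$. The strength $\strengthT_{\ty{2},\ty{1}}:\ty{2}\times\SynT\ty{1}\to\SynT(\ty{2}\times\ty{1})$ is available because computations carry their free variables along: it pairs the ambient value of type $\ty{2}$ into the result of the forced thunk.

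Next I would verify the axioms. After contracting the force-redexes of the form $(\_\,\tUnit)$, the two monad unit laws and associativity collapse precisely to the left-unit, right-unit, and associativity equations for sequencing $\mathbf{to}$ that constitute the first two lines of Figure~\ref{fig:fine-beta-eta1}. The strength coherence conditions — the diagrams relating $\strengthT$ to the unitor, associator, $\ee$, and $\mm$ — reduce, again after the same force-redex contractions, to short chains of $\beta\eta$-rewrites, with the cartesian structure of $\SynV$ supplying the associator and unitor. This establishes that $(\SynV,\SynT)$ is a $CBV$ pair in the sense of Definition~\ref{def:CBVpair}. The associated $\SynV$-enriched Kleisli category then has hom-objects $\expo{\ty{1}}{\SynT\ty{2}}\cong(\ty{1}\To\ty{2})$, enriched identities and composition inherited from the Kleisli structure, and underlying ordinary category $\SynC$ by the currying correspondence noted above; hence $\SynC$ is a $\SynV$-category.

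The genuinely laborious step — and the one I would isolate as the main obstacle — is the verification of the strong-monad axioms for $\SynT$ against the equational theory, and in particular checking naturality of the chosen strength together with its coherence squares. Everything else is either routine bookkeeping with the currying isomorphism or a direct appeal to the sequencing laws, which are built into Figure~\ref{fig:fine-beta-eta1}; by contrast each strength coherence diagram must be discharged individually by rewriting both legs to a common normal form. Once the strength verification is complete, the enrichment of $\SynC$ follows formally from the general theory of $\catV$-monads and their Kleisli $\catV$-categories.
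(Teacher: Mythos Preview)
Your proposal is correct and is essentially the approach the paper takes, only with the order of exposition reversed. The paper states this lemma without proof and immediately afterwards defines the $\SynV$-functors $\SynJ$ and $\SynG$, the adjunction $\SynJ\dashv\SynG$, and the induced $\SynV$-monad $\SynT$ with $\SynT\ty{1}=\Unit\To\ty{1}$; so the paper's implicit argument is to verify the $\SynV$-enrichment of $\SynC$ directly (hom-objects $\ty{1}\To\ty{2}$, composition via $\mathbf{to}$, identities via $\mathbf{return}$, axioms from Figure~\ref{fig:fine-beta-eta1}) and \emph{then} read off the monad, whereas you define the strong monad on $\SynV$ first and obtain the enrichment as the general Kleisli-$\catV$-category construction. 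The verification content is identical: the unit, multiplication, and strength coherences you check are exactly the enriched identity, associativity, and tensoring laws for $\SynC$ rewritten through the currying isomorphism, and both reduce to the sequencing laws in the first two lines of Figure~\ref{fig:fine-beta-eta1} together with the $\beta\eta$-laws for products and functions. Your ordering has the slight advantage that the enrichment comes for free from the general theory once the monad axioms are checked; the paper's ordering has the advantage that one can speak of $\SynJ,\SynG$ as $\SynV$-functors before the monad is named.
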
 

We define the $\SynV $-functors  \\
\noindent  \begin{minipage}{.5\linewidth}
	\begin{eqnarray*}
		\SynG: &\SynC &\embed  \SynV\\
		&\ty{1} & \mapsto \left( {\Unit\To\ty{1}}\right)   \\ 
		& \trm{1} & \mapsto \fun{\tUnit}{\trm{1}} 
	\end{eqnarray*}   
\end{minipage}
\begin{minipage}{.5\linewidth}
 \begin{eqnarray*}
	\SynJ: &\SynV &\embed  \SynC\\
	&\ty{1} & \mapsto \ty{1}\\ 
	& \val{1} & \mapsto \tReturn\val{1} .
\end{eqnarray*}
\end{minipage}
\\ \\
We have that $\SynJ \dashv \SynG $ is a (Kleisli) $\SynV $-adjunction $\SynJ \dashv \SynG $ and, hence, denoting by $\SynT$ the 
induced $\SynV$-monad, $\left( \SynV , \SynT \right) $  is a $CBV$ pair, as defined in Definition \ref{def:CBVpair}. Moreover, considering the free recursion and free iteration
\begin{eqnarray*}
\Synit : \qquad &\left(\var{1}:\ty{2}\vdash^c \trm{1}:\ty{2}\t+\ty{1}\right) &\mapsto  \fun{ \var{2} }  { \left( {\tItFrom{\trm{1}}{\var{1}}{\var{2}}} \right) } \\
\Synfix :\qquad  &\left(\var{1}:\ty{1}\vdash^v \val{1}:\ty{1}\right) &\mapsto   {\rec{\var{1}}{\val{1}}} \quad   (\ty{1}=\ty{2}\To\ty{3}),
\end{eqnarray*} 
we get the $CBV$ model $\left( \SynV, \SynT,\Synfix , \Synit   \right) $ which has the following universal property.

\begin{proposition}[Universal Property of the Syntax]\label{prop:universal-property-syntax}
	Let $\left( \catV , \monadT, \fixpoint , \iterate \right) $  be a $CBV$ model with chosen finite products, coproducts and exponentials.  For each consistent assignment  
	\begin{eqnarray}
		H(\reals) &\in & \obb{\catV} \\
		H (\cnst c) &\in & \ehomV{\terminall}{H(\reals)} \\
		H (\op)&\in  &\ehomC{H(\reals)^n }{ H(\reals) } =  \ehomV{H(\reals)^n }{ TH(\reals) }, \mbox{ for each } \op\in\Op_n  \\
		H (\tSign) & \in &\ehomC{H(\reals)} { \terminall\sqcup \terminall }  = \ehomV{H(\reals)} {T\left( \terminall\sqcup \terminall\right)} 
	\end{eqnarray}   
	there is a unique $CBV$ model morphism $H$ between $\left( \SynV, \SynT,\Synfix , \Synit   \right)$
	and $\left( \catV , \monadT, \fixpoint , \iterate \right) $ respecting it.	
\end{proposition}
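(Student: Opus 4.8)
The plan is to construct $H$ by structural recursion, first on types and then on typing derivations, using the universal properties that come with the bicartesian closed structure of $\catV$, the Kleisli $\catV$-adjunction for $\monadT$ with universal functor $J:\catV\to\catC$, and the free recursion and iteration $\fixpoint,\iterationn$. On types I would define $H$ on the nose: $H(\Init)\defeq\initiall$, $H(\Unit)\defeq\terminall$, $H(\ty{1}\t*\ty{2})\defeq H\ty{1}\times H\ty{2}$, $H(\ty{1}\t+\ty{2})\defeq H\ty{1}\sqcup H\ty{2}$ and $H(\ty{1}\To\ty{2})\defeq\ihomC{H\ty{1}}{H\ty{2}}=\expo{H\ty{1}}{TH\ty{2}}$, with $H(\reals)$ supplied by the assignment. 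A context $\Gamma=\var{1}_1{:}\ty{1}_1,\dots,\var{1}_m{:}\ty{1}_m$ is sent to the chosen product $H\Gamma\defeq\prod_i H\ty{1}_i$. Because every type former is interpreted by the corresponding chosen structure, $H$ will preserve finite products, coproducts and Kleisli exponentials strictly by construction, and will satisfy $H\SynT=TH$ on the nose.

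Next I would interpret terms: a value $\Ginfv{\val{1}}{\ty{1}}$ as a $\catV$-morphism $H\Gamma\to H\ty{1}$ and a computation $\Ginfc{\trm{1}}{\ty{1}}$ as a $\catC$-morphism $H\Gamma\to H\ty{1}$ (equivalently, a $\catV$-morphism $H\Gamma\to TH\ty{1}$), by recursion on the derivation. Each syntactic constructor is matched with its semantic counterpart: variables with projections; $\cnst{c}$, $\op$ and $\tSign$ with the data of the assignment; $\tReturn{\val{1}}$ with $J(H\val{1})$; sequencing $\toin{\var{1}}{\trm{1}}{\trm{2}}$ with Kleisli composition; $\tPair{-}{-}$ and $\tInl,\tInr$ with tupling and the coprojections; the product-, sum- and empty-matches with the (co)pairing maps precomposed with projections; $\fun{\var{1}}{\trm{1}}$ with currying across the Kleisli-exponential adjunction and application with the evaluation map of the Kleisli exponential; and finally $\tItFrom{\trm{1}}{\var{1}}{\val{1}}$ and $\rec{\var{1}}{\val{1}}$ with $\iterationn$ and $\fixpoint$ applied to the curried interpretations of their bodies. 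This defines $H$ on raw (typed) terms.

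The substantive step is well-definedness, i.e. that $H$ descends to $\beta\eta$-equivalence classes. First I would prove a substitution lemma by induction on derivations: $H(\subst{\trm{1}}{\sfor{\var{1}}{\val{1}}})$ equals $H\trm{1}$ precomposed with the map that keeps $\Gamma$ fixed and feeds in $H\val{1}$ along the $\var{1}$-coordinate, with the analogous statement for values proved simultaneously. Using this lemma, I would verify that every equation of Figure~\ref{fig:fine-beta-eta1} is validated in $\catV$: the associativity and unit laws of sequencing from the monad laws for $\monadT$; the $\beta\eta$-laws for sums, products and the unit type from the universal properties of $\sqcup$, $\times$, $\initiall$, $\terminall$; and the function $\beta\eta$-laws from the triangle identities of the Kleisli-exponential adjunction. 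No equations are imposed on $\rec{\var{1}}{\val{1}}$ or $\tItFrom{\trm{1}}{\var{1}}{\val{1}}$, so these constructors require nothing beyond the substitution lemma. This yields a well-defined map on $\beta\eta$-classes that respects identities and composition, hence a functor, and by the preceding remarks a $CBV$ model morphism respecting the assignment.

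Uniqueness is then a short structural induction: any $CBV$ model morphism $H'$ respecting the assignment must send $\reals$, $\cnst c$, $\op$ and $\tSign$ to the prescribed data, must preserve the chosen (co)products and exponentials and the monad strictly, and must carry $\Synfix,\Synit$ to $\fixpoint,\iterationn$; comparing $H'$ with $H$ clause by clause on the grammar of types and then on that of terms forces $H'=H$. I expect the main obstacle to be purely the bookkeeping of the soundness step — establishing the substitution lemma and discharging the full list of $\beta\eta$-laws, and in particular tracking the back-and-forth between $\catV$ and its Kleisli category $\catC$ through the adjunction $J\dashv G$ so that the monadic equations and the currying/evaluation laws line up; the recursion and iteration constructs, by contrast, are easy precisely because the equational theory leaves them unconstrained.
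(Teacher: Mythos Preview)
Your proposal is correct and follows the standard route for establishing such initiality results: interpret types using the chosen bicartesian closed and Kleisli-exponential structure, interpret raw terms by structural recursion, prove a substitution lemma, and then discharge the $\beta\eta$-equations one by one using the universal properties at hand. The paper itself does not give a proof of this proposition; it states it as a standard universal property of the syntactic model (the freely generated $CBV$ model on the primitive data), in keeping with the usual treatment of such results for fine-grain CBV in the literature (e.g., \cite{levy2003modelling}). Your outline is exactly the argument one would expect, and your observation that the recursion and iteration constructors impose no additional verification burden---because the equational theory leaves them free---is precisely the point of the word ``free'' in the paper's Definition of free recursion and iteration.
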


\begin{proposition}[Universal Property of the Syntax]\label{prop:universal-property-syntax-target}
	Let $\left( \catV , \monadT, \fixpoint , \iterate \right) $  be a $CBV$ model with chosen finite products, coproducts and exponentials.  For each consistent assignment  
	\begin{eqnarray}
		H(\reals) &\in & \obb{\catV} \\
		H (\cnst c) &\in & \ehomV{\terminall}{H(\reals)} \\
		H (\op)&\in  &\ehomC{H(\reals)^n }{ H(\reals) } =  \ehomV{H(\reals)^n }{ TH(\reals) }, \mbox{ for each } \op\in\Op_n  \\
		H (\tSign) & \in &\ehomC{H(\reals)} { \terminall\sqcup \terminall }  = \ehomV{H(\reals)} {T\left( \terminall\sqcup \terminall\right)} 
	\end{eqnarray}   
	there is a unique $CBV$ model morphism $H$ between $\left( \SynV, \SynT,\Synfix , \Synit   \right)$
	and $\left( \catV , \monadT, \fixpoint , \iterate \right) $ respecting it.	
\end{proposition}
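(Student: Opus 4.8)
The plan is to establish existence and uniqueness of $H$ separately, in both cases by induction on the fine-grain syntax, relying on the facts (established just above in Appendix~\ref{subsec:the category syntax}) that $\SynV$ is bicartesian closed, that $\SynC$ is a $\SynV$-category with Kleisli adjunction $\SynJ\dashv\SynG$, and that $\left(\SynV,\SynT,\Synfix,\Synit\right)$ is a $CBV$ model. For existence, I would first define $H$ on types by structural recursion: the generator $\reals$ goes to the given object $H(\reals)$, and every type former is sent to the \emph{chosen} structure in $\catV$, so that $\Init\mapsto\initiall$, $\Unit\mapsto\terminall$, $\ty{1}\t+\ty{2}\mapsto H(\ty{1})\sqcup H(\ty{2})$, $\ty{1}\t*\ty{2}\mapsto H(\ty{1})\times H(\ty{2})$, and $\ty{1}\To\ty{2}\mapsto\ihomC{H(\ty{1})}{H(\ty{2})}=\ihomV{H(\ty{1})}{TH(\ty{2})}$, the Kleisli exponential. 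I would then interpret values $\var{1}:\ty{1}\vdash^v\val{1}:\ty{2}$ as $\catV$-morphisms $H(\ty{1})\to H(\ty{2})$ and computations $\var{1}:\ty{1}\vdash^c\trm{1}:\ty{2}$ as $\catC$-morphisms $H(\ty{1})\to H(\ty{2})$, by induction on the typing derivation, using the given assignments for $\cnst{c}$, $\op$ and $\tSign$; the universal properties of products, coproducts and the Kleisli exponential for the tupling/pairing, injection/casing and abstraction/application clauses; the monad unit and the $\catV$-action on $\catC$ (together with $\mm$) for $\mathbf{return}$ and sequencing; and the free recursion $\fixpoint$ and free iteration $\iterationn$ of $\left(\catV,\monadT\right)$ for $\rec{\var{1}}{\val{1}}$ and $\tItFrom{\trm{1}}{\var{1}}{\val{1}}$. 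Because $\catV$ is \emph{strictly} bicartesian closed and $\monadT$ is fixed, each clause is forced, and $H$ preserves the $CBV$-pair structure by construction.

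Next I would check that this assignment descends to $\beta\eta$-equivalence classes, i.e.\ that every law of Figure~\ref{fig:fine-beta-eta1} is validated. Each is a standard consequence of the relevant universal property: the product and sum equations follow from $\catV$ being genuinely (bi)cartesian, the function-type $\beta\eta$-laws from the Kleisli-exponential adjunction, and the two sequencing laws from the unit and associativity laws of $\monadT$. Since the language imposes no extra equations on $\op$, $\tSign$, $\rec{\var{1}}{\val{1}}$ or $\tItFrom{\trm{1}}{\var{1}}{\val{1}}$, there is nothing further to verify for those constructs. Simultaneously I would confirm that $H$ is a strict monad morphism and that $H\!\left(\Synfix^{W,Y}\right)=\fixpoint^{HW,HY}$ and $H\!\left(\Synit^{W,Y}\right)=\iterationn^{HW,HY}$, which hold because $\Synfix$ and $\Synit$ were interpreted using exactly $\fixpoint$ and $\iterationn$; this shows $H$ is a $CBV$ model morphism respecting the assignment.

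For uniqueness, suppose $H'$ is another $CBV$ model morphism respecting the same assignment. On types, $H'$ agrees with $H$ on $\reals$ by hypothesis and on every type former because both strictly preserve finite products, coproducts and Kleisli exponentials; an induction on types yields $H'=H$ on objects. On morphisms, both functors preserve all the term-level structure (tupling, casing, abstraction, application, $\mathbf{return}$, sequencing, and—by the defining clause of a $CBV$ model morphism in Definition~\ref{def:CBVmodel}—the families $\Synfix$ and $\Synit$) and agree on the generators $\cnst{c}$, $\op$, $\tSign$. Since every value and computation is built from these generators by the preserved operations, a second induction on typing derivations forces $H'=H$ on all morphisms.

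The main obstacle is the well-definedness step: carefully matching each syntactic $\beta\eta$-rule to its corresponding categorical identity while keeping the value/computation (Kleisli) bookkeeping straight, especially in the sequencing and function-type rules where the monad multiplication $\mm$ and the $\catV$-action on $\catC$ interact. The recursion and iteration clauses, by contrast, are painless precisely because they are freely adjoined with no imposed equations, so their interpretation is completely determined by $\fixpoint$ and $\iterationn$ and is automatically preserved.
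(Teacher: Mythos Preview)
Your approach is correct and is exactly the standard initiality-of-syntax argument one expects here. The paper itself does not give a proof of this proposition: it is stated without argument in the appendix, treated as a routine fact about the free $CBV$ model on the given generators (note also that Propositions~\ref{prop:universal-property-syntax} and~\ref{prop:universal-property-syntax-target} are textually identical in the paper, the latter apparently a duplicate that was meant to carry the extra target-language assignments of Figure~\ref{fig:assignment-functor-universalproperty-syntax-target}). Your two-pass induction---first on types using the chosen bicartesian closed structure and the Kleisli exponential, then on typing derivations using the universal properties plus the free $\fixpoint$ and $\iterationn$---together with the verification that each rule of Figure~\ref{fig:fine-beta-eta1} is validated by the corresponding categorical law, is precisely the expected proof; your observation that recursion and iteration carry no imposed equations and are therefore automatically preserved is the key simplification the paper's notion of ``free recursion/iteration'' is designed to enable.
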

\subsection{A translation from coarse-grain CBV to fine-grain CBV}\label{subsec:translation-finegrain}
This translation $(-)^\dagger$ operates on types and contexts as the identity.
It faithfully translates terms $\Gamma\vdash \trm{1}:\ty{1}$ of coarse-grain CBV
into computations $\Gamma\vdash^c \trm{1}^\dagger : \ty{1}$ of fine-grain CBV.
This translation illustrates the main difference between coarse-grain and 
fine-grain CBV: in coarse-grain CBV, values are subset of computations,
while fine-grain CBV is more explicit in keeping values and computations separate.
This makes it slightly cleaner to formulate an equational theory,
denotational semantics, and logical relations arguments.

We list the translation $(-)^\dagger$ below where all newly introduced variables 
are chosen to be fresh.
\[
\begin{array}{l|l}
	\textnormal{coarse-grain CBV computation } \trm{1} & \textnormal{fine-grain CBV translation } \trm{1}^\dagger\\
	\hline 
	\var{1} & \tReturn \var{1} \\
	\letin{\var{1}}{\trm{1}}{\trm{2}} & \toin{\var{1}}{\trm{1}^\dagger}{\trm{2}^\dagger}\\
	\cnst c & \tReturn \cnst c \\
	\tInl \trm{1} & \toin{\var{1}}{\trm{1}^\dagger}{\tReturn\tInl \var{1}}\\
	\tInr \trm{1} & \toin{\var{1}}{\trm{1}^\dagger}{\tReturn\tInr \var{1}}\\
	\tUnit & \tReturn \tUnit\\
	\tPair{\trm{1}}{\trm{2}} & \toin{\var{1}}{\trm{1}^\dagger}{\toin{\var{2}}{\trm{2}^\dagger}{\tReturn\tPair {\var{1}} {\var{2}}}}\\
	\fun{\var{1}}{\trm{1}} & \tReturn\fun{\var{1}}{\trm{1}^\dagger}\\
	% \tRoll{\trm{1}} & \toin{\trm{1}^\dagger}{\var}{\tRoll{\var}}\\
	\op(\trm{1}_1,\ldots,\trm{1}_n) & \toin{\var{1}_1}{\trm{1}_1^\dagger}{\ldots\toin{\var{1}_n}{\trm{1}_n^\dagger}{\op(\var{1}_1,\ldots,\var{1}_n)}}\\
	\nvMatch{\trm{1}} & \toin{\var{1}}{\trm{1}^\dagger}{\nvMatch{\var{1}}}\\
	\bvMatch{\trm{1}}{\var{1}}{\trm{2}}{\var{2}}{\trm{3}} & \toin{\var{3}}{\trm{1}^\dagger}{\bvMatch{\var{3}}{\var{1}}{\trm{2}^\dagger}{\var{2}}{\trm{3}^\dagger} }\\
	% \uMatch{\trm{1}}{\trm{2}} & \toin{\var{1}}{\trm{1}^\dagger}{\uMatch{\var{1}}{\trm{2}^\dagger}}\\
	\pMatch{\trm{1}}{\var{1}}{\var{2}}{\trm{2}} & \toin{\var{3}}{\trm{1}^\dagger}{\pMatch{\var{3}}{\var{1}}{\var{2}}{\trm{2}^\dagger}}\\
	\trm{1}\,\trm{2} & \toin{\var{1}}{\trm{1}^\dagger}{\toin{\var{2}}{\trm{2}^\dagger}{\var{1}\,\var{2}}}\\
	\tItFrom{\trm{1}}{\var{1}}{\trm{2}} & \toin{\var{2}}{\trm{2}^\dagger}{\tItFrom{\trm{1}^\dagger}{\var{1}}{\var{2}}}\\
	\tSign\,\trm{1} & \toin{\var{1}}{\trm{1}^\dagger}{\tSign\,\var{1}}\\
	% \rMatch{\trm{1}}{\var{1}}{\trm{2}} & \toin{\var{2}}{\trm{1}^\dagger}{\rMatch{\var{2}}{\var{1}}{\trm{2}^\dagger}}\\
	\rec{\var{3}}{\trm{1}} &   \rec{\var{3}}{\fun{\var{1}}{          \toin{\var{2}}{\trm{1}^\dagger}{\var{2}\var{1} } }}
\end{array}
\]
\subsection{Dual numbers forward AD transformation}
As before, we fix, for all $n\in\NN$, for all $\op\in\Op_n$, for all $1\leq i \leq n$, 
computations $\var{1}_1:\reals,\ldots,\var{1}_n:\reals\vdash^c \partial_i\op(\var{1}_1,\ldots,\var{1}_n):\reals$,
which represent the partial derivatives of $\op$.
Using these terms for representing partial derivatives,
we define, in Figure \ref{fig:fine-ad1}, a structure preserving macro $\Dsynsymbol$ on the types, values, and computations of our language for performing 
forward-mode AD.
We observe that this induces the following AD rule for our sugar: $\DsynC{\ifelse{\val{1}}{\trm{1}}{\trm{2}}}=\pMatch{\DsynV{\val{1}}}{\var{1}}{\_}{\ifelse{\var{1}}{\DsynC{\trm{1}}}{\DsynC{\trm{2}}}}$.
\begin{figure}[b]
   \fbox{\parbox{0.98\linewidth}{\begin{minipage}{\linewidth}\noindent
\input{TEX/d-types1}
\hrulefill
\input{TEX/fine-d-computations1}
\end{minipage}}}
\caption{A forward-mode AD macro defined on types as $\Dsyn{-}$, values as $\DsynV{-}$, and computations as $\DsynC{-}$.
All newly introduced variables are chosen to be fresh.\label{fig:fine-ad1}}
\end{figure}
In fact, by the universal property of $\SynJ$, $\Dsynsymbol$ is the unique structure preserving functor 
on $\Dsynsymbol$ that has the right definition for constants, primitive operations and $\tSign$.
It automatically follows that $\Dsynsymbol$ respects $\beta\eta$-equality.

Under the translation of coarse-grain CBV into fine-grain CBV, this code transformation induces precisely that of \S\ref{sec:our-cbv-language}.

\section{A more efficient derivative for $\tSign{}$}\label{sec:efficient-sign-derivative}
We can define by mutual induction (for both $\Dsynplainsymbol=\Dsynsymbol,\Dsynrevsymbol[k]$)

\begin{align*}
    x : \Dsynplain{\ty{1}} &\vdash \typeproj{\ty{1}}(x) : \ty{1}\\ 
 x : \Dsynplain{\reals}  &\vdash \tFst (x) : \reals \\
x : \Dsynplain{\ty{1}} \t*  \Dsynplain{\ty{2}} &\vdash \tPair{\typeproj{\ty{1}}(\tFst  x)}{\typeproj{\ty{2}}(\tSnd  x)} : \ty{1} \t*  \ty{2} \\
x : \Dsynplain{\ty{1}} \t+  \Dsynplain{\ty{2}} &\vdash \vMatch{x}{\tInl y\To \tInl  \typeproj{\ty{1}}(y) \mid \tInr  z \To \tInr  \typeproj{\ty{2}}(z)} : \ty{1} \t+  \ty{2}\\
x : \Dsynplain{\ty{1}} \To \Dsynplain{\ty{2}} &\vdash \fun{y}{\typeproj{\ty{2}}(x(\typezero{\ty{1}}(y)))} : \ty{1} \To \ty{2}\\
x : \trec{\tvar{1}} \Dsynplain{\ty{1}} &\vdash  \rMatch{x}{y}{\tRoll  \typeproj{\ty{1}}(x) } : \trec{\tvar{1}}\ty{1}\\
x : \tvar{1} &\vdash x : \tvar{1}
\end{align*}
and
\begin{align*}
x : \ty{1} &\vdash \typezero{\ty{1}}(x) : \Dsynplain{\ty{1}} \\
x : \reals &\vdash \tPair{x}{\cnst 0} : \Dsynplain{\reals} \\
x : \ty{1} \t*  \ty{2} &\vdash \tPair{\typezero{\ty{1}}(\tFst  x)}{\typezero{\ty{2}}(\tSnd  x)} : \Dsynplain{\ty{1}} \t*  \Dsynplain{\ty{2}}\\
x : \ty{1} \t+  \ty{2} &\vdash \vMatch{x}{\tInl  y \To \tInl     \typezero{\ty{1}}(y) \mid \tInr  z \To \tInr  \typezero{\ty{2}}(z)} : \Dsynplain{\ty{1}} \t+  \Dsynplain{\ty{2}}\\
x : \ty{1} \To \ty{2} &\vdash \fun{y}{\typezero{\ty{2}}(x(\typeproj{\ty{1}}(y)))} : \Dsynplain{\ty{1}} \To \Dsynplain{\ty{2}} \\
x : \trec{\tvar{1}} \ty{1} &\vdash \rMatch{x}{y}{\tRoll  \typezero{\ty{1}}(x) }: \trec{\tvar{1}}\Dsynplain{\ty{1}}\\
x : \tvar{1} &\vdash x : \tvar{1}.
\end{align*}
Then, observe that, for any 
$\var{1}_1:\ty{1}_1,\ldots,\var{1}_n:\ty{1}_n\vdash \trm{1}:\reals$, we have\\
$\sem{\tSign{(\tFst\Dsynplain{\trm{1}})}}=\sem{\letin{x_1}{\typeproj{\ty{1}_1}(x_1)}{\cdots}\letin{x_n}{\typeproj{\ty{1}_n}(x_n)}{\cdots}{\tSign \trm{1}}}$.\\
Therefore, we can define, for $\var{1}_1:\ty{1}_1,\ldots,\var{1}_n:\ty{1}_n\vdash \trm{1}:\reals$,
\begin{align*}
\Dsynplain{\tSign\trm{1}}&\defeq \letin{x_1}{\typeproj{\ty{1}_1}(x_1)}{\cdots}\letin{x_n}{\typeproj{\ty{1}_n}(x_n)}{\cdots}{\tSign \trm{1}}.
\end{align*}
This yields more efficient definitions of the forward and reverse derivatives of $\tSign{}$ and $\ifelse{}{}{}$
as we do not need to differentiate $\trm{1}$ at all.

\section{Enriched scone}\label{app-wCPO-enriched-scone}

We present straightforward generalizations (enriched versions) of the results presented in \cite[Section~9]{VAKAR-LUCATELLI2021} below. 

Considering the $\wCpo $-category $\morphismcatTwo $ with two objects and only one non-trivial morphism between them, the
$\wCpo $-category $\morphismcatTwo\pitchfork\catD $ of morphisms of $\catD $
can be described as the $\wCpo $-category $\ihom{\wCpo\textrm{-}\catCat}{\morphismcatTwo}{\catD} $ of $\wCpo $-functors 
$\morphismcatTwo\to \catD $.

Explicitly, the objects of  $\morphismcatTwo\pitchfork\catD $  are 
morphisms $f: Y_0\to Y_1 $ of $\catD $. A morphism between $f$ and $g$ is a pair $\alpha = \left( \alpha _0 , \alpha _1\right) : f\to g $ such that $\alpha _1 f = g\alpha _0 $, that is to say, a ($\wCpo$-)natural transformation. Finally, 
the $\wCpo$-structure is defined by $\left( \alpha _0 , \alpha _1\right) \leq \left( \beta _0 , \beta _1\right) $
if $\alpha _0\leq \beta _ 0 $ and $\alpha _1\leq \beta _ 1 $ in $\catD $. 

%\begin{equation}\label{eq:natural-transformation}
%	\diag{wCpo-category-of-morphisms}
%\end{equation} 

Given an $\wCpo$-functor $G:\catC\to\catD$, the comma category $\catD\downarrow G$ of the identity on $\catD $ along $G$ in 
$\ecat{\wCpo}$ is also known as the $\wCpo$-\textit{scone}
or \textit{Artin glueing} of $G$. It can be described as the pullback \eqref{eq:pullback-comma-codomain} in $\ecat{\wCpo}$, in which $\codomm : \morphismcatTwo\pitchfork\catD\to\catD $, defined by 
$\left( \alpha = \left( \alpha _0 , \alpha _1\right) : f \to g\right)\mapsto \alpha _1 $,
is the codomain $\wCpo $-functor.
\begin{equation}\label{eq:pullback-comma-codomain}
	\diag{wCpo-category-of-morphisms-pullback-comma}
\end{equation}

Since $\codomm $ is an isofibration, 
the pullback \eqref{eq:pullback-comma-codomain}  is equivalent to the pseudo-pullback of $\codomm $ along $G$, which is the $\wCpo $-category defined as follows. The objects of the pseudo-pullback 
are triples $$\left( \left( f: Y_0\to Y_1\right)\in \morphismcatTwo\pitchfork\catD , C\in \catC , \xi : \left(\codomm f\right)\xto{\cong} G(C) \right) $$ where $\xi $ is an isomorphism in $\catD $. A morphism $(f,C,\xi) \to (f',C',\xi ' )$ is a pair of morphisms  $\left( \alpha : f\rightarrow f', h: C\to C' \right) $ such that
$ G(h)\circ \xi =      \xi '\circ  \codomm \left( \alpha\right)  $. Finally, the $\wCpo $-structure of the homs are given pointwise. That is to say, $\left( \alpha  , h \right)\leq \left( \alpha ' , h ' \right) $ if $\alpha \leq \alpha $ in $\morphismcatTwo\pitchfork\catD $   and $h\leq h '$ in $\catC$. 
  
\begin{lemma}
	The forgetful $\wCpo$-functor 	$\forgetfulS : \catD \downarrow G\to \catD\times\catC $, defined in \eqref{eq:forgetful-scone-notation}, creates all absolute (weighted) limits and colimits. 
\end{lemma}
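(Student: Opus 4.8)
The plan is to exploit the pullback description \eqref{eq:pullback-comma-codomain} of $\catD\downarrow G$ as $\left(\morphismcatTwo\pitchfork\catD\right)\times_{\catD}\catC$ together with two elementary facts: that $\wCpo$-weighted limits and colimits in a product $\catD\times\catC$ and in the $\wCpo$-functor category $\morphismcatTwo\pitchfork\catD$ are computed componentwise (the latter via the jointly-creating evaluation functors $\domm$ and $\codomm$), and that an absolute (weighted) limit or colimit is by definition preserved by every $\wCpo$-functor out of the ambient category, in particular by $G$. The entire content of the lemma is that absoluteness of the base (co)limit is exactly what is needed to produce, uniquely, the gluing morphism of a lifted object.

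Concretely, I would fix a weight and a diagram $H\colon\catE\to\catD\downarrow G$, writing $H(e)=(D_e,C_e,j_e\colon D_e\to G(C_e))$, so $\forgetfulS\circ H=(D_\bullet,C_\bullet)$. First I would record that if the weighted colimit of $\forgetfulS\circ H$ exists in $\catD\times\catC$ and is absolute, then its two components $D:=\colim D_\bullet$ and $C:=\colim C_\bullet$ are themselves absolute colimits in $\catD$ and $\catC$ respectively (apply $\ID_{\catD}\times F$ for an arbitrary $\wCpo$-functor $F$ and read off the second coordinate). Since $C$ is an absolute colimit, $G$ preserves it, exhibiting $G(C)=\colim G(C_\bullet)$. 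The components $j_e$ form a $\wCpo$-natural family $D_\bullet\Rightarrow G(C_\bullet)$, which is precisely what it means for $H$ to land in the comma category, so the universal property of $\colim D_\bullet$ yields a unique $j\colon D\to G(C)$ compatible with all $j_e$. This produces the candidate lift $(D,C,j)$ together with a candidate universal cocone over $H$ whose image under $\forgetfulS$ is the given one.

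It then remains to verify uniqueness of the lift and its universal property. For uniqueness, any lift must carry the same $\catD$- and $\catC$-data and a gluing map solving the same mediating problem as $j$, hence equals $(D,C,j)$, using faithfulness of $\forgetfulS$ on morphisms and the uniqueness clause of the colimit in $\catD$. For the universal property, given any cocone with apex $(D',C',j')$, the colimits in $\catD$ and $\catC$ supply unique mediating maps $D\to D'$ and $C\to C'$, and the identity making them a comma-category morphism, $G(\mathrm{med}_C)\circ j=j'\circ\mathrm{med}_D$, follows because both sides are mediating morphisms $D\to G(C')$ agreeing on the defining cocone, hence coincide by the universal property of $D$. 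The $\wCpo$-enrichment is respected because all bijections involved are the enriched ones and tupling/cotupling are $\wCpo$-morphisms. The limit case is entirely dual: one composes the limit projections with the $j_e$ to obtain a cone over $G(C_\bullet)$, uses that $G$ preserves the absolute limit $C=\lim C_\bullet$ to present $G(C)=\lim G(C_\bullet)$, and reads off the gluing map $j\colon D\to G(C)$ from the universal property of $G(C)$.

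The main obstacle is bookkeeping rather than conceptual: making the enriched, weighted universal properties precise and checking at each stage that the componentwise computations in $\catD\times\catC$ and in $\morphismcatTwo\pitchfork\catD$ match up coherently across the pullback. The one genuinely load-bearing point I would isolate and not gloss over is that absoluteness is used solely to guarantee that $G$ preserves the $\catC$-component, so that the gluing morphism $j$ exists; without it $j$ need not exist, which is exactly why the lemma is restricted to absolute (co)limits and why the stronger statements of Theorem \ref{theo:fundamental-scone-monadic-comonadic} and Corollary \ref{coro:fundamental-scone-monadic-comonadic} require an adjointness hypothesis on $G$.
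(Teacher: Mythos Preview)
Your proposal is correct and follows essentially the same approach as the paper: both exploit the pullback description \eqref{eq:pullback-comma-codomain}, both use absoluteness precisely to ensure that $G$ preserves the $\catC$-component (co)limit, and both construct the lifted object as the pair of componentwise (co)limits together with the induced gluing morphism. The only presentational difference is that the paper routes the construction through the arrow category $\morphismcatTwo\pitchfork\catD$, computing the (co)limit of $D_0$ there pointwise so that the gluing morphism appears as the arrow-part of that pointwise (co)limit and the result then lives in the pseudo-pullback via an isomorphism $\xi$; this lets the paper treat limits and colimits with one uniform argument, whereas you work directly with the comma objects $(D_e,C_e,j_e)$ and split into two dual cases. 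Both routes yield the same proof with the same load-bearing step.
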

\begin{proof}  
Clearly, the $\wCpo $-functor $\forgetfulS$ reflects isomorphisms.
	
Let $D$ be a diagram in $\catD\downarrow G$ such that the weighted (co)limit $(co)\limm{W}{\forgetfulS D} $ exists and is preserved by any $\wCpo$-functor.
Since $\catD\downarrow G$ is the pullback \eqref{eq:pullback-comma-codomain}, there is a unique pair of diagrams $\left( D_0, D_1 \right) $ such that 
\begin{equation*}
	\morPULL\circ D = D_0, \quad  \cCPULL\circ D = D_1, \quad\codomm\circ D_0 = G\circ D_1 , 
\end{equation*}
hold.
  
Since $\domm\circ D_0 =  \proj{\catD}\circ\forgetfulS\circ D $ and  $\codomm\circ D_0 = G\circ \proj{\catC}\circ\forgetfulS\circ D $, we get that $(co)\limm{W}{\domm D_0}\cong \proj{\catD}\left( (co)\limm{W}{\forgetfulS\circ D }\right)  $ and $ (co)\limm{W}{\codomm\circ D_0}\cong G\circ\proj{\catC}\left( (co)\limm{W}{\forgetfulS\circ D }\right) $. Therefore, $(co)\limm{W}{\forgetfulS\circ D_0 }$ exists in $\morphismcatTwo\pitchfork\catD$, pointwise constructed out of $(co)\limm{W}{\domm\circ D_0}$ and  $(co)\limm{W}{\codomm\circ D_0}$.

Moreover, since $D_1 = \proj{\catC}\circ \forgetfulS\circ D $, we have that
$(co)\limm{W}{D_1}\cong \proj{\catC}\left(  (co)\limm{W}{\forgetfulS\circ D }\right)  $.

Therefore, the isomorphism $\xi $ given by
\begin{eqnarray*} 
\codomm\left(  (co)\limm{W}{D_0}\right) &\cong & (co)\limm{W}{\codomm\circ D_0} \\
 & \cong & G\circ\proj{\catC}\left( (co)\limm{W}{\forgetfulS\circ D }\right)\\
 & \cong & G \left( (co)\limm{W}{D_1} \right) 
\end{eqnarray*} 
together with the pair $\left( (co)\limm{W}{D_0} , (co)\limm{W}{D_1} \right) $
defines, up to isomorphism, an object of $\catD\downarrow G$, which satisfies the universal property for 
$(co)\limm{W}{D} = (co)\limm{W}{\left( D_0, D_1\right) } $.

Moreover, by the construction above, we conclude that $(co)\limm{W}{D} $ is preserved by $\forgetfulS $. In particular:
$$ \forgetfulS\left( (co)\limm{W}{D_0} , (co)\limm{W}{D_1}, \xi  \right) =
\left(  (co)\limm{W}{\domm\circ D_0}  , (co)\limm{W}{D_1}\right) . 
$$
The above completes the proof that the $\wCpo$-functor $\forgetfulS $ creates $(co)\limm{W}{D}$.
\end{proof} 
The $\wCpo $-functor $\forgetfulS $ has a right $\wCpo $-adjoint provided that 
$\catD $ has binary $\wCpo $-products. It is given by $\left(  D\in\catD , C\in\catC  \right)\mapsto\left( D\times G\left(  C\right), C, \proj{G(C)}  \right)   $. 
Therefore:
\begin{theorem}
The forgetful $\wCpo$-functor 	$\forgetfulS : \catD \downarrow G\to \catD\times\catC $ 
is $\wCpo $-comonadic provided that $\catD $ has binary $\wCpo $-products.
\end{theorem}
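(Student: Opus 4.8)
The plan is to deduce $\wCpo$-comonadicity from the enriched version of Beck's (precise) comonadicity theorem, the dual of the enriched monadicity theorem of \cite{MR0280560}, using the two structural facts already in hand. That theorem says: a $\wCpo$-functor $U\colon \mathcal{A}\to\mathcal{B}$ equipped with a right $\wCpo$-adjoint is $\wCpo$-comonadic precisely when (i) $U$ reflects isomorphisms and (ii) $U$ creates equalizers of those parallel pairs in $\mathcal{A}$ that $U$ sends to split equalizers in $\mathcal{B}$. First I would assemble the hypotheses of this criterion for $U=\forgetfulS$.

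The right $\wCpo$-adjoint is the one exhibited immediately above, namely $\left( D\in\catD, C\in\catC\right)\mapsto \left( D\times G(C), C, \proj{G(C)}\right)$, whose existence is exactly where the hypothesis that $\catD$ has binary $\wCpo$-products is used. That $\forgetfulS$ reflects isomorphisms was recorded at the start of the proof of the creation lemma above, and holds since a morphism $\left(\alpha_0, \alpha_1\right)$ of $\catD\downarrow G$ is invertible as soon as both of its components $\alpha_0$ and $\alpha_1$ are (the inverse pair again makes the defining square commute).

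For the remaining condition I would observe that a split equalizer is an \emph{absolute} limit: being determined by equations between composites of the splitting data, it is preserved by every $\wCpo$-functor. Hence if $f,g$ is a parallel pair in $\catD\downarrow G$ whose image under $\forgetfulS$ admits a split equalizer in $\catD\times\catC$, then that equalizer is an absolute $\wCpo$-limit of $\forgetfulS f,\forgetfulS g$. The creation lemma above then applies verbatim: $\forgetfulS$ creates this equalizer, so it exists in $\catD\downarrow G$ and is preserved by $\forgetfulS$, which is precisely condition (ii). With (i), (ii), and the right adjoint in place, the enriched comonadicity theorem yields that the comparison $\wCpo$-functor from $\catD\downarrow G$ into the $\wCpo$-category of coalgebras for the induced comonad $\left( D, C\right)\mapsto \left( D\times G(C), C\right)$ on $\catD\times\catC$ is an equivalence; that is, $\forgetfulS$ is $\wCpo$-comonadic.

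The only genuinely delicate point is the bookkeeping that connects the enriched statement of Beck's theorem to the conical creation lemma we proved: one must check that the $\forgetfulS$-split (absolute) equalizers demanded by the criterion are conical $\wCpo$-limits of the $\forgetfulS$-absolute kind that the lemma handles, so that the lemma (stated for arbitrary weighted, in particular absolute, $\wCpo$-(co)limits) indeed supplies and creates them. Since we restrict attention throughout to conical $\wCpo$-(co)limits and split equalizers are both conical and absolute, this matching is routine; I expect the main effort to be merely spelling out that, in the $\wCpo$-enriched conical setting, the enriched Beck criterion reduces to exactly the three bullet points verified above.
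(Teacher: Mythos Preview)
Your proposal is correct and follows essentially the same approach as the paper: the paper's proof is just the word ``Therefore:'' after recording the right $\wCpo$-adjoint, relying implicitly on the preceding creation-of-absolute-(co)limits lemma together with the enriched Beck criterion, exactly as you spell out.
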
 	 
By duality, we get that the forgetful $\wCpo$-functor $F \downarrow \catC \to \catD\times\catC$ is $\wCpo $-monadic
provided that $\catC $ has finite $\wCpo $-coproducts. Therefore:
\begin{theorem}
	The forgetful $\wCpo$-functor 	$\forgetfulS : \catD \downarrow G\to \catD\times\catC $ 
	is $\wCpo $-monadic whenever $G$ has a left $\wCpo $-adjoint and $\catC $ has finite $\wCpo $-coproducts.
\end{theorem}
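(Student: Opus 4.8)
The plan is to deduce this monadicity statement from the comonadicity theorem just established, by combining a reindexing of the comma $\wCpo$-category with the op-duality recorded in the remark preceding the statement. Write $F\dashv G$ for the given $\wCpo$-adjunction, with $F:\catD\to\catC$ the left $\wCpo$-adjoint of $G:\catC\to\catD$, and let $\eta,\varepsilon$ be its unit and counit. The crucial point is that, since the transpose bijection $\catC(F(D),C)\xrightarrow{\cong}\catD(D,G(C))$ is an isomorphism of $\omega$-cpos, $\wCpo$-natural in $D$ and $C$ (this being exactly what it means for $F\dashv G$ to be a $\wCpo$-adjunction), the scone $\catD\downarrow G$ is isomorphic, as a $\wCpo$-category over $\catD\times\catC$, to the comma $\wCpo$-category $F\downarrow\catC$ of $F$ along the identity of $\catC$.

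First I would make this isomorphism explicit. On objects it sends $(D,C,j:D\to G(C))$ to $(D,C,\hat\jmath:F(D)\to C)$, where $\hat\jmath=\varepsilon_C\circ F(j)$ is the adjoint transpose of $j$, and on a morphism $(\alpha_0,\alpha_1)$ it acts as the identity on the underlying pair. The verification is that the comma-square condition $G(\alpha_1)\circ j=h\circ\alpha_0$ defining morphisms of $\catD\downarrow G$ corresponds, under transposition, precisely to the comma-square condition $\alpha_1\circ\hat\jmath=\hat h\circ F(\alpha_0)$ defining morphisms of $F\downarrow\catC$, using naturality of $\varepsilon$; and that the assignment is an $\omega$-cpo isomorphism on each hom, since on homs it is literally the identity on pairs $(\alpha_0,\alpha_1)$. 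By construction it commutes with the two forgetful $\wCpo$-functors into $\catD\times\catC$.

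Next I would invoke the duality recorded before the statement: applying the comonadicity theorem to the scone $\catC^{\op}\downarrow F^{\op}$ of $F^{\op}:\catD^{\op}\to\catC^{\op}$ and passing back to opposite $\wCpo$-categories shows that the forgetful $\wCpo$-functor $F\downarrow\catC\to\catD\times\catC$ is $\wCpo$-monadic as soon as $\catC^{\op}$ has binary $\wCpo$-products, i.e. as soon as $\catC$ has finite (in fact binary) $\wCpo$-coproducts. Here the op-dual of the creation-of-absolute-limits Lemma supplies the creation of the absolute colimits needed for the dual of Beck's argument, so nothing further must be checked. Finally, since $\wCpo$-monadicity of a $\wCpo$-functor into $\catD\times\catC$ is preserved under precomposition with a $\wCpo$-isomorphism over $\catD\times\catC$, transporting along the isomorphism of the first step yields that $\forgetfulS:\catD\downarrow G\to\catD\times\catC$ is $\wCpo$-monadic.

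The step I expect to be the main obstacle is pinning down that first isomorphism at the enriched level together with its strict compatibility with the forgetful functors: one must verify not merely a bijection of objects but that transposition identifies the two comma-square conditions, respects the $\omega$-cpo structure on homs, and is strictly over $\catD\times\catC$, so that monadicity genuinely transports. The duality step is then routine once the comonadicity theorem is in hand, the only care being to match \emph{binary $\wCpo$-products in $\catC^{\op}$} with \emph{finite $\wCpo$-coproducts in $\catC$}, and to track the harmless swap of factors between $\catC^{\op}\times\catD^{\op}$ and $\catD\times\catC$.
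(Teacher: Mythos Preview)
Your proposal is correct and follows essentially the same approach as the paper: establish the $\wCpo$-isomorphism $\catD\downarrow G\cong F\downarrow\catC$ over $\catD\times\catC$ via adjoint transposition, then invoke the op-dual of the comonadicity theorem to conclude that $F\downarrow\catC\to\catD\times\catC$ is $\wCpo$-monadic when $\catC$ has binary $\wCpo$-coproducts, and transport. The paper's proof is a two-line sketch of exactly this; you have simply spelled out the verifications (that transposition respects the comma squares and the $\omega$-cpo hom structure, and that the triangle over $\catD\times\catC$ commutes strictly) that the paper leaves implicit.
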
 
\begin{proof}
	Indeed, by the $\wCpo$-adjunction $F\dashv G $, we get an isomorphism $\catD \downarrow G\cong F \downarrow \catC$ which composed with
	the forgetful $\wCpo$-functor $ F \downarrow \catC \to \catD\times\catC$ is
	equal to $\forgetfulS : \catD \downarrow G\to \catD\times\catC $.  
\end{proof}

As a consequence, we conclude that:
\begin{theorem}
Let  $G: \catC\to\catD $ be a right $\wCpo $-adjoint functor between $\wCpo $-bicartesian closed categories. We have that the forgetful $\wCpo $-functor $\forgetfulS $ is $\wCpo$-monadic and comonadic. In particular, $\catD \downarrow G$ is an $\wCpo $-bicartesian closed category.
\end{theorem}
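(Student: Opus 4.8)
The plan is to assemble the statement from the two (co)monadicity results already established in this appendix and then to read the bicartesian closed structure off of the (co)monadicity. First I would check that the hypotheses of both preceding theorems are met. Since $G$ is a right $\wCpo$-adjoint it has a left $\wCpo$-adjoint $F$, and since $\catC$ and $\catD$ are $\wCpo$-bicartesian closed, $\catD$ has binary $\wCpo$-products and $\catC$ has finite $\wCpo$-coproducts. Hence the comonadicity theorem above (which needs binary $\wCpo$-products in $\catD$) and the monadicity theorem above (which needs a left $\wCpo$-adjoint for $G$ together with finite $\wCpo$-coproducts in $\catC$) both apply, so that $\forgetfulS:\catD\downarrow G\to\catD\times\catC$ is simultaneously $\wCpo$-monadic and $\wCpo$-comonadic. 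This already proves the first assertion.

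For the bicartesian closed conclusion I would first note that $\catD\times\catC$ is itself $\wCpo$-bicartesian closed, being a finite product of $\wCpo$-bicartesian closed categories. A $\wCpo$-monadic functor creates $\wCpo$-limits and a $\wCpo$-comonadic functor creates $\wCpo$-colimits; applying both to $\forgetfulS$ yields that $\catD\downarrow G$ has a terminal object and finite $\wCpo$-products (created, and computed on the nose, from $\catD\times\catC$) as well as an initial object and finite $\wCpo$-coproducts. Note that this genuinely requires the full force of (co)monadicity rather than the earlier Lemma, since finite products and coproducts need not be absolute.

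The remaining, and genuinely delicate, point is the existence of $\wCpo$-exponentials, for which creation of (co)limits does not suffice; here I would invoke the enriched adjoint triangle theorem applied along $\forgetfulS$. For a fixed object $A$ of $\catD\downarrow G$ the square relating $A\times(-)$ on $\catD\downarrow G$ to $\forgetfulS A\times(-)$ on $\catD\times\catC$ commutes over $\forgetfulS$, because $\forgetfulS$ strictly preserves $\wCpo$-products; the latter endofunctor of $\catD\times\catC$ has a right $\wCpo$-adjoint since $\catD\times\catC$ is $\wCpo$-cartesian closed. As $\forgetfulS$ is (co)monadic, the enriched adjoint triangle theorem lifts this right adjoint to a right $\wCpo$-adjoint of $A\times(-)$, i.e. to a $\wCpo$-exponential in $\catD\downarrow G$. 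Concretely this reconstructs the glueing exponential, whose $\catC$-component is $\ihom{\catC}{-}{-}$ and whose $\catD$-component is a pullback in $\catD$ built from $\ihom{\catD}{-}{-}$, $G(\ihom{\catC}{-}{-})$, and the canonical comparison $G(\ihom{\catC}{C}{C'})\to\ihom{\catD}{GC}{GC'}$ furnished by $G$ preserving $\wCpo$-products.

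The main obstacle is precisely this last step: verifying the $\wCpo$-\emph{enriched} universal property, i.e. that the defining hom-bijection of the exponential adjunction is an isomorphism of $\omega$-cpos and not merely of underlying sets. I expect this to reduce, through the adjoint triangle theorem, to the already-established facts that $\forgetfulS$ is $\wCpo$-(co)monadic and that $\catD$ and $\catC$ are $\wCpo$-cartesian closed, so that beyond tracking the $\wCpo$-enrichment the argument is a routine transcription of the unenriched glueing computation of \cite[Section~9]{VAKAR-LUCATELLI2021}.
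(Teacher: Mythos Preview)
Your proposal is correct and follows essentially the same route as the paper: the (co)monadicity claims are immediate consequences of the two preceding theorems once one notes that the hypotheses (binary $\wCpo$-products in $\catD$, finite $\wCpo$-coproducts in $\catC$, a left $\wCpo$-adjoint for $G$) are all supplied by bicartesian closedness and the assumption on $G$; the paper then derives the bicartesian closed structure exactly as you do, getting finite (co)products from creation of (co)limits and obtaining the $\wCpo$-exponentials via the enriched adjoint triangle theorem (this is the argument the paper alludes to in the main text when it cites the enriched adjoint triangle theorem for Corollary~\ref{coro:fundamental-scone-monadic-comonadic}). Your more explicit description of how the triangle is set up for $A\times(-)$ over $\forgetfulS$ is a welcome elaboration of what the paper leaves implicit.
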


\newpage

\section{Some Haskell Code for a Recursive Neural Network}
\label{ap:rnn}
\begin{lstlisting}[
    basicstyle=\footnotesize, %or \small or \footnotesize etc.
]
-- example implementation of https://icml.cc/2011/papers/125_icmlpaper.pdf
-- Some of the basic datatypes we use -- we elide the implementation of some
data Tree a
  = Leaf a
  | Node (Tree a) (Tree a)
  deriving (Eq) -- \mu b. a + (b x b), leaf a = roll (iota_1 a), node l r = roll (iota_2 (l, r))

data Vector

data Scalar

data Matrix

type ActivationVectors = [Vector]

type AdjacencyMatrix = [(Tree Int, Tree Int)]

-- Some basic data and operations that we need for the RNN
-- Again, we elide much of the implementation as it is beside the point of this example
f :: Vector -> Vector --  some non-linear function, usually elementwise applied sigmoid function
f = undefined

conc :: Vector -> Vector -> Vector -- concatenate vectors
conc = undefined

mult :: Matrix -> Vector -> Vector -- matrix vector multiplication
mult = undefined

add :: Vector -> Vector -> Vector -- elementwise vector addition
add = undefined

innerprod :: Vector -> Vector -> Scalar -- vector inner product
innerprod = undefined

a :: ActivationVectors
a = undefined -- input (for example, sequence of words as vectors or image segments as vectors)

adjMat :: AdjacencyMatrix
-- start with matrix that only stores (Leaf i, Leaf j) pairs in case i is a neighbour of j;
-- we later extend adjacency to parent nodes
adjMat = undefined -- input (specify which words/image segments are neighbours  )

w :: Matrix
w = undefined -- parameter to learn: weights

b :: Vector
b = undefined -- parameter to learn: biases

wScore :: Vector
wScore = undefined -- parameter to learn: scoring vector

-- The implementation of the RNN
-- version 1, without caching
modelH ((w, b, wScore), (adjMat, globalScore)) =
  let getNode (Leaf i) = a !! i
   in let getNode (Node l r) = f (w `mult` conc (getNode l) (getNode r)) `add` b
       in let parentsScores =
                map
                  (\i -> (i, innerprod wScore (getNode (uncurry Node i))))
                  adjMat -- compute scores for all parent nodes of neighbours;
                  -- super inefficient without caching getNode, but conceptually cleaner
           in let ((bp1, bp2), bestScore) =
                    foldl
                      (\(i, s) (i', s') ->
                         if s > s'
                           then (i, s)
                           else (i', s'))
                      (head parentsScores)
                      parentsScores -- find the neighbours that have the higest score
               in let globalScore2 = globalScore + bestScore
               -- add the local contribution of our chosen neighbour pair to the global score
                   in let bestPar = Node bp1 bp2
                   -- actually compute our favourite parent;
                   -- I guess we'd already done this before but it's cheap to redo
                       in let mergeParH i
                                | i == bp1 || i == bp2 = bestPar
                           in let mergeParH i
                                    | otherwise = i
                               in let mergePar (i, j) =
                                        (mergeParH i, mergeParH j)
                                   in let adjMat2 =
                                            filter
                                              (/= (bestPar, bestPar))
                                              [ mergePar (i, j)
                                              | (i, j) <- adjMat
                                              ]
                                              -- replace bp1 and bp2 with bestPar in adjacencyMatrix,
                                              -- but we have a convention that nodes are not neighbours
                                              -- of themselves
                                       in if null adjMat2
                                            then Right globalScore2
                                            else Left (adjMat, globalScore2)
                                            -- if we run out of neighbours that can be merged, we are done;
                                            -- otherwise iterate with new adjacency matrix and score

it :: ((c, a) -> Either a b) -> (c, a) -> b -- functional iteration
it f (c, a) =
  case f (c, a) of
    Left a' -> it f (c, a')
    Right b -> b

model :: ((Matrix, Vector, Vector), (AdjacencyMatrix, Scalar)) -> Scalar
model = it modelH

-- The implementation of the RNN
-- version2, with caching of getNode
modelH2 ((w, b, wScore), (adjMat, values, globalScore)) =
  let getNode (Leaf i) = look (Leaf i) values
   in let getNode (Node l r) =
            let lv = look l values
             in let rv = look r values
                 in f (w `mult` conc lv rv) `add` b
       in let parentsValScores =
                map
                  (\i ->
                     let v = getNode (uncurry Node i)
                      in (i, v, innerprod wScore v))
                  adjMat
           in let ((bp1, bp2), bestVal, bestScore) =
                    foldl
                      (\(i, v, s) (i', v', s') ->
                         if s > s'
                           then (i, v, s)
                           else (i', v', s'))
                      (head parentsValScores)
                      parentsValScores
               in let globalScore2 = globalScore + bestScore
                   in let bestPar = Node bp1 bp2
                       in let mergeParH i
                                | i == bp1 || i == bp2 = bestPar
                           in let mergeParH i
                                    | otherwise = i
                               in let mergePar (i, j) =
                                        (mergeParH i, mergeParH j)
                                   in let adjMat2 =
                                            filter
                                              (/= (bestPar, bestPar))
                                              [ mergePar (i, j)
                                              | (i, j) <- adjMat
                                              ]
                                       in if null adjMat2
                                            then Right globalScore2
                                            else Left
                                                   ( adjMat
                                                   , (bestPar, bestVal) : values
                                                   , globalScore2)

-- initial values will be zip (map Leaf [0..], a)
look :: Tree Int -> [(Tree Int, b)] -> b --  a map operation for looking up cache
look k m =
  case lookup k m of
    Just x -> x

model2 :: ((Matrix, Vector, Vector), (AdjacencyMatrix, Scalar)) -> Scalar
model2 ((w, b, wScore), (adjMat, globalScore)) =
  it modelH2 ((w, b, wScore), (adjMat, zip (map Leaf [0 ..]) a, globalScore))

\end{lstlisting}

%\input{TEX/wCPO-enriched-subscone}
%\input{TEX/app-not-really-differentiable}

%\input{TEX/extra}
%\input{TEX/app-Non-used-Material-on-Recursive-Types}

%starting dednat-footcommands (diagrams via lualatex)

%L write_single_tex_file__pat = "^(.-\n)(%s*)(\\input\\jobname.dnt[^\n]*\n)(.*)$"
%L write_single_tex_file = function (fname_out)
%L     local fname_in = status.filename
%L     local bigstr_in = ee_readfile(fname_in)
%L     local a,spaces,inputdnt,b = bigstr_in:match(write_single_tex_file__pat)
%L     local header = format("%% Generated from %s\n"..
%L             "%% using write_single_tex_file(\"%s\")\n%%\n",
%L             fname_in, fname_out)
%L     local bigstr_out = header..a..spaces.."% "..inputdnt..dnt_log.."\n"..b
%L     ee_writefile(fname_out, bigstr_out)
%L     print("Wrote "..fname_out)
%L   end
%L
%L write_single_tex_file("PDF_MAIN.tex")
%L
%L write_dnt_file()
\pu

\end{document}

% Local Variables:
% coding: utf-8-unix
% End:

%end of the dednat-footcommands (diagrams via lualatex)